\spnewtheorem{THM}[theorem]{Theorem}{\bfseries}{\itshape}
\spnewtheorem{PRO}[theorem]{Proposition}{\bfseries}{\itshape}
\spnewtheorem{COR}[theorem]{Corollary}{\bfseries}{\itshape}
\spnewtheorem{LEM}[theorem]{Lemma}{\bfseries}{\itshape}
\spnewtheorem{CON}[theorem]{Conjecture}{\bfseries}{\rmfamily}
\spnewtheorem{DEF}[theorem]{Definition}{\bfseries}{\rmfamily}
\spnewtheorem{REM}[theorem]{Remark}{\bfseries}{\rmfamily}
\spnewtheorem{EX}[theorem]{Example}{\bfseries}{\rmfamily}
\newcommand{\myrm}{}
\newcommand{\lrangle}[1]{\langle #1 \rangle}
\newcommand{\blrangle}[1]{\big\langle #1 \big\rangle}
\newcommand{\brtext}[1]{[\textrm{\small #1}]}
\newcommand{\ltsrule}[1]{{\footnotesize \lrangle{\textrm{#1}}}}
\newcommand{\eltsrule}[1]{{\footnotesize [\textrm{#1}]}}
\newcommand{\trule}[1]{{\footnotesize\brtext{#1}}}
\newcommand{\orule}[1]{{\scriptsize{\brtext{#1}}}}
\newcommand{\noi}{\noindent}
\newcommand{\myparagraph}[1]{\paragraph{\textbf{#1}}}
\newcommand{\secref}[1]{Section~\ref{#1}}
\newcommand{\defref}[1]{Definition~\ref{#1}}
\newcommand{\figref}[1]{Figure~\ref{#1}}
\newcommand{\thmref}[1]{Theorem~\ref{#1}}
\newcommand{\propref}[1]{Proposition~\ref{#1}}
\newcommand{\appref}[1]{Appendix~\ref{#1}}
\newcommand{\lemref}[1]{Lemma~\ref{#1}}
\newcommand{\bnfis}{\;\;::=\;\;}
\newcommand{\bnfbar}{\;\;\;|\;\;\;}
\newcommand{\tree}[2]{
\ensuremath{\displaystyle
		\frac
		{
			#1
		}{
			#2
		}
	}
}
\newcommand{\mytilde}[1]{\widetilde{#1}}
\newcommand{\set}[1]{\{#1\}}
\newcommand{\es}{\emptyset}
\newcommand{\setbar}{\ \ |\ \ }
\newcommand{\dom}[1]{\mathtt{dom}(#1)}
\newcommand{\freev}[1]{\lrangle{#1}}
\newcommand{\boundv}[1]{(#1)}
\newcommand{\send}[1]{\overline{#1}}
\newcommand{\inact}{\mathbf{0}}
\newcommand{\Par}{\;|\;}
\newcommand{\news}[1]{(\nu\ #1)}
\newcommand{\newsp}[2]{(\nu\ #1)(#2)}
\newcommand{\varp}[1]{#1}
\newcommand{\rvar}[1]{\mathcal{#1}}
\newcommand{\recp}[2]{\mu \varp{#1}. #2}
\newcommand{\Def}{\sessionfont{def}\ }
\newcommand{\defeq}{\stackrel{\Def}{=}}
\newcommand{\repl}{\ast\,}
\newcommand{\bn}[1]{\mathtt{bn}(#1)}
\newcommand{\fn}[1]{\mathtt{fn}(#1)}
\newcommand{\ofn}[1]{\mathsf{ofn}(#1)}
\newcommand{\fv}[1]{\mathtt{fv}(#1)}
\newcommand{\fpv}[1]{\mathtt{fpv}(#1)}
\newcommand{\subj}[1]{\mathtt{subj}(#1)}
\newcommand{\scong}{\equiv}
\newcommand{\wb}{\approx^{H}}
\newcommand{\wbc}{\approx}
\newcommand{\wbf}{\approx^{C}}
\newcommand{\fwb}{\wbf}
\newcommand{\hwb}{\wb}
\newcommand{\WB}{\approx}
\newcommand{\red}{\longrightarrow}
\newcommand{\Red}{\rightarrow\!\!\!\!\!\rightarrow}
\def\subst#1#2{\{\raisebox{.5ex}{\small$#1$}\! / \mbox{\small$#2$}\}}
\newcommand{\hole}{-}
\newcommand{\context}[2]{#1[#2]}
\newcommand{\barb}[1]{\downarrow_{#1}}
\newcommand{\Barb}[1]{\Downarrow_{#1}}
\newcommand{\nbarb}[1]{\not\downarrow_{#1}}
\newcommand{\nBarb}[1]{\not\Downarrow_{#1}}
\newcommand{\sessionfont}[1]{\mathtt{#1}}
\newcommand{\vart}[1]{\mathsf{#1}}
\newcommand{\ssep}{.}
\newcommand{\shsep}{.}
\newcommand{\outses}{!}
\newcommand{\inpses}{?}
\newcommand{\selses}{\triangleleft}
\newcommand{\brases}{\triangleright}
\newcommand{\dual}[1]{\overline{#1}}
\newcommand{\cat}{\cdot}
\newcommand{\bacc}[2]{#1 \boundv{#2} \shsep}
\newcommand{\breq}[2]{\send{#1} \freev{#2} \shsep}
\newcommand{\bout}[2]{#1 \outses \freev{#2} \ssep}
\newcommand{\bbout}[2]{#1 \outses \blrangle{#2} \ssep}
\newcommand{\binp}[2]{#1 \inpses \boundv{#2} \ssep}
\newcommand{\bsel}[2]{#1 \selses #2 \ssep}
\newcommand{\bbra}[2]{#1 \brases \set{#2}}
\newcommand{\tfont}[1]{\mathtt{#1}}
\newcommand{\tsep}{;}
\newcommand{\chtype}[1]{\lrangle{#1}}
\newcommand{\outtype}{\outses}
\newcommand{\inptype}{\inpses}
\newcommand{\trec}[2]{\mu\vart{#1}.#2}
\newcommand{\tvar}[1]{\vart{#1}}
\newcommand{\tinact}{\tfont{end}}
\newcommand{\btout}[1]{\outtype \lrangle{#1} \tsep}
\newcommand{\bbtout}[1]{\outtype \big\langle{#1}\big\rangle \tsep}
\newcommand{\btinp}[1]{\inptype (#1) \tsep}
\newcommand{\bbtinp}[1]{\inptype \big({#1}\big) \tsep}
\newcommand{\btsel}[1]{\oplus \set{#1}}
\newcommand{\btbra}[1]{\& \set{#1}}
\newcommand{\proves}{\vdash}
\newcommand{\hastype}{\triangleright}
\newcommand{\outlts}{\outses}
\newcommand{\inplts}{\inpses}
\newcommand{\sellts}{\oplus}
\newcommand{\bralts}{\&}
\newcommand{\bactout}[2]{#1 \outlts \freev{#2}}
\newcommand{\bactinp}[2]{#1 \inplts \freev{#2}}
\newcommand{\bactsel}[2]{#1 \sellts #2}
\newcommand{\bactbra}[2]{#1 \bralts #2}
\newcommand{\by}[1]{\stackrel{#1}{\longrightarrow}}
\newcommand{\By}[1]{\stackrel{#1}{\Longrightarrow}}
\newcommand{\hby}[1]{\stackrel{#1}{\longmapsto}}
\newcommand{\Hby}[1]{\stackrel{#1}{\Longmapsto}}
\newcommand{\stau}{\tau_{\mathsf{s}}}
\newcommand{\btau}{\tau_{\beta}}
\newcommand{\dtau}{\tau_{\mathsf{d}}}
\newcommand{\betatran}{$\beta-$transition\xspace}
\newcommand{\bistyp}{\rightleftharpoons}
\newcommand{\map}[1]{[\!\![#1]\!\!]}
\newcommand{\pmap}[2]{\ensuremath{[\!\![#1]\!\!]^{#2}}}
\newcommand{\pmapp}[3]{\ensuremath{[\!\![#1]\!\!]^{#2}_{#3}}}
\newcommand{\tmap}[2]{\ensuremath{(\!\!\langle#1\rangle\!\!)^{#2}}}
\newcommand{\mapt}[1]{\ensuremath{(\!\!\langle#1\rangle\!\!)}}
\newcommand{\mapa}[1]{\ensuremath{\{\!\!\{#1\}\!\!\}}}
\newcommand{\auxmap}[2]{\ensuremath{\big\lfloor\!\!\big\lfloor#1\big\rfloor\!\!\big\rfloor_{#2}}}
\newcommand{\mapchar}[2]{\ensuremath{[\!\!(#1)\!\!]^{#2}}}
\newcommand{\omapchar}[1]{\ensuremath{[\!\!(#1)\!\!]_{\mathsf{c}}}}
\newcommand{\enco}[1]{\big\langle #1\big\rangle}
\newcommand{\calc}[5]{\lrangle{#1, #2, #3, #4, #5}}
\newcommand{\tyl}[1]{\ensuremath{\mathcal{#1}}}
\newcommand{\dualof}{\ \mathsf{dual}\ }
\newcommand{\iso}{\ \mathsf{iso}\ }
\newcommand{\HOp}{\ensuremath{\mathsf{HO}\pi}\xspace}
\newcommand{\sessp}{\ensuremath{\pi}\xspace}
\newcommand{\pHOp}{\ensuremath{\mathsf{HO}\vec{\pi}}\xspace}
\newcommand{\HO}{\ensuremath{\mathsf{HO}}\xspace}
\newcommand{\HOpp}{\ensuremath{\mathsf{HO}\pi^{\mathsf +}}\xspace}
\newcommand{\pHOpp}{\ensuremath{\mathsf{HO}\vec{\pi}^{\mathsf +}}\xspace}
\newcommand{\CAL}{\ensuremath{\mathsf{C}}\xspace}
\newcommand{\minussh}{\ensuremath{\mathsf{-sh}}\xspace}
\newcommand{\ST}{\ensuremath{\mathsf{ST}}\xspace}
\newcommand{\Proc}{\ensuremath{\diamond}}
\newcommand{\appl}[2]{#1\,#2}
\newcommand{\abs}[2]{\lambda #1.\,#2}
\newcommand{\lollipop}{\multimap}
\newcommand{\sharedop}{\rightarrow}
\newcommand{\lhot}[1]{#1\!\! \lollipop\!\! \diamond}
\newcommand{\shot}[1]{#1\!\! \sharedop\!\! \diamond}
\newcommand{\vmap}[1]{(\!|\!|#1|\!|\!)}
\newcommand{\triggerarrow}{\Leftarrow}
\newcommand{\htrigger}[2]{#1 \triggerarrow #2}
\newcommand{\ftrigger}[3]{#1 \triggerarrow #2:#3}
\newcommand{\hotrigger}[4]{\binp{#1}{#2} \newsp{#3}{\appl{#2}{#3} \Par \bout{\dual{#3}}{#4} \inact}}
\newcommand{\fotrigger}[5]{\binp{#1}{#2} \newsp{#3}{\mapchar{#4}{#3} \Par \bout{\dual{#3}}{#5} \inact}}
\newcommand{\horel}[6]{#1; #2 \proves #3 #4 #5 \proves #6}
\newcommand{\mhorel}[7]{
	\begin{array}{rcll}
		#1; \es; &#2& \proves& #3\\
			#4 &#5& \proves & #6 #7
	\end{array}
}
\newcommand{\nhorel}[7]{
	\begin{eqnarray}
		#1; \es; &#2 \proves& #3 \nonumber \\
			#4 &#5 \proves & #6
		\label{#7}
	\end{eqnarray}
}
\newcommand{\nonhosyntax}[1]{\colorbox{lightgray}{\ensuremath{#1}}}
\newcommand{\X}{\varp{X}}
\newcommand{\C}{\ensuremath{{\Bbb C}}}
\newcommand{\suc}{\textrm{succ}}
\newif\iftodo\todotrue
\newif\ifny\nyfalse
\newcommand{\NY}[1]
{\ifny{\color{purple}{#1}}\else{#1}\fi}
\newif\ifdm\dmtrue
\newif\ifrhu\rhutrue
\newif\ifjp\jptrue
\newcommand{\jp}[1]
{\ifjp{\color{red}{JP: #1}}\else{#1}\fi}
\newcommand{\stytraargi}[8]{\ensuremath{#1; #3 \proves_{#7} #4 \hby{#2}_{#8} #5 \proves_{#7} #6 }}
\newcommand{\wtytraargi}[8]{\ensuremath{#1; #3 \proves_{#7} #4 \Hby{#2}_{#8}  #5 \proves_{#7} #6 }}
\newcommand{\wbbarg}[7]{\ensuremath{#1; #3 \proves_{#7} #4 \WB_{#7} #5 \proves_{#7} #6 }}
\newcommand{\stytra}[6]{\ensuremath{#1; #3 \proves #4 \hby{#2} #1; #5 \proves #6 }}
\newcommand{\wtytra}[6]{\ensuremath{#1; #3 \proves #4 \Hby{#2} #1; #5 \proves #6 }}
\newcommand{\wbb}[6]{\ensuremath{#1; #3 \proves #4 \WB #1; #5 \proves #6 }}
\newcommand{\PHOpp}{\ensuremath{\mathsf{HO}{\vec{\pi}}^{+}}\xspace}
\newcommand{\PHOp}{\ensuremath{\mathsf{HO}{\vec{\pi}}}\xspace}
\begin{document}


\title {Core Higher-Order Session Processes: \\
	Tractable Equivalences and Relative Expressiveness\thanks{Last Revision: \today}
}\titlerunning{\today}\authorrunning{\today}
\author{
	Dimitrios Kouzapas\inst{1,2}
        \and 
        Jorge A. P\'{e}rez\inst{3}
	\and
	Nobuko Yoshida\inst{1}
}
\institute {
Imperial College London	
\and
University of Glasgow
\and 
University of Groningen
}
\maketitle
\begin{abstract}
This work proposes 
tractable
bisimulations 
for the higher-order $\pi$-calculus with session primitives (\HOp) and 
offers a complete 
study of the expressivity of its most significant subcalculi.
First we develop three typed bisimulations, which are shown to 
coincide with contextual equivalence.
These characterisations  
demonstrate that observing as inputs
only a specific finite set of higher-order values (which inhabit session types) suffices 
to reason about \HOp processes. 
Next, we identify \HO, 
a minimal, second-order  subcalculus of \HOp in which 
higher-order applications/abstractions, name-passing,
and recursion are absent.
We show that 
\HO can encode 
 \HOp extended with higher-order applications and abstractions
and 
that
a first-order session $\pi$-calculus can encode 
\HOp. 
Both encodings are fully abstract.
We also 
prove that 
the session $\pi$-calculus
with passing of shared names 
cannot be encoded 
into \HOp without shared names. 
We show that $\HOp$, $\HO$, and $\sessp$ are  equally expressive; 
the expressivity of \HO enables effective reasoning about 
typed equivalences for
higher-order processes.
\end{abstract}


\setcounter{tocdepth}{2}
\tableofcontents



%

%


\section{Introduction}
By combining features from the $\lambda$-calculus and the $\pi$-calculus, 
in \emph{higher-order process calculi} exchanged values may contain  processes. 
In this paper, we consider higher-order calculi with \emph{session primitives},
thus enabling the specification of reciprocal exchanges (protocols) 
for higher-order mobile processes, 
which can be verified via type-checking using \emph{session types}~\cite{honda.vasconcelos.kubo:language-primitives}.
The study of higher-order concurrency has received significant attention, 
from untyped and typed perspectives (see, e.g.,~\cite{ThomsenB:plachoasgcfhop,SangiorgiD:expmpa,San96int,JeffreyR05,MostrousY15,DBLP:journals/iandc/LanesePSS11,DBLP:conf/icalp/LanesePSS10,DBLP:conf/esop/KoutavasH11,XuActa2012}).
Although models of session-typed 
communication with features of higher-order concurrency exist~\cite{tlca07,DBLP:journals/jfp/GayV10},
their  \emph{tractable behavioural equivalences} and \emph{relative expressiveness}
remain little understood. 
Clarifying their status is not only useful for, 
e.g.,~justifying non-trivial mobile protocol
optimisations, but also for transferring key reasoning techniques
between (higher-order) session calculi. Our discovery 
is that \emph{linearity} of session types plays a vital role to 
offer new equalities and fully abstract encodability, 
which to our best knowledge have not been proposed before.   

The main higher-order language in our work, denoted \HOp,
extends the higher-order $\pi$-calculus~\cite{SangiorgiD:expmpa} with session primitives:
it contains constructs for synchronisation on shared names, 
recursion, 
name abstractions (i.e., functions from name identifiers  to processes, 
denoted $\lambda x.P$) and applications 
{(denoted $(\lambda x.P)a$)};
and session communication (value passing and
labelled choice using linear names). 
We study two significant subcalculi of \HOp, 
{which}
distil higher- and first-order mobility:
the \HO-calculus, which is \HOp without recursion and name passing, and 
the session \sessp-calculus {(here denoted~\sessp)}, which is \HOp without abstractions and applications.  
While \sessp is, 
in essence, the calculus in~\cite{honda.vasconcelos.kubo:language-primitives}, 
this paper shows that \HO  is a new core calculus 
for higher-order session concurrency.

In the first part of the paper, we address tractable behavioural equivalences
for \HOp.
A well-studied behavioural equivalence in the higher-order setting 
is \emph{context bisimilarity}~\cite{San96H},
a labelled characterisation of reduction-closed, barbed congruence, 
which offers an appropriate discriminative power at the price of heavy universal quantifications in output clauses.
Obtaining alternative characterisations 
is thus a recurring issue 
in the study of higher-order calculi. 
Our approach 
shows that protocol specifications given by session types are 
essential to  limit 
the behaviour of higher-order session processes. 
Exploiting elementary processes inhabiting session types, 
this limitation is formally enforced by 
a refined (typed) labelled transition system (LTS)
that narrows down the spectrum of allowed process behaviours, 
thus enabling tractable reasoning techniques. 
Two tractable characterisations of bisimilarity 
are then introduced. 
Remarkably, using session types we prove that these 
bisimilarities coincide with context bisimilarity, without using
operators for 
name-matching.

We then move on to assess the expressivity 
 of \HOp, \HO, and \sessp as delineated by typing. 
We establish strong correspondences between 
these calculi  via type-preserving, fully abstract encodings up to 
behavioural equalities. While encoding \HOp 
into the $\pi$-calculus preserving session types 
(extending  known  results for untyped processes) is 
significant, 
our main contribution is 
an encoding of \HOp into \HO, where name-passing is absent.  

We illustrate the essence of encoding name passing into \HO: 
to encode name output, we ``pack''
the name to be passed around into a suitable abstraction; 
upon reception, the receiver must ``unpack'' this object following a precise protocol.
More precisely, our encoding 
{of name passing}
in \HO is given as:
\begin{center}
\begin{tabular}{rcll}
  $\map{\bout{a}{b} P}$	&$=$&	$\bout{a}{ \abs{z}{\,\binp{z}{x} (\appl{x}{b})} } \map{P}$ \\
  $\map{\binp{a}{x} Q}$	&$=$&	$\binp{a}{y} \newsp{s}{\appl{y}{s} \Par \bout{\dual{s}}{\abs{x}{\map{Q}}} \inact}$
\end{tabular}
\end{center}
\noi where $a,b$ are names; $s$ and $\dual{s}$ are 
linear names (called \emph{session endpoints});
$\bout{a}{V} P$ and 
$\binp{a}{x} P$ denote an output and input at~$a$;   
and $\newsp{s}P$ is hiding. 
A (deterministic) reduction between   endpoints 
$s$ and $\dual{s}$ guarantees name $b$ is properly unpacked.
Encoding a recursive process $\recp{X}{P}$ is \NY{also} challenging, for 
the linearity of endpoints in $P$ must be preserved.
We encode recursion with non-tail recursive session types; for this 
we apply recent advances on the theory of session duality~\cite{TGC14,DBLP:journals/corr/abs-1202-2086}.

We further extend our encodability results to 
i)~\HOp with \emph{higher-order} abstractions (denoted \HOpp) 
and to ii)~\HOp with polyadic name passing and abstraction (\pHOp); and to
their super-calculus  (\PHOpp) (equivalent to the calculus in~\cite{tlca07}). 
A further result shows that 
shared names
strictly add expressive power 
to session calculi. 
\figref{fig:express} summarises 
these results.

\begin{figure}[t]
\centering
\includegraphics[scale=1]{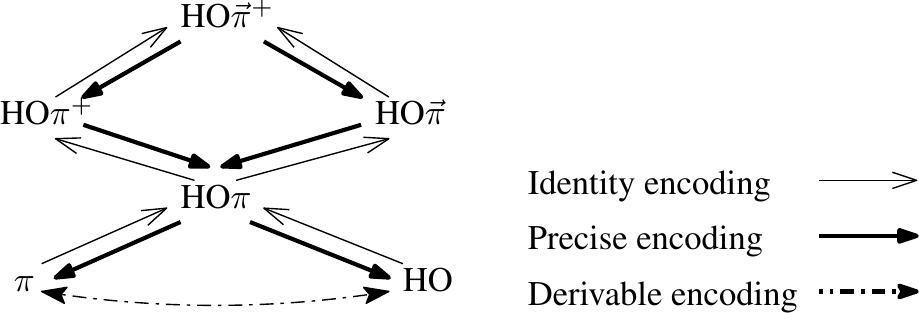}
\vspace{1mm}

	\caption{Encodability in Higher-Order Session Calculi. 
	Precise encodings are defined in \defref{def:goodenc}.
	\label{fig:express}}
\end{figure}

\smallskip

\myparagraph{Outline / Contributions.} This paper 
is structured as follows:
\begin{enumerate}[$\bullet$]

	\item	\secref{sec:calculus} presents the higher-order session calculus \HOp and its 
		subcalculi \HO and~\sessp. 

	\item	\secref{sec:types} gives the type system
		and states type soundness for \HOp and its variants.

	\item	\secref{sec:behavioural} 
		develops \emph{higher-order} and \emph{characteristic} bisimilarities, our two
		tractable characterisations of contextual equivalence which 
		alleviate the issues of context bisimilarity~\cite{San96H}. These 
		relations are shown to coincide in \HOp (\thmref{the:coincidence}).

	\item	\secref{sec:enc} defines \emph{precise (typed) encodings} by extending encodability criteria 
		studied for
		untyped processes~(e.g.~\cite{DBLP:journals/iandc/Gorla10,DBLP:conf/icalp/LanesePSS10}).

	\item	\secref{sec:positive} and \secref{sec:negative}
		gives encodings of \HOp into \HO and of \HOp into \sessp.
		These encodings 
		are shown to be \emph{precise} (\propref{prop:prec:HOp_to_HO} and \propref{prop:prec:HOp_to_p}).
		Mutual encodings between \sessp and \HO are derivable; 
		all these calculi are thus equally expressive.
		Exploiting determinacy and typed equivalences,
		we also prove the non-encodability of shared names
		into linear names (\thmref{thm:negative}).

	\item	\secref{sec:extension} studies extensions of \HOp. We show that 
		both \HOpp (the extension with higher-order applications) 
		and \pHOp (the extension with polyadicity) are encodable in \HOp
		(\propref{prop:prec:HOpp_to_HOp} and \propref{prop:prec:pHOp_to_HOp}).
		This connects our work 
		to the existing
		higher-order session calculus in~\cite{tlca07} (here denoted  $\pHOpp$).

	\item	\secref{sec:related} reviews related works.
The appendix collects proofs of the main results.
\end{enumerate}


\section{The Higher-Order Session $\pi$-Calculus (\HOp)}
\label{sec:calculus}

We introduce the 
\emph{Higher-Order Session $\pi$-Calculus} (\HOp).
\HOp includes both name- and abstraction-passing operators
as well as recursion; it corresponds to a subcalculus 
of the language
studied by Mostrous and Yoshida in~\cite{tlca07,MostrousY15}. 
Following the literature~\cite{JeffreyR05},
for simplicity of the presentation
we concentrate on the second-order call-by-value \HOp.  
(In \secref{sec:extension} we consider the extension of 
\HOp with general higher-order abstractions 
and polyadicity in name-passing/abstractions.)
We also introduce two subcalculi of \HOp.
In particular, we define the 
core higher-order session
calculus (\HO), which 
includes constructs for shared name synchronisation and 
constructs for session establish\-ment/communication and 
(monadic) name-abstraction, but lacks name-passing and recursion.

Although minimal, in \secref{s:expr}
the abstraction-passing capabilities of \HOp will prove 
expressive enough to capture key features of session communication, 
such as delegation and recursion.

\subsection{Syntax} 

The syntax for $\HOp$ processes is given in \figref{fig:syntax}.

\myparagraph{Identifiers.}
We use $a,b,c, \dots$ to range over shared names, and
$s, \dual{s}, \dots$ to range over session names
whereas $m, n, t, \dots$ range over shared or session names.
We define dual session endpoints $\dual{s}$,
with the dual operator defined as
$\dual{\dual{s}} = s$ and $\dual{a} = a$.
Intuitively, names $s$ and $\dual{s}$ are dual  \emph{endpoints}.
Name and abstraction variables are uniformly denoted with $x, y, z, \dots$;
we reserve $k$ for name variables and we sometimes write $\underline{x}$ for abstraction variables.
Recursive variables are denoted with $\varp{X}, \varp{Y} \dots$.
An abstraction $\abs{x}{P}$ is a process $P$ with bound variable $x$.
Symbols $u, v, \dots$ range over names or variables. Furthermore
we use $V, W, \dots$ to denote transmittable values; either channels $u, v$ or
abstractions.

\myparagraph{Terms.} 
The name-passing constructs of \HOp include the
$\pi$-calculus prefixes for sending and receiving values $V$.
Process $\bout{u}{V} P$ denotes the output of value $V$
over channel $u$, with continuation $P$;
process $\binp{u}{x} P$ denotes the input prefix on channel $u$ of a value
that it is going to be substituted on variable $x$ in continuation $P$. 
Recursion is expressed by the primitive recursor $\recp{X}{P}$,
which binds the recursive variable $\varp{X}$ in process $P$.
Process $\appl{V}{u}$ is the application
process; it binds channel $u$ on the abstraction $V$.
Prefix $\bsel{u}{l} P$ selects label $l$ on channel $u$
and then behaves as $P$.
Given $i \in I$ process $\bbra{u}{l_i: P_i}_{i \in I}$ offers a choice
on labels $l_i$ with continuation $P_i$.
The calculus also includes standard constructs for 
the inactive process $\inact$, 
parallel composition $P_1 \Par P_2$, and 
name restriction $\news{n} P$.
Session name restriction $\news{s} P$ simultaneously 
binds endpoints $s$ and $\dual{s}$ in $P$.
We use $\fv{P}$ and $\fn{P}$ to denote a set of free 
variables and names, respectively; 
and assume $V$ in $\bout{u}{V}{P}$ does not include free recursive 
variables $\varp{X}$. 
Furthremore, a well-formed process relies on assumptions for
guarded recursive processes.
If $\fv{P} = \emptyset$, we call $P$ {\em closed}.
We write $\mathcal{P}$ for the set of all well-formed
processes.

\begin{figure}[t!]
\[
	\begin{array}{lcl}
		\begin{array}{lrcl}
			\textrm{(Processes)}&
			P, Q	&\bnfis&	\bout{u}{V} P \bnfbar \binp{u}{x} P \bnfbar \appl{V}{u} \\
			&	&\bnfbar&	\bsel{u}{l} P \bnfbar \bbra{u}{l_i: P_i}_{i \in I} \bnfbar \inact\\
			&	&\bnfbar&	P \Par Q \bnfbar \news{n} P \bnfbar \nonhosyntax{\varp{X} \bnfbar  \recp{X}{P}}
		\end{array}
		&\quad &
		\begin{array}{lrcl}
			
			\textrm{(Names)}	& n, m, t	&\bnfis&	a, b \bnfbar s, \dual{s} \\
			\textrm{(Identifiers)}	& u, v		&\bnfis& 	n \bnfbar x, y, z, k \\
			\textrm{(Values)}	& V, Q		&\bnfis&	\nonhosyntax{u} \bnfbar \abs{x}{P}
		\end{array}
	\end{array}
\]
	\caption{Syntax for $\HOp$ (The definition of \HO lacks the constructs in \colorbox{lightgray}{grey}) \label{fig:syntax}}
\end{figure}

\subsection{Sub-calculi}

\label{subsec:subcalculi}

We identify two main sub-calculi of $\HOp$
that will form the basis of our study:
\begin{definition}[Sub-calculi of \HOp]\myrm
	We let $\CAL \in \set{\HOp, \HO, \sessp}$ with:
	\begin{enumerate}[-]
		\item	{\em Core higher-order session calculus (\HO)}:
			The sub-calculus \HO uses only abstraction passing, i.e.,~values
			in \figref{fig:syntax}
			are defined as in the non-gray syntax;
			$V \bnfis \abs{x} P$ and does not use the primitive
			recursion constructs, $\varp{X}$ and $\recp{X}{P}$.

		\item	{\em Session $\pi$-calculus (\sessp)}:
			The sub-calculus \sessp uses only name-passing constructs, i.e.,~values
			in \figref{fig:syntax}
			are defined as $V \bnfis u$, and does not use applications
			$\appl{x}{u}$.
	\end{enumerate}
	We write $\CAL^\minussh$ to denote a sub-calculus without shared names,
	i.e.,~identifiers in \figref{fig:syntax} are defined as $u,v \bnfis s, \dual{s}$.
\end{definition}
Thus, while \sessp is essentially the standard session $\pi$-calculus
as defined in the literature~\cite{honda.vasconcelos.kubo:language-primitives,GH05},
\HO can be related to a subcalculus of higher-order process calculi as studied
in the untyped~\cite{SangiorgiD:expmpa,SaWabook,JeffreyR05}
and typed settings~\cite{tlca07,mostrous09sessionbased,MostrousY15}.
In \secref{sec:positive} we show that 
$\HOp$, $\HO$, and $\sessp$ have the same expressivity.

\subsection{Operational Semantics}

\label{subsec:reduction_semantics}

The operational semantics for \HOp is standardly given as a \emph{reduction relation},
supported by a {structural congruence} relation, denoted $\scong$. 
Structural congruence is 
the least congruence that satisfies the commutative monoid $(\mathcal{P}, \Par, \inact)$:
\[
\begin{array}{c}
	P \Par \inact \scong P
	\qquad
	P_1 \Par P_2 \scong P_2 \Par P_1
	\qquad
	P_1 \Par (P_2 \Par P_3) \scong (P_1 \Par P_2) \Par P_3
\end{array}
\]
\noi satisfies $\alpha$-conversion:
\[
	P_1 \scong_\alpha P_2\ \textrm{  implies  }\ P_1 \scong P_2
\]
\noi and furthermore, satisfies the rules:
\[
\begin{array}{c}
	n \notin \fn{P_1}\ \textrm{  implies  }\ P_1 \Par \news{n} P_2 \scong \news{n}(P_1 \Par P_2)
	\\[2mm]
	\news{n} \inact \scong \inact
	\qquad \quad
	\news{n} \news{m} P \scong \news{m} \news{n} P
	\qquad \quad
	\recp{X}{P} \scong P\subst{\recp{X}{P}}{\varp{X}}
\end{array}
\]
\noi The first rule is describes scope opening for names.
Restricting of a name in an inactive process has no effect.
Furthermore, we can permute name restrictions.
Recursion is defined in structural congruence terms;
a recursive term $\recp{X}{P}$ is structurally
equivalent to its unfolding.

Structural congruence is extended to support values,
i.e.,~is the least congruence over processes and values
that satisfies $\cong$ for processes and, furthermore:
\[
	\abs{x}{P_1} \scong_\alpha \abs{y}{P_2}\ \textrm{ implies }\ \abs{x}{P_1} \scong \abs{y}{P_2}
	\qquad \quad
	P_1 \scong P_2 \textrm{ implies } \abs{x}{P_1} \scong \abs{x}{P_2}
\]
\noi This way, abstraction values are congruent up-to $\alpha$-conversion.
Furthermore, two congruent processes can construct congruent
abstractions.


\begin{figure}[t!]
\[
\begin{array}{c}
	\begin{array}{rcll}
		(\abs{x}{P}) \, u  & \red & P \subst{u}{x} & \orule{App}
		\\[2mm]
		\bout{n}{V} P \Par \binp{\dual{n}}{x} Q & \red & P \Par Q \subst{V}{x} & \orule{Pass}
		\\[2mm]
		\bsel{n}{l_j} Q \Par \bbra{\dual{n}}{l_i : P_i}_{i \in I} & \red & Q \Par P_j ~~(j \in I)~~  & \orule{Sel}
	\end{array}
	\\[8mm]
	\begin{array}{c}
		\tree{
			P \red P'
		}{
			\news{n} P \red \news{n} P'
		}\ \orule{Sess}
		\quad \qquad
		\tree{
			P \red P'
		}{
			P \Par Q \red  P' \Par Q
		}\ \orule{Par}
		\quad \qquad
		\tree{
			P \scong \red \scong P'
		}{
			P \red P'
		}\ \orule{Cong}
	\end{array}
\end{array}
\]
	\caption{Reduction semantics for \HOp. \label{fig:reduction}}
\end{figure}

\figref{fig:reduction} defines
the operational semantics for the \HOp.
$\orule{App}$ is a name application.
Rule~$\orule{Pass}$ defines value passing where
value $V$ is being send on channel $n$ to its dual endpoint $\dual{n}$
(for shared interactions $\dual{n} = n$).
As a result of the value passing reduction the continuation of the 
receiving process substitutes the receiving variable $x$ with $V$.
Rule~$\orule{Sel}$ is the standard rule for labelled choice/selection;
given an index set $I$,
a process selects label $l_j, j \in I$ on channel $n$ over a set of
labels $\set{l_i}_{i \in I}$ that are offered by a parallel process
on the dual session endpoint $\dual{n}$.
Remaining rules define congruence 
with respect to parallel composition (rule $\orule{Par}$)
and name restriction (rule $\orule{Ses}$).
Rule $\orule{Cong}$ defines closure under structural congruence.
We write $\red^\ast$ for a multi-step reduction.

\section{Session Types for $\HOp$}
\label{sec:types}

In this section we define a session typing system for
$\HOp$ and establish its main properties. We use as
a reference the type system for higher-order session processes 
developed by Mostrous and Yoshida~\cite{tlca07,mostrous09sessionbased,MostrousY15}.
Our system is simpler than that in~\cite{tlca07}, in order to distil the key
features of higher-order communication in a session-typed setting.


\subsection{Syntax}

We define the syntax of session types for \HOp.

\begin{definition}[Syntax of Types]\myrm
	\label{def:types}
	The syntax of types is defined on the types for sessions $S$,
	and the types for values $U$:
	\[
	\begin{array}{lrcl}
		\textrm{(value)} & U & \bnfis &		\nonhosyntax{C}  \bnfbar L 
		\\

		\textrm{(name)} & C & \bnfis &	S \bnfbar \chtype{S} \bnfbar \chtype{L}
		\\

		\textrm{(abstr)} & L & \bnfis &	\shot{C} \bnfbar \lhot{C}
		\\

		\textrm{(session)} & S,T & \bnfis & 	\btout{U} S \bnfbar \btinp{U} S
							\bnfbar \btsel{l_i:S_i}_{i \in I} \bnfbar \btbra{l_i:S_i}_{i \in I}\\
					&&\bnfbar&	\trec{t}{S} \bnfbar \vart{t}  \bnfbar \tinact
	\end{array}
	\]
\end{definition}
\noi \myparagraph{Types for Values.}
Types for values range over symbol $U$ which includes
first-order types $C$ and higher-order types $L$.
First-order types $C$ are used to type names;
session types $S$ type session names and shared types
$\chtype{S}$ or $\chtype{L}$ type shared names that
carry session values and higher-order values, respectively.
Higher-order types $L$ are used to type abstraction values;
$\shot{C}$ and $\lhot{C}$ denote
shared and linear abstraction types, respectively.

\myparagraph{Session Types.}
The syntax of session types $S$ follows the usual
(binary) session types with
recursion~\cite{honda.vasconcelos.kubo:language-primitives,GH05}.
An {\em output type} $\btout{U} S$ is assigned to a name that
first sends a value of type $U$ and then follows
the type described by $S$.
Dually, the {\em input type} $\btinp{U} S$ is assigned to a name
that first receives a value of type $U$ and then continues as $S$. 
Session types for labelled choice and selection, 
written $\btbra{l_i:S_i}_{i \in I}$ and $\btsel{l_i:S_i}_{i \in I}$, respectively,
require a set of types $\set{S_i}_{i \in I}$ that correspond to a set of
labels $\set{i \in I}_{i \in I}$. 
{\em Recursive session types} are defined using the primitive recursor.
We require type variables to be \emph{guarded}; this means, e.g., that type~$\trec{t}{\vart{t}}$ is not allowed.
Type $\tinact$ is the termination type.
We let $\mathsf{T}$ to be the set of all well-formed types and
\ST to be the set of all well-formed session types.

Types of \HO exclude $\nonhosyntax{C}$ from 
value types of \HOp; the types of \sessp exclude $L$. 
From each $\CAL \in \{\HOp, \HO, \pi \}$, $\CAL^{-\mathsf{sh}}$ 
excludes shared name types ($\chtype{S}$ and $\chtype{L}$), 
from name type~$C$.

\begin{remark}[Restriction on Types for Values]
	The syntax for value types is restricted
	to disallow types of the form:
	\begin{enumerate}[$\bullet$]
		\item	$\chtype{\chtype{U}}$: shared names
			cannot carry shared names; and

		\item  $\shot{U}$: abstractions do not
			bind higher-order variables.
	\end{enumerate}
\end{remark}

The difference between the syntax of process
in \HOp with the syntax of processes in~\cite{tlca07,MostrousY15}
is also reflected on the two corresponding type syntax;
the type structure  in~\cite{tlca07,MostrousY15}, 
supports the arrow types of the form $U \sharedop T$ and 
$U \lollipop T$, where $T$ denotes an arbitrary type of a term 
(i.e.~a value or a process).


\subsection{Duality}

Duality is defined following the co-inductive
approach, as in~\cite{GH05,TGC14}.
We first require the notion of type equivalence.
\begin{definition}[Type Equivalence]\myrm
\label{def:type_equiv}
	Define function $F(\Re): \mathsf{T} \longrightarrow \mathsf{T}$:
	\[
		\begin{array}{rcl}
			F(\Re) 	&=&	\set{(\tinact, \tinact)} \\
				&\cup&	\set{(\chtype{S}, \chtype{T}) \bnfbar S\ \Re\ T} \cup \set{(\chtype{L_1}, \chtype{L_2}) \bnfbar L_1\ \Re\ L_2}\\
				&\cup&	\set{(\shot{C_1}, \shot{C_2}), (\lhot{C_1}, \lhot{C_2}) \bnfbar C_1\ \Re\ C_2}\\
				&\cup&	\set{(\btout{U_1} S, \btout{U_2} T)\,,\, (\btinp{U_1} S, \btinp{U_1} T) \bnfbar U_1\ \Re\ U_2, S\ \Re\ T}\\
				&\cup&	\set{(\btsel{l_i: S_i}_{i \in I} \,,\, \btsel{l_i: T_i}_{i \in I}) \bnfbar  S_i\ \Re\ T_i}\\
				&\cup&	\set{(\btbra{l_i: S_i}_{i \in I}\,,\, \btbra{l_i: T_i}_{i \in I}) \bnfbar S_i\ \Re\ T_i}\\
				&\cup&	\set{(S\,,\, T) \bnfbar S\subst{\trec{t}{S}}{\vart{t}}\ \Re\ T)} \\
				&\cup&	\set{(S\,,\, T) \bnfbar S\ \Re\ T\subst{\trec{t}{T}}{\vart{t}})}
		\end{array}
	\]	
	\noi Standard arguments ensure that $F$ is monotone, thus the greatest fixed point
	of $F$ exists. Let type equivalence be defined as $\iso = \nu X. F(X)$.
\end{definition}
\noi In essence, type equivalence is a co-inductive definition that
equates types up-to recursive unfolding.
We may now define the duality relation in terms of type equivalence.
\begin{definition}[Duality]\myrm
\label{def:type_dual}
	Define function $F(\Re): \mathsf{ST} \longrightarrow \mathsf{ST}$:
	\[
		\begin{array}{rcl}
			F(\Re) 	&=&	\set{(\tinact, \tinact)}\\
				&\cup&	\set{(\btout{U_1} S, \btinp{U_2} T)\,,\, (\btinp{U} S, \btout{U} T) \bnfbar U_1 \iso U_2, S\ \Re\ T}\\
				&\cup&	\set{(\btsel{l_i: S_i}_{i \in I} \,,\, \btbra{l_i: T_i}_{i \in I}) \bnfbar  S_i\ \Re\ T_i}\\
				&\cup&	\set{(\btbra{l_i: S_i}_{i \in I}\,,\, \btsel{l_i: T_i}_{i \in I}) \bnfbar S_i\ \Re\ T_i}\\
				&\cup&	\set{(S\,,\, T) \bnfbar S\subst{\trec{t}{S}}{\vart{t}}\ \Re\ T)}\\
				&\cup&	\set{(S\,,\, T) \bnfbar S\ \Re\ T\subst{\trec{t}{T}}{\vart{t}})}
		\end{array}
	\]	
	\noi Standard arguments ensure that $F$ is monotone, thus the greatest fixed point
	of $F$ exists. Let duality be defined as $\dualof = \nu X. F(X)$.
\end{definition}
Duality is applied co-inductively to session types
up-to recursive unfolding.
Dual session types are prefixed
on dual session type constructors
that carry equivalent types ($!$ is dual to $?$ and $\oplus$ is dual to~$\&$).


\subsection{Type Environments and Judgements}
Following
\cite{tlca07,MostrousY15}, we define the typing environments.
\begin{definition}[Typing environment]\myrm\label{def:typeenv}
	We define the {\em shared type environment} $\Gamma$,
	the {\em linear type environment} $\Lambda$, and
	the {\em session type environment} $\Delta$ as:
	\[
	\begin{array}{llcl}
		\text{(Shared)}		& \Gamma  & \bnfis &	\emptyset \bnfbar \Gamma \cat x: \shot{C} \bnfbar \Gamma \cat u: \chtype{S} \bnfbar
								\Gamma \cat u: \chtype{L} \bnfbar \Gamma \cat \varp{X}: \Delta
		\\
		\text{(Linear)}		& \Lambda & \bnfis &	\emptyset \bnfbar \Lambda \cat x: \lhot{C}
		\\
		\text{(Session)}	& \Delta  & \bnfis &	\emptyset \bnfbar \Delta \cat u:S
	\end{array}
	\]
	We further require:
	\begin{enumerate}[i.]
		\item	Domains of $\Gamma, \Lambda, \Delta$ are pairwise distinct.
		\item	Weakening, contraction and exchange apply to shared environment $\Gamma$.
		\item	Exchange applies to linear environments $\Lambda$ and $\Delta$. 
	\end{enumerate}
\end{definition}
\noi We define typing judgements for values $V$
and processes $P$:
\[	\begin{array}{c}
		\Gamma; \Lambda; \Delta \proves V \hastype U \qquad \qquad \qquad \qquad \Gamma; \Lambda; \Delta \proves P \hastype \Proc
	\end{array}
\]
\noi The first judgement asserts that under environment $\Gamma; \Lambda; \Delta$
values $V$ have type $U$,
whereas the second judgement asserts that under environment $\Gamma; \Lambda; \Delta$
process $P$ has the typed process type $\Proc$.


\subsection{Typing Rules}

\begin{figure}[!t]
\[
	\begin{array}{c}
		\trule{Sess}~~\Gamma; \emptyset; \set{u:S} \proves u \hastype S 
		\quad
		\trule{Sh}~~\Gamma \cat u : U; \emptyset; \emptyset \proves u \hastype U
		\quad
		\trule{LVar}~~\Gamma; \set{x: \lhot{C}}; \emptyset \proves x \hastype \lhot{C}
		\\[4mm]

		\trule{Prom}~~\tree{
			\Gamma; \emptyset; \emptyset \proves V \hastype \lhot{C}
		}{
			\Gamma; \emptyset; \emptyset \proves V \hastype \shot{C}
		} 
		\quad
		\trule{EProm}~~\tree{
			\Gamma; \Lambda \cat x : \lhot{C}; \Delta \proves P \hastype \Proc
		}{
			\Gamma \cat x:\shot{C}; \Lambda; \Delta \proves P \hastype \Proc
		}
		\\[6mm]
%
		\trule{Abs}~~\tree{
			\Gamma; \Lambda; \Delta_1 \proves P \hastype \Proc
			\quad
			\Gamma; \es; \Delta_2 \proves x \hastype C
		}{
			\Gamma; \Lambda; \Delta_1 \backslash \Delta_2 \proves \abs{x}{P} \hastype \lhot{C}
		}
		\\[6mm]

		\trule{App}~~\tree{
			\begin{array}{c}
				U = \lhot{C} \lor \shot{C}
				\quad
				\Gamma; \Lambda; \Delta_1 \proves V \hastype U
				\quad
				\Gamma; \es; \Delta_2 \proves u \hastype C
			\end{array}
		}{
			\Gamma; \Lambda; \Delta_1 \cat \Delta_2 \proves \appl{V}{u} \hastype \Proc
		} 
		\\[6mm]


		\trule{Send}~~\tree{
			\Gamma; \Lambda_1; \Delta_1 \proves P \hastype \Proc
			\quad
			\Gamma; \Lambda_2; \Delta_2 \proves V \hastype U
			\quad
			u:S \in \Delta_1 \cat \Delta_2
		}{
			\Gamma; \Lambda_1 \cat \Lambda_2; ((\Delta_1 \cat \Delta_2) \backslash \set{u:S}) 
			\cat u:\btout{U} S \proves \bout{u}{V} P \hastype \Proc
		}
		\\[6mm]

		\trule{Rcv}~~\tree{
			\Gamma; \Lambda_1; \Delta_1 \cat u: S  \proves P \hastype \Proc
			\quad
			\Gamma; \Lambda_2; \Delta_2 \proves x \hastype C
		}{
			\Gamma \backslash x; \Lambda_1 \backslash \Lambda_2; \Delta_1 \backslash \Delta_2 
			\cat u: \btinp{C} S \vdash \binp{u}{x} P \hastype \Proc
		}
		\\[6mm]
%

		\trule{Req}~~\tree{
			\begin{array}{c}
				\Gamma; \es; \es \proves u \hastype U_1
				\quad
				\Gamma; \Lambda; \Delta_1 \proves P \hastype \Proc
				\quad
				\Gamma; \es; \Delta_2 \proves V \hastype U_2
				\\
				(U_1 = \chtype{S} \Leftrightarrow U_2 = S)
				\lor
				(U_1 = \chtype{L} \Leftrightarrow U_2 = L)
			\end{array}
		}{
			\Gamma; \Lambda; \Delta_1 \cat \Delta_2 \proves \bout{u}{V} P \hastype \Proc
		}
		\\[6mm]


		\trule{Acc}~~\tree{
			\begin{array}{c}
				\Gamma; \emptyset; \emptyset \proves u \hastype U_1
				\quad
				\Gamma; \Lambda_1; \Delta_1 \proves P \hastype \Proc
				\quad
				\Gamma; \Lambda_2; \Delta_2 \proves x \hastype U_2
				\\
				(U_1 = \chtype{S} \Leftrightarrow U_2 = S)
				\lor
				(U_1 = \chtype{L} \Leftrightarrow U_2 = L)
			\end{array}
		}{
			\Gamma; \Lambda_1 \backslash \Lambda_2; \Delta_1 \backslash \Delta_2 \proves \binp{u}{x} P \hastype \Proc
		}
		\\[6mm]


		\trule{Bra}~~\tree{
			 \forall i \in I \quad \Gamma; \Lambda; \Delta \cat u:S_i \proves P_i \hastype \Proc
		}{
			\Gamma; \Lambda; \Delta \cat u: \btbra{l_i:S_i}_{i \in I} \proves \bbra{u}{l_i:P_i}_{i \in I}\hastype \Proc
		}
		\qquad\quad 
	 	\trule{Sel}~~\tree{
			\Gamma; \Lambda; \Delta \cat u: S_j  \proves P \hastype \Proc \quad j \in I
		}{
			\Gamma; \Lambda; \Delta \cat u:\btsel{l_i:S_i}_{i \in I} \proves \bsel{u}{l_j} P \hastype \Proc
		}
		\\[6mm]

%

		\trule{Res}~~\tree{
			\Gamma\cat a:\chtype{S} ; \Lambda; \Delta \proves P \hastype \Proc
		}{
			\Gamma; \Lambda; \Delta \proves \news{a} P \hastype \Proc}
		\qquad\quad
		\trule{ResS}~~\tree{
			\Gamma; \Lambda; \Delta \cat s:S_1 \cat \dual{s}: S_2 \proves P \hastype \Proc \quad S_1 \dualof S_2
		}{
			\Gamma; \Lambda; \Delta \proves \news{s} P \hastype \Proc
		}
		\\[6mm]

		\trule{Par}~~\tree{
			\Gamma; \Lambda_{1}; \Delta_{1} \proves P_{1} \hastype \Proc \quad \Gamma; \Lambda_{2}; \Delta_{2} \proves P_{2} \hastype \Proc
		}{
			\Gamma; \Lambda_{1} \cat \Lambda_2; \Delta_{1} \cat \Delta_2 \proves P_1 \Par P_2 \hastype \Proc
		}
		\qquad\quad
		\trule{End}~~\tree{
			\Gamma; \Lambda; \Delta  \proves P \hastype T \quad u \not\in \dom{\Gamma, \Lambda,\Delta}
		}{
			\Gamma; \Lambda; \Delta \cat u: \tinact  \proves P \hastype \Proc
		}
		\\[6mm]

		\trule{Nil}~~\Gamma; \emptyset; \emptyset \proves \inact \hastype \Proc
		\qquad \quad
		\trule{RVar}~~\Gamma \cat \varp{X}: \Delta; \emptyset; \Delta  \proves \varp{X} \hastype \Proc
		\qquad\quad 

	 	\trule{Rec}~~\tree{
			\Gamma \cat \varp{X}: \Delta; \emptyset; \Delta  \proves P \hastype \Proc
		}{
			\Gamma ; \emptyset; \Delta  \proves \recp{X}{P} \hastype \Proc
		}

%
%
%
	\end{array}
\]
\caption{Typing Rules for $\HOp$.\label{fig:typerulesmy}}
\end{figure}

The type relation is defined in \figref{fig:typerulesmy}.
Rule $\trule{Session}$ requires the minimal session environment $\Delta$ to type session 
$u$ with type $S$.
Rule $\trule{LVar}$ requires 
the minimal linear environment $\Lambda$ to type 
higher-order variable $x$ with type $\shot{C}$.
Rule $\trule{Shared}$
assigns the value type $U$
to shared names or shared variables $u$ 
if the map $u:U$ exists in environment~$\Gamma$. 
Rule $\trule{Shared}$ also requires that the linear environment is
empty.
The type $\shot{C}$ for shared higher-order values $V$
is derived using rule $\trule{Prom}$, where we require
a value with linear type to be typed without a linear
environment present in order to be used as a shared type.
Rule $\trule{EProm}$ allows to freely use a linear
type variable as shared type variable. 
Abstraction values are typed with rule $\trule{Abs}$.
The key type for an abstraction is the type for
the bound variables of the abstraction, i.e.,~for
bound variable with type $C$ the abstraction
has type $\lhot{C}$.
The dual of abstraction typing is application typing
governed by rule $\trule{App}$, where we expect
the type $C$ of an application name $u$ 
to match the type $\lhot{C}$ or $\shot{C}$
of the application variable $x$.

A process prefixed with a session send operator $\bout{u}{V} P$
is typed using rule $\trule{Send}$.
The type $U$ of a send value $V$ should appear as a prefix
on the session type $\btout{U} S$ of~$s$.
Rule $\trule{Rcv}$
defines the typing for the 
reception of values $\binp{u}{V} P$.
The type $U$ of a receive value should 
appear as a prefix on the session type $\btinp{U} S$ of $u$.
We use a similar approach with session prefixes
to type interaction between shared channels as defined 
in rules $\trule{Req}$ and $\trule{Acc}$,
where the type of the sent/received object 
($S$ and $L$, respectively) should
match the type of the sent/received subject
($\chtype{S}$ and $\chtype{L}$, respectively).
Select and branch prefixes are typed using the rules
$\trule{Sel}$ and $\trule{Bra}$ respectively. Both
rules prefix the session type with the selection
type $\btsel{l_i: S_i}_{i \in I}$ and
$\btbra{l_i:S_i}_{i \in I}$.

The creation of a
shared name $a$ requires to add
its type in environment $\Gamma$ as defined in 
rule \trule{Res}. 
Creation of a session name $s$
creates two endpoints with dual types and adds them to
the session environment 
$\Delta$ as defined in rule \trule{ResS}. 
Rule \trule{Par} concatenates the linear environment of
the parallel components of a parallel operator
to create a type for the composed process.
The disjointness of environments $\Lambda$ and $\Delta$
is implied. Rule \trule{End} allows a form of weakening 
for the session environment $\Delta$, provided that
the name added in $\Delta$ has the inactive
type $\tinact$. The inactive process $\inact$ has an empty
linear environment. The recursive variable is typed
directly from the shared environment $\Gamma$ as
in rule \trule{RVar}.
The recursive operator requires that the body of
a recursive process matches the type of the recursive
variable as in rule \trule{Rec}.

\subsection{Type Soundness}

Type safety result are instances of more general
statements already proved by
Mostrous and Yoshida~\cite{tlca07,MostrousY15} in the asynchronous case.
\begin{lemma}[Substitution Lemma - Lemma C.10 in~\cite{MostrousY15}]\myrm
	\label{lem:subst}
	\begin{enumerate}[1.]
		\item	$\Gamma; \Lambda; \Delta \cat x:S  \proves P \hastype \Proc$ and
			$u \not\in \dom{\Gamma, \Lambda, \Delta}$
			implies
			$\Gamma; \Lambda; \Delta \cat u:S  \proves P\subst{u}{x} \hastype \Proc$.

		\item	$\Gamma \cat x:\chtype{U}; \Lambda; \Delta \proves P \hastype \Proc$ and
			$a \notin \dom{\Gamma, \Lambda, \Delta}$
			implies
			$\Gamma \cat a:\chtype{U}; \Lambda; \Delta \proves P\subst{a}{x} \hastype \Proc$.

		\item	If $\Gamma; \Lambda_1 \cat x:\lhot{C}; \Delta_1  \proves P \hastype \Proc$ 
			and $\Gamma; \Lambda_2; \Delta_2  \proves V \hastype \lhot{C}$ with 
			$\Lambda_1 \cat \Lambda_2$ and $\Delta_1 \cat \Delta_2$ defined,
			then $\Gamma; \Lambda_1 \cat \Lambda_2; \Delta_1 \cat \Delta_2  \proves P\subst{V}{x} \hastype \Proc$.

		\item	$\Gamma \cat x:\shot{C}; \Lambda; \Delta  \proves P \hastype \Proc$ and
			$\Gamma; \emptyset ; \emptyset \proves V \hastype \shot{C}$
			implies
			$\Gamma; \Lambda; \Delta \proves P\subst{V}{x} \hastype \Proc$.
		\end{enumerate}
\end{lemma}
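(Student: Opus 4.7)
The plan is to prove all four parts by a simultaneous induction on the typing derivation of the process $P$, following the standard pattern for substitution lemmas in session-typed calculi but paying careful attention to the interactions between the three zones $\Gamma;\Lambda;\Delta$ and to the promotion rule \trule{EProm}. For each claim I would fix the shape of the substituted variable ($x : S$, $x : \chtype{U}$, $x : \lhot{C}$, or $x : \shot{C}$) and do case analysis on the last rule used in the derivation of the conclusion, pushing the substitution inside and reconstructing a derivation with the modified environments. The side conditions $u \notin \dom{\Gamma,\Lambda,\Delta}$ (parts 1 and 2) and the environment disjointness in parts 3 and 4 guarantee that no accidental clash is introduced, while $\alpha$-conversion handles binders $\binp{u}{y}$, $\abs{y}{P}$, $\newsp{n}{P}$, $\recp{X}{P}$ so the substitution pushes cleanly through them.

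For part 1, the interesting case is \trule{Sess}, where the derivation is $\Gamma;\es;\{x:S\} \proves x \hastype S$ and substitution yields exactly $\Gamma;\es;\{u:S\} \proves u \hastype S$. In \trule{Send}, \trule{Rcv}, \trule{Sel}, \trule{Bra} the subject may be $x$, in which case the substitution rewrites the subject to $u$ and preserves the session-type prefix; in \trule{End} the same change applies trivially. For \trule{Par} the session environment splits as $\Delta_1 \cat \Delta_2$ and $x:S$ belongs to exactly one side (by the linearity of $\Delta$), so we invoke the induction hypothesis on that component and leave the other untouched. Part 2 is analogous but uses the fact that $\Gamma$ admits weakening/contraction, so the shared variable $x$ can be replicated across parallel components; the \trule{Shared} and \trule{Req}/\trule{Acc} cases are where the subject is rewritten.

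For parts 3 and 4, the key cases are \trule{LVar}/\trule{Shared} (which produce the typing of the value $V$ at the leaf) and \trule{App} (where the applied value may be the substituted variable). In \trule{App}, the premises $\Gamma;\Lambda;\Delta_1 \proves V \hastype U$ and $\Gamma;\es;\Delta_2 \proves u \hastype C$ combine: if the variable being substituted coincides with the abstraction position, we replace the \trule{LVar}/\trule{Shared} leaf by the entire derivation of $V$ supplied in the hypothesis and then reapply \trule{App}. For \trule{Par}, part 3 requires splitting $\Lambda_2$ and $\Delta_2$ according to which side contains the free occurrences of $x$, exploiting that $\Lambda$ and $\Delta$ are linear. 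Part 4 is easier on \trule{Par} since $\Gamma$ can be shared, but we must handle \trule{EProm}: when $x:\shot{C}$ is promoted so that the body is typed under $x:\lhot{C}$, we first apply \trule{Prom} to obtain $\Gamma;\es;\es \proves V \hastype \lhot{C}$ from the hypothesis $\Gamma;\es;\es \proves V \hastype \shot{C}$ and then invoke part 3.

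The main obstacle I expect is the interplay between \trule{EProm} and \trule{Prom} in parts 3 and 4, because these rules let a value cross the linear/shared boundary, so the induction hypothesis used must be chosen according to the current typing of the variable rather than its originally declared type; this is also the reason the four statements must be proved together rather than independently. The recursion rules \trule{Rec}/\trule{RVar} are unproblematic since the assumption $x \notin \fv{\varp{X}\text{-environment}}$ is guaranteed by $\alpha$-renaming of recursion variables and by the fact that recursive variables carry session environments only. Once all rule cases are discharged, the four statements follow. \qed
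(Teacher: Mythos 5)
Your proof takes essentially the same route as the paper's: the paper treats this lemma as an instance of Mostrous and Yoshida's Lemma C.10 and proves it exactly by the induction on the typing derivation with case analysis on the last applied rule that you describe, giving no further detail. One small slip worth fixing: in the \trule{EProm} case of part 4 you say you ``apply \trule{Prom}'' to pass from $\Gamma;\es;\es \proves V \hastype \shot{C}$ to $\Gamma;\es;\es \proves V \hastype \lhot{C}$, but \trule{Prom} promotes linear to shared, i.e.\ it goes the other way --- what you actually need there is inversion of \trule{Prom} on the derivation of the shared typing of $V$, which does recover the linear typing required to invoke part 3.
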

\begin{proof}
	By induction on the typing for $P$, with a case analysis on the last used rule. 
	\qed
\end{proof}

We are interested in session environments 
which are \emph{balanced}:
%
\begin{definition}[Balanced Session Environment]\label{d:wtenv}\myrm
	We say that session environment $\Delta$ is {\em balanced} if
	$s: S_1, \dual{s}: S_2 \in \Delta$ implies $S_1 \dualof S_2$.
\end{definition}
The type soundness relies on the following auxiliary definition:
\begin{definition}[Session Environment Reduction]\myrm
	\label{def:ses_red}
	The reduction relation $\red$ on session environments is defined as:
\[
	\begin{array}{rcl}
		\Delta \cat s: \btout{U} S_1 \cat \dual{s}: \btinp{U} S_2 &\red& \Delta \cat s: S_1 \cat \dual{s}: S_2
		\\
		\Delta \cat s: \btsel{l_i: S_i}_{i \in I} \cat \dual{s}: \btbra{l_i: S_i'}_{i \in I} &\red& \Delta \cat s: S_k \cat \dual{s}: S_k', \quad k \in I
	\end{array}
\]
	We write $\red^\ast$ for the multistep environment reduction.
\end{definition}
We now state the main soundness result as an instance
of type soundness from the system in~\cite{tlca07}.
It is worth noticing that in~\cite{tlca07} has a slightly richer
definition of structural congruence.
Also, their statement for subject reduction relies on an
ordering on typing associated to queues and other 
runtime elements. 
Since we are dealing with synchronous semantics we can omit such an ordering.
The type soundness result implies soundness for the sub-calculi
\HO, \sessp, and $\CAL^{\minussh}$

\begin{theorem}[Type Soundness - Theorem 7.3 in~\cite{MostrousY15}]\myrm
	\label{thm:sr}
	\begin{enumerate}[1.]
		\item	(Subject Congruence)
			$\Gamma; \es; \Delta \proves P \hastype \Proc$
			and
			$P \scong P'$
			implies
			$\Gamma; \es; \Delta \proves P' \hastype \Proc$.

		\item	(Subject Reduction)
			$\Gamma; \es; \Delta \proves P \hastype \Proc$
			with
			balanced $\Delta$
			and
			$P \red P'$
			implies $\Gamma; \es; \Delta'  \proves P' \hastype \Proc$
			and either (i)~$\Delta = \Delta'$ or (ii)~$\Delta \red \Delta'$
			with $\Delta'$ balanced.
	\end{enumerate}
\end{theorem}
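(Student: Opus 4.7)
The plan is to prove the two parts in order, since Subject Reduction will invoke Subject Congruence in the case for rule $\orule{Cong}$.

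For \emph{Subject Congruence}, I would proceed by induction on the derivation of $P \scong P'$, handling each axiom of $\scong$ in turn. The monoidal laws for $(\mathcal{P},\Par,\inact)$ follow by repeated applications of \trule{Par} together with the fact that the concatenations $\Lambda_1 \cat \Lambda_2$ and $\Delta_1 \cat \Delta_2$ are commutative and associative when defined, and that \trule{Nil} types $\inact$ under empty linear/session environments. The $\alpha$-conversion case is immediate from the bound-name conventions. For scope-extrusion $P_1 \Par \news{n}P_2 \scong \news{n}(P_1 \Par P_2)$, the side condition $n \notin \fn{P_1}$ lets me pick a typing derivation in which $n$ (and its dual, in the session case) occurs only in the environment fragment used to type $P_2$; I can then exchange the order of \trule{Par} and \trule{Res}/\trule{ResS}. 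The cases $\news{n}\inact \scong \inact$ and the swap of restrictions are routine. Finally, for $\recp{X}{P} \scong P\subst{\recp{X}{P}}{\varp{X}}$, I use that the typing of $\recp{X}{P}$ by \trule{Rec} records $\varp{X}:\Delta$ in $\Gamma$, so each occurrence of $\varp{X}$ in $P$ is typed by \trule{RVar} with the same $\Delta$; unfolding replaces those leaves with a fresh derivation of $\recp{X}{P}$ itself, under the same $\Gamma;\es;\Delta$.

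For \emph{Subject Reduction}, I would induct on the derivation of $P \red P'$ with a case analysis on the last rule of \figref{fig:reduction}. The congruence rules $\orule{Sess}$ and $\orule{Par}$ are handled by the induction hypothesis together with the typing rules \trule{Res}, \trule{ResS} and \trule{Par}; rule $\orule{Cong}$ invokes Subject Congruence on both sides of the reduction. The interesting cases are the three reduction axioms. For $\orule{App}$, inverting \trule{App} and \trule{Abs} yields $\Gamma;\Lambda;\Delta_1 \proves P \hastype \Proc$ together with matching typings for the bound variable $x$ and the argument $u$; applying the appropriate clause of \lemref{lem:subst} (1 or 2 for names, 3 or 4 for abstractions, depending on whether $x$ was linear or shared) gives the required typing for $P\subst{u}{x}$, leaving $\Delta$ unchanged. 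For $\orule{Sel}$, inversion of \trule{Sel} and \trule{Bra} shows that the composite process is typed with $n: \btsel{l_i:S_i}_{i\in I}$ and $\dual{n}:\btbra{l_i:S_i'}_{i\in I}$ in $\Delta$, where balancedness forces $S_j \dualof S_j'$ for the selected index $j$; after reduction the residual is typed with $n: S_j, \dual{n}:S_j'$, giving $\Delta \red \Delta'$ with $\Delta'$ still balanced.

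The main obstacle, as usual, is the $\orule{Pass}$ case, because the typing of $V$ may consume part of the linear/session environment and because the receiver's typing uses an assumption on the variable $x$ that must be discharged by a substitution with the right linearity. Inverting \trule{Send} on $\bout{n}{V}P$ yields $\Gamma;\Lambda_1;\Delta_1 \proves P \hastype \Proc$ and $\Gamma;\Lambda_2;\Delta_2 \proves V \hastype U$ with $n:\btout{U}S$ in the combined environment; inverting \trule{Rcv} on $\binp{\dual n}{x}Q$ yields $\Gamma;\Lambda_1';\Delta_1' \cat \dual n: S' \proves Q \hastype \Proc$ with a corresponding assumption $x:C$, where $U$ and $C$ must match up and $S \dualof S'$ holds by balancedness of $\Delta$. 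I then apply the relevant clause of \lemref{lem:subst} to substitute $V$ for $x$ in $Q$, distributing the linear/session resources used to type $V$ into the typing of $Q\subst{V}{x}$; re-assembling with \trule{Par} reconstructs $P \Par Q\subst{V}{x}$ under the environment $\Delta'$ obtained from $\Delta$ by peeling the outermost $!U$ and $?U$ constructors, which is precisely one step of $\Delta \red \Delta'$, and balancedness of $\Delta'$ follows from balancedness of $\Delta$. The shared-name sub-case, typed by \trule{Req}/\trule{Acc}, is analogous but leaves $\Delta$ unchanged, fitting the clause $\Delta = \Delta'$ of the statement.
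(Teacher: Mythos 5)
Your proposal is correct and follows essentially the same route as the paper: induction on the reduction derivation with a case analysis on the last rule, inversion of the typing rules, the appropriate clause of the substitution lemma (\lemref{lem:subst}) for the $\orule{App}$ and $\orule{Pass}$ axioms, and an appeal to Subject Congruence for $\orule{Cong}$, with the session environment reducing exactly in the linear communication and selection cases. The only (harmless) difference is one of coverage: the paper leaves Part~1 as ``standard'' and details only the shared-name sub-cases of $\orule{Pass}$, whereas you also sketch the congruence axioms and the linear-endpoint sub-case where $\Delta \red \Delta'$.
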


\begin{proof}
	See \appref{app:ts} (Page \pageref{app:ts}).
	\qed
\end{proof}


\section{Behavioural Semantics for \HOp}
\label{sec:beh_sem}
\label{sec:behavioural}

\noi We develop a theory for observational equivalence over
session typed \HOp processes.
The theory follows the principles laid by the previous
work of the authors \cite{KYHH2015,KY2015,dkphdthesis}.
We introduce three different bisimilarities and prove
that
all of them coincide with typed, reduction-closed,
barbed congruence. 

\subsection{Labelled Transition Semantics}
\label{subsec:lts}

\myparagraph{Labels.}
We define an (early) typed labelled transition system
$P_1 \by{\ell} P_2$ (LTS for short) over
untyped processes.
Later on, using the \emph{environmental} transition semantics, 
we can define a typed transition relation to formalise 
how a process interacts with a process in its environment.
The interaction
is defined on action $\ell$:
\[
	\begin{array}{rcl}
		\ell	& \bnfis  & \tau 
		\bnfbar \news{\tilde{m}} \bactout{n}{V} 
		\bnfbar\bactinp{n}{V} 
		\bnfbar \bactsel{n}{l} 
		\bnfbar \bactbra{n}{l} 
	\end{array}
\]
\noi The internal action is defined by label $\tau$.
Output action $\news{\tilde{m}} \bactout{n}{V}$ denotes the output of value
$V$ over name $n$ with a possibly empty set of names $\tilde{m}$
being restricted (we may write $\bactout{n}{V}$  when $\tilde{m}$ is empty).
Dually, the action for the value input is $\bactinp{n}{V}$.
We also define actions for selecting a label
$l$, $\bactsel{n}{l}$ and branching on a label
$n$, $\bactbra{s}{l}$.
$\fn{\ell}$ and $\bn{\ell}$ denote 
sets of free/bound names in $\ell$, resp.

The dual action relation is the symmetric relation $\asymp$ that satisfies the rules:
\[
	\bactsel{n}{l} \asymp \bactbra{\dual{n}}{l}
	\qquad
	\news{\tilde{m}'} \bactout{n}{V} \asymp \bactinp{\dual{n}}{V}
\]
Dual actions occur on subjects that are dual between
them and carry the same object. Thus,
output actions are dual to input actions and 
select actions is dual to branch actions.

\begin{figure}[t]
	\[
	\begin{array}{c}
		\appl{(\abs{x}{P})}{u} \by{\tau} P \subst{u}{x}\ \ltsrule{App}
		\qquad
		\bout{n}{V} P \by{\bactout{n}{V}} P\ \ltsrule{Out}
		\qquad
		\binp{n}{x} P \by{\bactinp{n}{V}} P\subst{V}{x}\ \ltsrule{In}
		\\[4mm]

		\bsel{s}{l}{P} \by{\bactsel{s}{l}} P\ \ltsrule{Sel}
		\qquad
		\tree{
			j \in I
		}
		{
			\bbra{s}{l_i:P_i}_{i \in I} \by{\bactbra{s}{l_j}} P_j
		}\ \ltsrule{Bra}
		\\[6mm]

		\tree{
			P \by{\ell} P' \quad n \notin \fn{\ell}
		}{
			\news{n} P \by{\ell} \news{n} P' 
		}\ \ltsrule{Res}
		\qquad
		\tree{
			P \scong_\alpha P'' \quad P'' \by{\ell} P'
		}{
			P \by{\ell} P'
		}\ \ltsrule{Alpha}
		\qquad
		\tree{
			P \subst{\recp{X}{P}}{\varp{X}} \by{\ell} P'
		}{
			\recp{X}{P} \by{\ell} P'
		}\ \ltsrule{Rec}
		\\[6mm]

		\tree{
			P \by{\news{\tilde{m}} \bactout{n}{V}} P' \quad m \in \fn{V}
		}{
			\news{m} P \by{\news{m\cat\tilde{m}} \bactout{n}{V}} P'
		}\ \ltsrule{Scope}
		\qquad
		\tree{
			P \by{\ell_1} P' \qquad Q \by{\ell_2} Q' \qquad \ell_1 \asymp \ell_2
		}{
			P \Par Q \by{\tau} \newsp{\bn{\ell_1} \cup \bn{\ell_2}}{P' \Par Q'}
		}\ \ltsrule{Tau}
		\\[6mm]

		\tree{

			P \by{\ell} P' \quad \bn{\ell} \cap \fn{Q} = \es
		}{
			P \Par Q \by{\ell} P' \Par Q
		}\ \ltsrule{LPar}
		\qquad
		\tree{
			Q \by{\ell} Q' \quad \bn{\ell} \cap \fn{P} = \es
		}{
			P \Par Q \by{\ell} P \Par Q'
		}\ \ltsrule{RPar}
	\end{array}
	\]
	\caption{The Untyped (Early) Labelled Transition System. \label{fig:untyped_LTS}}
\end{figure}

\myparagraph{LTS over Untyped Processes.}
The labelled transition system (LTS) over untyped processes
is defined in \figref{fig:untyped_LTS}.
We write $P_1 \by{\ell} P_2$ with the usual meaning.
The rules are standard~\cite{KYHH2015,KY2015}.
An application requires a  silent step $\tau$ to substitute
the application name over the application abstraction as defined
in rule $\ltsrule{App}$.
A process with a send prefix can interact with the environment with a send
action that carries a value $V$ as in rule $\ltsrule{Out}$.
Dually, in rule $\ltsrule{In}$
an input prefixed process can observe a receive action of a value $V$.
Select and branch prefixed processes observe the select
and branch actions in rules $\ltsrule{Sel}$ and $\ltsrule{Bra}$, respectively,
and proceed according to the labels observed.
Rule $\ltsrule{Res}$ closes the LTS under the name creation
operator provided that the restricted name does not occur free in the observable action.
If a restricted name occurs free in an output action 
then the name is added as in the bound name list of the action
and the continuation process performs scope opening as described in rule $\ltsrule{Scope}$.
Rules $\ltsrule{LPar}$ and $\ltsrule{RPar}$ close the LTS under the parallel operator 
provided that the observable action does not shared any bound names with the 
parallel processes.
Rule $\ltsrule{Tau}$ states that if two parallel processes can perform dual actions
then the two actions  can synchronise to observe an internal transition. 
Finally, rule $\ltsrule{Alpha}$ closes  the LTS under alpha-renaming 
and rule $\ltsrule{Rec}$ handles recursion unfolding.

\subsection{Environmental Labelled Transition System}

\begin{figure}[t!]
	\[
	\begin{array}{c}
		\eltsrule{SRv}~~\tree{
			\dual{s} \notin \dom{\Delta} \quad \Gamma; \Lambda'; \Delta' \proves V \hastype U
		}{
			(\Gamma; \Lambda; \Delta \cat s: \btinp{U} S) \by{\bactinp{s}{V}} (\Gamma; \Lambda\cat\Lambda'; \Delta\cat\Delta' \cat s: S)
		}
		\\[8mm]

		\eltsrule{ShRv}~~\tree{
			\Gamma; \es; \es \proves a \hastype \chtype{U}
			\quad
			\Gamma; \Lambda'; \Delta' \proves V \hastype U
		}{
			(\Gamma; \Lambda; \Delta) \by{\bactinp{a}{V}} (\Gamma; \Lambda\cat\Lambda'; \Delta\cat\Delta')
		}
		\\[8mm]

		\eltsrule{SSnd}~~\tree{
			\begin{array}{c}
				\dual{s} \notin \dom{\Delta}
				\qquad 
				\Gamma \cat \Gamma'; \Lambda'; \Delta' \proves V \hastype U
				\qquad
				\tilde{m} = m_1 \dots m_n
				\\
				\Gamma'; \es; \Delta_i \proves m_i \hastype U_i
				\qquad
				\Gamma'; \es; \Delta_i' \proves \dual{m}_i \hastype U'_i
				\qquad
				\Lambda' \subseteq \Lambda
				\qquad
				(\Delta_1 \backslash \bigcup_{i}\Delta_i) \subseteq (\Delta \cat s: S)
			\end{array}
		}{
			(\Gamma; \Lambda; \Delta \cat s: \btout{U} S) \by{\news{\tilde{m}} \bactout{s}{V}} (\Gamma \cat \Gamma'; \Lambda\backslash\Lambda';
			(\Delta \cat s: S \cat \bigcup_{i} \Delta'_i) \backslash \Delta')
		}
		\\[8mm]

		\eltsrule{ShSnd}~~\tree{
			\begin{array}{c}
				\Gamma \cat \Gamma' \cat a: \chtype{U}; \Lambda'; \Delta' \proves V \hastype U
				\qquad
				\tilde{m} = m_1 \dots m_n
				\\
				\Gamma'; \es; \Delta_i \proves m_i \hastype U_i
				\qquad
				\Gamma'; \es; \Delta'_i \proves \dual{m}_i \hastype U_i
				\qquad
				\Lambda' \subseteq \Lambda
				\qquad
				(\Delta_1 \backslash \bigcup_{i}\Delta_i) \subseteq \Delta
			\end{array}
		}{
			(\Gamma \cat a: \chtype{U} ; \Lambda; \Delta) \by{\news{\tilde{m}} \bactout{a}{V}} (\Gamma \cat \Gamma' \cat a: \chtype{U}; \Lambda\backslash\Lambda';
			(\Delta \cat \bigcup_{i}\Delta'_i) \backslash \Delta')
		}
		\\[8mm]
		\eltsrule{Sel}~~\tree{
			\dual{s} \notin \dom{\Delta} \quad j \in I
		}{
			(\Gamma; \Lambda; \Delta \cat s: \btsel{l_i: S_i}_{i \in I}) \by{\bactsel{s}{l_j}} (\Gamma; \Lambda; \Delta \cat s:S_j)
		}
		\\[8mm]
		\eltsrule{Bra}~~\tree{
			\dual{s} \notin \dom{\Delta} \quad j \in I
		}{
			(\Gamma; \Lambda; \Delta \cat s: \btbra{l_i: T_i}_{i \in I}) \by{\bactbra{s}{l_j}} (\Gamma; \Lambda; \Delta \cat s:S_j)
		}
		\\[8mm]

		\eltsrule{Tau}~~\tree{
			\Delta_1 \red \Delta_2 \vee \Delta_1 = \Delta_2
		}{
			(\Gamma; \Lambda; \Delta_1) \by{\tau} (\Gamma; \Lambda; \Delta_2)
		}
	\end{array}
	\]
	\caption{Labelled Transition Semantics for Typed Environments. \label{fig:envLTS}}
\end{figure}

\label{ss:elts}
\noi 
\figref{fig:envLTS}
defines a labelled transition relation between 
a triple of environments, 
denoted
$(\Gamma_1, \Lambda_1, \Delta_1) \by{\ell} (\Gamma_2, \Lambda_2, \Delta_2)$.
It extends the transition systems
in \cite{KYHH2015,KY2015} 
to higher-order sessions. 

\myparagraph{Input Actions} 
are defined by 
$\eltsrule{SRv}$ and $\eltsrule{ShRv}$
($n$ session or shared name respectively $\bactinp{n}{V}$). 
We require the value $V$ has
the same type as name $s$ and $a$, respectively.  Furthermore we
expect the resulting type tuple to contain the values that consist
with value $V$. The condition $\dual{s} \notin \dom{\Delta}$
in $\eltsrule{SRv}$ ensures that 
the dual name $\dual{s}$ should not be
present in the session environment, since if it were present
the only communication that could take place is the interaction
between the two endpoints (using $\eltsrule{Tau}$ below).

\myparagraph{Output Actions} are defined by $\eltsrule{SSnd}$
and $\eltsrule{ShSnd}$.  
Rule $\eltsrule{SSnd}$ states the conditions for observing action
$\news{\tilde{m}} \bactout{s}{V}$ on a type tuple 
$(\Gamma, \Lambda, \Delta \cdot s: S)$. 
The session environment $\Delta$ with $s: S$ 
should include the session environment of sent value $V$, 
{\em excluding} the session environments of the name $n_j$ 
in $\tilde{m}$ which restrict the scope of value $V$. 
Similarly the linear variable environment 
$\Lambda'$ of $V$ should be included in $\Lambda$. 
Scope extrusion of session names in $\tilde{m}$ requires
that the dual endpoints of $\tilde{m}$ appear in
the resulting session environment. Similarly for shared 
names in $\tilde{m}$ that are extruded.  
All free values used for typing $V$ are subtracted from the
resulting type tuple. The prefix of session $s$ is consumed
by the action.
Similarly, an output on a shared name is described
by rule $\eltsrule{ShSnd}$ where we require that the name
is typed with $\chtype{U}$. Conditions for
the output $V$ are identical to those
for rule~$\eltsrule{SSnd}$.
We sometimes annotate the output action 
$\news{\tilde{m}} \bactout{n}{V}$
with the type of $V$ as $\news{\tilde{m}} \bactout{n}{V:U}$.

\myparagraph{Other Actions}
Rules $\eltsrule{Sel}$ and $\eltsrule{Bra}$ describe actions for
select and branch. The only requirements for both
rules is that the dual endpoint is not present in the session
environment and the action labels are present
in the type.
Hidden transitions defined by rule $\eltsrule{Tau}$ 
do not change the session environment or they follow the reduction on session
environments (\defref{def:ses_red}).

\begin{proposition}[Environment Transition Weakening]\myrm
	\label{prop:env_tran_weak}
	Consider the LTS for typing environments in \figref{fig:envLTS}.
	If $(\Gamma_1; \Lambda_1; \Delta_1) \hby{\ell} (\Gamma_2; \Lambda_2; \Delta_2)$
	then
	$(\Gamma_2; \Lambda_1; \Delta_1) \hby{\ell} (\Gamma_2; \Lambda_2; \Delta_2)$.
\end{proposition}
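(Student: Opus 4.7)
The plan is to proceed by a straightforward case analysis on the rule of Figure~\ref{fig:envLTS} used to derive the transition $(\Gamma_1; \Lambda_1; \Delta_1) \hby{\ell} (\Gamma_2; \Lambda_2; \Delta_2)$. The crucial observation is that in every rule the shared environment on the right either coincides with the one on the left or is obtained from it by adjoining fresh declarations (the extruded shared names carried by a scope-opening output, plus any shared typing assumptions $\Gamma'$ needed to type the transmitted value). Since weakening, contraction and exchange apply to $\Gamma$ by Definition~\ref{def:typeenv}, strengthening the source from $\Gamma_1$ to $\Gamma_2$ preserves every side condition and every typing premise that mentions $\Gamma$.

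First I would dispatch the easy cases, namely $\eltsrule{SRv}$, $\eltsrule{ShRv}$, $\eltsrule{Sel}$, $\eltsrule{Bra}$ and $\eltsrule{Tau}$. In each of these $\Gamma_1 = \Gamma_2$, so the result is immediate: the side conditions $\dual{s}\notin\dom{\Delta}$, $j\in I$ and $\Delta_1 \red \Delta_2$ do not mention $\Gamma$, and the typing premises of the form $\Gamma;\Lambda';\Delta' \proves V\hastype U$ or $\Gamma;\es;\es \proves a\hastype \chtype{U}$ hold trivially under the same $\Gamma$.

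Next I would handle the rules that genuinely enlarge the shared environment, $\eltsrule{SSnd}$ and $\eltsrule{ShSnd}$. Here $\Gamma_2 = \Gamma_1 \cat \Gamma'$, where $\Gamma'$ contains the declarations for the shared names of $\tilde m$ that are extruded by the output together with the shared typings needed for $V$. In the rewritten transition the premises typing $V$ and each $m_i,\dual m_i$ are now read under $\Gamma_2 \cat \Gamma' = \Gamma_2$ (the extra copy of $\Gamma'$ is absorbed by contraction on $\Gamma$), and by weakening on $\Gamma$ these derivations are preserved. The target environment is unchanged, so the instance of the rule with $\Gamma_2$ on the left produces exactly $(\Gamma_2;\Lambda_2;\Delta_2)$ on the right.

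The main obstacle, mild as it is, is bookkeeping the freshness conditions in the output rules: one must check that replacing $\Gamma_1$ by $\Gamma_1\cat \Gamma'$ on the source side does not invalidate the intended freshness of the extruded names $\tilde m$ and of the new bindings in $\Gamma'$. This is immediate because the freshness conditions are stated on the domains of $\Delta$ and $\Lambda$ and on the absence of duals in $\Delta$, none of which are altered; the bindings in $\Gamma'$ are already compatible with $\Gamma_1$ by assumption (else the original transition would not type), so adding them to the left-hand $\Gamma$ creates no clash. Finally, since the restricted LTS $\hby{\ell}$ differs from $\by{\ell}$ only in the class of observable input values, the same case analysis applies verbatim, yielding the claim. \qed
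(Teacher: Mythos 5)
Your proof is correct and follows essentially the same route as the paper's: a case analysis on the rules of Figure~\ref{fig:envLTS}, observing that $\Gamma_2$ is either $\Gamma_1$ itself or $\Gamma_1$ extended with fresh shared declarations, and invoking the structural properties (weakening, contraction) of $\Gamma$ from Definition~\ref{def:typeenv}. The paper states this in one line; your version merely spells out the bookkeeping.
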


\begin{proof}
	The proof 
	is by 
	case analysis
	on the definition of $\hby{\ell}$, 
	exploiting the structural properties (in particular, weakening)
	of shared environment $\Gamma$ (cf. \defref{def:typeenv}).
	\qed
\end{proof}

As a direct consequence of \propref{prop:env_tran_weak}
we can always make an observation on a type environment
without observing a change in the shared environment.

\myparagraph{Typed Transition System}
We define a typed labelled transition system over typed processes,
as a combination of the untyped LTS and the LTS for typed environments (cf. \figref{fig:untyped_LTS} and
\ref{fig:envLTS}):

\begin{definition}[Typed Transition System]\label{d:tlts}\myrm
	\label{def:typed_transition}
	We write
	$\horel{\Gamma}{\Delta_1}{P_1}{\by{\ell}}{\Delta_2}{P_2}$
	whenever
	$P_1 \by{\ell} P_2$,
	$(\Gamma, \emptyset, \Delta_1) \by{\ell} (\Gamma, \emptyset, \Delta_2)$
	and $\Gamma; \emptyset; \Delta_2 \proves P_2 \hastype \Proc$.

	We extend to $\By{}$ 
	and $\By{\hat{\ell}}$ 
	where we write 
	$\By{}$ for the reflexive and
	transitive closure of $\by{}$, $\By{\ell}$ for the transitions
	$\By{}\by{\ell}\By{}$ and $\By{\hat{\ell}}$ for $\By{\ell}$ if
	$\ell\not = \tau$ otherwise $\By{}$.
\end{definition}


\subsection{Reduction-Closed, Barbed Congruence}
\label{subsec:rc}

Equivalent processes require a notion of session type confluence,
defined over session environments $\Delta$, following \defref{def:ses_red}:
\begin{definition}[Session Environment Confluence]\myrm
	We denote $\Delta_1 \bistyp \Delta_2$ whenever $\exists \Delta$ such that
	$\Delta_1 \red^* \Delta$ and $\Delta_2 \red^* \Delta$.
\end{definition}
%

We define the notion of typed relation over typed processes; it 
includes properties common to all the equivalence relations
that we are going to define:
\begin{definition}[Typed Relation]\myrm
	We say that
	$\Gamma; \emptyset; \Delta_1 \proves P_1 \hastype \Proc\ \Re \ \Gamma; \emptyset; \Delta_2 \proves P_2 \hastype \Proc$
	is a typed relation whenever:
	\begin{enumerate}[i)]
		\item	$P_1$ and $P_2$ are closed processes;
		\item	$\Delta_1$ and $\Delta_2$ are balanced; and
		\item	$\Delta_1 \bistyp \Delta_2$.
	\end{enumerate}
	\noi We write
	$\horel{\Gamma}{\Delta_1}{P_1}{\ \Re\ }{\Delta_2}{P_2}$
	for $\Gamma; \emptyset; \Delta_1 \proves P_1 \hastype \Proc\ \Re\ \Gamma; \emptyset; \Delta_2 \proves P_2 \hastype \Proc$.
\end{definition}

Type relations relate only closed processes
(i.e., processes with no free variables)
with balanced session environments and the two session
environments are confluent.

We define the notions of barb~\cite{MiSa92} and typed barb:
\begin{definition}[Barbs]\myrm
	Let $P$ be a closed process.
	\begin{enumerate}
		\item	We write $P \barb{n}$ if $P \scong \newsp{\tilde{m}}{\bout{n}{V} P_2 \Par P_3}, n \notin \tilde{m}$.
			We write $P \Barb{n}$ if $P \red^* \barb{n}$.

		\item	We write $\Gamma; \emptyset; \Delta \proves P \barb{n}$ if
			$\Gamma; \emptyset; \Delta \proves P \hastype \Proc$ with $P \barb{n}$ and $\dual{n} \notin \Delta$.
			We write $\Gamma; \emptyset; \Delta \proves P \Barb{n}$ if $P \red^* P'$ and
			$\Gamma; \emptyset; \Delta' \proves P' \barb{n}$.			
	\end{enumerate}
\end{definition}
A barb $\barb{n}$ is an observable on an output prefix with subject $n$.
Similarly a weak barb $\Barb{n}$ is a barb after a number of reduction steps.
Typed barbs $\barb{n}$ (resp.\ $\Barb{n}$)
occur on typed processes $\Gamma; \emptyset; \Delta \proves P \hastype \Proc$
where we require that whenever $n$ is a session name, then the corresponding 
dual endpoint $\dual{n}$ is not present in the session type $\Delta$.

To define a congruence relation we define the notion of the context $\C$:
\begin{definition}[Context]\myrm
	A context $\C$ is defined on the grammar:
\[
	\begin{array}{rcl}
		\C &\bnfis& \hole \bnfbar \bout{u}{V} \C \bnfbar \bout{u}{\abs{x}{\C}} P \bnfbar \binp{u}{x} \C \bnfbar \recp{X}{\C} \bnfbar \appl{(\abs{x}{\C})}{u}\\
		& \bnfbar & \news{n} \C \bnfbar \C \Par P \bnfbar P \Par \C \bnfbar \bsel{u}{l} \C \bnfbar \bbra{k}{l_1: P_1, \cdots, l_i:\C, \cdots, l_n: P_n}
	\end{array}
\]
	Notation $\context{\C}{P}$ replaces every hole $\hole$ in $\C$ with $P$.
\end{definition}
\noi A context is a function that takes a process and returns a new process
according to the above syntax.

The first behavioural relation we define is reduction-closed, barbed congruence:
\begin{definition}[Reduction-closed, Barbed Congruence]\myrm
	Typed relation
	$\horel{\Gamma}{\Delta_1}{P_1}{\ \Re\ }{\Delta_2}{P_2}$
	is a barbed congruence whenever:
	\begin{enumerate}
		\item
		\begin{enumerate}[-]
			\item	If $P_1 \red P_1'$ then there exist $P_2', \Delta_2'$ such that $P_2 \red^* P_2'$ and
				$\horel{\Gamma}{\Delta_1'}{P_1'}{\ \Re\ }{\Delta_2'}{P_2'}$
			\item	If $P_2 \red P_2'$ then there exist $P_1', \Delta_1'$ such that $P_1 \red^* P_1'$ and
				$\horel{\Gamma}{\Delta_1'}{P_1'}{\ \Re\ }{\Delta_2'}{P_2'}$
		\end{enumerate}

		\item
		\begin{enumerate}[-]
			\item	If $\Gamma;\emptyset;\Delta_1 \proves P_1 \barb{s}$ then $\Gamma;\emptyset;\Delta_2 \proves P_2 \Barb{s}$.
			\item	If $\Gamma;\emptyset;\Delta_2 \proves P_2 \barb{s}$ then $\Gamma;\emptyset;\Delta_1 \proves P_1 \Barb{s}$.
		\end{enumerate}

		\item	$\forall \C$, then there exist $\Delta_1'', \Delta_2''$ such that $\horel{\Gamma}{\Delta_1''}{\context{\C}{P_1}}{\ \Re\ }{\Delta_2''}{\context{\C}{P_2}}$
	\end{enumerate}
	The largest such congruence is denoted with $\cong$.
\end{definition}
\noi Reduction-closed, barbed congruence is closed under reduction semantics and 
preserves barbs under any context, i.e.,~no barb observer can distinguish
between two related processes.

\subsection{Context Bisimulation}

The second behavioural relation we define is the labelled
characterisation of reduction-closed, barbed congruence,
called \emph{context bisimulation}~\cite{San96H}:
\begin{definition}[Context Bisimulation]\myrm
	\label{def:context_bis}
	Typed relation 
	$\Re$ is a {\em context bisimulation} if for all
	$\horel{\Gamma}{\Delta_1}{P_1}{\ \Re\ }{\Delta_2}{P_2}$,
	\begin{enumerate}[1.]
		\item Whenever	
			$
				\horel{\Gamma}{\Delta_1}{P_1}{\by{\news{\tilde{m_1}} \bactout{n}{V_1}}}{\Delta_1'}{P_2}
			$ there exist $Q_2$, $V_2$, and $\Delta'_2$ such that
			\[
				\horel{\Gamma}{\Delta_2}{Q_1}{\By{\news{\tilde{m_2}} \bactout{n}{V_2}}}{\Delta_2'}{Q_2}
			\]
			and $\forall R$ with $\set{x} = \fv{R}$, 
			then
			\[
				\horel{\Gamma}{\Delta_1''}{\newsp{\tilde{m_1}}{P_2 \Par R\subst{V_1}{x}}}
				{\ \Re\ }
				{\Delta_2''}{\newsp{\tilde{m_2}}{Q_2 \Par R\subst{V_2}{x}}}.
			\]
		\item For all 
			$
				\horel{\Gamma}{\Delta_1}{P_1}{\by{\ell}}{\Delta_1'}{P_2}
			$
			such that $\ell \not= \news{\tilde{m}} \bactout{n}{V}$, there exist 
			$Q_2$ and $\Delta'_2$ such that
			\[
				\horel{\Gamma}{\Delta_2}{Q_1}{\By{\hat{\ell}}}{\Delta_2'}{Q_2}
			\]
			and
			$\horel{\Gamma}{\Delta_1'}{P_2}{\ \Re\ }{\Delta_2'}{Q_2}$.

		\item	The symmetric cases of 1 and 2.
	\end{enumerate}
	The Knaster-Tarski theorem ensures that the largest context bisimulation exists,
	it is called \emph{context bisimilarity} and is denoted by $\wbc$.
\end{definition}

\subsection{Higher-Order Bisimulation and Characteristic Bisimulation ($\wb/\wbf$)}
\label{subsec:char_bis}

In the general case, contextual bisimulation 
is a hard relation to compute due to:
\begin{enumerate}[i)]
	\item	the universal quantifier over contexts in the output case
		(Clause~1 in \defref{def:context_bis}); and

	\item	a higher order input prefix can observe infinitely many
		different input actions, since infinitely many different
		processes can match the session type of an input prefix.
\end{enumerate}
To reduce the burden of the contextual bisimulation
we take the following two steps: 
\begin{enumerate}[(a)]
	\item	we replace Clause~1 in \defref{def:context_bis} with a clause
		involving a more tractable process closure; and
	\item	we refine the transition rule for input in the LTS
		so to define a bisimulation relation
		without observing infinitely many actions on the same input prefix. 
\end{enumerate}
%



\myparagraph{Trigger Processes with Session Communication.}
Concerning~(a), we exploit session types. 
First observe that closure $R\subst{V}{x}$ 
in Clause~1 in \defref{def:context_bis}
is context bisimilar to the process:
\begin{eqnarray}
	\label{equ:1}
	P = \newsp{s}{\appl{(\abs{z}{\binp{z}{x}{R}})}{s} \Par \bout{\dual{s}}{V} \inact}
\end{eqnarray}
\noi
In fact, we do have $P \wbc R\subst{V}{x}$, since 
application and session transitions are deterministic.  
Now let us consider process $T_{V}$ below,
where $t$ is a fresh name:
\begin{eqnarray}
	\label{equ:0}
	T_{V} = \hotrigger{t}{x}{s}{V}
\end{eqnarray}
\noi
Process $T_{V}$ can input the class of
abstractions $\abs{z}{\binp{z}{x} R}$ and
can simulate the closure of~\eqref{equ:1}:
\begin{eqnarray}
	\label{equ:2}
	T_{V} \by{\bactinp{t}{\abs{z}{\binp{z}{x}R}}} P \wbc R \subst{V}{x}
\end{eqnarray}
Processes such as $T_{V}$
input a value at a fresh name;
we will use this class of 
{\bf\em trigger processes} to define a
refined bisimilarity without the demanding 
output Clause~1 in \defref{def:context_bis}.
Given a fresh name $t$, we write:
\[
	\htrigger{t}{V} = \hotrigger{t}{x}{s}{V}
\]
%
We note that in contrast to previous
approaches~\cite{SaWabook,JeffreyR05} 
our {trigger processes} do {\em not}
use recursion or replication.
This is crucial to preserve linearity of session names.

\myparagraph{Characteristic Processes and Values.}
Concerning point (b), we limit the possible input abstractions
$\abs{x} P$ by exploiting session types.
We introduce the key concept of {\bf \emph{characteristic process/value}},
which is the 
simplest process/value that can inhabit a type.
As an example, consider $S = \btinp{\shot{S_1}} \btout{S_2} \tinact$. Type $S$ is a 
session type that
first inputs an abstraction (from type $S_1$ to a process), then 
outputs a value of type $S_2$, and terminates.
Then, the following process:
\[
	Q = \binp{u}{x} (\bout{u}{s_2} \inact \Par \appl{x}{s_1})
\]
\noi is a characteristic process for $S$ along name $u$.
In fact, it is easy to see that $Q$ is well-typed by session type $S$.
The following definition formalizes this intuition.
\begin{definition}[Characteristic Process]\myrm
	\label{def:characteristic_process}
	\noi Let name $u$ and type $U$. Then we define the {\em characteristic process}:
	$\mapchar{U}{u}$ and the {\em characteristic value} $\omapchar{U}$ as:
	\[
	\begin{array}{cc}
		\begin{array}{rclcl}
			\mapchar{\btinp{U} S}{u} &\defeq& \binp{u}{x} (\mapchar{S}{u} \Par \mapchar{U}{x})
			\\
			\mapchar{\btout{U} S}{u} &\defeq& \bout{u}{\omapchar{U}} \mapchar{S}{u} 
			\\
			\mapchar{\btsel{l : S}}{u} &\defeq& \bsel{u}{l} \mapchar{S}{u}
			\\
			\mapchar{\btbra{l_i: S_i}_{i \in I}}{u} &\defeq& \bbra{u}{l_i: \mapchar{S_i}{u}}_{i \in I}
			\\
			\mapchar{\tvar{t}}{u} &\defeq& \varp{X}_{\vart{t}}
			\\
			\mapchar{\trec{t}{S}}{u} &\defeq& \recp{X_{\vart{t}}}{\mapchar{S}{u}}
			\\
			\mapchar{\tinact}{u} &\defeq& \inact
			\\
		\end{array}
		&
		\begin{array}{rcrclcl}
			&&\mapchar{\chtype{S}}{u} &\defeq& \bout{u}{\omapchar{S}} \inact
			\\
			&&\mapchar{\chtype{L}}{u} &\defeq& \bout{u}{\omapchar{L}} \inact
			\\
			\mapchar{\shot{C}}{x} &\defeq& \mapchar{\lhot{C}}{x} &\defeq& \appl{x}{\omapchar{C}}
			\\
			\\
			&&\omapchar{S} &\defeq& s && s \textrm{ fresh}
			\\
			\omapchar{\chtype{S}} &\defeq& \omapchar{\chtype{L}} &\defeq& a && a \textrm{ fresh}
			\\
			\omapchar{\shot{C}} &\defeq& \omapchar{\lhot{C}} &\defeq& \abs{x}{\mapchar{C}{x}}
		\end{array}
	\end{array}
	\]
\end{definition}
%
%
\begin{proposition}\myrm
Characteristic processes and values are inhabitants of their associated type:
$ $
	\begin{enumerate}[$\bullet$]
		\item	$\Gamma; \emptyset; \Delta \cat u:S \proves \mapchar{S}{u} \hastype \Proc$
		\item	$U = \chtype{S}$ or $U = \chtype{L}$ implies $\Gamma \cat u:U; \emptyset; \Delta \proves \mapchar{U}{u} \hastype \Proc$
		\item	$\Gamma; \es; \Delta \proves \omapchar{U} \hastype U$
	\end{enumerate}
\end{proposition}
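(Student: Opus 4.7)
The plan is to prove the three inhabitation statements simultaneously by mutual induction on the structure of the type $U$ (equivalently, on the definition of $\mapchar{\cdot}{u}$ and $\omapchar{\cdot}$ in \defref{def:characteristic_process}, which is well-founded modulo recursive unfolding tracked by a type variable environment in $\Ga$). The base cases are direct applications of the corresponding typing rules: $\mapchar{\tinact}{u} = \inact$ is typed by $\trule{Nil}$ combined with $\trule{End}$; $\omapchar{S} = s$ for fresh $s$ is typed by $\trule{Sess}$; $\omapchar{\chtype{S}} = \omapchar{\chtype{L}} = a$ for fresh $a$ is typed by $\trule{Sh}$; and $\mapchar{\vart{t}}{u} = \varp{X}_{\vart{t}}$ is typed by $\trule{RVar}$, assuming a companion assumption $\varp{X}_{\vart{t}}:\De\cat u:\vart{t}$ is carried in $\Ga$ along the induction (inserted by the $\trec{t}{S}$ case).

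For the inductive cases on session types, each clause matches directly a session typing rule. For $\mapchar{\btout{U} S}{u}$, I apply the IH to get $\Ga;\es;\De_1 \proves \omapchar{U}\hastype U$ and $\Ga;\es;\De_2\cat u:S \proves \mapchar{S}{u}\hastype\Proc$ on disjoint session environments (using the freshness of the names generated by $\omapchar{U}$), then apply $\trule{Send}$. The input case $\mapchar{\btinp{U} S}{u} = \binp{u}{x}(\mapchar{S}{u} \Par \mapchar{U}{x})$ uses the IH twice to type the two parallel components, combines them by $\trule{Par}$, and then applies $\trule{Rcv}$, noting that when $U$ is an abstraction type $\lhot{C}$ the hypothesis on $x$ is discharged by the linear environment $\Lambda$ (and when $U = \shot{C}$ by $\Gamma$ via $\trule{EProm}$). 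The selection/branching cases are immediate from $\trule{Sel}$ and $\trule{Bra}$, applied after the IH on each $S_i$. The recursive case $\recp{X_{\vart{t}}}{\mapchar{S}{u}}$ is handled by $\trule{Rec}$: I extend $\Ga$ with $\varp{X}_{\vart{t}}:\De\cat u:\vart{t}$ before applying the IH to $S$, which is what justifies the $\vart{t}$ base case above.

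For the shared-type cases, $\mapchar{\chtype{S}}{u}$ and $\mapchar{\chtype{L}}{u}$ output a fresh characteristic value and terminate, so $\trule{Req}$ (combined with $\trule{Nil}$ and the IH for the sent value) applies directly; the side condition on matching $U_1$ and $U_2$ is satisfied by the definition. For the abstraction cases, $\omapchar{\lhot{C}} = \abs{x}{\mapchar{C}{x}}$ is typed by $\trule{Abs}$ using the IH on $\mapchar{C}{x}$; and $\omapchar{\shot{C}}$ is obtained identically via $\trule{Abs}$ followed by $\trule{Prom}$, since the IH yields $\mapchar{C}{x}$ typed without consuming any linear resource when $C = \chtype{S}$ or $\chtype{L}$. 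The applications $\mapchar{\shot{C}}{x}$ and $\mapchar{\lhot{C}}{x}$ are typed by $\trule{App}$, where the hypothesis on $x$ comes either from $\Lambda$ (linear case) or from $\Ga$ (shared case).

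The main obstacle is bookkeeping: ensuring that the freshly generated names in $\omapchar{\cdot}$ are kept disjoint across parallel components (so that $\De_1\cat\De_2$ is well-defined when invoking $\trule{Par}$ or $\trule{Send}$), and ensuring that the recursive variable environment is consistently threaded so that the $\vart{t}$ case matches the assumption introduced in the $\trec{t}{S}$ case. A subtle point worth highlighting explicitly is the case split between $\shot{C}$ and $\lhot{C}$ in the abstraction input case: linear abstractions use $\Lambda$ and shared ones use $\Gamma$ with $\trule{EProm}$, so the statement must be proved with enough flexibility in the environments to accommodate both, which is exactly what the formulation allows by quantifying over $\Ga$ and $\De$.
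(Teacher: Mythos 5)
Your proof is correct and follows essentially the same route as the paper, which simply states "by induction on the definition of $\mapchar{S}{u}$ and $\mapchar{U}{u}$" and leaves the case analysis implicit; your write-up is a faithful elaboration of exactly that induction, and you correctly flag the one genuinely delicate point (the abstraction cases $\mapchar{\lhot{C}}{x}$ force intermediate judgements with a non-empty linear environment $\Lambda$, so the induction hypothesis must be stated with that extra generality even though the proposition itself fixes $\Lambda = \emptyset$).
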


\begin{proof}
	By induction on the definition of $\mapchar{S}{u}$ and $\mapchar{U}{u}$.
	\qed
\end{proof}

\begin{corollary}\myrm
	If $\Gamma; \emptyset; \Delta \proves \mapchar{C}{u} \hastype \Proc$
	then
	$\Gamma; \es; \Delta \proves u \hastype C$.
\end{corollary}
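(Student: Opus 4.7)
The plan is to invert the typing derivation of $\mapchar{C}{u}$, exploiting the fact that the syntactic shape of the characteristic process uniquely constrains how $u$ must be typed. I proceed by induction on the structure of $C$, with an inner sub-induction on the structure of session types.

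For $C \in \{\chtype{S}, \chtype{L}\}$, the characteristic process is $\bout{u}{\omapchar{C}}\inact$. Only rule $\trule{Req}$ can type this output (rule $\trule{Send}$ would force $u$ to carry a session prefix compatible with the shared value $\omapchar{C}$, which is impossible since $\omapchar{\chtype{S}}$ and $\omapchar{\chtype{L}}$ are not session values), and this forces $u : \chtype{U} \in \Gamma$ with $\omapchar{C}$ of type $U$. The third clause of the proposition gives $\omapchar{S} : S$ and $\omapchar{L} : L$, so $U$ must match $S$ or $L$ respectively, yielding $\Gamma; \es; \Delta \proves u \hastype C$ by $\trule{Sh}$ (combined with $\trule{End}$ to absorb any $\tinact$ entries still in $\Delta$).

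For the session case $C = S$, sub-induct on $S$. Each prefix constructor $\btout{U} S'$, $\btinp{U} S'$, $\btsel{l_i:S_i}_{i\in I}$, $\btbra{l_i:S_i}_{i\in I}$ admits only its corresponding typing rule ($\trule{Send}$, $\trule{Rcv}$, $\trule{Sel}$, $\trule{Bra}$), whose premises place $u$ in $\Delta$ with a matching top-level prefix; applying the inductive hypothesis to the continuation $\mapchar{S'}{u}$ pins down the tail exactly as $S'$. For $S = \tinact$ the characteristic process is $\inact$, typed via $\trule{Nil}$, and $u : \tinact$ is recovered using $\trule{End}$. The recursive case $S = \trec{t}{S'}$ is typed via $\trule{Rec}$ with premise $\Gamma \cat \varp{X}_{\vart{t}}: \Delta; \es; \Delta \proves \mapchar{S'}{u} \hastype \Proc$; applying the inductive hypothesis to $\mapchar{S'}{u}$ recovers the required typing of $u$, and $\trule{Rec}$ then re-encapsulates it.

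The main subtlety lies in the type-variable case $\mapchar{\vart{t}}{u} = \varp{X}_{\vart{t}}$, whose syntax does not mention $u$ at all: in isolation the claim yields no direct information on $u$. The constraint is recovered from the enclosing $\trule{Rec}$ application, because by $\trule{RVar}$ the session environment recorded under $\varp{X}_{\vart{t}}$ in $\Gamma$ must equal $\Delta$, which contains $u : \vart{t}$ by construction of the outer derivation. Guardedness of type variables (\defref{def:types}) and a standard Barendregt convention on distinct recursion variables keep this threading consistent, while the remaining side conditions follow from the balanced-environment discipline of \defref{d:wtenv}.
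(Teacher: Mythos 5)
The paper offers no argument for this corollary at all---it is stated as an immediate consequence of the inhabitation proposition---so your inversion-by-structural-induction is the natural thing to attempt, and its overall shape is reasonable. However, two steps do not hold up. First, your treatment of the shared case rests on a misreading of the definition: $\mapchar{\chtype{S}}{u}$ is $\bout{u}{\omapchar{S}}\inact$, and $\omapchar{S}$ is a \emph{fresh session name} $s$, not the shared name $\omapchar{\chtype{S}}$. A session name is a perfectly legal payload for $\trule{Send}$, so your claim that ``only rule $\trule{Req}$ can type this output'' is unjustified: the very same process $\bout{u}{s}\inact$ is also $\mapchar{\btout{S}\tinact}{u}$ and admits a derivation via $\trule{Send}$ in which $u$ carries a session type in $\Delta$ rather than a shared type in $\Gamma$. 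The inversion therefore does not uniquely determine whether $u$ is shared or linear, and the same Send/Req ambiguity resurfaces in your session case for prefixes of the form $\btout{U}S'$. Closing this requires either restricting attention to the derivation produced by the proposition, or arguing separately that a $\trule{Send}$-typing of $\mapchar{\chtype{S}}{u}$ is incompatible with the remaining data --- which you do not do.

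Second, the recursion case is not actually closed. You appeal to the fact that the environment recorded under $\varp{X}_{\vart{t}}$ in $\Gamma$ ``contains $u:\vart{t}$ by construction of the outer derivation,'' but what the outer derivation assigns to $u$ is exactly what is being proved, so this is circular. Moreover there is no rule that converts $\Gamma;\es;\Delta \proves u \hastype S$ (with $\vart{t}$ free in $S$) into $u \hastype \trec{t}{S}$: the only way to type $u$ at a session type is $\trule{Sess}$, which reads the type off $\Delta$ verbatim, so nothing ``re-encapsulates'' via $\trule{Rec}$. The induction would need to be reorganised, e.g.\ as an induction on the typing derivation that threads the invariant $\varp{X}_{\vart{t}}:\Delta$ together with $u:\trec{t}{S}\in\Delta$, or by working up to the type equivalence $\iso$. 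A smaller point: $\trule{End}$ is a rule for process judgements $P \hastype \Proc$, not for value judgements $u \hastype C$, so it cannot be invoked to absorb spare $\tinact$ entries of $\Delta$ in the conclusion.
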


We use the characteristic value $\omapchar{U}$
to limit input transitions.
Following the same reasoning as (\ref{equ:1})--(\ref{equ:2}), 
we can define an alternative trigger process, called
{\bf\em characteristic trigger process} with type 
$U$ to replace Clause~1 in \defref{def:context_bis}.
\begin{eqnarray}
	\label{eq:4}
	\ftrigger{t}{V}{U} \defeq \fotrigger{t}{x}{s}{\btinp{U} \tinact}{V}
\end{eqnarray}
\noi 
Thus, in contrast to the trigger process in~\eqref{equ:0},
the characteristic trigger process in~\eqref{eq:4}
does not involve a higher-order communication on $t$.

To refine the input transition system, we need to observe 
an additional value:
\[
	\abs{{x}}{\binp{t}{y} (\appl{y}{{x}})}
\]
called the {\bf\em trigger value}.
This is necessary, because it turns out
that a characteristic value 
alone as the observable input 
is not enough to define a sound bisimulation.
Roughly speaking, the trigger value is used
to observe/simulate application processes.

The intuition for usage of the trigger is
demonstrated in the next example.
\begin{example}
	\label{ex:motivation}
	First we demonstrate that observing a characteristic value
	input alone is not sufficient
	to define a sound bisimulation closure.
	Consider typed processes $P_1, P_2$:
	\begin{eqnarray}
		P_1 = \binp{s}{x} (\appl{x}{s_1} \Par \appl{x}{s_2}) 
		\qquad \qquad
		P_2 = \binp{s}{x} (\appl{x}{s_1} \Par \binp{s_2}{y} \inact) 
		\label{equ:6}
	\end{eqnarray}
	with
	\[
		\Gamma; \es; \Delta \cat s: \btinp{\shot{(\btinp{C} \tinact)}} \tinact \proves P_i \hastype \Proc \qquad (i \in \{1,2\}).
	\]
	If the above processes input and substitute over $x$
	the characteristic value 
	\[
		\omapchar{\shot{(\btinp{C} \tinact)}} = \abs{x}{\binp{x}{y} \inact}
	\] 
	\noi then both processes evolve into:
	\begin{eqnarray*}
		\Gamma; \es; \Delta \proves \binp{s_1}{y} \inact \Par \binp{s_2}{y} \inact \hastype \Proc
	\end{eqnarray*}
	\noi therefore becoming  context bisimilar.
	However, the processes in~(\ref{equ:6})
	are clearly {\em not} context bisimilar:
	there exist many input actions
	which may be used to distinguish them.
	For example, if 
	$P_1$ and $P_2$ input
	\[
		\abs{x}{\newsp{s_3}{\bout{a}{s_3} \binp{x}{y} \inact}}
	\]
	\noi with
	$\Gamma; \es; \Delta \proves s \hastype \tinact$,
	then their derivatives are not bisimilar.

	Observing only the characteristic value 
	results in an over-discriminating bisimulation.
	However, if a trigger value, 
	$\abs{{x}}{\binp{t}{y} (\appl{y}{{x}})}$
	is received on $s$, 
	then we can distinguish 
	processes in \eqref{equ:6}:  
	\begin{eqnarray*}
		\horel{\Gamma}{\Delta}{P_1}{&\By{\bactinp{s}{\abs{{x}}{\binp{t}{y} (\appl{y}{{x}})}}}&}{\Delta'}{\binp{t}{x} (\appl{x}{s_1}) \Par \binp{t}{x} (\appl{x}{s_2})}
		\\
		\horel{\Gamma}{\Delta}{P_2}{&\By{\bactinp{s}{\abs{{x}}{\binp{t}{y} (\appl{y}{{x}})}}}&}{\Delta''}{\binp{t}{x} (\appl{x}{s_1}) \Par \binp{s_2}{y} \inact}
	\end{eqnarray*}

	One question that arises here is whether the trigger value is enough
	to distinguish two processes, hence no need of 
	characteristic values as the input. 
	This is not the case since the trigger value
	alone also results in an over-discriminating bisimulation relation.
	In fact the  trigger value can be observed on any input prefix
	of {\em any type}. For example, consider the following processes:
	\begin{eqnarray}
		\Gamma; \es; \Delta \proves \newsp{s}{\binp{n}{x} (\appl{x}{s}) \Par \bout{\dual{s}}{\abs{x} P} \inact} \hastype \Proc\label{equ:7}
		\\
		\Gamma; \es; \Delta \proves \newsp{s}{\binp{n}{x} (\appl{x}{s}) \Par \bout{\dual{s}}{\abs{x} Q} \inact} \hastype \Proc\label{equ:8}
	\end{eqnarray}
	\noi if processes in \eqref{equ:7}/\eqref{equ:8}
	input the trigger value, we obtain processes:
	\begin{eqnarray*}
		\Gamma; \es; \Delta' \proves  \newsp{s}{\binp{t}{x} (\appl{x}{s}) \Par \bout{\dual{s}}{\abs{x} P} \inact} \hastype \Proc
		\\
		\Gamma; \es; \Delta' \proves  \newsp{s}{\binp{t}{x} (\appl{x}{s}) \Par \bout{\dual{s}}{\abs{x} Q} \inact} \hastype \Proc
	\end{eqnarray*}

	\noi thus we can easily derive a bisimulation closure if we 
	assume a bisimulation definition that allows only trigger value input.

	But if processes in \eqref{equ:7}/\eqref{equ:8}
	input the characteristic value $\abs{z}{\binp{z}{x} (\appl{x}{m})}$,  
	then they would become:
	\begin{eqnarray*}
		\Gamma; \es; \Delta \proves \newsp{s}{\binp{s}{x} (\appl{x}{m}) \Par \bout{\dual{s}}{\abs{x} P} \inact} \wbc \Delta \proves P \subst{m}{x}
		\\
		\Gamma; \es; \Delta \proves \newsp{s}{\binp{s}{x} (\appl{x}{m}) \Par \bout{\dual{s}}{\abs{x} Q} \inact} \wbc \Delta \proves Q \subst{m}{x}
	\end{eqnarray*}
	\noi which are not bisimilar if $P \subst{m}{x} \not\wb Q \subst{m}{x}$.
\end{example}

\noi We now define the \emph{refined} typed LTS. 
The new LTS is defined by considering a transition
rule for input in which admitted values are
trigger or characteristic values:
We formalise the restricted input action with the
definition of a new environment transition relation:
\[
	(\Gamma, \Lambda_1, \Delta_1) \hby{\ell} (\Gamma, \Lambda_2, \Delta_2)
\]
\noi The new rule is defined on top of the rules in~\figref{fig:envLTS}:
\begin{definition}[Refined Input Environment LTS]\myrm
\[
	\eltsrule{RRv}~~\tree {
		(\Gamma_1; \Lambda_1; \Delta_1) \by{\bactinp{n}{V}} (\Gamma_2; \Lambda_2; \Delta_2)
		\quad
		\begin{array}{crcl}
			& (V & \scong & \abs{z}{\binp{t}{x} (\appl{x}{z})} \wedge t \textrm{ fresh}) \\
			\vee & (V & \scong &  \omapchar{U}) \vee V = m
		\end{array}
	}{
		(\Gamma_1; \Lambda_1; \Delta_1) \hby{\bactinp{n}{V}} (\Gamma_2; \Lambda_1; \Delta_2)
	}
\]
\end{definition}
\noi 
Rule $\eltsrule{RRv}$ refines the input action to carry only
a characteristic value (fresh name or abstraction)
or a trigger value on a fresh name $t$.
This rule is defined on top of rules
$\eltsrule{SRv}$ and $\eltsrule{ShRv}$
in~\figref{fig:envLTS}.
The new environment transition system $\hby{\ell}$
uses %
rule~$\eltsrule{RRv}$ as input rule.
 All other defining cases
of environment LTS $\hby{\ell}$ remain the same
as in~\figref{fig:envLTS}.

The new typed relation derived from the $\hby{\ell}$ environment LTS is
defined as:
\begin{definition}[Restricted Typed Transition]\myrm 
	\label{def:restricted_typed_transition}
	We write
	$\horel{\Gamma}{\Delta_1}{P_1}{\hby{\ell}}{\Delta_2}{P_2}$
	whenever
	$P_1 \by{\ell} P_2$,
	$(\Gamma, \emptyset, \Delta_1) \hby{\ell} (\Gamma, \emptyset, \Delta_2)$
	and
	$\Gamma; \emptyset; \Delta_2 \proves P_2 \hastype \Proc$.

	\noi We extend to $\Hby{}$ and $\Hby{\hat{\ell}}$ in the standard way.
\end{definition}

\begin{lemma}[Invariant]\label{l:invariant}
	If $\horel{\Gamma}{\Delta_1}{P_1}{\hby{\ell}}{\Delta_2}{P_2}$
	then $\horel{\Gamma}{\Delta_1}{P_1}{\by{\ell}}{\Delta_2}{P_2}$.
\end{lemma}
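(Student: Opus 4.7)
The plan is to argue that the refined transition relation $\hby{\ell}$ is by construction a sub-relation of the original typed transition relation $\by{\ell}$. Unfolding \defref{def:restricted_typed_transition}, the hypothesis $\horel{\Gamma}{\Delta_1}{P_1}{\hby{\ell}}{\Delta_2}{P_2}$ decomposes into three conjuncts: (i) the untyped process transition $P_1 \by{\ell} P_2$; (ii) the environmental transition $(\Gamma, \emptyset, \Delta_1) \hby{\ell} (\Gamma, \emptyset, \Delta_2)$; and (iii) the typing judgement $\Gamma; \emptyset; \Delta_2 \proves P_2 \hastype \Proc$. Conjuncts (i) and (iii) coincide verbatim with the corresponding requirements for $\by{\ell}$ in \defref{d:tlts}, so no work is needed for those. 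The entire content of the lemma is therefore the implication $\hby{\ell} \subseteq \by{\ell}$ at the level of the environmental LTS.

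I would discharge this by a case analysis on the last rule used to derive $(\Gamma, \emptyset, \Delta_1) \hby{\ell} (\Gamma, \emptyset, \Delta_2)$. All rules other than $\eltsrule{RRv}$ are inherited unchanged from \figref{fig:envLTS}, so each such case yields the corresponding $\by{\ell}$ transition immediately. For the single new rule $\eltsrule{RRv}$, note that its premise is precisely $(\Gamma_1; \Lambda_1; \Delta_1) \by{\bactinp{n}{V}} (\Gamma_2; \Lambda_2; \Delta_2)$, with the additional side condition restricting $V$ to be a trigger value, a characteristic value, or a name. Since the rule only strengthens the premise by these syntactic restrictions on $V$ without altering the underlying judgement, the $\by{\ell}$ transition is obtained by simply forgetting the side condition.

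A minor bookkeeping point worth checking is the treatment of the linear environment: rule $\eltsrule{RRv}$ produces $\Lambda_1$ in the conclusion while its premise yields $\Lambda_2$. In the typed relation both occurrences of $\Lambda$ are fixed to $\emptyset$ by \defref{def:restricted_typed_transition}, so the input value $V$ is either a name or a closed abstraction ($\omapchar{U}$ and the trigger value both qualify), making $\Lambda_1 = \Lambda_2 = \emptyset$ and eliminating any discrepancy. Combining the extracted $\by{\ell}$ environmental step with the unchanged conjuncts (i) and (iii) yields $\horel{\Gamma}{\Delta_1}{P_1}{\by{\ell}}{\Delta_2}{P_2}$, as required.

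I expect no real obstacle here: the lemma is essentially a sanity check that the refinement in \secref{subsec:char_bis} genuinely prunes, rather than enriches, the transition system. The only subtlety is the linear-environment reconciliation above, and even this is dispatched by observing that the admissible shapes of $V$ in $\eltsrule{RRv}$ never introduce free linear variables.
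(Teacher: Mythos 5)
Your argument is correct and matches the paper's proof, which simply observes that the claim follows directly from the definition of rule $\eltsrule{RRv}$: its premise is itself a $\by{\ell}$ environmental transition, so the refined relation is a sub-relation of the original one. Your additional remark about the linear environments being forced to $\emptyset$ in the typed relation is a harmless elaboration of the same observation.
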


\begin{proof}
	The proof is straightforward from the definition of rule $\eltsrule{RRv}$.
\end{proof}


The next definition formalises the notion of a trigger process.
\begin{definition}[Trigger Process]\myrm
Let $t$, $V$, and $U$ be a name, a value, and a type, respectively. We have:
	\begin{center}
		\begin{tabular}{rclcl}
			Trigger Process & & $\htrigger{t}{V}$ &$\defeq$& $\hotrigger{t}{x}{s}{V}$\\
			Characteristic Trigger Process & & $\ftrigger{t}{V}{U}$ &$\defeq$& $\fotrigger{t}{x}{s}{\btinp{U} \tinact}{V}$
		\end{tabular}
	\end{center}
\end{definition}

\myparagraph{The Two Bisimulations.} We now define 
higher-order bisimulation, 
a more tractable bisimulation for $\HO$ and $\HOp$.
The two bisimulations differ on the fact that
they use the different 
trigger processes: $\htrigger{t}{V}$ and $\ftrigger{t}{V}{U}$.

\begin{definition}[Higher-Order Bisimulation]\myrm
	\label{def:bisim}
	Typed relation
	$\Re$ is a {\em higher-Order bisimulation} if for all
	$\horel{\Gamma}{\Delta_1}{P_1}{\ \Re\ }{\Delta_2}{Q_1}$, 
	\begin{enumerate}[1.]
		\item	
		   Whenever 
			$
				\horel{\Gamma}{\Delta_1}{P_1}{\hby{\news{\tilde{m_1}} \bactout{n}{V_1}}}{\Delta_1'}{P_2}
			$
			there exist $Q_2$, $V_2$, $\Delta_2'$ such that
			\[
				\horel{\Gamma}{\Delta_2}{Q_1}{\Hby{\news{\tilde{m_2}} \bactout{n}{V_2}}}{\Delta_2'}{Q_2}
			\]
			and, for a fresh $t$, 
			\[
				\horel{\Gamma}{\Delta_1''}{\newsp{\tilde{m_1}}{P_2 \Par \htrigger{t}{V_1}}}
				{\ \Re\ }
				{\Delta_2''}{}{\newsp{\tilde{m_2}}{Q_2 \Par \htrigger{t}{V_2}}}.
			\]
		\item	For all 
			$
				\horel{\Gamma}{\Delta_1}{P_1}{\hby{\ell}}{\Delta_1'}{P_2}
			$
			such that $\ell \not= \news{\tilde{m}} \bactout{n}{V}$, there exist
			 $\exists Q_2$ and $\Delta_2'$ such that 
			\[
				\horel{\Gamma}{\Delta_1}{Q_1}{\Hby{\hat{\ell}}}{\Delta_2'}{Q_2}
			\]
			and
			$\horel{\Gamma}{\Delta_1'}{P_2}{\ \Re\ }{\Delta_2'}{Q_2}$.

		\item	The symmetric cases of 1 and 2.
	\end{enumerate}
	The Knaster-Tarski theorem ensures that the largest higher-order bisimulation exists;
	it is called \emph{higher-order bisimilarity} and is denoted by $\wb$.
\end{definition}
The higher-order bisimulation definition uses higher order input guarded triggers,
thus it cannot be used as an equivalence relation for the \sessp sub-calculus.
An alternative definition of the bisimulation---based on characteristic
output triggers---solves this problem.

\begin{definition}[Characteristic Bisimulation]\myrm
	\label{def:cbisim}
	Typed relation 
	$\Re$ is a {\em characteristic bisimulation} if whenever
	$\horel{\Gamma}{\Delta_1}{P_1}{\ \Re\ }{\Delta_2}{Q_1}$ implies:
	\begin{enumerate}
		\item Whenever 
			$
				\horel{\Gamma}{\Delta_1}{P_1}{\hby{\news{\tilde{m_1}} \bactout{n}{V_1: U}}}{\Delta_1'}{P_2}
			$
			there exist $Q_2$, $V_2$, and $\Delta_2'$ such that 
			\[
				\horel{\Gamma}{\Delta_2}{Q_1}{\Hby{\news{\tilde{m_2}} \bactout{n}{V_2: U}}}{\Delta_2'}{Q_2}
			\]
			and, for a fresh $t$, 
			\[
				\horel{\Gamma}{\Delta_1''}{\newsp{\tilde{m_1}}{P_2 \Par \ftrigger{t}{V_1}{U}}}
				{\ \Re\ }
				{\Delta_2''}{}{\newsp{\tilde{m_2}}{Q_2 \Par \ftrigger{t}{V_2}{U}}}.
			\]
		\item For all 
			$
				\horel{\Gamma}{\Delta_1}{P_1}{\hby{\ell}}{\Delta_1'}{P_2}
			$
			such that $\ell \not= \news{\tilde{m}} \bactout{n}{V}$, there exist 
			  $\exists Q_2$ and $\Delta_2'$ such that 
			\[
				\horel{\Gamma}{\Delta_1}{Q_1}{\hat{\Hby{\ell}}}{\Delta_2'}{Q_2}
			\]
			and
			$\horel{\Gamma}{\Delta_1'}{P_2}{\ \Re\ }{\Delta_2'}{Q_2}$.

		\item	The symmetric cases of 1 and 2.
	\end{enumerate}
	The Knaster-Tarski theorem ensures that the largest bisimulation exists;
	it is called \emph{characteristic bisimilarity} and is denoted by $\wbf$.
\end{definition}

The next result clarifies our choice of restricting
higher-order input actions with input triggers and
characteristic processes: if 
two processes $P$ and $Q$ are bisimilar 
under the substitution
of the characteristic abstraction and the trigger
input, then 
$P$ and $Q$ 
are bisimilar under any abstraction substitution.

\begin{lemma}[Process Substitution]\myrm
	\label{lem:proc_subst}
	If 
	\begin{enumerate}
		\item	$\horel{\Gamma}{\Delta_1'}{P \subst{\abs{z}{\binp{t}{y} (\appl{y}{z})}}{x}}{\wb}{\Delta_2'}{Q \subst{\abs{z}{\binp{t}{y} (\appl{y}{z})}}{x}}$,
			for some fresh $t$.

		\item	$\horel{\Gamma}{\Delta_1''}{P \subst{\omapchar{U}}{x}}{\wb}{\Delta_2''}{Q \subst{\omapchar{U}}{x}}$, 
			for some $U$.
	\end{enumerate}
	then $\forall R$ such that $\fv{R} = z$
\[
	\horel{\Gamma}{\Delta_1}{P \subst{\abs{z}{R}}{x}}{\wb}{\Delta_2}{Q \subst{\abs{z}{R}}{x}}
\]
\end{lemma}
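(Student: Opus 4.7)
My plan is to exhibit the pair $(P\subst{\abs{z}{R}}{x},\, Q\subst{\abs{z}{R}}{x})$ as members of a higher-order bisimulation built directly on top of the two hypotheses. Write $V_t = \abs{z}{\binp{t}{y}(\appl{y}{z})}$ and $V_R = \abs{z}{R}$. The conceptual driver is that $V_t$ is a \emph{generic} abstraction: applying $V_t$ to a name $u$ reduces to $\binp{t}{y}(\appl{y}{u})$, which parks on an input at the fresh name $t$; whatever abstraction the environment supplies on $t$ becomes the body executed at that application site. Supplying $V_R$ on $t$ thus deterministically reproduces $R\subst{u}{z}$. Hypothesis~1 therefore carries precisely the information needed to replay every application site with body $R$, provided we can close $\wb$ under restriction of $t$ and parallel composition with the responder $\bout{t}{V_R}\inact$.

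Concretely, for the case where $x$ has linear abstraction type $\lhot{C}$ (so $x$ is used at most once in $P$ and $Q$), I would first establish the auxiliary equality
\[
P\subst{V_R}{x} \;\wb\; \newsp{t}{P\subst{V_t}{x} \Par \bout{t}{V_R}\inact}
\]
(and symmetrically for $Q$). Typing goes through by giving $t$ the session type that mirrors the single linear use of $x$, so the interaction at $t$ is forced to be a deterministic $\tau$-synchronisation. The chain
\[
P\subst{V_R}{x} \;\wb\; \newsp{t}{P\subst{V_t}{x} \Par \bout{t}{V_R}\inact} \;\wb\; \newsp{t}{Q\subst{V_t}{x} \Par \bout{t}{V_R}\inact} \;\wb\; Q\subst{V_R}{x}
\]
then gives the conclusion by transitivity, with the middle step coming from hypothesis~1. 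For the case where $x$ has shared type $\shot{C}$, a single linear responder no longer serves every use of $x$; here I would exploit the shape of $\omapchar{\shot{C}} = \abs{x}{\mapchar{C}{x}}$ from \defref{def:characteristic_process} together with hypothesis~2 to exercise every observationally relevant behaviour at the shared type, combining with hypothesis~1 at each usage site.

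The main obstacle is that $\wb$ is defined via the labelled transition system of \defref{def:bisim} and is not \emph{a priori} a congruence, since coincidence with $\wbc$ is established only later (\thmref{the:coincidence}). The middle step of the chain above therefore cannot simply appeal to congruence of $\wb$ under restriction and parallel composition. I would instead prove the lemma by defining a single candidate relation
\[
\Re \;=\; \{(P\subst{V_R}{x},\; Q\subst{V_R}{x}) \mid \text{the hypotheses hold, for all } R\}
\]
suitably closed under composition with a fresh responder on $t$, and verifying the clauses of \defref{def:bisim} directly by case analysis on the environment LTS of \figref{fig:envLTS}: actions of $P$ that do not touch $x$ are matched via hypothesis~1 applied to $P\subst{V_t}{x}$; $\tau$-transitions firing $x$ are matched across the composed form using determinism at $t$; and output transitions sending $V_R$ to the environment are matched using the output clause of \defref{def:bisim} together with hypothesis~1 applied at $V_t$. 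This case analysis, combined with hypothesis~2 for the shared regime, is where the bulk of the technical work will lie.
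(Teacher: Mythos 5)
Your overall strategy is essentially the one the paper uses: the naive transitivity chain through $\newsp{t}{P\subst{V_t}{x} \Par \bout{t}{V_R}\inact}$ is indeed blocked by the circularity you identify (congruence of $\wb$ is only obtained \emph{after} this lemma, via \thmref{the:coincidence}), and the paper likewise falls back on a directly defined candidate relation over the substituted processes, shown to be a bisimulation \emph{up to deterministic ($\beta$/session) transitions} (\lemref{lem:up_to_deterministic_transition}, \propref{lem:tau_inert}) so that the extra steps introduced by replaying $V_t$ against a responder can be absorbed. The paper additionally factors the argument through a ``linear'' version over normal forms $\newsp{\tilde{m}}{P_1 \Par P_2\subst{\cdot}{x}}$ (\lemref{lem:subst_equiv}) and then iterates to obtain the general statement; that is an organisational rather than conceptual difference from what you propose.

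The genuine gap is your allocation of hypothesis~2. You invoke the characteristic-value hypothesis only ``for the shared regime'' and claim that for $x$ of linear type hypothesis~1 plus determinism at $t$ suffices. It does not. Once the application site in $P\subst{V_t}{x}$ has fired, the process parks at $\binp{t}{y}(\appl{y}{u})$, and hypothesis~1 only constrains what happens after $t$ receives a further value; feeding the trigger value again merely parks at a fresh name and reveals nothing about how an actual body interacts with the surrounding process. This is precisely the failure mode of processes \eqref{equ:7} and \eqref{equ:8} in \exref{ex:motivation}: there $x$ is used linearly, the two processes are indistinguishable under trigger-value substitution alone, yet $P\subst{m}{x} \not\wb Q\subst{m}{x}$ in general. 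Since congruence is unavailable inside this proof, you cannot recover this information from hypothesis~1 by composing with a responder; the characteristic-value hypothesis must be used in tandem with the trigger hypothesis for \emph{every} occurrence of $x$ (this is how the paper's proof of \lemref{lem:subst_equiv} combines both hypotheses in the case $P_2 \neq \appl{x}{\tilde{n}}$), not only in the shared case. As written, your case analysis would leave the linear case unsound.
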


\begin{proof}
	The details of the proof can be found in \lemref{app:lem:proc_subst} (Page~\pageref{app:lem:proc_subst}).
	\qed
\end{proof}

We now state our main theorem: typed bisimilarities collapse.
The following theorem justifies our choices
for the bisimulation relations, since
they coincide between them and they also
coincide with reduction closed, barbed congruence.

\begin{theorem}[Coincidence]\myrm
	\label{the:coincidence}
	Relations $\wbc, \fwb, \hwb$ and $\cong$ coincide.
\end{theorem}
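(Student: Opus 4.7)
The plan is to establish the chain of inclusions $\cong \subseteq \wbc \subseteq \hwb \subseteq \fwb \subseteq \cong$ (with the final step factoring through $\wbc$). I would begin with the standard coincidence $\wbc = \cong$, following the classical pattern for higher-order process calculi: soundness $\wbc \subseteq \cong$ is obtained by showing that $\wbc$ is preserved by every context $\C$, the non-trivial cases being parallel composition, restriction, and abstraction-passing; completeness $\cong \subseteq \wbc$ is obtained by definability, constructing for each label $\ell$ a discriminating context that exposes a distinctive barb. Linearity of session types makes each such testing context deterministic on the session endpoints it touches, which sidesteps the usual replication-based complications of the untyped higher-order $\pi$-calculus.

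The inclusions $\wbc \subseteq \hwb$ and $\wbc \subseteq \fwb$ are the easier half. For the output clause, the trigger process $\htrigger{t}{V}$ (resp.\ characteristic trigger $\ftrigger{t}{V}{U}$) is a specific instance of a closing tester $R\{V/x\}$, so the universal output clause of $\wbc$ specialises to the trigger-based clauses. For the input clause, the refined environmental LTS $\hby{}$ admits strictly fewer input actions than $\by{}$ (only characteristic values, trigger values and free names via rule $\eltsrule{RRv}$), so any input matched under $\wbc$ is automatically matched under $\hwb$ and $\fwb$. In both cases I would verify that $\wbc$, viewed as a typed relation, satisfies the respective bisimulation clauses using \lemref{l:invariant} to move between $\hby{}$ and $\by{}$.

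The hard direction is $\hwb \subseteq \wbc$ (and, by a parallel argument, $\fwb \subseteq \wbc$). The key obstacle is to recover the universal quantification over testers $R$ in the output clause of \defref{def:context_bis} from the single trigger used in \defref{def:bisim}. I would use the deterministic identity
\[
\newsp{\tilde m}{P_2 \Par R\{V/x\}} \;\wbc\; \newsp{\tilde m}{P_2 \Par \newsp{t}{\htrigger{t}{V} \Par \bout{t}{\abs{z}{\binp{z}{x}R}} \inact}}
\]
which, after the two session reductions on $s,\dual s$ triggered by rule $\orule{App}$ and $\orule{Pass}$, collapses to the left-hand side. This reduces an arbitrary-$R$ output test to a trigger followed by an input of the abstraction $\abs{z}{\binp{z}{x}R}$, exactly the shape admitted by rule $\eltsrule{RRv}$. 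Combining this observation with \lemref{lem:proc_subst} (Process Substitution), which lifts bisimilarity under the two canonical substitutions to bisimilarity under every abstraction substitution, I can then show that the candidate relation $\hwb$ satisfies all clauses of $\wbc$. The same argument, replacing the generic trigger by the characteristic trigger, handles $\fwb \subseteq \wbc$; here one additionally exploits the typed output action $\news{\tilde m}\bactout{n}{V : U}$ so that the substituted characteristic value $\omapchar{U}$ is guaranteed to inhabit the right type.

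The main obstacle throughout is type-theoretic: each intermediate process I manipulate must remain well-typed under a balanced session environment with $\Delta_1 \bistyp \Delta_2$, and every context, trigger, or substitution used in the above reductions must preserve these invariants. \thmref{thm:sr} (type soundness), the definition of characteristic processes/values, and the refined transition system $\hby{}$ together keep this book-keeping tractable. Once the cycle is closed, $\wbc$, $\hwb$, $\fwb$ and $\cong$ all coincide, which is the content of \thmref{the:coincidence}.
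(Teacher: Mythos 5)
Your proposal is correct and rests on exactly the paper's toolkit---congruence of $\wbc$, Hennessy-style definability with an extrusion lemma, the Process Substitution lemma (\lemref{lem:proc_subst}), and $\tau$-inertness of the deterministic trigger reductions---but it arranges the inclusions differently. The paper's chain is $\cong \subseteq \hwb = \fwb \subseteq \wbc \subseteq \cong$: completeness targets the trigger-based bisimilarity $\hwb$ directly, which is the natural target because the definability test for an output leaves precisely a trigger-shaped residual $\newsp{\tilde m}{P' \Par \htrigger{t}{V} \Par \bout{\suc}{\dots}\inact}$; the equality $\hwb = \fwb$ is proved head-on; and Process Substitution is invoked exactly once, in $\hwb \subseteq \wbc$. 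Your chain instead proves $\cong \subseteq \wbc$ first and relates $\hwb$ and $\fwb$ only indirectly through $\wbc$. This is workable, but the step $\cong \subseteq \wbc$ is strictly harder than $\cong \subseteq \hwb$: the discriminating context can only \emph{receive} the emitted value, so extrusion again yields a trigger-shaped residual, and recovering the universal-$R$ output clause of \defref{def:context_bis} forces you to fold the substitution/factorisation machinery into the completeness proof as well as into the soundness direction, where you already use it for $\hwb \subseteq \wbc$. The paper's ordering pays for that machinery once. Two minor points, neither affecting correctness: in your collapsing identity the internal output should be on the dual endpoint $\dual{t}$, since rule $\orule{Pass}$ synchronises $\bout{n}{V}$ with $\binp{\dual{n}}{x}$; and in arguing $\wbc \subseteq \hwb$ you should record that for labels admitted by $\eltsrule{RRv}$ the responder's weak transitions $\By{\ell}$ and $\Hby{\ell}$ coincide (via \lemref{l:invariant} and the fact that only the input rule is refined), so the matching move transfers between the two transition systems.
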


\begin{proof}
	The full details of the proof are in \appref{app:sub_coinc}. There, 
	the proof is split into a series of lemmas:
	\begin{enumerate}[$-$]
\item	\lemref{lem:wb_eq_wbf} establishes $\wb\ =\ \wbf$.
\item	\lemref{lem:wb_is_wbc} exploits the process substitution result
	(\lemref{lem:proc_subst}) to prove that $\wb \subseteq \wbc$.
\item	\lemref{lem:wbc_is_cong} shows that $\wbc$ is a congruence
	which implies $\wbc \subseteq \cong$.
\item	\lemref{lem:cong_is_wb} shows  that $\cong \subseteq \wb$, using
	the technique developed in~\cite{Hennessy07}.
\end{enumerate}
%
	The formulation of input
	triggers in the bisimulation relation allows us to prove
	the latter result without using a matching operator.
	\qed
\end{proof}

%


We now define internal deterministic transitions as those associated to session synchronizations or to 
$\beta$-reductions: 
		
\begin{definition}[Deterministic Transition]\myrm
\label{def:dettrans}
	Let  $\Gamma; \es; \Delta \proves P \hastype \Proc$ be a balanced \HOp process. 
	Transition $\horel{\Gamma}{\Delta}{P}{\hby{\tau}}{\Delta'}{P'}$ is called:
	\begin{enumerate}[$-$]
		\item {\em Session transition} whenever the untyped transition $P \by{\tau} P'$ 
		is derived using 
			rule~$\ltsrule{Tau}$ 
		(where $\subj{\ell_1}$ and $\subj{\ell_2}$ in the premise 
		are dual endpoints), 
		possibly followed by uses of  $\ltsrule{Alpha}$, $\ltsrule{Res}$, $\ltsrule{Rec}$, or $\ltsrule{Par${}_L$}/
		\ltsrule{Par${}_R$}$.
		
		\item	{\em \betatran}	whenever the untyped transition $P \by{\tau} P'$
			is derived using rule $\ltsrule{App}$,
			possibly followed by uses of  $\ltsrule{Alpha}$, $\ltsrule{Res}$, $\ltsrule{Rec}$, or $\ltsrule{Par${}_L$}/
		\ltsrule{Par${}_R$}$.
	\end{enumerate}
We write
$\horel{\Gamma}{\Delta}{P}{\hby{\stau}}{\Delta'}{P'}$
and 
$\horel{\Gamma}{\Delta}{P}{\hby{\btau}}{\Delta'}{P'}$
to denote session and $\beta$-transitions, resp. Also, 
	 $\horel{\Gamma}{\Delta}{P}{\hby{\dtau}}{\Delta'}{P'}$ denotes
	either a session transition or a \betatran.
\end{definition}

Deterministic transitions imply the $\tau$-inertness property, which
is a property that ensures behavioural invariance on deterministic
transitions.

\begin{proposition}[$\tau$-inertness]\myrm
	\label{lem:tau_inert}
	Let  $\Gamma; \es; \Delta \proves P \hastype \Proc$ be a balanced \HOp process.
	Then
	\begin{itemize}
		\item	$\horel{\Gamma}{\Delta}{P}{\hby{\dtau}}{\Delta'}{P'}$ implies
			$\horel{\Gamma}{\Delta}{P}{\wb}{\Delta'}{P'}$.
		\item	$\horel{\Gamma}{\Delta}{P}{\Hby{\dtau}}{\Delta'}{P'}$ implies
			$\horel{\Gamma}{\Delta}{P}{\wb}{\Delta'}{P'}$.
	\end{itemize}
\end{proposition}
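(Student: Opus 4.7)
The plan is to exploit the coincidence of bisimilarities (\thmref{the:coincidence}) so that it suffices to establish the result for $\wb$; then the single-step case gives the multi-step case by induction on the length of the $\Hby{\dtau}$ derivation together with transitivity of $\wb$. I will proceed coinductively by exhibiting a typed relation
\[
  \mathcal{R} \;=\; \set{(P, P') \setbar \horel{\Gamma}{\Delta}{P}{\hby{\dtau}}{\Delta'}{P'}}\; \cup \; \mathrm{Id},
\]
closed under the conditions of typed relation (closedness, balancedness of $\Delta, \Delta'$, and $\Delta \bistyp \Delta'$, which follow from \thmref{thm:sr} applied to the underlying reduction). The goal is to verify the two clauses of \defref{def:bisim} for $\mathcal{R}$.

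First I would prove a \emph{diamond/confluence} lemma for the refined LTS: if $\horel{\Gamma}{\Delta}{P}{\hby{\dtau}}{\Delta'}{P'}$ and $\horel{\Gamma}{\Delta}{P}{\hby{\ell}}{\Delta_1}{Q}$, then either (i)~$\ell = \dtau$ is the very same transition, in which case $Q \scong P'$ and $\Delta_1 = \Delta'$ by determinism of session-synchronisation and $\beta$-reduction (\orule{Pass}, \orule{Sel}, \orule{App}, which are all functional on the participating redex); or (ii)~the redexes are disjoint, in which case there exists $P''$ with $\horel{\Gamma}{\Delta'}{P'}{\hby{\ell}}{\Delta_1'}{P''}$ and $\horel{\Gamma}{\Delta_1}{Q}{\hby{\dtau}}{\Delta_1'}{P''}$. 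Case (ii)~follows from a structural analysis of the derivation of $P \by{\tau} P'$ in \figref{fig:untyped_LTS}: for a session transition, the endpoints $n, \dual{n}$ involved are linear and balanced, hence an independent action $\ell$ (acting on a different session/shared name, or on an abstraction variable distinct from the $\beta$-redex) commutes with the synchronisation; for a \betatran, $\alpha$-conversion and linearity of the bound name $x$ in $(\abs{x}{P})u$ prevent any overlap with $\ell$.

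With the diamond property in hand, the bisimulation clauses for $\mathcal{R}$ are verified as follows. For any $\horel{\Gamma}{\Delta}{P}{\hby{\ell}}{\Delta_1}{Q}$: in case (i)~we close with $(P', P') \in \mathrm{Id}$ using a zero-step response by $P'$; in case (ii)~we respond with $\horel{\Gamma}{\Delta'}{P'}{\hby{\ell}}{\Delta_1'}{P''}$, and both pairs end up in $\mathcal{R}$ (either $(Q, P'') \in \mathcal{R}$ directly in the non-output cases, or in the output clause using the trigger closure $\htrigger{t}{V}$, noting that the two triggers produced are identical since $V$ is unchanged by $\dtau$). Conversely, any transition $\horel{\Gamma}{\Delta'}{P'}{\hby{\ell}}{\Delta_1'}{Q'}$ is matched by $\horel{\Gamma}{\Delta}{P}{\Hby{\ell}}{\Delta_1'}{Q'}$ via $P \hby{\dtau} P' \hby{\ell} Q'$, yielding $(Q', Q') \in \mathrm{Id} \subseteq \mathcal{R}$.

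The main obstacle I anticipate is the output clause: the bisimulation requires that the two emitted values $V_1, V_2$ yield bisimilar trigger compositions, and I must argue that no $\dtau$ step can occur \emph{inside} the value being sent (since values do not reduce in our call-by-value semantics) so that $V_1 = V_2$, and therefore that the residual contexts $\newsp{\tilde{m}}{(- \Par \htrigger{t}{V_i})}$ agree. A secondary subtlety is the side condition in rule $\eltsrule{RRv}$: the refined input admits only characteristic or trigger values, and we must check that the confluence step preserves this shape, which it does since $\dtau$ neither introduces nor consumes input observations. The multi-step statement then follows by iterating the single-step result along the $\Hby{\dtau}$ path and using transitivity of $\wb$.
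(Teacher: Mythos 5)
Your proposal is correct in substance but follows a genuinely different route from the paper. The paper proves Part~1 by induction on the derivation of the untyped $\tau$-transition (equivalently, the reduction): in the base cases it observes that the redex itself --- e.g.\ $\bout{s}{V} P_1 \Par \binp{\dual{s}}{x} P_2$ with both $s$ and $\dual{s}$ in $\Delta$, or an application $\appl{(\abs{x}{P})}{u}$ --- admits \emph{no} typed transition other than the $\dtau$ itself, because the environment LTS rules ($\eltsrule{SRv}$, $\eltsrule{SSnd}$, $\eltsrule{Sel}$, $\eltsrule{Bra}$) all carry the side condition $\dual{s} \notin \dom{\Delta}$; it then lifts through restriction, parallel composition and structural congruence using the fact that $\wb$ is a congruence. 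You instead exhibit a single global candidate relation and discharge the bisimulation clauses via a diamond lemma; the commutation argument does the work that the paper delegates to congruence, and your treatment of the output clause (the emitted value is untouched by $\dtau$, so the trigger compositions are again one $\dtau$ step apart and hence in $\mathcal{R}$) is exactly right. Your approach buys self-containedness --- it does not presuppose congruence of $\wb$, which in the paper is only available via the coincidence theorem --- at the cost of having to prove confluence; the paper's is shorter given its surrounding machinery.

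One point you should make explicit in the diamond lemma: in case~(ii) you assert that any $\ell \neq \dtau$ acts on a \emph{disjoint} redex, but for a session synchronisation this is not automatic from linearity alone --- a priori $\ell$ could be a visible output or input at $s$ or $\dual{s}$. What rules this out is precisely the typed side condition $\dual{s} \notin \dom{\Delta}$ in the environment LTS, together with balancedness of $\Delta$: since both endpoints of the synchronising pair occur in $\Delta$, neither can be the subject of an observable action. This is the central fact of the paper's proof, and your argument silently relies on it; stating it turns your ``hence an independent action $\ell$ commutes'' from an assumption into a conclusion. With that addition the proof is complete, and Part~2 follows by iterating Part~1 and transitivity of $\wb$, as you say (the initial appeal to \thmref{the:coincidence} is unnecessary, since the statement is already phrased in terms of $\wb$).
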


\begin{proof}
	The proof for Part 1 relies on the fact that processes of the\\
	form $\Gamma; \es; \Delta \proves \bout{s}{V} P_1 \Par \binp{\dual{s}}{x} P_2$
	cannot have any typed transition observables 
	(for both $s$ and $\dual{s}$ are defined in $\Delta$)
	and the fact
	that bisimulation is a congruence.
	See details in \appref{app:sub_tau_inert} (Page \pageref{app:sub_tau_inert}).
	The proof for Part 2 is straightforward from Part 1.
	\qed
\end{proof}

Processes that do not use shared names are inherently 
deterministic, and so they enjoy 
\emph{$\tau$-inertness} (in the sense of~\cite{DBLP:journals/tcs/GrooteS96}).

\begin{corollary}[$\CAL^{\minussh}$ $\tau$-inertness]\myrm
	\label{cor:tau_inert}
	Let $\Gamma; \es; \Delta \proves P \hastype \Proc$ be an $\CAL^{\minussh}$ process.
	\begin{itemize}
		\item	$\horel{\Gamma}{\Delta}{P}{\hby{\tau}}{\Delta'}{P'}$ if and only if $\horel{\Gamma}{\Delta}{P}{\hby{\dtau}}{\Delta'}{P'}$.
		\item	$\horel{\Gamma}{\Delta}{P}{\hby{\dtau}}{\Delta'}{P'}$ implies $\horel{\Gamma}{\Delta}{P}{\wb}{\Delta'}{P'}$.
	\end{itemize}
\end{corollary}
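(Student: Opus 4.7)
Part~2 follows immediately from Proposition~\ref{lem:tau_inert}: every $\CAL^{\minussh}$ process is also an \HOp process, and the general $\tau$-inertness result applies verbatim. All the substantive content of the corollary therefore lies in Part~1, and in particular in the ``only if'' direction of its biconditional (the ``if'' direction is tautological, since session and $\beta$-transitions are $\tau$-transitions by Definition~\ref{def:dettrans}).

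For Part~1, my plan is to argue by induction on the derivation of the underlying untyped transition $P \by{\tau} P'$ in Figure~\ref{fig:untyped_LTS}. Only two rules introduce the label $\tau$ at a leaf, namely $\ltsrule{App}$ and $\ltsrule{Tau}$; every other rule that can conclude a $\tau$-transition---namely $\ltsrule{Res}$, $\ltsrule{LPar}$, $\ltsrule{RPar}$, $\ltsrule{Alpha}$, and $\ltsrule{Rec}$---is exactly one of the closure rules permitted atop the base cases in Definition~\ref{def:dettrans}. Hence the classification as a session or $\beta$-transition propagates upward through the derivation, and it suffices to establish the classification at the base.

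The base case $\ltsrule{App}$ produces a $\beta$-transition by definition, so the only non-trivial base case is $\ltsrule{Tau}$. Its premise requires two dual actions $\ell_1 \asymp \ell_2$; by the definition of $\asymp$ these must form an output/input pair or a select/branch pair sharing a common subject. The key observation is that in $\CAL^{\minussh}$ the identifier grammar is restricted to session endpoints, and by subject reduction (Theorem~\ref{thm:sr}) this shared-name-free invariant is preserved along reductions. Consequently the common subject of $\ell_1$ and $\ell_2$ must be a dual pair $s, \dual{s}$, which qualifies as a session transition. Consistency with the environment rule $\eltsrule{Tau}$ of Figure~\ref{fig:envLTS} is automatic: session synchronisation yields $\Delta \red \Delta'$, and $\beta$-reduction leaves $\Delta$ unchanged.

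I do not anticipate a substantial obstacle here: the very reason for restricting to $\CAL^{\minussh}$ is to eliminate synchronisation on shared names---the sole potential source of non-deterministic $\tau$-steps, since $\dual{a} = a$ allows multiple parallel receivers to compete for the same message. Once this observation is crystallised, the induction is a routine case analysis, and the second bullet of the corollary then follows by invoking Proposition~\ref{lem:tau_inert}.
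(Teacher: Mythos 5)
Your proposal is correct and fills in exactly the argument the paper leaves implicit: the paper states this corollary without proof, relying on the observation that in $\CAL^{\minussh}$ the only sources of $\tau$-steps are $\ltsrule{App}$ and $\ltsrule{Tau}$-synchronisations on dual session endpoints, which by \defref{def:dettrans} are precisely the $\beta$- and session transitions, with Part~2 then following from \propref{lem:tau_inert}. Your case analysis on the untyped derivation is the right (and essentially only) way to make this precise; the appeal to subject reduction is not even needed for the single-step statement, since it suffices that $P$ itself contains no shared names.
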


\begin{lemma}[Up-to Deterministic Transition]\myrm
	\label{lem:up_to_deterministic_transition}
	Let $\horel{\Gamma}{\Delta_1}{P_1}{\ \Re\ }{\Delta_2}{Q_1}$ such
	that if whenever:
	\begin{enumerate}
		\item	$\forall \news{\tilde{m_1}} \bactout{n}{V_1}$ such that
			$
				\horel{\Gamma}{\Delta_1}{P_1}{\hby{\news{\tilde{m_1}} \bactout{n}{V_1}}}{\Delta_3}{P_3}
			$
			implies that $\exists Q_2, V_2$ such that
			\[
				\horel{\Gamma}{\Delta_2}{Q_1}{\Hby{\news{\tilde{m_2}} \bactout{n}{V_2}}}{\Delta_2'}{Q_2}
			\]
			and
			\[
				\horel{\Gamma}{\Delta_3}{P_3}{\Hby{\dtau}}{\Delta_1'}{P_2}
			\]
			and for fresh $t$:
			\[
				\horel{\Gamma}{\Delta_1''}{\newsp{\tilde{m_1}}{P_2 \Par \htrigger{t}{V_1}}}
				{\ \Re\ }
				{\Delta_2''}{}{\newsp{\tilde{m_2}}{Q_2 \Par \htrigger{t}{V_2}}}
			\]
		\item	$\forall \ell \not= \news{\tilde{m}} \bactout{n}{V}$ such that
			$
				\horel{\Gamma}{\Delta_1}{P_1}{\hby{\ell}}{\Delta_3}{P_3}
			$
			implies that $\exists Q_2$ such that 
			\[
				\horel{\Gamma}{\Delta_1}{Q_1}{\hat{\Hby{\ell}}}{\Delta_2'}{Q_2}
			\]
			and
			\[
				\horel{\Gamma}{\Delta_3}{P_3}{\Hby{\dtau}}{\Delta_1'}{P_2}
			\]
			and
			$\horel{\Gamma}{\Delta_1'}{P_2}{\ \Re\ }{\Delta_2'}{Q_2}$

		\item	The symmetric cases of 1 and 2.
	\end{enumerate}
	Then $\Re\ \subseteq\ \wb$.
\end{lemma}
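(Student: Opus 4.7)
The plan is to deduce the result from the fact that a suitable closure of $\Re$ under weak bisimilarity is itself a higher-order bisimulation. Concretely, I define
\[
\Re^\star \;=\; \wb \cdot \Re \cdot \wb \;=\; \{(P,Q) \mid \exists P^\circ, Q^\circ.\ P \wb P^\circ,\ (P^\circ,Q^\circ) \in \Re,\ Q^\circ \wb Q\}.
\]
Reflexivity of $\wb$ gives $\Re \subseteq \Re^\star$, so it suffices to verify that $\Re^\star$ satisfies the clauses of \defref{def:bisim}; the Knaster--Tarski characterisation of $\wb$ then yields $\Re \subseteq \Re^\star \subseteq \wb$.

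For the input and $\tau$ cases, I fix $(P,Q) \in \Re^\star$ with witnesses $P^\circ, Q^\circ$ and suppose $\horel{\Gamma}{\Delta_1}{P}{\hby{\ell}}{\Delta_1'}{P'}$. Since $P \wb P^\circ$, this lifts to a weak move $P^\circ \Hby{\hat\ell} P^\circ_\bullet$ with $P' \wb P^\circ_\bullet$. I then decompose this weak move into its constituent single steps and iterate the hypothesis of the lemma on $(P^\circ, Q^\circ) \in \Re$: each single step of the $P$-side is matched by the hypothesis with a weak move on the $Q$-side, together with a trailing $\Hby{\dtau}$ that is absorbed using $\tau$-inertness (\propref{lem:tau_inert}), which provides the $\wb$-equality bridging the next state on the original path and the $\Hby{\dtau}$-closed state that actually sits in $\Re$. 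After the induction is complete I obtain $Q^\circ \Hby{\hat\ell} Q^\circ_\bullet$ and $(P^{\circ\dagger}, Q^\circ_\bullet) \in \Re$ with $P^\circ_\bullet \wb P^{\circ\dagger}$, so $P' \wb P^{\circ\dagger}$ by transitivity of $\wb$. Finally, $Q^\circ \wb Q$ lifts $Q^\circ \Hby{\hat\ell} Q^\circ_\bullet$ to a matching $Q \Hby{\hat\ell} Q'$ with $Q^\circ_\bullet \wb Q'$, hence $(P', Q') \in \wb \cdot \Re \cdot \wb = \Re^\star$.

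For the output clause the same outline applies, but with the trigger-wrapped closures $\newsp{\tilde m_1}{P_2 \Par \htrigger{t}{V_1}}$ and $\newsp{\tilde m_2}{Q_2 \Par \htrigger{t}{V_2}}$ in place of $P_2$ and $Q_2$. The $\wb$-steps produced by $P \wb P^\circ$ and by $\tau$-inertness propagate through the trigger context because $\wb$ is a congruence: by \thmref{the:coincidence}, $\wb$ coincides with $\cong$, so substituting bisimilar processes into the context $\newsp{\tilde m}{[\,\cdot\,] \Par \htrigger{t}{V}}$ preserves bisimilarity. The symmetric clauses of \defref{def:bisim} follow immediately by swapping the witnesses of $\Re^\star$ and invoking the symmetric clauses assumed for $\Re$.

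The main technical difficulty is the induction on the length of the decomposed path $P^\circ \Hby{\hat\ell} P^\circ_\bullet$: each invocation of the hypothesis only consumes one single step and lands the $P$-side at a $\Hby{\dtau}$-closure $P^{\circ *}_i$ rather than at the next state $P^\circ_i$ along the chosen path, so the remaining portion of the path must be re-routed through $P^{\circ *}_i$ via $\tau$-inertness (and the fact that $\wb$ is a bisimulation) before the hypothesis can be applied again. Managing this interleaving while tracking the accumulated $Q^\circ$-moves and the chain of $\wb$-equalities that feeds back into $\Re^\star$ is the bulk of the argument; once this bookkeeping is set up, the output case and the symmetric cases follow routinely.
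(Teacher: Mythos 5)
Your route is genuinely different from the paper's. The paper closes $\Re$ only on the left-hand component and only under deterministic transitions, taking $\Re^{\Hby{\dtau}} = \{(P_2,Q_1) \mid P_1 \ \Re\ Q_1,\ P_1 \Hby{\dtau} P_2\}$, and verifies that this set is itself a higher-order bisimulation with the help of $\tau$-inertness (\propref{lem:tau_inert}). You instead take the much larger closure $\wb\cdot\Re\cdot\wb$. The two are not interchangeable: the paper's closure is chosen so that every transition of a member of the closure can be traced back to a \emph{single} transition of the underlying $\Re$-related process modulo deterministic $\tau$-steps, which is exactly the shape the hypotheses of the lemma provide.

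This is where your argument has a genuine gap. To show that $\wb\cdot\Re\cdot\wb$ is a bisimulation you must match a \emph{weak} move $P^\circ \Hby{\hat{\ell}} P^\circ_\bullet$ from $Q^\circ$ using only the single-step hypotheses on $\Re$. Your plan --- decompose it into single steps, apply the hypothesis, and after each application re-route the remainder of the path through the $\Hby{\dtau}$-descendant using $\tau$-inertness and the fact that $\wb$ is a bisimulation --- is not obviously well-founded: each re-routing replaces one single step of the original path by a weak move of unbounded length issuing from the new state, and those steps must in turn be matched and re-routed. You announce an ``induction on the length of the decomposed path,'' but that length is not decreasing and you supply no alternative measure. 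This is not a presentational quibble: the pattern you are instantiating is weak bisimulation up to weak bisimilarity, which is unsound in general, so the entire burden of the proof is to show that the re-routing here is harmless. What actually saves the lemma is that the re-routing only ever passes through $\dtau$-moves, which are deterministic and therefore commute with the other transitions of the process (a diamond property), so that a single step of a $\Hby{\dtau}$-descendant corresponds to a single step of the original process followed by further $\dtau$-steps; bare $\tau$-inertness does not give you this. Once that commutation is in hand, the paper's smaller closure $\Re^{\Hby{\dtau}}$ already suffices, and the detour through the full $\wb$-closure (together with the congruence argument needed to push $\wb$ through the trigger contexts in the output clause) becomes unnecessary.
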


\begin{proof}
	The proof is easy by considering the closure
	\[
		\Re^{\Hby{\dtau}} = \set{ \horel{\Gamma}{\Delta_1'}{P_2}{,}{\Delta_2'}{Q_1} \setbar \horel{\Gamma}{\Delta_1}{P_1}{\ \Re\ }{\Delta_2'}{Q_1},
		\horel{\Gamma}{\Delta_1}{P_1}{\Hby{\dtau}}{\Delta_1'}{P_2} }
	\]
	We verify that $\Re^{\Hby{\dtau}}$ is a bisimulation with
	the use of \propref{lem:tau_inert}.
	\qed
\end{proof}




\section{Typed Encodings}
\label{s:expr}
\label{sec:enc}

This section defines the formal notion of \emph{encoding}, 
extending to a typed setting existing criteria for untyped processes (as in,
e.g.~\cite{Nestmann00,Palamidessi03,DBLP:conf/lics/PalamidessiSVV06,DBLP:journals/iandc/Gorla10,DBLP:conf/icalp/LanesePSS10,DBLP:journals/corr/abs-1208-2750}). 
We first define a typed calculus parameterised by a syntax, operational semantics, and typing.

\begin{definition}[Typed Calculus]\myrm
	\label{d:tcalculus}
	A \emph{typed calculus} $\tyl{L}$ is a tuple:
	\begin{center}
		\begin{tabular}{c}
			$\calc{\CAL}{\cal{T}}{\hby{\ell}}{\WB}{\proves}$
		\end{tabular}
	\end{center}
	\noi where $\CAL$ and $\cal{T}$ are sets of processes and types, 
	respectively; and $\hby{\ell}$, $\WB$, and $\proves$ 
	denote a transition system, a typed equivalence, and a typing system for $\CAL$, respectively. 
\end{definition}

\noi Our notion of encoding considers a mapping on processes, 
types, and transition labels.  

\begin{definition}[Typed Encoding]\myrm
        Let  $\tyl{L}_i=\calc{\CAL_i}{{\cal{T}}_i}{{\hby{\ell}}_i}{\WB_i}{\proves_i}$
        ($i=1,2$) be typed calculi, and let $\mathcal{A}_i$ be the
	set of labels used in relation ${\hby{\ell}}_i$.
	Given mappings $\map{\cdot}: \CAL_1 \to \CAL_2$, 
	$\mapt{\cdot}: {\cal{T}}_1 \to {\cal{T}}_2$, and 
	$\mapa{\cdot}: \mathcal{A}_1 \to \mathcal{A}_2$, 
	we write 
	$\enco{\map{\cdot}, \mapt{\cdot}, \mapa{\cdot}} : \tyl{L}_1 \to \tyl{L}_2$ to denote the \emph{typed encoding} of $\tyl{L}_1$ into $\tyl{L}_2$.
\end{definition}

\noi We will often assume that  $\mapt{\cdot}$ extends to typing
environments as expected. This way, e.g., $\mapt{\Delta \cat u:S} = \mapt{\Delta} \cat u:\mapt{S}$.

We introduce two classes of typed encodings, which 
serve different purposes.
Both  consist of syntactic and semantic criteria 
proposed for untyped processes~\cite{Palamidessi03,DBLP:journals/iandc/Gorla10,DBLP:conf/icalp/LanesePSS10},
here extended to account for (higher-order) session types.
First, for stating stronger positive encodability results, 
we define the notion of {\em precise} encodings.
Then, 
with the aim of proving strong non-encodability results, 
precise encodings are relaxed into the weaker {\em minimal} encodings. 

We first state the syntactic criteria. 
Let $\sigma$ denote a substitution of names for names 
(a renaming, in the usual sense). Given environments $\Delta$ and $\Gamma$,
we write $\sigma(\Delta)$ and $\sigma(\Gamma)$ to denote 
the effect of applying $\sigma$ on the 
domains of $\Delta$ and $\Gamma$
(clearly, $\sigma(\Gamma)$ concerns only shared names in $\Gamma$: process and recursion variables in $\Gamma$ are not affected by $\sigma$). 

\begin{definition}[Syntax Preserving Encoding]\myrm
	\label{def:sep}
	We say that 
	the typed encoding 
	$\enco{\map{\cdot}, \mapt{\cdot}, \mapa{\cdot}}: \tyl{L}_1 \to \tyl{L}_2$ is \emph{syntax preserving}
	if it is:
	
	\begin{enumerate}[1.]
		\item	\emph{Homomorphic wrt parallel},   if 
		$\mapt{\Gamma}; \emptyset; \mapt{\Delta_1 \cat \Delta_2} \proves_1 \map{P_1 \Par P_2} \hastype \Proc$
		then \\
		$\mapt{\Gamma}; \emptyset; \mapt{\Delta_1} \cat \mapt{\Delta_2} \proves_2 \map{P_1} \Par \map{P_2} \hastype \Proc$.

		\item	\emph{Compositional wrt restriction},  if 
		$\mapt{\Gamma}; \emptyset; \mapt{\Delta} \proves_1 \map{\news{n}P} \hastype \Proc$
		then \\
		$\mapt{\Gamma}; \emptyset; \mapt{\Delta} \proves_2 \news{n}\map{P} \hastype \Proc$.
		
		\item \emph{Name invariant},   if
		$\mapt{\sigma(\Gamma)}; \emptyset; \mapt{\sigma(\Delta)} \proves_1 \map{\sigma(P)} \hastype \Proc$
		then \\
		$\sigma(\mapt{\Gamma}); \emptyset; \sigma(\mapt{\Delta}) \proves_2 \sigma(\map{P}) \hastype \Proc$, 
		for any injective renaming  of names $\sigma$.
	\end{enumerate}
\end{definition}
Homomorphism wrt parallel composition (used in, e.g.,~\cite{Palamidessi03,DBLP:conf/lics/PalamidessiSVV06})
expresses that encodings should preserve the distributed topology of source processes. This criteria 
is appropriate for both encodability and non encodability results; in our setting, 
it admits an elegant formulation, also induced by rules for typed composition.
Compositionality wrt restriction is 
also naturally supported by typing and turns out to be 
useful in our encodability results (see the following section).
Our name invariance criteria follows the one given in~\cite{DBLP:journals/iandc/Gorla10,DBLP:conf/icalp/LanesePSS10}. 
Next we define semantic criteria for typed encodings.

\begin{definition}[Semantic Preserving Encoding]\myrm
\label{def:ep}
	Let
	$\tyl{L}_i=\calc{\CAL_i}{{\cal{T}}_i}{\hby{\ell}}{\WB_i}{\proves_i}$
	($i=1,2$) be typed calculi. 
	We say that
	$\enco{\map{\cdot}, \mapt{\cdot}, \mapa{\cdot}}: \tyl{L}_1 \to \tyl{L}_2$
	is a \emph{semantic preserving encoding}
	if it satisfies the properties below.
	Given a label $\ell \neq \tau$, we write 
	$\subj{\ell}$
	to denote the \emph{subject} of the action.
	
	\begin{enumerate}[1.]
		\item	\emph{Type Preservation}:
			if
			$\Gamma; \emptyset; \Delta \proves_1 P \hastype \Proc$ then 
			$\mapt{\Gamma}; \emptyset; \mapt{\Delta} \proves_2 \map{P} \hastype \Proc$,  
			for any   $P$ in $\CAL_1$.
		\item	\emph{Subject preserving}: if $\subj{\ell} = u$ then $\mathsf{sub}(\mapa{\ell}) =u$.

		\item	\emph{Operational Correspondence}:
			If $\Gamma; \emptyset; \Delta \proves_1 P \hastype \Proc$ then
			\begin{enumerate}
				\item	Completeness: 
					If  
					$\stytraargi{\Gamma}{\ell_1}{\Delta}{P}{\Delta'}{P'}{1}{1}$
					then $\exists \ell_2, Q, \Delta''$ s.t. \\
					(i)~$\wtytraargi{\mapt{\Gamma}}{\ell_2}{\mapt{\Delta}}{\map{P}}{\mapt{\Delta''}}{Q}{2}{2}$,
					(ii)~$\ell_2 = \mapa{\ell_1}$, 
					and \\
					(iii)~${\mapt{\Gamma}};{\mapt{\Delta''}}\proves_2 {Q}{\WB_2}{\mapt{\Delta'}}\proves_2 {\map{P'}}$.
				
				\item	Soundness:   
					If  $\wtytraargi{\mapt{\Gamma}}{\ell_2}{\mapt{\Delta}}{\map{P}}{\mapt{\Delta''}}{Q}{2}{2}$
					then $\exists \ell_1, P', \Delta'$ s.t.  \\
					(i)~$\stytraargi{\Gamma}{\ell_1}{\Delta}{P}{\Delta'}{P'}{1}{1}$,
					(ii)~$\ell_2 = \mapa{\ell_1}$, and
					(iii)~
					${\mapt{\Gamma}};{\mapt{\Delta'}}\proves_2 {\map{P'}}{\WB_2}
					{\mapt{\Delta''}}\proves_2 {Q}$.
		\end{enumerate}
		
		\item	\emph{Full Abstraction:} \\
			\wbbarg{\Gamma}{}{\Delta_1}{P}{\Delta_2}{Q}{1}
			if and only if
			\wbbarg{\mapt{\Gamma}}{}{\mapt{\Delta_1}}{\map{P}}{\mapt{\Delta_2}}{\map{Q}}{2}.
	\end{enumerate}
\end{definition}

\noi Type preservation is a distinguishing criteria in our notion of encoding: 
it enables us to focus on encodings which retain the communication structures denoted by (session) types.
The other semantic
criteria build upon analogous definitions in the untyped setting, as we explain now. 
Operational correspondence, standardly divided into completeness and soundness criteria, is based
in the formulation given in~\cite{DBLP:journals/iandc/Gorla10,DBLP:conf/icalp/LanesePSS10}. 
Soundness ensures that the source process is mimicked 
by its associated encoding; completeness concerns the opposite direction.
Rather than reductions, completeness and soundness rely on 
the typed LTS of \defref{def:restricted_typed_transition}; labels are considered up to  mapping
$\mapa{\cdot}$, which offers flexibility when comparing different subcalculi of \HOp.
We require that $\mapa{\cdot}$ preserves communication subjects, in accordance with the
criteria in~\cite{DBLP:conf/icalp/LanesePSS10}.
It is worth stressing that 
the operational correspondence statements given in
the next section for our  encodings 
are tailored to the specifics of each encoding, and so they
are actually stronger than the criteria given above.
Finally, following~\cite{SangiorgiD:expmpa,DBLP:conf/lics/PalamidessiSVV06,Yoshida96},
we consider full abstraction as an encodability criteria: this results into 
stronger encodability results. 
From the criteria in \defref{def:sep} and \defref{def:ep}
we have the following derived criteria: 

\begin{proposition}[Derived Criteria]\myrm
	\label{p:barbpres}
	Let
	$\enco{\map{\cdot}, \mapt{\cdot}, \mapa{\cdot}}: \tyl{L}_1 \to \tyl{L}_2$
	be a typed encoding.
	Suppose the encoding is both
	operational complete (cf.~\defref{def:ep}-3(a)) 
	and subject preserving (cf.~\defref{def:ep}-2).
	Then, it is also \emph{barb preserving}, i.e., 
	$\Gamma; \Delta \proves_1 P \barb{n}$
	implies
	$\mapt{\Gamma}; \mapt{\Delta} \proves_2 \map{P} \Barb{n}$.
\end{proposition}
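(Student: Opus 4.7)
The plan is to unfold the definition of the (weak) typed barb on the target side from the ability of the source process to perform an output transition on $n$, using operational completeness to transfer this transition to the encoding and subject preservation to guarantee that the output subject is preserved.

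First I will unpack the hypothesis. From $\Gamma; \Delta \proves_1 P \barb{n}$ we have $P \scong \newsp{\tilde{m}}{\bout{n}{V} P_2 \Par P_3}$ with $n \notin \tilde{m}$ and $\dual{n} \notin \Delta$. By inspection of the untyped LTS (rules $\ltsrule{Out}$, $\ltsrule{Scope}$, $\ltsrule{LPar}$, $\ltsrule{Res}$, $\ltsrule{Alpha}$), this structural shape directly yields an untyped output transition $P \by{\news{\tilde{m}'} \bactout{n}{V}} P''$ for some $\tilde{m}' \subseteq \tilde{m}$. Moreover, since $\dual{n} \notin \Delta$, the side conditions of $\eltsrule{SSnd}$ / $\eltsrule{ShSnd}$ can be satisfied, producing a \emph{typed} transition $\stytraargi{\Gamma}{\ell_1}{\Delta}{P}{\Delta'}{P''}{1}{1}$ with $\ell_1 = \news{\tilde{m}'} \bactout{n}{V}$ and therefore $\subj{\ell_1} = n$.

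Next I will apply operational completeness (\defref{def:ep}-3(a)) to this transition, obtaining a label $\ell_2$, a process $Q$, and an environment $\Delta''$ such that
\[
  \wtytraargi{\mapt{\Gamma}}{\ell_2}{\mapt{\Delta}}{\map{P}}{\mapt{\Delta''}}{Q}{2}{2}, \qquad \ell_2 = \mapa{\ell_1}.
\]
By subject preservation (\defref{def:ep}-2), $\subj{\ell_2} = \subj{\mapa{\ell_1}} = n$. Unfolding the weak arrow $\Hby{\ell_2}$ as $\Hby{}\hby{\ell_2}\Hby{}$, there is an intermediate target process $R$ with $\map{P} \red^{\ast} R$ and a typed action $\stytraargi{\mapt{\Gamma}}{\ell_2}{\mapt{\Delta}}{R}{\mapt{\Delta^{\star}}}{R'}{2}{2}$ whose subject is $n$. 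Again by inspecting the LTS rules for output, having such a visible output action with subject $n$ forces $R \scong \newsp{\tilde{k}}{\bout{n}{V'} R_1 \Par R_2}$ with $n \notin \tilde{k}$, i.e.\ $R \barb{n}$. The side condition $\dual{n} \notin \mapt{\Delta^{\star}}$ (which is required for the typed barb at the target) is automatic from the premise of $\eltsrule{SSnd}$ / $\eltsrule{ShSnd}$ used to derive the output action. Combining these, $\map{P} \red^{\ast} R \barb{n}$, and together with the type preservation of the encoding this is precisely $\mapt{\Gamma}; \mapt{\Delta} \proves_2 \map{P} \Barb{n}$, as required.

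No step here poses a real obstacle; the only mild subtlety is bridging between the ability to perform an output action with subject $n$ and the structural-congruence formulation of a barb, which is handled uniformly by a syntactic inspection of the LTS rules that can introduce a visible output label.
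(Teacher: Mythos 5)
Your proof is correct and follows exactly the route the paper intends: its own proof is just the one-line remark that the result ``follows from the definition of barbs, operational completeness, and subject preservation,'' and your argument is a faithful, more detailed unfolding of precisely that chain (barb $\Rightarrow$ typed output transition with subject $n$ $\Rightarrow$ completeness $\Rightarrow$ subject preservation $\Rightarrow$ weak barb of $\map{P}$ at $n$). No changes needed.
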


\begin{proof}
	The proof follows from the definition of barbs,
	operational completeness, and subject preservation.
	\qed
\end{proof}

We may now define \emph{precise} and \emph{minimal} typed criteria: 

\begin{definition}[Typed Encodings: Precise and Minimal]\myrm
	\label{def:goodenc}
	We say that  the typed encoding 
	$\enco{\map{\cdot}, \mapt{\cdot}, \mapa{\cdot}}: \tyl{L}_1 \to \tyl{L}_2$ is 
	\begin{enumerate}[(i)]
		\item	\emph{precise}, if it is both syntax and semantic preserving
			(cf.~\defref{def:sep} and \defref{def:ep}).

		\item	\emph{minimal}, if it is syntax preserving 
			(cf.~\defref{def:sep}),
			and operational complete (cf.~\defref{def:ep}-3(a)).
	\end{enumerate}
\end{definition}

Precise encodings offer more detailed criteria and used for positive 
encodability results (\secref{sec:positive}).
In contrast, minimal encodings contains only 
some of the criteria of precise encodings:    
this reduced notion will be used 
for the negative result in \secref{s:negative}. 

Further we have:
\begin{proposition}[Composability of Precise Encodings]\myrm
	\label{prop:enc_composability}
	Let 
	$\enco{\pmap{\cdot}{1}, \tmap{\cdot}{1}, \mapa{\cdot}^{1}}: \tyl{L}_1 \to \tyl{L}_2$
	and 
	$\enco{\pmap{\cdot}{2}, \tmap{\cdot}{2}, \mapa{\cdot}^{2}}: \tyl{L}_2 \to \tyl{L}_3$
	be two precise typed encodings.
	Then their composition, denoted 
	$\enco{\pmap{\cdot}{2} \circ \pmap{\cdot}{1}, \tmap{\cdot}{2} \circ \tmap{\cdot}{1}, \mapa{\cdot}^{2}\circ \mapa{\cdot}^{1}}: \tyl{L}_1 \to \tyl{L}_3$
	is also a precise encoding.
\end{proposition}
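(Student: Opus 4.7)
The plan is to verify, criterion by criterion, that the composition
$\enco{\pmap{\cdot}{2} \circ \pmap{\cdot}{1}, \tmap{\cdot}{2} \circ \tmap{\cdot}{1}, \mapa{\cdot}^{2}\circ \mapa{\cdot}^{1}}$
inherits each of the seven properties that constitute preciseness (three syntactic from \defref{def:sep}, four semantic from \defref{def:ep}) from its two component encodings. The overall shape of the argument is diagrammatic: for each property, we apply the corresponding property for $\enco{\pmap{\cdot}{1}, \tmap{\cdot}{1}, \mapa{\cdot}^{1}}$ to pass from $\tyl{L}_1$ to $\tyl{L}_2$, and then apply the same property for $\enco{\pmap{\cdot}{2}, \tmap{\cdot}{2}, \mapa{\cdot}^{2}}$ to pass from $\tyl{L}_2$ to $\tyl{L}_3$.

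First I would dispatch the syntactic criteria. Homomorphism with respect to parallel composition and compositionality with respect to restriction follow by two successive applications: given $\pmap{P_1 \Par P_2}{1} = \pmap{P_1}{1} \Par \pmap{P_2}{1}$ (up to well-typedness under $\tmap{\cdot}{1}$), applying $\pmap{\cdot}{2}$ gives $\pmap{\pmap{P_1}{1}}{2} \Par \pmap{\pmap{P_2}{1}}{2}$; the induced type environment maps compose because $\tmap{\Delta_1 \cat \Delta_2}{1} = \tmap{\Delta_1}{1} \cat \tmap{\Delta_2}{1}$, and applying $\tmap{\cdot}{2}$ preserves this splitting. The same pattern handles restriction. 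For name invariance, the key observation is that if $\sigma$ is an injective renaming, then $\pmap{\sigma(P)}{1}$ is name-invariantly related to $\sigma(\pmap{P}{1})$, and $\sigma$ remains injective when passed through $\pmap{\cdot}{2}$.

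For the semantic criteria, type preservation and subject preservation are immediate by two-step application. The main technical point is operational correspondence, which I would handle as follows. For completeness, given a source transition $\stytraargi{\Gamma}{\ell_1}{\Delta}{P}{\Delta'}{P'}{1}{1}$, the first encoding yields a weak transition in $\tyl{L}_2$ with label $\mapa{\ell_1}^{1}$ reaching some $Q_1$ such that $\pmap{P'}{1} \WB_2 Q_1$. I then need to transport this weak transition through $\pmap{\cdot}{2}$. For each individual transition in the weak sequence, completeness of the second encoding gives a matching weak sequence in $\tyl{L}_3$, and concatenating these yields the required weak transition with label $\mapa{\mapa{\ell_1}^{1}}^{2} = (\mapa{\cdot}^{2}\circ \mapa{\cdot}^{1})(\ell_1)$. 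The bisimilarity $\pmap{P'}{1} \WB_2 Q_1$ is transported by full abstraction of the second encoding to give $\pmap{\pmap{P'}{1}}{2} \WB_3 \pmap{Q_1}{2}$, which combined with the $\WB_3$ produced at each step by the second operational correspondence yields the desired $\WB_3$-relation by transitivity. Soundness is handled symmetrically, starting from a transition in $\tyl{L}_3$, pulling it back via soundness of the second encoding, and then via soundness of the first. Finally, full abstraction follows directly by chaining the two full-abstraction equivalences through $\tyl{L}_2$.

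The main obstacle I expect is the operational correspondence step, specifically the bookkeeping of typing environments across the weak transition sequences and the use of full abstraction to transport the intermediate $\WB_2$-equivalence through the second encoding. One must be careful that the weak transitions produced by completeness of the first encoding land in a well-typed state under $\tmap{\cdot}{1}$ before $\pmap{\cdot}{2}$ is applied, which is exactly what type preservation for the first encoding guarantees. Once that invariant is maintained, the argument reduces to diagram chasing, and the result follows.
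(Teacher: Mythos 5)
Your proposal is correct and follows essentially the same route as the paper, which proves this proposition by a direct criterion-by-criterion chaining of the two encodings and explicitly flags the left-to-right direction of full abstraction of the second encoding as the crucial ingredient --- precisely the step you identify when transporting the intermediate $\WB_2$-equivalence from operational correspondence of the first encoding into $\WB_3$. Your write-up is simply a more detailed elaboration of the paper's one-line argument.
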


\begin{proof}
	Straightforward application of the definition of each property, with the left-to-right direction of
	full abstraction being crucial.\qed
\end{proof}

In \secref{sec:positive} 
we consider the following concrete instances of typed calculi
(cf.~\defref{d:tcalculus}):

\begin{definition}[Concrete Typed Calculi]\myrm
	We define the following  typed calculi:
	\begin{eqnarray*}
	\tyl{L}_{\HOp}&=&\calc{\HOp}{{\cal{T}}_1}{\hby{\ell}}{\hwb}{\proves} \\
	\tyl{L}_{\HO}&=&\calc{\HO}{{\cal{T}}_2}{\hby{\ell}}{\hwb}{\proves} \\
	\tyl{L}_{\sessp}&=&\calc{\sessp}{{\cal{T}}_3}{\hby{\ell}}{\fwb}{\proves}
	\end{eqnarray*}
	where: 
	${\cal{T}}_1$, ${\cal{T}}_2$, 
	and ${\cal{T}}_3$
	are sets of types of $\HOp$, $\HO$, and $\sessp$, respectively;
	the typing $\proves$ is defined in 
	\figref{fig:typerulesmy};  
	LTSs are as in \defref{def:restricted_typed_transition};
	$\hwb$ is as in \defref{def:bisim}; 
	$\fwb$ is as in \defref{def:cbisim}.
\end{definition}

\section{Positive Expressiveness Results}

\label{sec:positive}
In this section we present a study of the expressiveness
of $\HOp$ and its subcalculi. 
We present two encodability results:
\begin{enumerate}[1.]
	\item	The higher-order name passing communications with recursions (\HOp) into
		the higher-order communication without name-passing nor 
		recursions (\HO) (\secref{subsec:HOp_to_HO}).

	\item	\HOp into the first-order name-passing communication
		with recursions (\sessp) (\secref{subsec:HOp_to_p}). 
\end{enumerate}
In each case we show that the encoding is precise.

We often omit $H$ and $C$ from $\wb$ and $\fwb$ for simplicity of the notations.


\begin{remark}[Polyadic \HOp]
	We can assume a semantic preserving encoding from the polyadic
	\HOp to the monadic \HOp. Polyadic \HOp assumes a polyadic
	extension of the \HOp semantics that defines values as
	$V \bnfis \tilde{u} \bnfbar \abs{\tilde{x}}{P}$
	and input prefix as $\binp{n}{\tilde{x}} P$.
	See \secref{subsec:pol_HOp} for the full definition of
	polyadic \HOp.
\end{remark}

\subsection{Encoding \HOp into \HO}
\label{subsec:HOp_to_HO}

We show that the subcalculus $\HO$ is expressive enough to
represent the the full \HOp calculus.

The main challenge is to encode (1) name passing 
and (2) recursions.
Name passing involves {\em packing} a name 
value as an abstraction send it and it and then
substitute on the receiving using a name appication.
The encoding on the recursion semantics are more complex;
A process is encoded as an abstraction with no free names
(i.e~a shared abstraction). We then use higher-order
passing to pass the process and duplicate the process.
One copy of the process is used to reconstitute the
original process and the other is used to enable another
duplicator procedure.
We handle the transformation of a process into a linear abstraction
with the definition of an
auxiliary mapping
from processes with free names to processes without free names
(but with free variables) (\defref{def:auxmap}). 
We first require an auxiliary definition:
\begin{definition}\myrm 
	Let $\vmap{\cdot}: 2^{\mathcal{N}} \longrightarrow \mathcal{V}^\omega$
	be a map of sequences lexicographically ordered 
	names to sequences of variables, defined
	inductively as:
	\[
		\vmap{\epsilon} = \epsilon \qquad \qquad \qquad \vmap{n \cat \tilde{m}} = x_n \cat \vmap{\tilde{m}}
	\]
\end{definition}

Given a process $P$, we write $\ofn{P}$ to denote the
\emph{sequence} of free names of $P$, lexicographically ordered.

The following auxiliary mapping transforms processes
with free names into abstractions and it is
used in \defref{def:enc:HOp_to_HO}.
\begin{definition}\myrm
	\label{def:auxmap}
	Let $\sigma$ be a set of session names.
	Define $\auxmap{\cdot}{\sigma}: \HOp \to \HOp$  as in \figref{fig:auxmap}.
\end{definition}

\begin{figure}[t]
\[
	\begin{array}{rcl}
		\auxmap{\news{n} P}{\sigma} &\bnfis& \news{n} \auxmap{P}{{\sigma \cat n}}
		\vspace{1mm} \\

		\auxmap{\bout{n}{\abs{x}{Q}} P}{\sigma} &\bnfis&
		\left\{
		\begin{array}{rl}
			\bout{x_n}{\abs{x}{\auxmap{Q}{\sigma}}} \auxmap{P}{\sigma} & n \notin \sigma\\
			\bout{n}{\abs{x}{\auxmap{Q}{\sigma}}} \auxmap{P}{\sigma} & n \in \sigma
		\end{array}
		\right.
		\vspace{1mm}	\\ 

		\auxmap{\binp{n}{X} P}{\sigma} &\bnfis&
		\left\{
		\begin{array}{rl}
			\binp{x_n}{X} \auxmap{P}{\sigma} & n \notin \sigma\\
			\binp{n}{X} \auxmap{P}{\sigma} & n \in \sigma
		\end{array}
		\right.

		\vspace{1mm}	\\ 
		\auxmap{\bsel{n}{l} P}{\sigma} &\bnfis&
		\left\{
		\begin{array}{rl}
			\bsel{x_n}{l} \auxmap{P}{\sigma} & n \notin \sigma\\
			\bsel{n}{l} \auxmap{P}{\sigma} & n \in \sigma
		\end{array}
		\right.
		\vspace{1mm} \\
		\auxmap{\bbra{n}{l_i:P_i}_{i \in I}}{\sigma} &\bnfis&

		\left\{
		\begin{array}{rl}
			\bbra{x_n}{l_i:\auxmap{P_i}{\sigma}}_{i \in I}  & n \notin \sigma\\
			\bbra{n}{l_i:\auxmap{P_i}{\sigma}}_{i \in I}  & n \in \sigma
		\end{array}
		\right.
		\vspace{1mm} \\
		\auxmap{\appl{x}{n}}{\sigma} &\bnfis&
		\left\{
		\begin{array}{rl}
			\appl{x}{x_n} & n \notin \sigma\\
			\appl{x}{n} & n \in \sigma\\
		\end{array}
		\right.
		\vspace{1mm} \\

		\auxmap{\appl{(\abs{x} P)}{n}}{\sigma} &\bnfis&
		\left\{
		\begin{array}{rl}
			\appl{(\abs{x} \auxmap{P}{\sigma})}{x_n} & n \notin \sigma\\
			\appl{(\abs{x} \auxmap{P}{\sigma})}{n} & n \in \sigma\\
		\end{array}
		\right.
		\vspace{1mm} \\

		\auxmap{\inact}{\sigma} &\bnfis& \inact
		\vspace{1mm} \\

		\auxmap{P \Par Q}{\sigma} &\bnfis& \auxmap{P}{\sigma} \Par \auxmap{Q}{\sigma}
	\end{array}
\]
\caption{\label{fig:auxmap} The auxiliary map (cf.~\defref{def:auxmap}) 
used in the encoding of \HOp into \HO~(\defref{def:enc:HOp_to_HO}).}
\end{figure}

Given a process $P$ with $\fn{P} = m_1, \cdots, m_n$,
we are interested in its associated (polyadic) abstraction,
which is defined as $\abs{x_1, \cdots, x_n}{\auxmap{P}{\es} }$,
where $\vmap{m_j} = x_j$, for all $j \in \{1, \ldots, n\}$.
This transformation from processes into abstractions can be reverted by
using abstraction and application with an appropriate sequence of session names:
\begin{proposition}\myrm
	Let $P$ be a \HOp process with $\tilde{n} = \ofn{P}$.
	Also, suppose $\tilde{x} = \vmap{\tilde{n}}$.
	Then $P \scong \appl{x}{\tilde{n}}\subst{\abs{\tilde{x}}\auxmap{P}{\emptyset}}{x}$.
\end{proposition}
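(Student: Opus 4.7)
The plan is to reduce the claim to a purely equational statement. Observe that, under the operational semantics of \HOp, we have
\[
\appl{x}{\tilde{n}}\subst{\abs{\tilde{x}}\auxmap{P}{\emptyset}}{x} \;=\; \appl{(\abs{\tilde{x}}\auxmap{P}{\emptyset})}{\tilde{n}},
\]
which by (polyadic) application contracts to $\auxmap{P}{\emptyset}\subst{\tilde{n}}{\tilde{x}}$. It therefore suffices to prove the key identity
\[
\auxmap{P}{\emptyset}\subst{\tilde{n}}{\tilde{x}} \;\scong\; P,
\]
where $\tilde{n}=\ofn{P}$ and $\tilde{x}=\vmap{\tilde{n}}$. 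Structural congruence enters because the substitution may only reconstruct $P$ up to $\alpha$-renaming of bound names.

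I would proceed by structural induction on $P$, but with a \emph{strengthened} inductive hypothesis in order to handle the $\sigma$-tracking carried by the auxiliary map. Concretely, I would prove:
\begin{quote}
For every process $P$ and every set of session names $\sigma$, if
$\tilde{m} = \ofn{P}\setminus\sigma$ and $\tilde{y} = \vmap{\tilde{m}}$,
then $\auxmap{P}{\sigma}\subst{\tilde{m}}{\tilde{y}} \scong P$.
\end{quote}
The original statement is then the instance $\sigma = \emptyset$. The induction itself is routine: in each clause of \figref{fig:auxmap}, the map either leaves names in $\sigma$ untouched (so no substitution is needed on them) or replaces a free name $n\notin\sigma$ by the variable $x_n=\vmap{n}$ (which is then restored by the outer substitution $\subst{\tilde{m}}{\tilde{y}}$). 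The only non-trivial book-keeping lies in the $\news{n}\cdot$ case, where $\sigma$ is enlarged to $\sigma\cat n$; here I would invoke the IH at $\sigma\cat n$ and rely on Barendregt's convention (together with $\alpha$-conversion, which is part of $\scong$) to ensure $n$ is distinct from all of $\tilde{m}$, so that the substitution commutes with $\news{n}$.

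The base cases are immediate. For $\inact$, both sides are $\inact$. For $\appl{x}{u}$ and $\appl{(\abs{y}P')}{u}$, the two subcases of \figref{fig:auxmap} ($u\in\sigma$ vs.\ $u\notin\sigma$) produce either $u$ unchanged or $x_u$, which is then replaced by $u$ via the outer substitution; the IH handles the body~$P'$. Inductive cases for prefixes ($\bout{n}{\abs{x}Q}P'$, $\binp{n}{X}P'$, $\bsel{n}{l}P'$, $\bbra{n}{l_i:P_i}_{i\in I}$) follow the same pattern: apply the definition, split on whether the subject $n$ is in $\sigma$ or not, and invoke the IH on each continuation (noting that in the output case $\auxmap{Q}{\sigma}$ is also handled by IH at the same $\sigma$). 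The parallel case follows by applying the IH componentwise and using that the lexicographic ordering of $\ofn{P\Par Q}$ induces a compatible distribution of substitutions over $\ofn{P}$ and $\ofn{Q}$.

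The main obstacle I anticipate is precisely this last point: making sure the substitution $\subst{\tilde{m}}{\tilde{y}}$, which is defined over the ordered free names of the \emph{whole} process, restricts correctly to each subterm in the inductive step. To avoid this bookkeeping from getting in the way, I would formulate substitution as a function on the underlying (unordered) map $\{n_i\mapsto x_{n_i}\}_{n_i\in\tilde{n}}$, so that restriction to subterms is automatic; the lexicographic ordering is then only used at the end to recover the polyadic form $\tilde{n},\tilde{x}$ that appears in the statement. With this reformulation the induction goes through cleanly, and the final congruence $\scong$ is needed only to absorb potential $\alpha$-conversions introduced under binders.
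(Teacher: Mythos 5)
Your proposal is correct and follows essentially the same route as the paper's own proof: a structural induction over the clauses of $\auxmap{\cdot}{\sigma}$, contracting the application and then checking that the outer substitution $\subst{\tilde{n}}{\tilde{x}}$ undoes the renaming performed by the map, using the ordering of $\ofn{P}$ and $\vmap{\cdot}$ to match names with variables. Your strengthening of the inductive hypothesis to an arbitrary $\sigma$ is a sensible refinement (the paper states the induction only at $\sigma=\es$, even though the $\news{n}\cdot$ clause enlarges $\sigma$), as is your reformulation of the substitution as an unordered map so that it restricts cleanly to subterms in the parallel case.
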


\begin{proof}
	\noi The proof is an easy induction on the map $\auxmap{P}{\es}$.
	We show a case since other cases are similar.

	\noi - Case: $\auxmap{\bout{n}{m} P}{\es} = \bout{x_n}{x_m} \auxmap{P}{\es}$

	\noi We rewrite substitution as:
	$\appl{x}{\tilde{n}} \subst{\abs{\tilde{x}}{\bout{x_n}{y_m} \auxmap{P}{\es}}}{x} \scong (\bout{x_n}{y_m} P) \subst{\tilde{x}}{\tilde{n}}$

	\noi If consider that $x_n, y_m \in \vmap{\tilde{n}}$ then from the definition of $\vmap{\cdot}$ we
	get that $n, m \in \tilde{n}$. Furthermore by the fact that $\tilde{n}$ and $\vmap{\tilde{n}}$ are
	ordered, substitution becomes:
	$\bout{n}{m} \auxmap{P}{\es} \subst{\tilde{x}}{\tilde{n}}$.

	\noi The rest of the cases are similar.
	\qed
\end{proof}

We are now ready to define the encoding of \HOp into strict process-passing.
Note that we assume polyadicity in abstraction and application.
Given a session environment $\Delta = \{n_1:S_1, \ldots, n_m:S_m\}$, 
in the following definition we write
$\tilde{S}_{\Delta}$ to stand for $S_1, \ldots, S_m$.
\begin{definition}[Encoding \HOp into \HO]\myrm
	\label{def:enc:HOp_to_HO}
	Let $f$ be a function from recursion variables to sequences of name variables.
	Define the typed encoding $\enco{\pmapp{\cdot}{1}{f}, \tmap{\cdot}{1}, \mapa{\cdot}^{1}}: \tyl{L}_{\HOp} \to \tyl{L}_{\HO}$,
	where mappings $\map{\cdot}^{1}$, $\mapt{\cdot}^{1}$, $\mapa{\cdot}^{1}$
	are as in \figref{fig:enc:HOp_to_HO}.
	We assume that the mapping $\tmap{\cdot}{1}$ on types is extended to 
	session environments $\Delta$
	and
	shared environments $\Gamma$ 
	as follows:
	\[
	\begin{array}{rcll}
	    \mapt{\Delta \cat s: S}^{1} & =  & \mapt{\Delta}^{1} \cat s:\mapt{S}^{1} & \\
		\tmap{\Gamma \cat u: \chtype{S}}{1} & =  & \tmap{\Gamma}{1} \cat u:\chtype{\tmap{S}{1}} & \\
		\tmap{\Gamma \cat u: \chtype{L}}{1} & = &  \tmap{\Gamma}{1} \cat u:\chtype{\tmap{L}{1}} & \\
		\tmap{\Gamma \cat \varp{X}:\Delta}{1} & = & \tmap{\Gamma}{1} \cat x:\shot{(\tilde{S}_{\Delta}\,,\,S^*)} & 
		\quad\text{(where $ 
		S^* = \trec{t}{\btinp{\shot{(\tilde{S}_{\Delta}\,,\,\vart{t})}} \tinact}$)}
	\end{array}
	\]
\end{definition}

\begin{figure}[h!]
\[
	\begin{array}{rclcrcl}
		\multicolumn{7}{l}{\textrm{\bf Terms}}
		\\
		\pmapp{\bout{u}{v} P}{1}{f}	&\defeq&	\bout{u}{ \abs{z}{\,\binp{z}{x} (\appl{x}{v}}) } \pmapp{P}{1}{f}
		& &
		\pmapp{\binp{u}{k} Q}{1}{f}	&\defeq&	\binp{u}{x} \newsp{s}{\appl{x}{s} \Par \bout{\dual{s}}{\abs{x}{\pmapp{Q}{1}{f}}} \inact}
		\\
		\pmapp{\bout{u}{\abs{x}{Q}} P}{1}{f} &\defeq& \bout{u}{\abs{x}{\pmapp{Q}{1}{f}}} \pmapp{P}{1}{f}
		& &
		\pmapp{\binp{u}{\underline{x}} P}{1}{f}	&\defeq&	\binp{u}{\underline{x}} \pmapp{P}{1}{f}
		\\
		\pmapp{\recp{X}{P}}{1}{f} &\defeq&
		\multicolumn{5}{l}{
			\newsp{s}{\binp{s}{x} \pmapp{P}{1}{{f,\{\varp{X}\to \tilde{n}\}}} \Par
			\bout{\dual{s}}{\abs{(\vmap{\tilde{n}}, y)} \,{\binp{y}{z_X} \auxmap{\pmapp{P}{1}{{f,\{\varp{X}\to \tilde{n}\}}}}{\es}}} \inact}
			\quad \tilde{n} = \ofn{P}
		}
		\\
		\pmapp{\varp{X}}{1}{f} &\defeq&
		\multicolumn{5}{l}{
			\newsp{s}{\appl{z_X}{(\tilde{n}, s)} \Par \bbout{\dual{s}}{ \abs{(\vmap{\tilde{n}},y)}\,\,{\appl{z_X}{(\vmap{\tilde{n}}, y)}}} \inact}
			\qquad \qquad \qquad \qquad \quad \tilde{n} = f(\varp{X})
		}
		\\
		\pmapp{\bsel{s}{l} P}{1}{f}	&\defeq&	\bsel{s}{l} \pmapp{P}{1}{f}
		& & 
		\pmapp{\bbra{s}{l_i: P_i}_{i \in I}}{1}{f} &\defeq& \bbra{s}{l_i: \pmapp{P_i}{1}{f}}_{i \in I}
		\\
		\pmapp{\appl{x}{u}}{1}{f}	&\defeq&	\appl{x}{u}
		& &
		\pmapp{(\appl{\abs{x}{P})}{u}}{1}{f}	&\defeq&	\appl{(\abs{x}{\pmapp{P}{1}{f}})}{u}
		\\
		\pmapp{P \Par Q}{1}{f}		&\defeq&	\pmapp{P}{1}{f} \Par \pmapp{Q}{1}{f}
		& &
		\pmapp{\news{n} P}{1}{f}	&\defeq&	\news{n} \pmapp{P}{1}{f}
		\\
		\pmapp{\inact}{1}{f}		&\defeq&	\inact
		\\
		\multicolumn{7}{l}{\textrm{\bf Types}}
		\\
		\tmap{C}{1}_\mathsf{v}		&\defeq&
		\multicolumn{5}{l}{
			\left\{
			\begin{array}{rcl}
				\lhot{(\btinp{\lhot{\tmap{C}{1}}} \tinact)} && \textrm{if } C = S\\
				\lhot{(\btinp{\shot{\tmap{C}{1}}} \tinact)} && \textrm{otherwise}\\
			\end{array}
			\right.
		}
		\\
		\tmap{\lhot{C}}{1}_\mathsf{v}	&\defeq& \lhot{\tmap{C}{1}}
		& & 
		\tmap{\shot{C}}{1}_\mathsf{v}	&\defeq& \shot{\tmap{C}{1}}
		\\
		\tmap{\chtype{S}}{1}		&\defeq& \chtype{\tmap{S}{1}}
		& &
		\tmap{\chtype{L}}{1}		&\defeq& \chtype{\tmap{L}{1}_\mathsf{v}}
		\\
		\tmap{\btout{U} S}{1}		&\defeq& \btout{\tmap{U}{\mathsf{v}}} \tmap{S}{1}
		& &
		\tmap{\btinp{U} S}{1}		&\defeq& \btinp{\tmap{U}{\mathsf{v}}} \tmap{S}{1}
		\\
		\tmap{\btsel{l_i: S_i}_{i \in I}}{1} &\defeq& \btsel{l_i: \tmap{S_i}{1}}_{i \in I}
		& &
		\tmap{\btbra{l_i: S_i}_{i \in I}}{1} &\defeq& \btbra{l_i: \tmap{S_i}{1}}_{i \in I}
		\\
		\tmap{\vart{t}}{1} &\defeq& \vart{t}
		& &
		\tmap{\trec{t}{S}}{1} &\defeq& \trec{t}{\tmap{S}{1}}
		\\
		\tmap{\tinact}{1} &\defeq& \tinact
		\\
		\multicolumn{7}{l}{\textrm{\bf Labels}}
		\\
		\mapa{\news{\tilde{m_1}}\bactout{n}{m}}^{1}	&\defeq&	\news{\tilde{m_1}}\bactout{n}{\abs{z}{\,\binp{z}{x} \appl{x}{m}} }
		& &
		\mapa{\bactinp{n}{m}}^{1}			&\defeq&	\bactinp{n}{\abs{z}{\,\binp{z}{x} \appl{x}{m}} }
		\\
		\mapa{\news{\tilde{m}}\bactout{n}{\abs{x}{P}}}^{1} &\defeq& \news{\tilde{m}}\bactout{n}{\abs{x}{\pmapp{P}{1}{\es}}}
		& &
		\mapa{\bactinp{n}{\abs{x}{P}}}^{1} &\defeq& \bactinp{n}{\abs{x}{\pmapp{P}{1}{\es}}}
		\\
		\mapa{\bactsel{n}{l} }^{1} &\defeq& \bactsel{n}{l} 
		& &
		\mapa{\bactbra{n}{l} }^{1} &\defeq& \bactbra{n}{l} 
		\\
		\mapa{\tau}^{1} &\defeq& \tau

	\end{array}
\]
	\caption{
		\label{fig:enc:HOp_to_HO}
		Typed encoding of \HOp into \HO (cf.~Defintion~\ref{def:enc:HOp_to_HO}).
	}
\end{figure}

\noi Note that $\Delta$ in $\varp{X}:\Delta$ is mapped to a non-tail
recursive session type.
Non-tail
recursive session types have been studied in
\cite{DBLP:journals/corr/abs-1202-2086,TGC14};
to our knowledge,
this is the first application in the
context of higher-order session types.
For a simplicity of the presentation, we use the polyadic name abstraction and passing.
Polyadic semantics will be formally encoded into \HO in \secref{subsec:pol_HOp}.

\noi We explain the mapping in \figref{def:enc:HOp_to_HO}, focusing 
on {\em name passing} ($\pmapp{\bout{u}{w} P}{1}{f}$ and $\pmapp{\binp{u}{x} P}{1}{f}$), and  
{\em recursion} ($\pmapp{\recp{X}{P}}{1}{f}$ and $\pmapp{\varp{X}}{1}{f}$). 

\myparagraph{Name passing}
A name $w$ is being passed as an input guarded abstraction;
the abstraction receives a higher-order
value and continues with the application of $w$ over
the received higher-order value.
On the receiver side $\binp{u}{x} P$ 
the encoding realises a mechanism that i) receives
the input guarded abstraction, then ii) applies it on a fresh session endpoint $s$, 
and iii) uses
the dual endpoint $\dual{s}$ to send the continuation $P$ as the abstraction
$\abs{x}{P}$. 
\NY{Then} name substitution is achieved via name application.

\myparagraph{Recursion}
The encoding of a recursive process $\recp{X}{P}$  is delicate, for it 
must preserve the linearity of session endpoints. To this end, we:
\NY{i) record a mapping from recursive variable $X$ to process variables $z_X$;
ii)~encode the recursion body $P$ as a name abstraction
in which free names of $P$ are converted into name variables;
iii)~this higher-order value is embedded in an input-guarded 
``duplicator'' process; and 
iv)~make the encoding of process variable $x$ to 
simulate recursion unfolding by 
invoking the duplicator in a by-need fashion,
i.e.,~upon reception, abstraction $\auxmap{P}{\sigma}$ is duplicated
with one copy used to reconstitute the encoded recursion body $P$ through
the application of $\fn{P}$ and another copy used to re-invoke
the duplicator when needed. 
}


\begin{proposition}[Type Preservation, \HOp into \HO]\myrm
	\label{prop:typepres_HOp_to_HO}
	Let $P$ be a \HOp process.
	If $\Gamma; \emptyset; \Delta \proves P \hastype \Proc$ then 
	$\tmap{\Gamma}{1}; \emptyset; \tmap{\Delta}{1} \proves \pmapp{P}{1}{f} \hastype \Proc$. 
\end{proposition}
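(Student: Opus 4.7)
The proof proceeds by induction on the derivation of $\Gamma; \emptyset; \Delta \proves P \hastype \Proc$, with a case analysis on the last rule applied. Most cases are routine: the nil, parallel, restriction, branching/selection, application and abstraction cases follow by straightforward application of the induction hypothesis, using that $\tmap{\cdot}{1}$ commutes with environment concatenation and preserves the shape of session types outside of carried value types. The shared/session name and end rules reduce to checking that the type translation $\tmap{\cdot}{1}$ sends well-formed environment entries to well-formed ones, which is immediate from the definition and the fact that $\iso$ and $\dualof$ are preserved by $\tmap{\cdot}{1}$ (provable by a simple coinductive argument, using that $\tmap{\cdot}{1}$ is defined homomorphically on all session type constructors and leaves recursion/variables intact).

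The first non-trivial cases are name passing. For $\trule{Send}$ with $\bout{u}{v}P$ where $v$ is a name of type $C$, the encoded value $\abs{z}{\binp{z}{x}(\appl{x}{v})}$ must receive type $\tmap{C}{1}_\mathsf{v}$. I would derive this by applying $\trule{Abs}$, $\trule{Rcv}$ and $\trule{App}$ in sequence: $\appl{x}{v}$ types under $v{:}C$ mapped to $\tmap{C}{1}$ with $x{:}\lhot{\tmap{C}{1}}$ or $x{:}\shot{\tmap{C}{1}}$; the preceding input gives $z$ the type $\btinp{\lhot{\tmap{C}{1}}}\tinact$ (or the shared variant), after which $\trule{Abs}$ binds $z$ and produces exactly $\tmap{C}{1}_\mathsf{v}$. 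The linear/shared split matches because session types go to the $\lhot{}$ branch and shared channel types to $\shot{}$, as in the definition. For $\trule{Rcv}$ with $\binp{u}{k}Q$, I would type the restricted block $\newsp{s}{\appl{x}{s}\Par\bout{\dual{s}}{\abs{x}{\pmapp{Q}{1}{f}}}\inact}$ by giving $s$ type $\btinp{\lhot{\tmap{C}{1}}}\tinact$ and $\dual{s}$ its dual, so that the send of $\abs{x}{\pmapp{Q}{1}{f}}$ typechecks by IH on $Q$ and $\trule{Abs}$, and the application $\appl{x}{s}$ typechecks with $x{:}\tmap{C}{1}_\mathsf{v}$.

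The main obstacle is the recursion pair of rules, $\trule{Rec}$ and $\trule{RVar}$, because of the non-tail recursive session type $S^{\ast} = \trec{t}{\btinp{\shot{(\tilde{S}_\Delta,\vart{t})}}\tinact}$ used to implement the duplicator. For $\trule{Rec}$ applied to $\recp{X}{P}$ with $\Gamma \cat \varp{X}{:}\Delta; \es; \Delta \proves P \hastype \Proc$, the encoded process is a parallel composition under a fresh $s$: one side inputs the duplicator body along $s$ and proceeds with $\pmapp{P}{1}{f,\{X\mapsto\tilde n\}}$, the other side sends along $\dual s$ the abstraction $\abs{(\vmap{\tilde{n}},y)}\binp{y}{z_X}\auxmap{\pmapp{P}{1}{f,\{X\mapsto\tilde n\}}}{\es}$. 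I would give $s$ type $\btinp{\shot{(\tilde{S}_\Delta,S^{\ast})}}\tinact$, observe that $\dual{s}$ then carries the dual output type, and check that the sent abstraction inhabits $\shot{(\tilde{S}_\Delta,S^{\ast})}$ by: (i) using $\trule{Rcv}$ on $y{:}S^{\ast}$ to bind $z_X{:}\shot{(\tilde{S}_\Delta,S^{\ast})}$ in the continuation (this is exactly one unfolding of $S^{\ast}$, justified by $\iso$), and (ii) applying the induction hypothesis to $P$ under the extended environment $\tmap{\Gamma}{1}\cat x{:}\shot{(\tilde{S}_\Delta,S^{\ast})}$, then converting free names to variables via the auxiliary map $\auxmap{\cdot}{\es}$, which by an auxiliary lemma preserves typing when the abstracted names carry exactly the types $\tilde{S}_\Delta$. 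For $\trule{RVar}$, the encoded $\varp{X}$ mirrors this structure: $z_X$ has the shared type $\shot{(\tilde{S}_\Delta,S^{\ast})}$, so the application to $(\tilde n, s)$ with $s{:}S^{\ast}$ typechecks by $\trule{App}$, and the parallel output of another duplicator abstraction along $\dual{s}$ repeats the argument used for $\trule{Rec}$. The delicate point throughout is that $S^{\ast}$ is only preserved up to one unfolding; the coinductive definitions of $\iso$ and $\dualof$ (Definitions~\ref{def:type_equiv} and~\ref{def:type_dual}) give exactly this, so the required equality $S^{\ast}\iso \btinp{\shot{(\tilde{S}_\Delta,S^{\ast})}}\tinact$ is in hand and can be used silently at each folding/unfolding step.
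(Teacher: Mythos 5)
Your proposal is correct and follows essentially the same route as the paper's proof: induction on the typing derivation, with the encoded name-passing value typed via \trule{Abs}/\trule{Rcv}/\trule{App} exactly as in the paper's derivation of $U_1 = \lhot{\btinp{\lhot{\tmap{S_1}{1}}}\tinact}$, and the recursion cases handled by assigning $s$ the input type carrying $\shot{(\tilde{S}_\Delta, S^{\ast})}$ and unfolding $S^{\ast}$ coinductively. The only (harmless) difference is that you make explicit an auxiliary typing lemma for $\auxmap{\cdot}{\es}$ that the paper uses silently in its derivation for the \trule{Rec} case.
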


\begin{proof}
	By induction on the inference $\Gamma; \emptyset; \Delta \proves P \hastype \Proc$.
	Details in \propref{app:prop:typepres_HOp_to_HO} (Page~\pageref{app:prop:typepres_HOp_to_HO}).
	\qed
\end{proof}

The following proposition formalizes our strategy  for encoding
recursive definitions as passing of polyadic abstractions:
\begin{proposition}[Operational Correspondence for Recursive Processes]\myrm
	\label{prop:op_corr_HOprec_to_HO}
	Let $P$ and $P_1$ be \HOp processes s.t. 
	$P =\recp{X}{P'}$ and
	$P_1 = P'\subst{\recp{X}{P'}}{\varp{X}} \scong P$.

	\noi If 
	$\stytra{\Gamma}{\ell}{\Delta}{P}{\Delta'}{P'}$
	then,  there exist
	processes $R_1$, $R_2$,  $R_3$, action $\ell'$,
	and mappings $f, f_1$, such that: 
	\begin{enumerate}[(i)]
		\item 
		$\stytra{\mapt{\Gamma}^{1}}{\tau}{\mapt{\Delta}^{1}}{P}{\mapt{\Delta}^{1}}{\map{P'}^{1}\subst{R_3}{X}} = R_1$;
		\item 
		$\wtytra{\mapt{\Gamma}^{1}}{\ell'}{\mapt{\Delta}^{1}}{R_1}{\mapt{\Delta}^{1}}{R_2} $,  with $\ell' = \mapa{\ell}^{1}$;
	
		\item $R_3 = \abs{\tilde{m}}\binp{z}{x}\auxmap{\map{P'}^{1}_{f_{1}}}{\sigma}$, with $\tilde{m} = \ofn{P'},z$)
		and
		$f_1 = f, \set{\varp{X} \to \ofn{P'}}$.
	\end{enumerate}
\end{proposition}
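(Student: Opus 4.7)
The plan is to establish (i) and (iii) by direct unfolding of the encoding in \defref{def:enc:HOp_to_HO}, and then to obtain (ii) via a simulation argument that uses subject congruence $P \scong P_1$ together with a compositionality property of the encoding under substitution.

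First I would unfold $\pmapp{P}{1}{f} = \pmapp{\recp{X}{P'}}{1}{f}$ according to the recursion clause of \figref{fig:enc:HOp_to_HO}, obtaining the form $\newsp{s}{\binp{s}{z_X}\pmapp{P'}{1}{f_{1}} \Par \bout{\dual{s}}{R_3}\inact}$ where $f_1 = f, \set{\varp{X}\to\ofn{P'}}$ and $R_3$ is exactly the abstraction described in (iii). Applying rule $\ltsrule{Tau}$ (with matching input/output on $s$ and $\dual{s}$) inside the restriction, followed by the structural rule for $\news{s}$ and $\tinact$ garbage collection, produces the silent step
\[
\stytra{\mapt{\Gamma}^{1}}{\tau}{\mapt{\Delta}^{1}}{\pmapp{P}{1}{f}}{\mapt{\Delta}^{1}}{\pmapp{P'}{1}{f_1}\subst{R_3}{X}} = R_1,
\]
which gives (i); by construction $R_3$ has the form stated in~(iii).

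For (ii), observe that $P \scong P_1 = P'\subst{\recp{X}{P'}}{X}$, so by subject congruence for the typed LTS the transition $\stytra{\Gamma}{\ell}{\Delta}{P}{\Delta'}{P''}$ is equivalently a transition of $P_1$. Now I would invoke an auxiliary compositionality lemma stating that $\pmapp{P_1}{1}{f}$ and $R_1 = \pmapp{P'}{1}{f_1}\subst{R_3}{X}$ are related up to finite sequences of deterministic $\tau$-steps (cf.~\defref{def:dettrans} and \propref{lem:tau_inert}): the two terms differ only in whether $\varp{X}$ is replaced in advance by $\recp{X}{P'}$ (and then encoded) or is replaced by the packed duplicator $R_3$ (to be unpacked on demand via session synchronisations of shape $\pmapp{\varp{X}}{1}{f}$). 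In either case, every occurrence of $X$ in $P'$ is encoded so that its activation consumes precisely the deterministic sequence that reinstates the same behaviour as the source unfolding. From this relation, the source transition $\ell$ on $P_1$ is matched by a weak typed transition $\wtytra{\mapt{\Gamma}^{1}}{\ell'}{\mapt{\Delta}^{1}}{R_1}{\mapt{\Delta'}^{1}}{R_2}$ with $\ell' = \mapa{\ell}^{1}$, which yields (ii).

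The main obstacle is the compositionality lemma bridging $\pmapp{P'\subst{\recp{X}{P'}}{X}}{1}{f}$ and $R_1$: the substituted encoding replaces each $\varp{X}$ by the full higher-order dance $\pmapp{\varp{X}}{1}{f_1}$ (with its own fresh $s$ and $\dual{s}$ duplicator), while $R_1$ substitutes $R_3$ for $X$ and only unfolds upon application. The delicate point is to argue that these differ only by deterministic $\tau$-steps of the two kinds of \defref{def:dettrans} (session synchronisation on the duplicator's $s/\dual{s}$ and $\beta$-reduction from $\ltsrule{App}$), so that all observables on both sides coincide up to $\Hby{\dtau}$; once this is shown, \propref{lem:tau_inert} and \lemref{lem:up_to_deterministic_transition} deliver the matching observable transition required for (ii), while the label equation $\ell' = \mapa{\ell}^{1}$ follows from clause-by-clause inspection of $\mapa{\cdot}^{1}$ on the action that actually fires in~$P_1$.
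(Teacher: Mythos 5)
Your proposal follows essentially the same route as the paper's (very terse) proof sketch: parts (i) and (iii) by direct unfolding of the recursion clause of the encoding in \defref{def:enc:HOp_to_HO}, observing that the synchronisation happens on a restricted name so the session environment is unchanged, and part (ii) by reduction to the general operational correspondence (\propref{prop:op_corr_HOp_to_HO}) modulo deterministic $\tau$-steps. The compositionality bridge between $\pmapp{P'\subst{\recp{X}{P'}}{\varp{X}}}{1}{f}$ and $R_1$ that you flag as the delicate point is precisely what the paper's one-line appeal to \propref{prop:op_corr_HOp_to_HO} leaves implicit, so you have not diverged from the intended argument.
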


\begin{proof}[Sketch]
	Part~(1) follow directly from the definition of typed encoding for processes $\pmapp{\cdot}{1}{f}$ (\defref{def:enc:HOp_to_HO}),
	observing that the reduction occurs along a restricted name, and so the session environment remains unchanged.
	Part~(2) relies on  \propref{prop:op_corr_HOp_to_HO}.
	Part~(3) is immediate from \defref{def:enc:HOp_to_HO}.
	\qed
\end{proof}

The following proposition formalises completeness and
soundness results for the encoding of \HOp into \HO.
Recall that deterministic transitions $\stau$ and 
$\btau$ have been defined in \defref{def:dettrans}.

\begin{proposition}[Operational Correspondence, \HOp into \HO]\myrm
	\label{prop:op_corr_HOp_to_HO}
	Let $P$ be a \HOp process.
	If $\Gamma; \emptyset; \Delta \proves P \hastype \Proc$ then:
	\begin{enumerate}[1.]
		\item
			Suppose $\horel{\Gamma}{\Delta}{P}{\hby{\ell_1}}{\Delta'}{P'}$. Then we have:
			\begin{enumerate}[a)]
				\item
					If $\ell_1 \in \set{\news{\tilde{m}}\bactout{n}{m}, \,\news{\tilde{m}}\bactout{n}{\abs{x}Q}, \,\bactsel{s}{l}, \,\bactbra{s}{l}}$
					then $\exists \ell_2$ s.t. \\
					$\horel{\tmap{\Gamma}{1}}{\tmap{\Delta}{1}}{\pmapp{P}{1}{f}}{\hby{\ell_2}}{\tmap{\Delta'}{1}}{\pmapp{P'}{1}{f}}$
					and $\ell_2 = \mapa{\ell_1}^{1}$.
			
				\item
					If $\ell_1 = \bactinp{n}{\abs{y}Q}$ and
					$P' = P_0 \subst{\abs{y}Q}{x}$
					then $\exists \ell_2$ s.t. \\
					$\horel{\tmap{\Gamma}{1}}{\tmap{\Delta}{1}}{\pmapp{P}{1}{f}}{\hby{\ell_2}}{\tmap{\Delta'}{1}}{\pmapp{P_0}{1}{f}\subst{\abs{y}\pmapp{Q}{1}{\emptyset}}{x}}$
					and $\ell_2 = \mapa{\ell_1}^{1}$.
			
				\item
					If $\ell_1 = \bactinp{n}{m}$
					and 
					$P' = P_0 \subst{m}{x}$
					then $\exists \ell_2$, $R$ s.t. \\
					$\horel{\tmap{\Gamma}{1}}{\tmap{\Delta}{1}}{\pmapp{P}{1}{f}}{\hby{\ell_2}}{\tmap{\Delta'}{1}}{R}$,
					with $\ell_2 = \mapa{\ell_1}^{1}$, \\
					and
					$\horel{\tmap{\Gamma}{1}}{\tmap{\Delta'}{1}}{R}{\hby{\btau} \hby{\stau} \hby{\btau}}
					{\tmap{\Delta'}{1}}{\pmapp{P_0}{1}{f}\subst{m}{x}}$.
						
				\item
					If $\ell_1 = \tau$
					and $P' \scong \newsp{\tilde{m}}{P_1 \Par P_2\subst{m}{x}}$
					then $\exists R$ s.t. \\
					$\horel{\tmap{\Gamma}{1}}{\tmap{\Delta}{1}}{\pmapp{P}{1}{f}}{\hby{\tau}}{\mapt{\Delta}^{1}}{\newsp{\tilde{m}}{\pmapp{P_1}{1}{f} \Par R}}$,
					and\\ 
					$\horel{\tmap{\Gamma}{1}}{\tmap{\Delta}{1}}{\newsp{\tilde{m}}{\pmapp{P_1}{1}{f} \Par R}}{\hby{\btau} \hby{\stau} \hby{\btau}}
					{\mapt{\Delta}^{1}}{\newsp{\tilde{m}}{\pmapp{P_1}{1}{f} \Par \pmapp{P_2}{1}{f}\subst{m}{x}}}$.
			
				\item
					If $\ell_1 = \tau$
					and $P' \scong \newsp{\tilde{m}}{P_1 \Par P_2 \subst{\abs{y}Q}{x}}$
					then \\
					$\horel{\tmap{\Gamma}{1}}{\tmap{\Delta}{1}}{\pmapp{P}{1}{f}}{\hby{\tau}}
					{\tmap{\Delta_1}{1}}{\newsp{\tilde{m}}{\pmapp{P_1}{1}{f}\Par \pmapp{P_2}{1}{f}\subst{\abs{y}\pmapp{Q}{1}{\emptyset}}{x}}}$.
			
				\item
					If $\ell_1 = \tau$
					and $P' \not\scong \newsp{\tilde{m}}{P_1 \Par P_2 \subst{m}{x}} \land P' \not\scong \newsp{\tilde{m}}{P_1 \Par P_2\subst{\abs{y}Q}{x}}$
					then \\
					$\horel{\tmap{\Gamma}{1}}{\tmap{\Delta}{1}}{\pmapp{P}{1}{f}}{\hby{\tau}}{\tmap{\Delta'_1}{1}}{ \pmapp{P'}{1}{f}}$.
			\end{enumerate}
			
		\item	Suppose $\horel{\tmap{\Gamma}{1}}{\tmap{\Delta}{1}}{\pmapp{P}{1}{f}}{\hby{\ell_2}}{\tmap{\Delta'}{1}}{Q}$.
			Then we have:
			\begin{enumerate}[a)]
				\item 
					If $\ell_2 \in
					\set{\news{\tilde{m}}\bactout{n}{\abs{z}{\,\binp{z}{x} (\appl{x}{m})}}, \,\news{\tilde{m}} \bactout{n}{\abs{x}{R}}, \,\bactsel{s}{l}, \,\bactbra{s}{l}}$
					then $\exists \ell_1, P'$ s.t. \\
					$\horel{\Gamma}{\Delta}{P}{\hby{\ell_1}}{\Delta'}{P'}$, 
					$\ell_1 = \mapa{\ell_2}^{1}$, 
					and
					$Q = \pmapp{P'}{1}{f}$.
			
				\item 
					If $\ell_2 = \bactinp{n}{\abs{y} R}$ 
					then either:
					\begin{enumerate}[(i)]
						\item	$\exists \ell_1, x, P', P''$ s.t. \\
							$\horel{\Gamma}{\Delta}{P}{\hby{\ell_1}}{\Delta'}{P' \subst{\abs{y}P''}{x}}$, 
							$\ell_1 = \mapa{\ell_2}^{1}$, $\pmapp{P''}{1}{\es} = R$, and $Q = \pmapp{P'}{1}{f}$.

						\item	$R \scong \binp{y}{x} (\appl{x}{m})$ and 
							$\exists \ell_1, z, P'$ s.t. \\
							$\horel{\Gamma}{\Delta}{P}{\hby{\ell_1}}{\Delta'}{P' \subst{m}{z}}$, 
							$\ell_1 = \mapa{\ell_2}^{1}$,
							and\\
							$\horel{\tmap{\Gamma}{1}}{\tmap{\Delta'}{1}}{Q}{\hby{\btau} \hby{\stau} \hby{\btau}}{\tmap{\Delta''}{1}}{\pmapp{P'\subst{m}{z}}{1}{f}}$
					\end{enumerate}
			
				\item 
					If $\ell_2 = \tau$ 
					then $\Delta' = \Delta$ and 
					either
					\begin{enumerate}[(i)]
						\item	$\exists P'$ s.t. 
							$\horel{\Gamma}{\Delta}{P}{\hby{\tau}}{\Delta}{P'}$,
							and $Q = \map{P'}^{1}_f$.	

						\item
							$\exists P_1, P_2, x, m, Q'$ s.t. 
							$\horel{\Gamma}{\Delta}{P}{\hby{\tau}}{\Delta}{\newsp{\tilde{m}}{P_1 \Par P_2\subst{m}{x}} }$, and\\
							$\horel{\tmap{\Gamma}{1}}{\tmap{\Delta}{1}}{Q}{\hby{\btau} \hby{\stau} \hby{\btau}}{\tmap{\Delta}{1}}{\pmapp{P_1}{1}{f} \Par \pmapp{P_2\subst{m}{x}}{1}{f}}$ 

			\end{enumerate}
		    \end{enumerate}
		    
	\end{enumerate}
\end{proposition}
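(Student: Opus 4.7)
\medskip

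\noindent\textbf{Proof plan.} I would prove both parts by transition induction on the premise, using as main tools the type preservation result (\propref{prop:typepres_HOp_to_HO}), the characterisation of recursive unfolding (\propref{prop:op_corr_HOprec_to_HO}), and a careful case analysis on the rules of the refined typed LTS of \defref{def:restricted_typed_transition}. Type preservation ensures that all target judgements are well formed; $\tau$-inertness (\propref{lem:tau_inert}) will be used to justify that the administrative reductions introduced by the encoding are behaviourally transparent in the sense required by the statement.

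For Part~1 (completeness), I would proceed by case analysis on $\ell_1$. The cases (a) for selection, branching, abstraction output, and abstraction input are essentially homomorphic: by inspecting \figref{fig:enc:HOp_to_HO} the encoding of $\bsel{s}{l}P$, $\bbra{s}{\{l_i:P_i\}}_{i\in I}$, $\bout{u}{\abs{x}Q}P$ and $\binp{u}{\underline{x}}P$ mimics the source prefix exactly, so the matching action in the target is obtained by one application of the corresponding environment LTS rule, and $\ell_2 = \mapa{\ell_1}^{1}$ holds by direct unfolding of the label map. The name output case (a) is similar: the encoding of $\bout{u}{m}P$ already wraps $m$ into the canonical abstraction $\abs{z}{\binp{z}{x}(\appl{x}{m})}$, which is exactly the value used in $\mapa{\news{\tilde{m}}\bactout{n}{m}}^{1}$. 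The tricky case is name input (case c): here the target process is $\binp{u}{x'}\newsp{s}{\appl{x'}{s}\Par\bout{\dual{s}}{\abs{x}\pmapp{P_0}{1}{f}}\inact}$, and reception of the wrapped name $\abs{z}{\binp{z}{x}(\appl{x}{m})}$ must be followed by one $\btau$ (application of the received abstraction to $s$), one $\stau$ (synchronisation on $s$/$\dual{s}$ delivering $\abs{x}\pmapp{P_0}{1}{f}$), and a second $\btau$ (application to $m$) to recover $\pmapp{P_0}{1}{f}\subst{m}{x}$. This yields exactly the $\hby{\btau}\hby{\stau}\hby{\btau}$ witness required. Cases (d) and (e) for $\tau$ follow by combining the output and input cases in a single $\ltsrule{Tau}$ derivation; case~(f) covers synchronisations not arising from name/abstraction passing (e.g.\ selection/branching or restricted applications) and is immediate. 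Finally, the recursion case $P = \recp{X}{P'}$ is handled by invoking \propref{prop:op_corr_HOprec_to_HO}, which already packages the $\tau$-transition that triggers the duplicator, followed by the genuine observable transition.

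For Part~2 (soundness), the analysis is dual: given $\ell_2$, I would inspect the syntactic shape of $\pmapp{P}{1}{f}$ that enables it, then recover $\ell_1$. Cases (a) are direct because the only prefixes in the target that can fire these labels originate from the corresponding homomorphic source prefixes. The delicate subcase is (b)(ii): when the target receives an abstraction whose body has exactly the trigger shape $\binp{y}{x}(\appl{x}{m})$, this does not correspond to an abstraction input in the source, but rather to a name input; I would detect this syntactically on the received value and reconstruct the source action $\bactinp{n}{m}$, then reuse the $\btau\stau\btau$ reduction sequence (with $\tau$-inertness) to relate $Q$ to $\pmapp{P'\subst{m}{z}}{1}{f}$. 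For case (c), a $\tau$ in the target either arises from a genuine source $\tau$ or is one of the administrative $\dtau$ steps produced by a previous name-passing synchronisation; distinguishing these two subcases by inspecting whether the redex sits under a residual abstraction/restriction introduced by the encoding gives (i) vs.\ (ii).

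The main obstacle I anticipate is the bookkeeping around the $\btau\stau\btau$ administrative chain: in the completeness direction for name passing and $\tau$-synchronisations, one must show that the intermediate processes inhabit balanced session environments and that the encoding clauses compose correctly so that the final residue is exactly $\pmapp{P_0}{1}{f}\subst{m}{x}$ rather than something only structurally congruent to it; this requires careful use of the substitution lemma and of the fact that the fresh session $s$ used by the encoding does not occur in $P_0$. A secondary difficulty is soundness case~(b)(ii), where the syntactic shape of the received value must be matched against the trigger template up to $\scong_\alpha$ before reconstructing the source label. Once these two points are settled, the remaining cases reduce to mechanical inspection of the encoding clauses and of the environmental LTS rules of \figref{fig:envLTS}.
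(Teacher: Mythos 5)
Your proposal is correct and follows essentially the same route as the paper: a transition induction with case analysis on the label, where the homomorphic cases are immediate from \figref{fig:enc:HOp_to_HO}, the name-input/name-passing cases are closed by the administrative $\hby{\btau}\hby{\stau}\hby{\btau}$ chain, and the soundness direction recovers the source label by matching the received value against the trigger template. The difficulties you flag (the bookkeeping of the administrative chain and the syntactic recognition of $\binp{y}{x}(\appl{x}{m})$ in case 2(b)(ii)) are exactly the ones the paper's appendix proof works through in its representative cases.
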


\begin{proof}
	The proof is a mechanical induction on the labelled Transition System.
	Parts (1) and (2) are proved separetely.
	The most demanding cases for the proof can be found in~\propref{app:prop:op_corr_HOp_to_HO}
	(page~\pageref{app:prop:op_corr_HOp_to_HO}).
	\qed
\end{proof}

\begin{proposition}[Full Abstraction, \HOp into \HO]\myrm
	\label{prop:fulla_HOp_to_HO}
	Let $P_1, Q_1$ be \HOp processes.
	$\horel{\Gamma}{\Delta_1}{P_1}{\hwb}{\Delta_2}{Q_1}$
	if and only if
	$\horel{\tmap{\Gamma}{1}}{\tmap{\Delta_1}{1}}{\pmapp{P_1}{1}{f}}{\hwb}{\tmap{\Delta_2}{1}}{\pmapp{Q_1}{1}{f}}$.
\end{proposition}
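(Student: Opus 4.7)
The plan is to prove the two directions separately by exhibiting candidate relations that are shown to be higher-order bisimulations using \lemref{lem:up_to_deterministic_transition} (up-to deterministic transition) and the operational correspondence of \propref{prop:op_corr_HOp_to_HO}. Throughout, type preservation (\propref{prop:typepres_HOp_to_HO}) ensures that the candidate pairs are typed relations, and $\tau$-inertness (\propref{lem:tau_inert}) guarantees that auxiliary $\dtau$-chains introduced by the encoding do not affect behavioural equivalence.

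For the ``only if'' direction, I would define the candidate relation
\[
  \Re_1 = \big\{\, (\pmapp{P}{1}{f},\, \pmapp{Q}{1}{f}) \,\setbar\, P \hwb Q \,\big\}
\]
(suitably typed) and show $\Re_1 \subseteq \hwb$ using the up-to $\dtau$ technique. Given a transition from $\pmapp{P}{1}{f}$, I invoke \propref{prop:op_corr_HOp_to_HO}(2) to obtain a matching action $\ell_1$ of $P$, possibly followed by a $\hby{\btau}\hby{\stau}\hby{\btau}$ tail introduced by the name-passing encoding; then I use $P \hwb Q$ to find a weak transition $Q \Hby{\ell_1} Q'$ with $P' \hwb Q'$, and finally invoke \propref{prop:op_corr_HOp_to_HO}(1) to transfer this weak transition back to $\pmapp{Q}{1}{f}$ at the level of the encoding, obtaining a target of the form $\pmapp{Q'}{1}{f}$ up to $\dtau$-steps. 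The crux is the output clause: if $P$ fires $\news{\tilde{m}_1}\bactout{n}{V_1}$ and is matched by $Q$ with $\news{\tilde{m}_2}\bactout{n}{V_2}$, I must confirm that the trigger-parallel residuals $\pmapp{\newsp{\tilde{m}_i}{P_i \Par \htrigger{t}{V_i}}}{1}{f}$ remain related; here the key observation is that $\htrigger{t}{V}$ translates (up to $\dtau$-transitions realising the unpacking protocol) to the trigger of the encoded value $\mapa{\news{\tilde{m}}\bactout{n}{V}}^{1}$, so $\tau$-inertness allows closure under $\Re_1$.

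For the ``if'' direction, I would symmetrically define
\[
  \Re_2 = \big\{\, (P, Q) \,\setbar\, \pmapp{P}{1}{f} \hwb \pmapp{Q}{1}{f} \,\big\}
\]
and prove $\Re_2 \subseteq \hwb$ directly. Given $P \hby{\ell_1} P'$, \propref{prop:op_corr_HOp_to_HO}(1) yields a matching transition from $\pmapp{P}{1}{f}$ at label $\mapa{\ell_1}^{1}$, possibly followed by a deterministic $\dtau$ tail. Bisimilarity of the encodings then provides a matching weak transition on $\pmapp{Q}{1}{f}$, and \propref{prop:op_corr_HOp_to_HO}(2) combined with \propref{lem:tau_inert} extracts an actual source action $Q \Hby{\ell_1} Q'$ such that $\pmapp{P'}{1}{f} \hwb \pmapp{Q'}{1}{f}$, i.e., $(P', Q') \in \Re_2$. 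Recursion is handled by \propref{prop:op_corr_HOprec_to_HO}, which shows that an unfolding step of $\recp{X}{P}$ corresponds to a silent duplicator interaction in the encoding followed by a regular action, which is absorbed by $\tau$-inertness.

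The main obstacle, in both directions, is matching the encoded triggers after a higher-order output: the encoding replaces a name output with an abstraction output, so the bisimulation game on the encoded side presents different trigger residuals than the source. The trick is that the encoded name trigger $\pmapp{\htrigger{t}{m}}{1}{f}$ reduces by a short $\dtau$-sequence (corresponding to the application/session-synchronisation unpacking) to a term behaviourally indistinguishable from $\htrigger{t}{\abs{z}{\binp{z}{x}\appl{x}{m}}}$, which is precisely the trigger expected by $\hwb$ at the target side. Establishing this requires careful bookkeeping with the auxiliary map $\auxmap{\cdot}{\sigma}$ of \defref{def:auxmap} in the recursion case; once settled, \lemref{lem:up_to_deterministic_transition} seals both inclusions, concluding the proof.
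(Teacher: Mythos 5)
Your proposal is correct and follows essentially the same route as the paper: both directions are proved by exhibiting the candidate relations $\{(\pmapp{P}{1}{f},\pmapp{Q}{1}{f}) \mid P \hwb Q\}$ and $\{(P,Q) \mid \pmapp{P}{1}{f} \hwb \pmapp{Q}{1}{f}\}$, closing them under the two directions of \propref{prop:op_corr_HOp_to_HO} with $\tau$-inertness (\propref{lem:tau_inert}) and the up-to-deterministic-transition technique (\lemref{lem:up_to_deterministic_transition}). The only cosmetic difference is in the output clause: the paper identifies the residual $\newsp{\tilde{m}_1}{\pmapp{P_2}{1}{f} \Par \htrigger{t}{\abs{z}{\binp{z}{x}(\appl{x}{m_1})}}}$ syntactically with $\pmapp{\newsp{\tilde{m}_1}{P_2 \Par \htrigger{t}{m_1}}}{1}{f}$ (since the encoding of the trigger's output prefix is exactly the packed abstraction), whereas you absorb the discrepancy via a short $\dtau$-sequence; both resolutions are sound.
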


\begin{proof}
	The proof for the soundness direction considers
	closure that can be shown to be a bisimulation
	following the soundness direction of Operational Correspondence
	(\propref{prop:op_corr_HOp_to_HO}). Whenever needed
	the proof makes use of the $\tau$-inertness result
	(\propref{lem:tau_inert}).

	The proof for the completness direction also considers
	a closure shown to be a bisimulation
	up-to deterministic transition (\propref{lem:up_to_deterministic_transition})
	following the completeness direction of Operational Correspondence
	(\propref{prop:op_corr_HOp_to_HO}).

	Details of the proof can be found in~\propref{app:prop:fulla_HOp_to_HO}
	(page~\pageref{app:prop:fulla_HOp_to_HO}).
	\qed
\end{proof}

\begin{proposition}[Precise encoding of \HOp into \HO]\myrm
	\label{prop:prec:HOp_to_HO}
	The encoding from $\tyl{L}_{\HOp}$ to $\tyl{L}_{\HO}$
	is precise.
\end{proposition}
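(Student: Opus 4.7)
The plan is to verify, criterion by criterion, that the encoding $\enco{\pmapp{\cdot}{1}{f}, \tmap{\cdot}{1}, \mapa{\cdot}^{1}}$ satisfies both the syntactic conditions of \defref{def:sep} and the semantic conditions of \defref{def:ep}. Most criteria will follow by direct inspection of \figref{fig:enc:HOp_to_HO}, leveraging the propositions already established in this section.

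First I would dispatch the syntactic criteria. Homomorphism with respect to parallel composition is immediate from the clause $\pmapp{P \Par Q}{1}{f} = \pmapp{P}{1}{f} \Par \pmapp{Q}{1}{f}$; together with the extension $\mapt{\Delta_1 \cat \Delta_2}^{1} = \mapt{\Delta_1}^{1} \cat \mapt{\Delta_2}^{1}$ this gives Clause 1 of \defref{def:sep} directly. Compositionality with respect to restriction follows analogously from $\pmapp{\news{n}P}{1}{f} = \news{n}\pmapp{P}{1}{f}$. Name invariance holds because the encoding never inspects the identity of a free name: any injective renaming commutes with the recursive clauses of $\pmapp{\cdot}{1}{f}$, with $\auxmap{\cdot}{\sigma}$, and with $\ofn{\cdot}$ (which depends only on the lexicographic ordering induced by the renaming).

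Next I would address the semantic criteria. Type preservation is precisely \propref{prop:typepres_HOp_to_HO}. Subject preservation is immediate by case analysis on the clauses defining $\mapa{\cdot}^{1}$: each non-$\tau$ label of \HOp is mapped to an \HO label with the same subject $n$ or $s$ (only the carried value changes, e.g.\ a name $m$ is replaced by $\abs{z}{\binp{z}{x}\appl{x}{m}}$). Operational correspondence is a consequence of \propref{prop:op_corr_HOp_to_HO}: the completeness and soundness cases that involve auxiliary $\hby{\btau}\hby{\stau}\hby{\btau}$ sequences (arising, e.g., from the unpacking of name values or from recursion unfolding, cf.\ \propref{prop:op_corr_HOprec_to_HO}) fit the schema of \defref{def:ep}-3 because each deterministic transition is absorbed by $\wb$: by \propref{lem:tau_inert}, $P \Hby{\dtau} P'$ implies $P \wb P'$, so the encoded derivatives appearing in \propref{prop:op_corr_HOp_to_HO} are bisimilar to those demanded by the general criterion. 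Full abstraction is exactly \propref{prop:fulla_HOp_to_HO}.

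The only non-routine matching is between the tightly indexed operational correspondence of \propref{prop:op_corr_HOp_to_HO} and the looser shape of \defref{def:ep}-3; I would spell this out explicitly in cases where several target transitions collapse to one source transition (input of a name, $\tau$-synchronisation on a name-passing prefix, and recursion unfolding), each time invoking \propref{lem:tau_inert} to conclude bisimilarity of the intermediate derivative with the normal form. The main obstacle I anticipate is not mathematical depth but bookkeeping: ensuring that the $f$-indexing used for recursion variables in $\pmapp{\cdot}{1}{f}$ is consistent under the renamings required by name invariance and under the substitutions involved in the full-abstraction argument. This is handled uniformly by observing that $f$ only tracks the lexicographic sequence $\ofn{P}$ and is therefore invariant under injective renamings. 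Once these points are settled, preciseness follows.
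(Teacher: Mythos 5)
Your proof takes essentially the same route as the paper: the syntactic criteria are checked by inspection of the mapping in \figref{fig:enc:HOp_to_HO}, and the semantic criteria are discharged by citing \propref{prop:typepres_HOp_to_HO}, \propref{prop:op_corr_HOp_to_HO}, and \propref{prop:fulla_HOp_to_HO}. The additional detail you supply on subject preservation, on reconciling the indexed operational correspondence with \defref{def:ep}-3 via \propref{lem:tau_inert}, and on the invariance of the $f$-indexing under renaming is a sound elaboration of what the paper leaves implicit.
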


\begin{proof}
	Syntactic requirements are easily derivable from the
	definition of the mappings in \figref{fig:enc:HOp_to_HO}.
	Semantic requirements are a consequence of
	\propref{prop:typepres_HOp_to_HO}, \propref{prop:op_corr_HOp_to_HO}, and \propref{prop:fulla_HOp_to_HO}.
	\qed
\end{proof}

\begin{example}[Encode $\recp{X}{\bout{a}{m} \varp{X}}$ into \HO]

\noi {\bf Mapping:} Term mapping of \HOp process $\recp{X}{\bout{a}{m} \varp{X}}$
into a \HO process. We note initially $f = \emptyset$. The first application of the mapping
will give:
\[
\begin{array}{rcl}
	\pmapp{\recp{X}{\bout{a}{m} \varp{X}}}{1}{} &=&
	\newsp{s_1}{ \binp{s_1}{x} \pmapp{\bout{a}{m} \varp{x}}{1}{\varp{x} \rightarrow x_ax_m} \Par\\
	&&\bout{\dual{s_1}}{ \abs{(x_a, x_m, z)} \binp{z}{x} \auxmap{\pmapp{\bout{a}{m} \varp{x}}{1}{\varp{x} \rightarrow x_ax_m}}{\es} } \inact}
	\\
	\multicolumn{3}{l}{\textrm{with}}
	\\
	\pmapp{\bout{a}{m} \varp{x}}{1}{ \varp{x} \rightarrow x_ax_m} &=&
	\bout{a}{\abs{z}{\binp{z}{x} (\appl{x}{m})}} \pmapp{\varp{x}}{1}{\varp{x} \rightarrow x_ax_m}
	\\
	&=& \bout{a}{\abs{z}{\binp{z}{x} (\appl{x}{m})}} \newsp{s_2}{\appl{x}{(a,m, s_2)}  \Par \bout{\dual{s_2}}{\abs{(x_a, x_m, z)}{\appl{x}{(x_a, x_m, z)}}} \inact}
\end{array}
\]
\noi Furthermore:
\[
\begin{array}{l}
	\auxmap{\pmapp{\bout{a}{m} \varp{x}}{1}{\varp{x} \rightarrow x_ax_m}}{\es}\\
	\qquad \qquad \quad = \auxmap{\bout{a}{\abs{z}{\binp{z}{x} (\appl{x}{m})}} \newsp{s_2}{\appl{x}{(a,m, s_2)}  \Par \bout{\dual{s_2}}{\abs{(x_a, x_m, z)}{\appl{x}{(x_a, x_m, z)}}} \inact}}{\es}
	\\
	\qquad \qquad \quad = \bout{x_a}{\abs{z}{\binp{z}{x} (\appl{x}{x_m})}} \auxmap{\newsp{s_2}{\appl{x}{(a,m, s_2)}  \Par \bout{\dual{s_2}}{\abs{(x_a, x_m, z)}{\appl{x}{(x_a, x_m, z)}}} \inact}}{\es}
	\\
	\qquad \qquad \quad = \bout{x_a}{\abs{z}{\binp{z}{x} (\appl{x}{x_m})}} \newsp{s_2}{\appl{x}{(x_a,x_m, s_2)}  \Par \bout{\dual{s_2}}{\abs{(x_a, x_m, z)}{\appl{x}{(x_a, x_m, z)}}} \inact}
\end{array}
\]
\noi The whole encoding would be:
\[
\begin{array}{l}
	V = \abs{(x_a, x_m, z)} \binp{z}{x} \bout{x_a}{\abs{z}{\binp{z}{x} (\appl{x}{x_m})}} \newsp{s_2}{\appl{x}{(x_a,x_m, s_2)}  \Par \bout{\dual{s_2}}{\abs{(x_a, x_m, z)}{\appl{x}{(x_a, x_m, z)}}} \inact}\\
	\pmapp{\recp{X}{\bout{a}{m} \varp{X}}}{1}{} \scong \\
	\newsp{s_1}{\bout{\dual{s_1}}{V} \inact \Par \binp{s_1}{x} \bout{a}{\abs{z}{\binp{z}{x} (\appl{x}{m})}} \newsp{s_2}{\bout{\dual{s_2}}{\abs{(x_a, x_m, z)}{\appl{x}{(x_a, x_m, z)}}} \inact} \Par \appl{x}{(a,m, s_2)}}
\end{array}
\]

\noi {\bf Transition Semantics:} We can observe $\pmapp{\recp{X}{\bout{a}{m} \varp{X}}}{1}{}$ as:
\[
	\begin{array}{l}
		\pmapp{\recp{X}{\bout{a}{m} \varp{X}}}{1}{}\\
		\scong
		\\
		\newsp{s_1}{\bout{\dual{s_1}}{V} \inact \Par \binp{s_1}{x} \bout{a}{\abs{z}{\binp{z}{x} (\appl{x}{m})}} \newsp{s_2}{\bout{\dual{s_2}}{\abs{(x_a, x_m, z)}{\appl{x}{(x_a, x_m, z)}}} \inact} \Par \appl{x}{(a,m, s_2)}}
		\\
		\by{\tau}
		\\
		\bout{a}{\abs{z}{\binp{z}{x} (\appl{x}{m})}}\\
		\newsp{s_2}{\bout{\dual{s_2}}{V} \inact \Par \binp{s_2}{x} \bout{a}{\abs{z}{\binp{z}{x} (\appl{x}{m})}} \newsp{s_3}{\bout{\dual{s_3}}{\abs{(x_a, x_m, z)}{\appl{x}{(x_a, x_m, z)}}} \inact} \Par \appl{x}{(a,m, s_3)}}
		\\
		\scong_{\alpha}
		\\
		\bout{a}{\abs{z}{\binp{z}{x} (\appl{x}{m})}}\\
		\newsp{s_1}{\bout{\dual{s_1}}{V} \inact \Par \binp{s_1}{x} \bout{a}{\abs{z}{\binp{z}{x} (\appl{x}{m})}} \newsp{s_2}{\bout{\dual{s_2}}{\abs{(x_a, x_m, z)}{\appl{x}{(x_a, x_m, z)}}} \inact} \Par \appl{x}{(a,m, s_2)}}
		\\
		\scong
		\\
		\bout{a}{\abs{z}{\binp{z}{x} (\appl{x}{m})}} \pmapp{\recp{X}{\bout{a}{m} \varp{X}}}{1}{}
		\\
		\by{\bactout{a}{\abs{z}{\binp{z}{x} (\appl{x}{m})}}}
		\\
		\pmapp{\recp{X}{\bout{a}{m} \varp{X}}}{1}{}
	\end{array}
\]
\noi {\bf Typing Semantics:} We further show that $\pmapp{\recp{X}{\bout{a}{m} \varp{X}}}{1}{}$ is typable:
\begin{eqnarray}
	\tree{
		\begin{array}{l}
			\Gamma; \es; \es \proves a \hastype U_1 = \chtype{\lhot{\btinp{\lhot{U_2}} \tinact}} \\
			\Gamma; \es; \es \proves m \hastype U_2\\
			\Gamma; \es; s_2: \proves s_2 : \btinp{L} \tinact \proves s_2 \hastype \btinp{L} \tinact\\ 
			\Gamma; \es; \es \proves x \hastype \shot{(U_1, U_2, \btinp{L} \tinact)}
		\end{array}
	}{
		\Gamma; \es; s_2 : \btinp{L} \tinact \proves \appl{x}{(a,m, s_2)} \hastype \Proc
	}
	\label{ex:type1}
\end{eqnarray}
\begin{eqnarray}
	\tree{
		\tree{
			\begin{array}{l}
				\Gamma \cat x_a: U_1 \cat x_m: U_2; \es; \es \proves x_a \hastype U_1 = \chtype{\lhot{\btinp{\lhot{U_2}} \tinact}} \\
				\Gamma \cat x_a: U_1 \cat x_m: U_2 ; \es; \es \proves x_m \hastype U_2\\
				\Gamma; \es; z: \btinp{L} \tinact \proves z \hastype \btinp{L} \tinact\\
				\Gamma; \es; \es \proves x \hastype \shot{(U_1, U_2, \btinp{L} \tinact)}
			\end{array}
		}{
			\Gamma \cat x_a: U_1 \cat x_m: U_2 ; \es; z: \btinp{L} \tinact \proves \appl{x}{(x_a, x_m, z)} \hastype \Proc
		}
	}{
		\Gamma; \es; \es \proves \abs{(x_a, x_m, z)}{\appl{x}{(x_a, x_m, z)}} \hastype \shot{(U_1, U_2, \btinp{L} \tinact)}
	}
	\label{ex:type2}
\end{eqnarray}
\begin{eqnarray}
	\tree{
		\begin{array}{ll}
			\textrm{Result}~\eqref{ex:type2}
			\\
			\Gamma; \es; \dual{s_2} : \btout{\shot{(U_1, U_2, \btinp{L} \tinact)}} \tinact \proves \dual{s_2} \hastype \btout{\shot{(U_1, U_2, \btinp{L} \tinact)}} \tinact
		\end{array}
	}{
		\Gamma; \es; \dual{s_2} : \btout{\shot{(U_1, U_2, \btinp{L} \tinact)}} \tinact \proves \bout{\dual{s_2}}{\abs{(x_a, x_m, z)}{\appl{x}{(x_a, x_m, z)}}} \inact \hastype \Proc
	}
	\label{ex:type25}
\end{eqnarray}
\begin{eqnarray}
	\tree{
		\textrm{Result}~\eqref{ex:type1}
		\quad
		\textrm{Result}~\eqref{ex:type25}
		\quad
		\Delta = s_2: \btinp{L} \tinact \cat \dual{s_2}: \btout{\shot{(U_1, U_2, \btinp{L} \tinact)}} \tinact
	}{
		\Gamma; \es; \Delta \proves \bout{\dual{s_2}}{\abs{(x_a, x_m, z)}{\appl{x}{(x_a, x_m, z)}}} \inact \Par \appl{x}{(a,m, s_2)} \hastype \Proc
	}
	\label{ex:type3}
\end{eqnarray}
\begin{eqnarray}
	\tree{
		\begin{array}{l}
			\textrm{Result}~\eqref{ex:type3}
			\quad
			\btinp{L} \tinact \dualof \btout{\shot{(U_1, U_2, \btinp{L} \tinact)}} \tinact\\
			L = \shot{(U_1, U_2, \btinp{L} \tinact)} \textrm{ implies }\\
			\btinp{L} \tinact = \trec{t}{\btinp{\shot{(U_1, U_2, \vart{t})}} \tinact}
		\end{array}
	}{
		\Gamma; \es; \es \proves \newsp{s_2}{\bout{\dual{s_2}}{\abs{(x_a, x_m, z)}{\appl{x}{(x_a, x_m, z)}}} \inact \Par \appl{x}{(a,m, s_2)} \hastype \Proc}
	}
	\label{ex:type4}
\end{eqnarray}
\begin{eqnarray}
	\tree{
		\begin{array}{l}
			\textrm{Result}~\eqref{ex:type4}
			\\
			\Gamma; \es; \es \proves a \hastype \chtype{\lhot{\btinp{\lhot{U_2}} \tinact}}
			\\
			\Gamma; \es; \es \proves \abs{z}{\binp{z}{x} (\appl{x}{m})} \hastype \lhot{\btinp{\lhot{U_2}} \tinact}
		\end{array}
	}{
		\Gamma; \es; \es \proves \bout{a}{\abs{z}{\binp{z}{x} (\appl{x}{m})}} \newsp{s_2}{\bout{\dual{s_2}}{\abs{(x_a, x_m, z)}{\appl{x}{(x_a, x_m, z)}}} \inact \Par \appl{x}{(a,m, s_2)}} \hastype \Proc
	}
	\label{ex:type5}
\end{eqnarray}
\begin{eqnarray}
	\tree{
		\begin{array}{l}
			\textrm{Result}~\eqref{ex:type5}
			\quad
			\Gamma' = \Gamma \backslash x
			\\
			\Gamma; \es; \es \proves x \hastype \shot{(U_1, U_2, \trec{t}{\btinp{\shot{(U_1, U_2, t)}} \tinact})}
			\\
			\Gamma'; \es; \Delta \proves s_1 \hastype \btinp{\shot{(U_1, U_2, \trec{t}{\btinp{\shot{(U_1, U_2, \vart{t})}} \tinact})}} \tinact
		\end{array}
	}{
		\begin{array}{l}
			\Gamma'; \es; \Delta_1 \proves
			\\
			\quad\ \binp{s_1}{x} \bout{a}{\abs{z}{\binp{z}{x} (\appl{x}{m})}} \newsp{s_2}{\bout{\dual{s_2}}{\abs{(x_a, x_m, z)}{\appl{x}{(x_a, x_m, z)}}} \inact \Par \appl{x}{(a,m, s_2)}} \hastype \Proc
		\end{array}
	}
	\label{ex:type6}
\end{eqnarray}
\begin{eqnarray}
	\tree{
		\begin{array}{rcl}
			V &=& \abs{(x_a, x_m, z)} \binp{z}{x} \bout{x_a}{\abs{z}{\binp{z}{x} (\appl{x}{x_m})}}\\
			&& \newsp{s_2}{\appl{x}{(x_a,x_m, s_2)}  \Par \bout{\dual{s_2}}{\abs{(x_a, x_m, z)}{\appl{x}{(x_a, x_m, z)}}} \inact}
			\\
			\multicolumn{3}{l}{
				\Gamma'; \es; \es \proves V \hastype \shot{(U_1, U_2, \trec{t}{\btinp{\shot{(U_1, U_2, \vart{t})}} \tinact})}
			}
			\\
			\multicolumn{3}{l}{
				\Gamma'; \es; \Delta_2 \proves \dual{s_1} \hastype \btout{\shot{(U_1, U_2, \trec{t}{\btinp{\shot{(U_1, U_2, t)}} \tinact})}} \tinact
			}
		\end{array}
	}{
		\Gamma'; \es; \Delta_2 \proves \bout{\dual{s_1}}{V} \inact \hastype \Proc
	}
	\label{ex:type7}
\end{eqnarray}
\begin{eqnarray*}
	\tree{
		\tree{
			\textrm{Result}~\eqref{ex:type6} \quad \textrm{Result}~\eqref{ex:type7}
		}{
			\begin{array}{rcl}
				\Gamma; \es; \Delta_1 \cat \Delta_2 &\proves& \bout{\dual{s_1}}{V} \inact \Par \binp{s_1}{x} \bout{a}{\abs{z}{\binp{z}{x} (\appl{x}{m})}}\\
				&&\newsp{s_2}{\bout{\dual{s_2}}{\abs{(x_a, x_m, z)}{\appl{x}{(x_a, x_m, z)}}} \inact} \Par \appl{x}{(a,m, s_2)} \hastype \Proc
			\end{array}
		}
	}{
		\begin{array}{rcl}
			\Gamma; \es; \es &\proves& \newsp{s_1}{\bout{\dual{s_1}}{V} \inact \Par \binp{s_1}{x} \bout{a}{\abs{z}{\binp{z}{x} (\appl{x}{m})}}\\
			&&\newsp{s_2}{\bout{\dual{s_2}}{\abs{(x_a, x_m, z)}{\appl{x}{(x_a, x_m, z)}}} \inact} \Par \appl{x}{(a,m, s_2)}} \hastype \Proc
		\end{array}
	}
\end{eqnarray*}
\qed
\end{example}

\subsection{From \HOp to \sessp}
\label{subsec:HOp_to_p}

We now discuss the encodability of  $\HO$ into $\sessp$ where
we essentially follow the representability result put forward by 
Sangiorgi~\cite{San92,SaWabook}, but casted in the 
setting of session-typed communications. 
Intuitively, the strategy represents the exchange of a process 
with the exchange of a freshly generated \emph{trigger name}. 
Trigger names are used to activate copies of the process, 
which now becomes a persistent 
resource represented by an input-guarded replication.
In our calculi, a session name 
is a linear resource and cannot be replicated.
Consider the following (naive) adaptation of 
Sangiorgi's strategy in which session names are used are triggers and 
exchanged processes would be have to used exactly once:
\[
	\begin{array}{lcl}
		\pmap{\bout{u}{\abs{x}{Q}} P}{n} & \defeq &  \newsp{s}{\bout{u}{s} (\pmap{P}{n} \Par \binp{\dual{s}}{x} \pmap{Q}{n})} \\
		\pmap{\binp{u}{x} P}{n} & \defeq& \binp{u}{x} \pmap{P}{n}\\
		\pmap{\appl{x}{u}}{n} & \defeq & \bout{x}{u} \inact
	\end{array}
\]
with the remaining \HOp constructs being mapped homomorphically.
Although $\pmap{\cdot}{n}$ captures the correct semantics when
dealing with systems that allow only linear abstractions,
it suffers from non-typability in the presence
of shared abstractions. For instance,
mapping for $P = \bout{n}{\abs{x}{\bout{x}{m}\inact}} \inact \Par \binp{\dual{n}}{x} (\appl{x}{s_1} \Par \appl{x}{s_2})$
would be:
\[
	\pmap{P}{n} \defeq
	\newsp{s}{\bout{n}{s} \binp{\dual{s}}{x} \bout{x}{m} \inact \Par \binp{\dual{n}}{x} (\bout{x}{s_1} \inact \Par \bout{x}{s_2} \inact)}
\]
The above process is non typable since processes $(\bout{x}{s_1} \inact$ and $\bout{x}{s_2} \inact)$
cannot be put in parallel because they do not have disjoint session environments.

The correct approach would be to use replicated shared names
as triggers instead of session names, when dealing with shared abstractions. 
Below we write $\repl{} P$ as a shorthand notation for $\recp{X}{(P \Par \varp{X})}$.

\begin{definition}[Encoding \HOp to \sessp]\myrm
	\label{def:enc:HOp_to_p}
	Define encoding
	$\enco{\map{\cdot}^2, \mapt{\cdot}^2, \mapa{\cdot}^2}: \tyl{L}_{\HOp} \to \tyl{L}_{\sessp}$
	with mappings 
	$\map{\cdot}^{2}$, $\mapt{\cdot}^{2}$, $\mapa{\cdot}^{2}$ as
	in \figref{fig:enc:HOp_to_p}.
\end{definition}

\begin{figure}[t]
	\[
	\begin{array}{rcl}
		\pmap{\bout{u}{\abs{x}{Q}} P}{2} & \defeq &  \left\{
		\begin{array}{ll}
			\newsp{a}{\bout{u}{a} (\pmap{P}{2} \Par \repl{} \binp{a}{y} \binp{y}{x} \pmap{Q}{2})\,} & s \notin \fn{Q} \\
			\newsp{s}{\bout{u}{\dual{s}} (\pmap{P}{2} \Par \binp{s}{y} \binp{y}{x} \pmap{Q}{2})\,} & \textrm{otherwise} 
		\end{array}
		\right.
		\\
		\pmap{\binp{u}{x} P}{2} &\defeq&  \binp{u}{x} \pmap{P}{2}
		\\
		\pmap{\appl{x}{u}}{2} & \defeq & \newsp{s}{\bout{x}{s} \bout{\dual{s}}{u} \inact}
		\\
		\pmap{\appl{(\abs{x}{P})}{u}}{2} & \defeq & \newsp{s}{\binp{s}{x} \pmap{P}{2} \Par \bout{\dual{s}}{u} \inact}
		\\
		\\
		\tmap{\btout{\shot{S}}S_1}{2} & \defeq & \bbtout{\chtype{\btinp{\tmap{S}{2}}\tinact}}\tmap{S_1}{2} \\
		\tmap{\btinp{\shot{S}}S_1}{2} & \defeq & \bbtinp{\chtype{\btinp{\tmap{S}{2}}\tinact}}\tmap{S_1}{2} \\

		\tmap{\btout{\lhot{S}}S_1}{2} & \defeq & \bbtout{\btinp{\tmap{S}{2}}\tinact}\tmap{S_1}{2} \\
		\tmap{\btinp{\lhot{S}}S_1}{2} & \defeq & \bbtinp{\btinp{\tmap{S}{2}}\tinact}\tmap{S_1}{2} \\
		\mapa{\news{\tilde{m}'}\bactout{n}{\abs{ x}{P}} }^{2} &  \defeq & \news{m} \bactout{n}{m} \\
		\mapa{\bactinp{n}{\abs{ x}{P}} }^{2} &  \defeq & \bactinp{n}{m} \quad m \textrm{ fresh}
	\end{array}
	\]
	\caption{
		Typed encoding of 
		\HOp to \sessp (\defref{def:enc:HOp_to_p}).
		\label{fig:enc:HOp_to_p}
		Mappings 
		$\map{\cdot}^3$,
		$\mapt{\cdot}^3$, 
		and 
		$\mapa{\cdot}^3$
		are homomorphisms for the other processes/types/labels. 
	}
\end{figure}

\begin{proposition}[Type Preservation, \HOp into \sessp]\myrm
	\label{prop:typepres_HOp_to_p}
	Let $P$ be a \HOp process. 
	If $\Gamma; \emptyset; \Delta \proves P \hastype \Proc$ then 
	$\mapt{\Gamma}^{2}; \emptyset; \mapt{\Delta}^{2} \proves \map{P}^{2} \hastype \Proc$.
\end{proposition}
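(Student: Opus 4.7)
The plan is to proceed by induction on the derivation of $\Gamma; \emptyset; \Delta \proves P \hastype \Proc$, performing a case analysis on the last typing rule applied. The homomorphic cases---parallel composition, restriction, inaction, selection, branching, recursion, recursive variable, and name input---follow directly from the inductive hypothesis, since $\mapt{\cdot}^{2}$ distributes pointwise over the splits of linear/session environments used by rules $\trule{Par}$, $\trule{Send}$, $\trule{Rcv}$, and friends.

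The two substantive cases are abstraction output $\bout{u}{\abs{x}{Q}} P$ and application $\appl{x}{u}$ (and $\appl{(\abs{x}{P})}{u}$). For abstraction output, typed by $\trule{Send}$ with $u$ carrying $\btout{L} S$ and the abstraction inhabiting $L \in \{\shot{C}, \lhot{C}\}$, I would split on the shape of $L$. In the shared sub-case ($L = \shot{C}$), the encoding introduces a fresh shared name $a$ of type $\chtype{\btinp{\tmap{C}{2}}\tinact}$, sends $a$ along $u$ (matching $\tmap{\btout{\shot{C}} S}{2} = \bbtout{\chtype{\btinp{\tmap{C}{2}}\tinact}}\tmap{S}{2}$), and runs a replicated server $\repl{} \binp{a}{y}\binp{y}{x} \pmap{Q}{2}$ in parallel. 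Since $\abs{x}{Q}$ was typed $\shot{C}$ via $\trule{Prom}$, the body $Q$ is derivable with an empty linear environment, and $\pmap{Q}{2}$ inherits this emptiness by IH; this is precisely the invariant required by $\trule{Rec}$ when unfolding $\repl{\cdot}$ as $\recp{X}(\cdot \Par \varp{X})$, so the server body can be wrapped safely. Closing with $\trule{Par}$ against $\pmap{P}{2}$ and $\trule{Res}$ on $a$ completes the derivation. In the linear sub-case ($L = \lhot{C}$), we use a fresh session pair with $s : \btout{\tmap{C}{2}}\tinact$ and $\dual{s} : \btinp{\tmap{C}{2}}\tinact$, send $\dual{s}$ on $u$ (matching $\tmap{\btout{\lhot{C}} S}{2}$), and use the non-replicated server $\binp{s}{y}\binp{y}{x}\pmap{Q}{2}$, which consumes $Q$ exactly once; the derivation closes via $\trule{ResS}$ using the duality of $s$ and $\dual{s}$.

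For applications, $\appl{x}{u}$ is encoded as $\newsp{s}(\bout{x}{s} \bout{\dual{s}}{u} \inact)$ and $\appl{(\abs{x}{P})}{u}$ as $\newsp{s}(\binp{s}{x}\pmap{P}{2} \Par \bout{\dual{s}}{u}\inact)$. In both, I would type $s, \dual{s}$ dually at $\btout{\tmap{C}{2}}\tinact$ and $\btinp{\tmap{C}{2}}\tinact$. The abstraction variable $x$, which was typed $\shot{C}$ or $\lhot{C}$ by $\trule{App}$/$\trule{LVar}$ in the source, now becomes a name of type $\chtype{\btinp{\tmap{C}{2}}\tinact}$ or $\btinp{\tmap{C}{2}}\tinact$ respectively---both admit sending the fresh $s$ via $\trule{Send}$. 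Rule $\trule{ResS}$ binds $s$ and $\dual{s}$, and in the second variant IH provides the derivation of $\pmap{P}{2}$ under the extended environment.

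The main obstacle will be the shared-abstraction output case: one must align the replicated/recursive structure of the encoded server with the linear bookkeeping enforced by session typing. The key observation, which drives the whole argument, is that shared abstractions are admitted only through $\trule{Prom}$, whose premise is an \emph{empty} linear and session environment; this is exactly the residual needed for $\trule{Rec}$ to close the recursive knot that models $\repl{\cdot}$ in \sessp. Thus the source-level promotion from linear to shared functions corresponds precisely, under the encoding, to the passage from one-shot session triggers to replicated shared triggers---and it is this correspondence that underwrites type preservation in the nontrivial cases.
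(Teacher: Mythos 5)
Your proposal is correct and follows essentially the same route as the paper's own proof: induction on the typing derivation with the homomorphic cases discharged by the IH, and detailed derivations for the abstraction-output case (split into the shared/replicated-trigger and linear/session-trigger sub-cases, closed by $\trule{Res}$ resp.\ $\trule{ResS}$) and the application cases using a fresh dual session pair. Your observation that the emptiness of the linear and session environments forced by $\trule{Prom}$ is exactly what lets $\trule{Rec}$ type the unfolding of $\repl{}(\cdot)$ is the same invariant the paper relies on in its derivation of the replicated server's typing.
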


\begin{proof}
	By induction on the inference $\Gamma; \emptyset; \Delta \proves P \hastype \Proc$. 
	Details in \propref{app:prop:typepres_HOp_to_p}
	(Page~\pageref{app:prop:typepres_HOp_to_p}).
	\qed
\end{proof}

\begin{remark}
	As stated in  \cite[Lem.\,5.2.2]{SangiorgiD:expmpa}, 
	due to the replicated trigger,  
	operational correspondence in \defref{def:ep} is refined to prove  
	full abstraction: 
	e.g., completeness of the case $\ell_1 \neq \tau$, is changed as follows.
	Suppose:
	\[
		\horel{\Gamma}{\Delta}{P}{\hby{\ell_1}}{\Delta'}{P'}{}
	\]
	If $\ell_1 = \news{\tilde{m}} \bactout{n}{\abs{ x}{R}}$, 
	then 
	\[
		\horel{\tmap{\Gamma}{2}}{\tmap{\Delta}{2}}{\pmap{P}{2}}{\hby{\ell_2}}{\tmap{\Delta'}{2}}{Q}{}
	\]
	where  $\ell_2 = (\nu a)\bactout{n}{a}$ and
	$Q = \pmap{P' \Par  \repl{} \binp{a}{y} \binp{y}{x} R}{2}$.

	\noi Similarly, if  
	$\ell_1 = \bactinp{n}{\abs{ x}{R}}$, 
	then 
	\[
		\horel{\tmap{\Gamma}{2}}{\tmap{\Delta}{2}}{\pmap{P}{2}}{\hby{\ell_2}}{\tmap{\Delta'}{2}}{Q}{}
	\]
	where $\ell_2 = \bactout{n}{a}$ and
	$\pmap{P'}{2} \wb \newsp{a}{Q \Par  \repl{} \binp{a}{y} \binp{y}{x} \pmap{R}{2}}$.
	Soundness is stated in a symmetric way.
\end{remark}

This last remark is stated formally in the next proposition:
\begin{proposition}[Operational Correspondence, \HOp into \sessp]\myrm
	\label{prop:op_corr_HOp_to_p}
	Let $P$ be an  $\HOp$ process such that  $\Gamma; \emptyset; \Delta \proves P \hastype \Proc$.
	
	\begin{enumerate}[1.]
		\item Suppose $\horel{\Gamma}{\Delta}{P}{\hby{\ell_1}}{\Delta'}{P'}$.
		Then we have:
		\begin{enumerate}[a)]
			\item
				If  $\ell_1 = \news{\tilde{m}}\bactout{n}{\abs{x}Q}$,
				then $\exists \Gamma', \Delta'', R$ where either:
				\begin{enumerate}[-]
					\item 
						$\tmap{\Gamma}{2};\, \tmap{\Delta}{2} \proves  \pmap{P}{2} 
						\hby{\mapa{\ell_1}^{2}}
						\Gamma' \cdot \tmap{\Gamma}{2};\, \tmap{\Delta'}{2} \proves \pmap{P'}{2} \Par \repl{} \binp{a}{y} \binp{y}{x} \pmap{Q}{2}$
					\item 
						$\tmap{\Gamma}{2};\, \tmap{\Delta}{2} \proves \pmap{P}{2} 
						\hby{\mapa{\ell_1}^{2}}
						\tmap{\Gamma}{2};\, \Delta'' \proves \pmap{P'}{2} \Par \binp{s}{y} \binp{y}{x} \pmap{Q}{2}$
				\end{enumerate}

			\item
				If   
				$\ell_1 = \bactinp{n}{\abs{y}Q}$
				then $\exists R$ where
				either
				\begin{enumerate}[-]
					\item 
						$\tmap{\Gamma}{2};\, \tmap{\Delta}{2} \proves \pmap{P}{2} 
						\hby{\mapa{\ell_1}^{2}}
						\Gamma';\, \tmap{\Delta''}{2} \proves  R$, for some $ \Gamma'$
						and \\ 
						$\horel{\tmap{\Gamma}{2}}{\tmap{\Delta'}{2}}{\pmap{P'}{2}}{\wb}{\tmap{\Delta''}{2}}{\newsp{a}{R \Par \repl{} \binp{a}{y} \binp{y}{x} \pmap{Q}{2}}}$
					\item 
						$\tmap{\Gamma}{2};\, \tmap{\Delta}{2} \proves \pmap{P}{2}
						\hby{\mapa{\ell_1}^{2}}
						\tmap{\Gamma}{2};\, \tmap{\Delta''}{2} \proves R$, 
						and \\ 
						$\horel{\tmap{\Gamma}{2}}{\tmap{\Delta'}{2}}{\pmap{P'}{2}}{\wb}{\tmap{\Delta''}{2}}{\newsp{s}{R \Par \binp{s}{y} \binp{y}{x} \pmap{Q}{2}}}$  		
				\end{enumerate}

			\item	If
				$\ell_1 = \tau$ then either:

				\begin{enumerate}[-]
					\item	$\exists R$ such that
						\[
						\mhorel{\tmap{\Gamma}{2}}{\tmap{\Delta}{2}}{\pmap{P}{2}}
						{\hby{\tau}}
						{\tmap{\Delta'}{2}}{}{\newsp{\tilde{m}}{\pmap{P_1}{2} \Par \newsp{a}
						{\pmap{P_2}{2}\subst{a}{x} \Par \repl{} \binp{a}{y} \binp{y}{x} \pmap{Q}{2}}}}
						\]

					\item	$\exists R$ such that
						\[
						\mhorel{\tmap{\Gamma}{2}}{\tmap{\Delta}{2}}{\pmap{P}{2}}
						{\hby{\tau}}
						{\tmap{\Delta'}{2}}{}{\newsp{\tilde{m}}{\pmap{P_1}{2} \Par \newsp{s}
						{\pmap{P_2}{2}\subst{\dual{s}}{x} \Par \binp{s}{y} \binp{y}{x} \pmap{Q}{2}}}}
						\]

					\item	
						$\tmap{\Gamma}{2};\, \tmap{\Delta}{2} \proves \pmap{P}{2}
						\hby{\tau}
						\tmap{\Gamma}{2};\, \tmap{\Delta'}{2} \proves \pmap{P'}{2}$

					\item	$\ell_1 = \btau$ and
						$\tmap{\Gamma}{2};\, \tmap{\Delta}{2} \proves \pmap{P}{2}
						\hby{\stau}
						\tmap{\Gamma}{2};\, \tmap{\Delta'}{2} \proves \pmap{P'}{2}$
				\end{enumerate}

				   			   

			\item	 
				If  
				$\ell_1 \in \set{\bactsel{n}{l}, \bactbra{n}{l}}$
				 then \\
				$\exists \ell_2 = \mapa{\ell_1}^{2}$ such that 
				$\mapt{\Gamma}^{2};\, \mapt{\Delta}^{2} \proves  \map{P}^{2}
				\hby{\ell_2}
				\mapt{\Gamma}^{2};\, \mapt{\Delta'}^{2} \proves  \map{P'}^{2}$.			
		\end{enumerate}
		
		\item Suppose 
		$\stytra{\mapt{\Gamma}^{2}}{\ell_2}{\mapt{\Delta}^{2}}{\map{P}^{2}}{\mapt{\Delta'}^{2}}{R}$.
			\begin{enumerate}[a)]
				\item 
					If  
					$\ell_2 = \news{m}\bactout{n}{m}$
					then 
					either 
					\begin{enumerate}[-]
					\item	$\exists P'$ such that $P \hby{\news{m} \bactout{n}{m}} P'$
						and $R = \pmap{P'}{2}$.

					\item	$\exists Q, P'$ such that $P \hby{\bactout{n}{\abs{x}Q}} P'$
						and $R = \map{P'}^{2} \Par \repl{} \binp{a}{y} \binp{y}{x} \pmap{Q}{2}$

					\item	$\exists Q, P'$ such that $P \hby{\bactout{n}{\abs{x}Q}} P'$
						and $R = \map{P'}^{2} \Par \binp{s}{y} \binp{y}{x} \pmap{Q}{2}$
					\end{enumerate}

				\item   
					If  $\ell_2 = \bactinp{n}{m}$ 
					then either
					\begin{enumerate}[-]
					\item	$\exists P'$ such that $P \hby{\bactinp{n}{m}} P'$
						and $R = \pmap{P'}{2}$.

					\item	$\exists Q, P'$ such that
						$P \hby{\bactinp{n}{\abs{x}Q}} P'$\\
						and $\horel{\mapt{\Gamma}^{2}}{\mapt{\Delta'}^{2}}{\map{P'}^{2}}{\wb}{\mapt{\Delta'}^{2}}{\news{a}(R \Par \repl{} \binp{a}{y} \binp{y}{x} \pmap{Q}{2})}$
					\item	$\exists Q, P'$ such that
						$P \hby{\bactinp{n}{\abs{x}Q}} P'$\\
						and $\horel{\mapt{\Gamma}^{2}}{\mapt{\Delta'}^{2}}{\map{P'}^{2}}{\wb}{\mapt{\Delta'}^{2}}{\news{s}(R \Par \binp{s}{y} \binp{y}{x} \pmap{Q}{2})}$  
					\end{enumerate}
		
				\item   
					If  
					$\ell_2 = \tau$ 
					then $\exists P'$ such that
					$P \hby{\tau} P'$
					and $\horel{\mapt{\Gamma}^{2}}{\mapt{\Delta'}^{2}}{\map{P'}^{2}}{\wb}{\mapt{\Delta'}^{2}}{R}$.
				\item	 
					If  
					$\ell_2 \not\in \set{\bactout{n}{m}, \bactsel{n}{l}, \bactbra{n}{l}}$ 
					 then 
					$\exists \ell_1$ such that 
					$\ell_1 = \mapa{\ell_2}^{2}$ and \\
					$ \Gamma ;\, \Delta  \proves   P
					\hby{\ell_1}
					\Gamma ;\, \Delta  \proves   P'$.
		\end{enumerate}
	\end{enumerate}
\end{proposition}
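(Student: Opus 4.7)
The plan is to proceed by transition induction, treating completeness (Part~1) and soundness (Part~2) separately and performing case analysis on the transition label (for Part~1) and on the LTS rule applied to $\pmap{P}{2}$ (for Part~2). Type preservation (\propref{prop:typepres_HOp_to_p}) ensures the resulting judgements in the target calculus are well-formed, and \propref{lem:tau_inert} ($\tau$-inertness) will be the workhorse for absorbing the bookkeeping reductions introduced by the encoding into the ambient bisimilarity $\wb$. I would also rely on the substitution lemma (\lemref{lem:subst}) whenever an input action causes a substitution in the continuation, and on \defref{def:enc:HOp_to_p} for unfolding the encoding on the outermost prefix.

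For Part~1 (completeness), the straightforward cases are the select/branch actions and the propagation of $\tau$ through contexts, which commute with the homomorphic part of the encoding. The interesting cases are the abstraction communication and application. For $\ell_1 = \news{\tilde{m}}\bactout{n}{\abs{x}Q}$, I split on whether $s \in \fn{Q}$: in the shared case the encoding extrudes a fresh shared name $a$ and leaves a persistent trigger $\repl{}\binp{a}{y}\binp{y}{x}\pmap{Q}{2}$ in parallel with $\pmap{P'}{2}$; in the linear case it extrudes one endpoint $\dual{s}$ of a freshly created session and leaves the single-shot guard $\binp{s}{y}\binp{y}{x}\pmap{Q}{2}$ behind. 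Both alternatives match exactly clause (a) of the statement, with $\mapa{\ell_1}^{2} = \news{m}\bactout{n}{m}$. For the dual input case $\ell_1 = \bactinp{n}{\abs{y}Q}$, a single input of a fresh name on $n$ fires in the encoding, and the resulting $R$ differs from $\pmap{P'}{2}$ only in that the substituted abstraction has not yet been distributed via the trigger; $\wb$ holds by $\tau$-inertness, since the pending reductions are deterministic session synchronisations on the private trigger name. The application case $\pmap{\appl{(\abs{x}{P})}{u}}{2} = \newsp{s}{\binp{s}{x}\pmap{P}{2} \Par \bout{\dual{s}}{u}\inact}$ produces a session transition (a $\stau$), matching the last bullet of clause (c).

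For Part~2 (soundness), the subtlety is that a label $\news{m}\bactout{n}{m}$ observed on $\pmap{P}{2}$ is ambiguous: it may arise either from a genuine name-output in the source or from the encoding of an abstraction-output. I disambiguate by syntactic inspection of the residual $R$: the presence of a trailing $\repl{}\binp{a}{y}\binp{y}{x}\pmap{Q}{2}$ (resp.\ $\binp{s}{y}\binp{y}{x}\pmap{Q}{2}$) identifies the shared (resp.\ linear) abstraction case; otherwise the action reflects a primitive name output in $P$. A similar analysis resolves the ambiguity for $\bactinp{n}{m}$, after which I exhibit the trigger-based residual whose image under $\wb$ is $\pmap{P'}{2}$, again by $\tau$-inertness applied to the deterministic synchronisation on the fresh trigger name. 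The $\tau$ case is handled by analysing which pair of subterms has synchronised: a synchronisation on an encoded session name corresponds to a source $\tau$; a synchronisation involving a trigger $\binp{a}{y}$ or $\binp{s}{y}$ with its emitter is pure bookkeeping and is erased by $\wb$.

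The main obstacle is the soundness side of the abstraction communication, where we must show $\pmap{P'}{2} \wb \newsp{a}{R \Par \repl{} \binp{a}{y} \binp{y}{x} \pmap{Q}{2}}$ (and its linear analogue) even though the trigger process is persistent and may be invoked arbitrarily many times by occurrences of $x$ in the body of $P'$. The key invariant is that every application $\appl{x}{u}$ in the source is encoded as $\newsp{s}{\bout{x}{s}\bout{\dual{s}}{u}\inact}$, so after substitution the encoding-side trigger $\binp{a}{y}\binp{y}{x}\pmap{Q}{2}$ is consumed through exactly the two-step private handshake that produces $\pmap{Q\subst{u}{x}}{2}$ modulo $\tau$-inertness; I will exhibit a typed relation containing all pairs of the form $\bigl(\pmap{P'\sigma}{2},\ \newsp{a}{\pmap{P'}{2}\subst{a}{x} \Par \repl{}\binp{a}{y}\binp{y}{x}\pmap{Q}{2}}\bigr)$ where $\sigma$ replaces $x$ by $\abs{z}{\pmap{Q}{2}}$, and verify it is a bisimulation closed under the required transitions. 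The linear sub-case is easier since the trigger is not replicated; there the relation collapses to a single pending handshake handled directly by \lemref{lem:up_to_deterministic_transition}.
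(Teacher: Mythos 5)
Your proposal is correct in its overall architecture and matches the paper's proof at the top level: both proceed by transition induction, treat completeness and soundness separately, split the abstraction-output case on whether the abstraction body contains free session names (replicated shared trigger vs.\ single-shot linear guard), and reduce the hard input/abstraction-communication case to showing that the pending trigger handshake is behaviourally invisible. The one genuine difference is how that last step is discharged. The paper proves $\pmap{P'\subst{\abs{x}Q}{x}}{2} \wb \newsp{a}{\pmap{P'}{2}\subst{a}{x} \Par \repl{}\binp{a}{y}\binp{y}{x}\pmap{Q}{2}}$ by a \emph{structural induction on $P'$}: the base case $P' = \appl{x}{m}$ is computed explicitly via the two-step handshake, and the parallel-composition case is closed using the fact that $\wb$ is a congruence. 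You instead propose exhibiting an explicit candidate relation containing all pairs of that shape and verifying it is a bisimulation. These are two presentations of the same idea; your relation-based version is slightly more uniform but will still need the congruence of $\wb$ (or an up-to-context argument) to close the relation under the contexts in which the substituted occurrences of $x$ sit, which is precisely what the paper's inductive parallel case supplies.

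One point to tighten: you justify the absorption of the bookkeeping steps by ``$\tau$-inertness, since the pending reductions are deterministic session synchronisations on the private trigger name.'' In the shared sub-case the first of the two handshake steps is a communication on the restricted \emph{shared} name $a$, which is neither a session transition nor a $\beta$-transition in the sense of \defref{def:dettrans}, so \propref{lem:tau_inert} does not literally apply to it. The paper handles this by computing the reduction as $\hby{\tau}\hby{\stau}$ and then arguing separately that the leftover replicated trigger under $\news{a}$ with $a$ fresh ``cannot interact anymore.'' You should either add that freshness argument explicitly or prove a small auxiliary lemma extending inertness to a $\tau$ on a restricted shared name with a unique emitter; as written, the appeal to $\tau$-inertness alone leaves a (small but real) gap.
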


\begin{proof}
	The proof is done by induction on the labelled transition system
	considering \defref{def:enc:HOp_to_p}.
	The most demaning cases are Part 1b and Part 2b where
	we require a further induction to proof bisimulation
	closure.

	Details of the proof of the most demanding cases can be
	found in \propref{app:prop:op_corr_HOp_to_p}
	(page \pageref{def:enc:HOp_to_p}).
	\qed
\end{proof}

\begin{proposition}[Full Abstraction, From \HOp to \sessp]\myrm
	\label{prop:fulla_HOp_to_p}
	Let $P_1, Q_1$ be \HOp processes.
	$\horel{\Gamma}{\Delta_1}{P_1}{\hwb}{\Delta_2}{Q_1}$
	if and only if
	$\horel{\tmap{\Gamma}{2}}{\tmap{\Delta_1}{2}}{\pmap{P_1}{2}}{\fwb}{\tmap{\Delta_2}{2}}{\pmap{Q_1}{2}}$.
\end{proposition}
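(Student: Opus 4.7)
The plan is to mirror the structure used for the full abstraction result of the encoding \HOp into \HO (Proposition~\ref{prop:fulla_HOp_to_HO}), adapting it to the peculiarities of replicated triggers and first-order name passing introduced by $\map{\cdot}^2$. Using Theorem~\ref{the:coincidence}, we may freely switch between $\hwb$ and $\fwb$, which is needed because the source relation is higher-order ($\hwb$) while the target relation is characteristic ($\fwb$).

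\textbf{Soundness direction} ($\hwb$ to $\fwb$). I would exhibit the candidate relation
\[
\Re_{\mathsf{s}} = \big\{\,(\tmap{\Gamma}{2};\tmap{\Delta_1}{2}\proves\map{P}^{2},\ \tmap{\Gamma}{2};\tmap{\Delta_2}{2}\proves\map{Q}^{2})\ \big|\ \Gamma;\Delta_1\proves P \hwb \Gamma;\Delta_2\proves Q\,\big\}
\]
closed under parallel composition with residual replicated/session triggers
$\repl{}\binp{a}{y}\binp{y}{x}\map{R}^{2}$ and $\binp{s}{y}\binp{y}{x}\map{R}^{2}$ arising from previous higher-order inputs, and show it is a characteristic bisimulation up to the deterministic closure of Lemma~\ref{lem:up_to_deterministic_transition}. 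Given a transition $\tmap{\Gamma}{2};\tmap{\Delta_1}{2}\proves\map{P}^{2}\hby{\ell_2}\tmap{\Delta'_1}{2}\proves R$, I would invoke the Soundness part (Part 2) of Proposition~\ref{prop:op_corr_HOp_to_p} to derive the source transition $\Gamma;\Delta_1\proves P\hby{\ell_1}\Delta'_1\proves P'$ with $\ell_2=\mapa{\ell_1}^{2}$ (possibly up to administrative $\btau/\stau$ steps from the trigger unfolding), then match it via $P\hwb Q$ to obtain $\Gamma;\Delta_2\proves Q\Hby{\ell_1}\Delta'_2\proves Q'$, and finally lift back using the Completeness part (Part 1) to produce the matching transition from $\map{Q}^{2}$. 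The administrative $\tau$-steps are absorbed using Proposition~\ref{lem:tau_inert}.

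\textbf{Completeness direction} ($\fwb$ to $\hwb$). Symmetrically, I would define
\[
\Re_{\mathsf{c}} = \big\{\,(\Gamma;\Delta_1\proves P,\ \Gamma;\Delta_2\proves Q)\ \big|\ \tmap{\Gamma}{2};\tmap{\Delta_1}{2}\proves\map{P}^{2}\fwb\tmap{\Gamma}{2};\tmap{\Delta_2}{2}\proves\map{Q}^{2}\,\big\}
\]
and prove it is a higher-order bisimulation. Given $P\hby{\ell_1}P'$, the Completeness part of Proposition~\ref{prop:op_corr_HOp_to_p} supplies a corresponding $\map{P}^{2}\hby{\ell_2}R$; by $\map{P}^{2}\fwb\map{Q}^{2}$ we obtain a matching weak transition from $\map{Q}^{2}$, and the Soundness part yields the required source transition $Q\Hby{\ell_1}Q'$ together with $\map{Q'}^{2}\wb R'$. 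To conclude $P'\ \Re_{\mathsf{c}}\ Q'$, I must verify that the residual replicated trigger $\repl{}\binp{a}{y}\binp{y}{x}\map{V}^{2}$ (resp.\ linear $\binp{s}{y}\binp{y}{x}\map{V}^{2}$) introduced on both sides is symmetric: since $a$ (resp.\ $s$) is fresh and co-occurs identically on both sides, the triggers contribute equivalent behaviour, and so $\fwb$ on the residuals reduces to $\fwb$ on the encodings of the actual continuations. This uses Lemma~\ref{lem:proc_subst} to transfer bisimilarity across the different possible values that may later be substituted through trigger activations.

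\textbf{Main obstacle.} The delicate step is the completeness direction for \emph{output} of a shared abstraction, where the encoding replaces $\bout{n}{\abs{x}Q}$ by $\news{a}\bout{n}{a}(\dots\Par\repl{}\binp{a}{y}\binp{y}{x}\map{Q}^{2})$. Every future application of the abstraction on the receiver side triggers a fresh unfolding of the replicated server, producing administrative $\btau/\stau$ steps that have no counterpart in the source. I would handle this by combining $\tau$-inertness (Proposition~\ref{lem:tau_inert}) with the up-to-deterministic-transition technique (Lemma~\ref{lem:up_to_deterministic_transition}) so that these book-keeping reductions are quotiented away, and by a separate auxiliary lemma showing that any \sessp process of the form $\news{a}(R\Par \repl{}\binp{a}{y}\binp{y}{x}\map{Q}^{2})$ is $\fwb$-equivalent to substituting the corresponding higher-order value directly, mirroring the argument used in Example-style reasoning of Sangiorgi's representability result but adapted to linearity constraints of session types.
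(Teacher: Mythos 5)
Your proposal is correct and follows essentially the same route as the paper: the paper's proof likewise derives the result directly from Proposition~\ref{prop:op_corr_HOp_to_p}, building a plain bisimulation closure for the direction from target equivalence to source equivalence and a bisimulation up to deterministic transitions (Lemma~\ref{lem:up_to_deterministic_transition}, with Proposition~\ref{lem:tau_inert} absorbing the administrative $\btau/\stau$ steps of the trigger unfoldings) for the converse. The only discrepancy is terminological — you label the source-to-target direction ``soundness'' where the paper calls it ``completeness'' and vice versa — and the auxiliary fact you propose about $\news{a}(R \Par \repl{}\binp{a}{y}\binp{y}{x}\map{Q}^{2})$ is already packaged inside the statement of the operational correspondence proposition.
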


\begin{proof}

%
	Proof follows directly from \propref{prop:op_corr_HOp_to_p}. The cases
	of \propref{prop:op_corr_HOp_to_p} are used to create a
	bisimulation closure to prove the the soundness direction and
	a bisimulation up to determinate transition (\lemref{lem:up_to_deterministic_transition})
	to prove the
	completeness direction.
	\qed
\end{proof}

\begin{proposition}[Precise encoding of \HOp into \sessp]\myrm
	\label{prop:prec:HOp_to_p}
	The encoding from $\tyl{L}_{\HOp}$ to $\tyl{L}_{\sessp}$
	is precise.
\end{proposition}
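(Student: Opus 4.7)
The plan is to establish the two clauses of \defref{def:goodenc}, namely that the encoding $\enco{\map{\cdot}^2, \mapt{\cdot}^2, \mapa{\cdot}^2}: \tyl{L}_{\HOp} \to \tyl{L}_{\sessp}$ is both syntax preserving (\defref{def:sep}) and semantic preserving (\defref{def:ep}). This follows the same recipe as \propref{prop:prec:HOp_to_HO}: the syntactic criteria are immediate from inspection of the clauses in \figref{fig:enc:HOp_to_p}, whereas the semantic criteria are already established as auxiliary propositions.

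First, I would verify the three syntactic criteria. Homomorphism with respect to parallel composition and compositionality with respect to restriction hold directly, since  \figref{fig:enc:HOp_to_p} defines $\map{P_1 \Par P_2}^2 = \map{P_1}^2 \Par \map{P_2}^2$ and $\map{\news{n}P}^2 = \news{n}\map{P}^2$ (these cases are homomorphic). Name invariance follows because the mapping on processes commutes with injective renamings of names: the auxiliary fresh names ($a$, $s$) introduced by the encoding of output and application can always be chosen distinct from the image of any renaming $\sigma$, so $\map{\sigma(P)}^2 = \sigma(\map{P}^2)$ up to $\alpha$-conversion, and the typing mapping $\mapt{\cdot}^2$ acts trivially on name identities.

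Next, I would assemble the semantic criteria from the previously proved results. Type preservation is exactly \propref{prop:typepres_HOp_to_p}. Subject preservation is read off from the label mapping $\mapa{\cdot}^2$ in \figref{fig:enc:HOp_to_p}: the output and input labels on name $n$ are mapped to output and input labels again on $n$ (the payload changes from an abstraction to a fresh trigger name, but the subject is unchanged), and the select/branch labels are preserved verbatim. Operational completeness and soundness follow from \propref{prop:op_corr_HOp_to_p}, where the clauses accounting for the extra $\tau$-steps introduced by the replicated/linear trigger are absorbed into the target bisimilarity via $\tau$-inertness (\propref{lem:tau_inert}). Finally, full abstraction is \propref{prop:fulla_HOp_to_p}.

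The only subtle point in this plan concerns name invariance: one must be careful that the generation of fresh trigger names $a$ and $s$ in the encoding of output and application is independent of the renaming applied to the source term. This is routine, since fresh names are chosen outside the support of $\sigma$ by $\alpha$-conversion, but it is the one clause where a brief argument beyond ``by inspection'' is needed. With these pieces in place, the proposition follows by the same assembly used for \propref{prop:prec:HOp_to_HO}.
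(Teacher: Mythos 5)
Your proposal matches the paper's proof: the paper likewise dispatches the syntactic criteria by inspection of the mappings in \figref{fig:enc:HOp_to_p} and assembles the semantic criteria from \propref{prop:typepres_HOp_to_p}, \propref{prop:op_corr_HOp_to_p}, and \propref{prop:fulla_HOp_to_p}. Your additional remarks on subject preservation and on choosing the fresh trigger names outside the support of $\sigma$ are correct elaborations of what the paper leaves implicit.
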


\begin{proof}
	Syntactic requirements are easily derivable from the
	definition of the mappings in \figref{fig:enc:HOp_to_p}.
	Semantic requirements are a consequence of
	\propref{prop:typepres_HOp_to_p}, \propref{prop:op_corr_HOp_to_p}, and \propref{prop:fulla_HOp_to_p}.
	\qed
\end{proof}

\section{Negative Encodability Results}
	\label{s:negative}
	\label{sec:negative}

%

As most session calculi, 
\HOp includes communication on both shared and linear channels.
The former enables non determinism and unrestricted behavior; the latter allows to represent
deterministic and linear communication structures.
The expressive power of shared names is also illustrated by our 
encoding from \HOp into \sessp (\defref{def:enc:HOp_to_p}).
Shared and linear channels are fundamentally different; still, to the best of our knowledge,
the status of shared communication, in terms of expressiveness, has not been formalized for session calculi.

The above begs the question: 
can we represent shared name interaction using session name interaction?
In this section we prove that shared names actually add expressiveness to \HOp,
for their behavior cannot be represented using purely deterministic processes.
To this end, we show the non existence of a minimal encoding 
(cf.~\defref{def:goodenc}(ii))
of shared name communication into linear 
communication. Recall that minimal encodings preserve barbs (\propref{p:barbpres}).

\begin{theorem}\myrm
	\label{thm:negative}
	Let $\CAL_1, \CAL_2 \in \set{\HOp, \HO, \sessp}$.
	There is no typed, minimal encoding from $\tyl{L}_{\CAL_1}$ into $\tyl{L}_{\CAL_2^{\minussh}}$
\end{theorem}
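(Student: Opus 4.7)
Suppose, toward a contradiction, that a minimal typed encoding
$\enco{\map{\cdot}, \mapt{\cdot}, \mapa{\cdot}}: \tyl{L}_{\CAL_1} \to \tyl{L}_{\CAL_2^{\minussh}}$
exists. The core idea is to exploit the mismatch between the genuine
non-determinism that shared-name interaction produces in $\CAL_1$ and the
$\tau$-inertness (Corollary~\ref{cor:tau_inert}) satisfied by every process of
$\CAL_2^{\minussh}$. Concretely, I will construct a source with two $\tau$-reducts
having incompatible weak barbs; the encoding's operational completeness together
with target $\tau$-inertness will force these two reducts to become bisimilar in
the target, while barb preservation together with name invariance will rule this out.

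For the source I take, in $\sessp$,
\[
P \;=\; \news{a}\bigl(\bout{a}{b}\inact \;\Par\; \binp{a}{x}\bout{c_1}{x}\inact \;\Par\; \binp{a}{y}\bout{c_2}{y}\inact\bigr),
\]
where $a$ is a shared name and $b$, $c_1$, $c_2$ are free names with types chosen
so that $P$ is well-typed. The shared output $\bout{a}{b}$ races against the two
shared inputs, yielding exactly two $\tau$-transitions $P \hby{\tau} P_i$
for $i \in \{1,2\}$. Because $a$ is restricted and the surviving shared input in
each reduct has no matching output, one obtains $P_1 \barb{c_1}$,
$P_1 \nBarb{c_2}$, and dually $P_2 \barb{c_2}$, $P_2 \nBarb{c_1}$. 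For
$\CAL_1 = \HO$, the same race is realised using shared-abstraction passing (two
distinct abstractions that, when applied, output on $c_1$ or $c_2$ respectively);
for $\CAL_1 = \HOp$ either variant works.

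Operational completeness (Definition~\ref{def:ep}(3a)) lifts each reduction:
$\map{P} \Hby{\tau} Q_i$ with $Q_i \WB_2 \map{P_i}$. Since the target lives in
$\CAL_2^{\minussh}$, Corollary~\ref{cor:tau_inert} gives
$Q_1 \WB_2 \map{P} \WB_2 Q_2$, and hence $\map{P_1} \WB_2 \map{P_2}$. By
Theorem~\ref{the:coincidence}, $\WB_2$ coincides with reduction-closed barbed
congruence, so bisimilar target processes enjoy the same weak barbs. Combined
with barb preservation (Proposition~\ref{p:barbpres}, appealing to the subject
preservation implicit in the label correspondence of any meaningful encoding)
applied to $P_2 \barb{c_2}$, this yields $\map{P_1} \Barb{c_2}$. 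The
contradiction comes from the syntactic criteria of Definition~\ref{def:sep}: by
homomorphism wrt parallel and compositionality wrt restriction,
$\map{P_1} = \news{a}\bigl(\map{\bout{c_1}{b}\inact} \Par \map{\binp{a}{y}\bout{c_2}{y}\inact}\bigr)$;
and by name invariance,
$\fn{\map{\bout{c_1}{b}\inact}} \subseteq \fn{\bout{c_1}{b}\inact} = \{c_1, b\}$,
so the first component has no action with subject $a$. The restriction $\news{a}$
blocks any external cooperation on $a$ as well. Operational completeness tells us
that the only transitions of $\map{\binp{a}{y}\bout{c_2}{y}\inact}$ progressing
toward exposing $c_2$ must mirror the unique source transition
$\hby{\bactinp{a}{m}}$, which now has no cooperating output. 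Hence
$\map{P_1} \nBarb{c_2}$, contradicting the above.

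The principal obstacle is the last step: turning name invariance, parallel
homomorphism, and compositionality wrt restriction into a rigorous
``input-inertness'' lemma asserting that a name-invariant, parallel-homomorphic
minimal encoding of an input-guarded subterm cannot reveal its guarded barbs
without some cooperating output on the guard's subject being available in the
surrounding context. Its proof requires a careful structural and operational
inspection of the transitions available to the encoding of each subcomponent in
isolation, using completeness and subject preservation to constrain which labels
each component can exhibit and the syntactic criteria to rule out spurious
synchronisations on bound shared names. Once this lemma is in hand, the same
source pattern and argument dispose uniformly of all nine combinations
$\CAL_1, \CAL_2 \in \{\HOp, \HO, \sessp\}$, with only the shape of the racing
values (names in $\sessp/\HOp$, shared abstractions in $\HO$) changing between
cases.
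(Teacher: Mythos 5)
Your overall strategy coincides with the paper's: a source process in which a shared output races two shared inputs, producing reducts $P_1,P_2$ with incompatible barbs; operational completeness to lift both reductions into the target; $\tau$-inertness of $\CAL_2^{\minussh}$ (Corollary~\ref{cor:tau_inert}) together with transitivity of $\WB$ to force the two lifted reducts to be mutually bisimilar; and a barb mismatch to close the contradiction. Up to the point where you derive $\map{P_1}\WB_2\map{P_2}$ and $\map{P_1}\Barb{c_2}$, your argument matches the paper's proof step for step (the paper uses selections $\bsel{n}{l_1}\inact$ and $\bsel{m}{l_2}\inact$ where you use outputs on $c_1,c_2$, which is immaterial).

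The genuine gap is your final step. To reach a contradiction you still need $\map{P_1}\nBarb{c_2}$, and you propose to obtain it from an ``input-inertness'' lemma assembled out of name invariance, parallel homomorphism and compositionality with respect to restriction --- a lemma you explicitly leave unproven and identify as the principal obstacle, so the proof as written is incomplete. The route you sketch is also shakier than you suggest: the syntactic criteria of Definition~\ref{def:sep} are stated as typability implications, not as equations between process terms, so they do not by themselves license the decomposition $\map{P_1}=\news{a}\bigl(\map{\bout{c_1}{b}\inact}\Par\map{\binp{a}{y}\bout{c_2}{y}\inact}\bigr)$ on which your structural analysis rests; and a minimal encoding need not preserve subjects, which your appeal to Proposition~\ref{p:barbpres} quietly assumes. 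The paper sidesteps all of this: it takes barb preservation to carry the source judgements $\Gamma;\Delta'\proves P_1\nbarb{m}$ and $\Gamma;\Delta'\proves P_2\nbarb{n}$ over to $\map{P_1}\nBarb{m}$ and $\map{P_2}\nBarb{n}$ directly, after which the contradiction with $S_1\WB S_2$ (which have the barbs of $\map{P_1}$ and $\map{P_2}$ respectively) is immediate, with no structural lemma about the shape of the encoding needed. Replacing your last paragraph with that observation completes the argument along the paper's lines.
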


\begin{proof}
	Assume, towards a contradiction, that such a typed encoding indeed exists. 
	Consider the $\sessp$ process
	\[
		P = \breq{a}{s} \inact \Par \bacc{a}{x} \bsel{n}{l_1} \inact \Par \bacc{a}{x} \bsel{m}{l_2} \inact \qquad \text{(with $n \neq m$)}
	\]
	\noi such that 
	$\Gamma; \es; \Delta \proves P \hastype \Proc$.
	From process $P$ we have: 
	\begin{eqnarray}
		& & \horel{\Gamma}{\Delta}{P}{\hby{\tau}}{\Delta'}{\bsel{n}{l_1} \inact \Par \bacc{a}{x} \bsel{m}{l_2} \inact = P_1} \label{eq:nn3} \\
		& & \horel{\Gamma}{\Delta}{P}{\hby{\tau}}{\Delta'}{\bsel{m}{l_2} \inact \Par \bacc{a}{x} \bsel{n}{l_1} \inact = P_2} \label{eq:nn4}
	\end{eqnarray}
	Thus, by definition of typed barb we  have:
	\begin{eqnarray}
		\Gamma; \Delta' \proves P_1 \barb{n} & \land & 
		\Gamma; \Delta' \proves P_1 \nbarb{m} \label{eq:nn1} \\
		\Gamma; \Delta' \proves P_2 \barb{m} & \land & 
		\Gamma; \Delta' \proves P_2 \nbarb{n} \label{eq:nn2}
	\end{eqnarray}
	Consider now the $\HOp^{\minussh}$ process $\map{P}$.
	By our assumption of operational completeness 
	(\defref{def:ep}-2(a)), 
	from \eqref{eq:nn3} with \eqref{eq:nn4}
	we infer that
	there exist $\HOp^{\minussh}$ processes $S_1$ and $S_2$ such that:
	\begin{eqnarray}
		& & \horel{\mapt{\Gamma}}{\mapt{\Delta}}{\map{P}}{\Hby{\stau}}{\mapt{\Delta'}}{S_1 \WB \map{P_1}} \label{eq:n1} \\
		& & \horel{\mapt{\Gamma}}{\mapt{\Delta}}{\map{P}}{\Hby{\stau}}{\mapt{\Delta'}}{S_2 \WB \map{P_2}} \label{eq:n2}
	\end{eqnarray}
	By our assumption of barb preservation, 
	from \eqref{eq:nn1} with \eqref{eq:nn2}
	we infer: 
	\begin{eqnarray}
		\mapt{\Gamma}; \mapt{\Delta'} \proves \map{P_1} \Barb{n} & \land & 
		\mapt{\Gamma}; \mapt{\Delta'} \proves \map{P_1} \nBarb{m} \label{eq:n3} \\
		\mapt{\Gamma}; \mapt{\Delta'} \proves \map{P_2} \Barb{m} & \land & 
		\mapt{\Gamma}; \mapt{\Delta'} \proves \map{P_2} \nBarb{n} \label{eq:n4}
	\end{eqnarray}
	By definition of $\WB$, 
	by combining~\eqref{eq:n1} with~\eqref{eq:n3}
	and~\eqref{eq:n2} with~\eqref{eq:n4}, we infer barbs for $S_1$ and $S_2$:
	\begin{eqnarray}
		\mapt{\Gamma}; \mapt{\Delta'} \proves S_1 \Barb{n} & \land & 
		\mapt{\Gamma}; \mapt{\Delta'} \proves S_1 \nBarb{m} \label{eq:n5} \\
		\mapt{\Gamma}; \mapt{\Delta'} \proves S_2 \Barb{m} & \land & 
		\mapt{\Gamma}; \mapt{\Delta'} \proves S_2 \nBarb{n} \label{eq:n6}
	\end{eqnarray}
	That is, $S_1$ and $\map{P_1}$ 
	(resp. $S_2$ and $\map{P_2}$)
	have the same barbs.
	Now, by $\tau$-inertness (\propref{lem:tau_inert}), we have both 
	\begin{eqnarray}
		& & \horel{\mapt{\Gamma}}{\mapt{\Delta}}{S_1}{\WB}{\mapt{\Delta'}}{\map{P}} \label{eq:n7} \\
		& & \horel{\mapt{\Gamma}}{\mapt{\Delta}}{S_2}{\WB}{\mapt{\Delta'}}{\map{P}} \label{eq:n8}
	\end{eqnarray}
	Combining~\eqref{eq:n7} with~\eqref{eq:n8}, by transitivity of $\WB$,
	we have 
	\begin{equation}
		\horel{\mapt{\Gamma}}{\mapt{\Delta'}}{S_1}{\WB}{\mapt{\Delta'}}{S_2} \label{eq:n9}
	\end{equation}
	In turn, from~\eqref{eq:n9}
	we infer that 
	it must be the case that:
	\begin{eqnarray*}
		\mapt{\Gamma}; \mapt{\Delta'} \proves \map{P_1} \Barb{n} & \land & 
		\mapt{\Gamma}; \mapt{\Delta'} \proves \map{P_1} \Barb{m} \label{eq:n10} \\
		\mapt{\Gamma}; \mapt{\Delta'} \proves \map{P_2} \Barb{m} & \land & 
		\mapt{\Gamma}; \mapt{\Delta'} \proves \map{P_2} \Barb{n} \label{eq:n11}
	\end{eqnarray*}
	which clearly contradict \eqref{eq:n3} and \eqref{eq:n4} above.
	\qed
\end{proof}


\section{Extensions of \HOp}
\label{sec:extension}

This section studies (i) the extension of \HOp with higher-order applications/abstractions (denoted \HOpp), and (ii) the extension of \HOp with polyadicity (denoted \pHOp). In both cases, we detail required modifications in the syntax and types, and describe further encodability results.

\subsection{Encoding \HOpp into \HOp}

The \HOp calculus is purposefully minimal and allows
only name applications/abstractions (also referred to as \emph{first-order} applications/abstractions).
We now introduce \HOpp, the 
extension of \HOp with higher-order applications.
We show that \HOpp has a precise encoding
into \HOp (\propref{prop:prec:HOpp_to_HOp}). 
Therefore, since 
typed encodings are composable (\propref{prop:enc_composability}), 
\HOpp has a precise encoding to \HO and \sessp.
In turn, this latter result implies that \HO is powerful
enough to express full higher-order semantics.

\subsubsection{Modifications in Syntax, Reduction Semantics, and Types.}
The syntax of \HOpp processes is obtained from the
syntax for processes given in \figref{fig:syntax} by replacing 
$\appl{V}{u}$ with $\appl{W}{V}$.
Reduction is then defined by 
the rules in \figref{fig:reduction}, excepting 
rule $\orule{App}$, which is replaced by 
the following rule 
$$
\orule{App$^+$} \qquad
\appl{(\abs{x}{P})}{V} \red P \subst{V}{x}
$$
The syntax of types in \figref{def:types}
is generalized by including $$L \bnfis \shot{U} \bnfbar \lhot{U}$$
instead of $L \bnfis \shot{C} \bnfbar \lhot{C}$.
Definitions of type equivalence/duality 
and typing environments ($\Gamma$ and $\Lambda$) are straightforward extensions of 
\defref{def:type_equiv}, \defref{def:type_dual},
and  \defref{def:typeenv}, respectively. 
The typing rules of \figref{fig:typerulesmy} are then modified accordingly:
most significant changes are required in rules $\trule{Abs}$ and $\trule{App}$ 
(for typing abstractions and applications, respectively), which  for \HOpp processes are modified as follows:
\[
	\begin{array}{c}
		\trule{Abs$^+$}~~\tree{
			\Gamma; \Lambda; \Delta_1 \proves P \hastype \Proc
			\quad
			\Gamma; \es; \Delta_2 \proves x \hastype U
		}{
			\Gamma; \Lambda; \Delta_1 \backslash \Delta_2 \proves \abs{x}{P} \hastype \lhot{U}
		}
		\\[6mm]

		\trule{App$^+$}~~\tree{
			\begin{array}{c}
				U = \lhot{U'} \lor \shot{U'}
				\quad
				\Gamma; \Lambda; \Delta_1 \proves V \hastype U
				\quad
				\Gamma; \es; \Delta_2 \proves W \hastype U'
			\end{array}
		}{
			\Gamma; \Lambda; \Delta_1 \cat \Delta_2 \proves \appl{V}{W} \hastype \Proc
		} 
	\end{array}
\]


With these modifications we can now state the extension of \thmref{thm:sr}:

\begin{theorem}[Type Soundness for \HOpp]\myrm
	\label{thm:sr_hopp}
	\begin{enumerate}[1.]
		\item	(Subject Congruence)
			$\Gamma; \es; \Delta \proves P \hastype \Proc$
			and
			$P \scong P'$
			implies
			$\Gamma; \es; \Delta \proves P' \hastype \Proc$.

		\item	(Subject Reduction)
			$\Gamma; \es; \Delta \proves P \hastype \Proc$
			with
			balanced $\Delta$
			and
			$P \red P'$
			implies $\Gamma; \es; \Delta'  \proves P' \hastype \Proc$
			and either (i)~$\Delta = \Delta'$ or (ii)~$\Delta \red \Delta'$
			with $\Delta'$ balanced.
	\end{enumerate}
\end{theorem}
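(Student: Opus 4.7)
The plan is to mimic the structure of \thmref{thm:sr}, which handles the same two statements for $\HOp$, adapting the inductive steps to the generalised application/abstraction rules $\trule{App^+}$ and $\trule{Abs^+}$. Concretely, I would first establish an extended substitution lemma analogous to \lemref{lem:subst}, lifting the higher-order cases from arguments of first-order type $C$ to arguments of arbitrary value type $U$: i.e., show that if $\Gamma; \Lambda_1 \cat x{:}\lhot{U}; \Delta_1 \proves P \hastype \Proc$ and $\Gamma; \Lambda_2; \Delta_2 \proves V \hastype \lhot{U}$ with $\Lambda_1 \cat \Lambda_2$ and $\Delta_1 \cat \Delta_2$ defined, then $\Gamma; \Lambda_1 \cat \Lambda_2; \Delta_1 \cat \Delta_2 \proves P\subst{V}{x} \hastype \Proc$, and similarly for $x{:}\shot{U}$ with $V$ shared. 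The proof is by induction on the typing derivation of $P$, and the only delicate case is when $P = \appl{W}{W'}$ uses $\trule{App^+}$: one splits on whether $x$ appears in $W$ or in $W'$ and reconstructs the application with the substituted value, using the new premise $\Gamma;\es;\Delta' \proves W' \hastype U'$ of $\trule{App^+}$ in place of the old premise $\Gamma;\es;\Delta' \proves u \hastype C$.

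For Part 1 (Subject Congruence) I would proceed exactly as in the proof of \thmref{thm:sr}: induction on the derivation of $P \scong P'$, with one case per structural rule. None of the structural rules inspect the shape of values, so the argument transfers verbatim; the only novelty is that the abstraction congruence rules now carry values typable under $\lhot{U}$ or $\shot{U}$ rather than $\lhot{C}/\shot{C}$, which is handled uniformly by the typing rules $\trule{Abs^+}$, $\trule{Prom}$, and $\trule{EProm}$ (the last two do not need modification, since they are stated over value types $U$ already).

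For Part 2 (Subject Reduction) I would proceed by induction on the derivation of $P \red P'$, with a case analysis on the last reduction rule used. The cases $\orule{Pass}$, $\orule{Sel}$, $\orule{Sess}$, $\orule{Par}$ and $\orule{Cong}$ are essentially those of \thmref{thm:sr}: $\orule{Pass}$ uses the extended substitution lemma with a received value of type $U$ (rather than $C$), while $\orule{Sel}$ affects only the selection type $\btsel{l_i:S_i}_{i\in I}$ and hence produces a well-typed derivative with $\Delta \red \Delta'$ balanced. The $\orule{Sess}$ and $\orule{Par}$ cases invoke the inductive hypothesis and reapply the corresponding typing rule; $\orule{Cong}$ is immediate from Part 1. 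The key new case is $\orule{App^+}$: given $\appl{(\abs{x}{P})}{V} \red P\subst{V}{x}$, inversion on $\trule{App^+}$ and $\trule{Abs^+}$ yields $\Gamma;\Lambda;\Delta_1 \proves P \hastype \Proc$ with $x$ of type $U$ and $\Gamma;\es;\Delta_2 \proves V \hastype U$ (or $V$ shared under $\trule{Prom}$), and one then applies the extended substitution lemma to conclude $\Gamma;\Lambda;\Delta_1 \cat \Delta_2 \proves P\subst{V}{x} \hastype \Proc$ with $\Delta' = \Delta$ (no session environment reduction).

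The main obstacle is precisely the substitution lemma at higher types: unlike in $\HOp$, the substituted value $V$ of type $\lhot{U}$ may itself contain further abstractions/applications of arbitrarily high order, so the induction has to commute substitution with the nested $\trule{App^+}$ and $\trule{Abs^+}$ premises while tracking linearity splits $\Lambda_1 \cat \Lambda_2$ and $\Delta_1 \cat \Delta_2$. Once this is in place, the remainder of the proof is a routine adaptation of \thmref{thm:sr}. I would defer the detailed derivations to an appendix paralleling \appref{app:ts}, and in the main text only highlight the $\orule{App^+}$ case to emphasise the role of the extended substitution lemma.
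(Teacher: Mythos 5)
Your proposal matches the paper's proof: the paper likewise reduces Part~1 to the \HOp argument, handles Part~2 by the same case analysis on the last reduction rule, and isolates \orule{App$^+$} as the only genuinely new case, discharged via "the expected generalization of parts (3) and (4) of the substitution lemma (\lemref{lem:subst})" — exactly the extended substitution lemma at types $\lhot{U}/\shot{U}$ that you formulate. The only difference is one of emphasis: you spell out the induction for the extended substitution lemma (including the $\appl{W}{W'}$ case), whereas the paper treats it as routine and only exhibits the typing derivation for the \orule{App$^+$} redex.
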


\begin{proof}
Part (1) is as for \HOp processes.
Part (2) 
is also as before, but 
requires the expected generalization of parts (3) and (4) of the substitution lemma (\lemref{lem:subst}).
We describe the analysis when the reduction is inferred by rule \orule{App$^+$}. We have
	   $$
	   P = (\abs{x}{Q}) \, V   \red  Q \subst{V}{x} = P'
	   $$
	   Suppose $\Gamma;\, \emptyset ;\, \Delta \proves (\abs{x}{Q}) \, V \hastype \Proc$. 
	   We examine one possible way in which 
	   this assumption can be derived; other cases are similar or simpler:
	   \[
	   \tree{
	   \tree{\Gamma, x:\lhot{L_1};\, \emptyset ;\, \Delta \proves Q  \hastype \Proc \quad 
	   \Gamma, x:\lhot{L_1};\, \emptyset ;\, \es \proves x  \hastype \lhot{L_1}}
	   {
	   \Gamma;\, \emptyset ;\, \Delta \proves \abs{x}{Q}  \hastype \shot{(\shot{L_1})} }
	   \qquad
	   \tree{}{
	   \Gamma ;\, \es ;\, \emptyset \proves   V \hastype \shot{L_1}}
	   }{
	   \Gamma;\, \emptyset ;\, \Delta  \proves (\abs{x}{Q}) \, V \hastype \Proc
	   }
	   \]
	  Then, by combining premise
	   $\Gamma, x:\lhot{L_1};\, \emptyset ;\, \Delta \proves Q  \hastype \Proc$
	   with
	   the extended formulation of \lemref{lem:subst}(4),
	   we obtain 
	    $\Gamma;\, \emptyset ;\, \Delta   \proves Q\subst{V}{x}  \hastype \Proc$, as desired.
	\qed
\end{proof}

\noi 
As for the behavioural semantics of \HOpp, modifications are as expected.
The set of action labels remains the same.
In the untyped LTS, rule $\ltsrule{App}$
is replaced with rule
$\appl{\abs{x}{P}}{V} \by{\tau} P \subst{V}{x}$.
\defref{def:characteristic_process} (characteristic processes)
now includes
\begin{eqnarray*}
\mapchar{\shot{U}}{x} & \defeq & \mapchar{\lhot{U}}{x} \defeq \appl{x}{\omapchar{U}} \\
\omapchar{\shot{U}} & \defeq & \omapchar{\lhot{U}} \defeq \abs{x}{\mapchar{U}{x}}
\end{eqnarray*}
instead of $\mapchar{\shot{C}}{x} \defeq \mapchar{\lhot{C}}{x} \defeq \appl{x}{\omapchar{C}}$
and
$\omapchar{\shot{C}} \defeq \omapchar{\lhot{C}} \defeq \abs{x}{\mapchar{C}{x}}$, respectively.
\noi The rest of the definitions for the behavioural semantics is kept unchanged. 

\subsubsection{Encoding \HOpp into \HOp.}

We now present an encoding from \HOpp to \HOp.
\begin{definition}[Encoding from \HOpp to \HOp]\myrm
	\label{def:enc:HOpp_to_HOp}
	Let $\tyl{L}_{\HOpp}=\calc{\HOpp}{{\cal{T}}_4}{\hby{\ell}}{\wb}{\proves}$
where 
${\cal{T}}_4$ is a set of types of $\HOpp$;  
the typing $\proves$ is defined in 
\figref{fig:typerulesmy} with extended rules \trule{Abs} and \trule{App}. 
Then, mapping $\enco{\pmap{\cdot}{3}, \tmap{\cdot}{3}, \mapa{\cdot}^{3}}: \tyl{L}_\HOpp \to \tyl{L}_\HOp$
	is defined in \figref{fig:enc:HOpp_to_HOp}.
\end{definition}
\begin{figure}[t]
	\[
	\begin{array}{rcl}
		\pmap{\appl{x}{(\abs{y} P})}{3} &\defeq& \newsp{s}{\appl{x}{s} \Par \bout{\dual{s}}{\abs{y} \pmap{P}{3}} \inact}
		\\
		\pmap{\appl{(\abs{x} P)}{(\abs{y} Q)}}{3} &\defeq& \newsp{s}{\binp{s}{x}\pmap{P}{3} \Par \bout{\dual{s}}{\abs{y} \pmap{Q}{3}} \inact}
		\\
		\pmap{\bout{u}{\abs{\underline{x}}{Q}} P}{3} &\defeq& \bout{u}{\abs{z}{\binp{z}{\underline{x}} \pmap{Q}{3}}} \pmap{P}{3}
		\\
		\pmap{\bout{u}{\abs{k}{Q}} P}{3} &\defeq& \bout{u}{\abs{k}{\pmap{Q}{3}}} \pmap{P}{3}
		\\
		\\
		\tmap{\shot{L}}{3} &\defeq& \shot{\big(\btinp{\tmap{L}{3}} \tinact\big)}
		\\
		\tmap{\lhot{L}}{3} &\defeq& \lhot{\big(\btinp{\tmap{L}{3}} \tinact\big)}
		\\
		\tmap{\btout{\shot{L}} S}{3} &\defeq& \btout{\tmap{\shot{L}}{3}} \tmap{S}{3}
		\\
		\tmap{\btout{\lhot{L}} S}{3} &\defeq& \btout{\tmap{\lhot{L}}{3}} \tmap{S}{3}
		\\
		\tmap{\btinp{\shot{L}} S}{3} &\defeq& \btinp{\tmap{\shot{L}}{3}} \tmap{S}{3}
		\\
		\tmap{\btinp{\lhot{L}} S}{3} &\defeq& \btinp{\tmap{\lhot{L}}{3}} \tmap{S}{3}
		\\
		\\
		\mapa{\news{\tilde{m}} \bactout{n}{\abs{k}{P}}}^{3} &\defeq& \news{\tilde{m}} \bactout{n}{\abs{x}{\pmap{P}{3}}}
		\\
		\mapa{\bactinp{n}{\abs{k}{P}}}^{3} &\defeq& \bactinp{n}{\abs{x}{\pmap{P}{3}}}
		\\
		\mapa{\news{\tilde{m}} \bactout{n}{\abs{\underline{x}}{P}}}^{3} &\defeq& \news{\tilde{m}} \bactout{n}{\abs{z}{\binp{z}{x} \pmap{P}{3}}}
		\\
		\mapa{\bactinp{n}{\abs{\underline{x}}{P}}}^{3} &\defeq& \bactinp{n}{\abs{z}{\binp{z}{x} \pmap{P}{3}}}

	\end{array}
	\]
	\caption{Encoding of \HOpp into \HOp (cf.~\defref{def:enc:HOpp_to_HOp}).
	We assume that the rest of the encoding is homomorphic on the syntax of
	processes, types and labels, respectively. \label{fig:enc:HOpp_to_HOp}}
\end{figure}

\begin{proposition}[Type Preservation. From \HOpp to \HOp]\myrm
	\label{prop:typepres_HOpp_to_HOp}
	Let $P$ be a \HOpp process.
	If $\Gamma; \emptyset; \Delta \proves P \hastype \Proc$ then 
	$\tmap{\Gamma}{3}; \emptyset; \tmap{\Delta}{3} \proves \pmap{P}{3} \hastype \Proc$. 
\end{proposition}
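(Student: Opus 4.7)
The plan is to proceed by induction on the derivation of $\Gamma; \emptyset; \Delta \proves P \hastype \Proc$, with a case analysis on the last typing rule applied. Most cases are routine since the encoding is homomorphic: for constructs like parallel composition, restriction, selection/branching, inaction, recursion, and first-order (name) input/output, the typing premises translate directly through $\tmap{\cdot}{3}$ on environments, and one appeals to the induction hypothesis. The non-trivial cases are exactly those on which $\pmap{\cdot}{3}$ is not homomorphic, namely the higher-order applications $\appl{x}{(\abs{y}P)}$ and $\appl{(\abs{x}P)}{(\abs{y}Q)}$, and the higher-order output $\bout{u}{\abs{\underline{x}}{Q}}{P}$; the output $\bout{u}{\abs{k}{Q}}{P}$ (where $k$ is a name variable) falls under the routine cases since its encoding only pushes $\pmap{\cdot}{3}$ under the abstraction and the type translation acts directly on $C$.

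First I would handle the output of a higher-order abstraction, $\bout{u}{\abs{\underline{x}}{Q}}{P}$. By inversion (rule \trule{Send}) we obtain, for the value judgement, $\Gamma; \Lambda_2; \Delta_2 \proves \abs{\underline{x}}{Q} \hastype L$ with $L$ either $\shot{L'}$ or $\lhot{L'}$. Applying the induction hypothesis to $Q$ (which now has a higher-order variable $\underline{x}:L'$ in context) and using $\tmap{L}{3} = \lhot{(\btinp{\tmap{L'}{3}}\tinact)}$ (resp.\ shared version), I construct a derivation of $\tmap{\Gamma}{3}; \tmap{\Lambda_2}{3}; \tmap{\Delta_2}{3} \proves \abs{z}{\binp{z}{\underline{x}}\pmap{Q}{3}} \hastype \tmap{L}{3}$ by using \trule{Abs} to bind the fresh session name $z$ of type $\btinp{\tmap{L'}{3}}\tinact$, then \trule{Rcv} to consume this prefix while binding $\underline{x}:\tmap{L'}{3}$. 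The continuation $P$ is handled by the induction hypothesis, and \trule{Send} reassembles the encoded process at the encoded send type.

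Next I would treat the applications. For $\appl{x}{(\abs{y}P)}$, inversion of \trule{App${}^+$} gives $\Gamma; \Lambda; \Delta_1 \proves x \hastype \lhot{L}$ (or shared) and $\Gamma; \es; \Delta_2 \proves \abs{y}{P} \hastype L$. I show that the encoding $\newsp{s}{\appl{x}{s} \Par \bout{\dual{s}}{\abs{y}\pmap{P}{3}}\inact}$ is well-typed at $\Proc$ under $\tmap{\Delta_1 \cat \Delta_2}{3}$: the session name $s$ is typed with $\btinp{\tmap{L}{3}}\tinact$ and $\dual{s}$ with its dual; the left branch applies $x:\tmap{\lhot{L}}{3} = \lhot{(\btinp{\tmap{L}{3}}\tinact)}$ to $s$ using \trule{App}; the right branch sends the encoded abstraction (typed via \trule{Abs} and IH on $P$) along $\dual{s}$ using \trule{Send}; composing by \trule{Par} and closing by \trule{ResS} gives the result. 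The case $\appl{(\abs{x}P)}{(\abs{y}Q)}$ is analogous: the left side now begins with a binding input on $s$ typed by \trule{Rcv}, and the continuation is typed by the induction hypothesis on $P$ (which expects a variable $x$ of type $\tmap{L}{3}$).

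The main obstacle will be bookkeeping around environment splits and linear/shared promotion: I must ensure that the freshly introduced session name $s$ (and its dual) are correctly accounted for so that the composed encoding lies in $\tmap{\Delta_1 \cdot \Delta_2}{3}$, that linear versus shared abstraction types are preserved by the image under $\tmap{\cdot}{3}$ (so that \trule{Prom}/\trule{EProm} can be invoked when the source used them), and that the duality $\btinp{\tmap{L}{3}}\tinact \dualof \btout{\tmap{L}{3}}\tinact$ needed by \trule{ResS} holds by construction. With these bookkeeping obligations checked, the remaining cases assemble mechanically, completing the induction.
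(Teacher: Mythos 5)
Your proposal is correct and follows essentially the same route as the paper's proof: induction on the typing derivation, with the non-homomorphic cases (higher-order output $\bout{u}{\abs{\underline{x}}{Q}}P$ and the applications) handled by typing the fresh session name $s$ (resp.\ the bound $z$) at $\btinp{\tmap{L}{3}}\tinact$ and its dual, and reassembling with \trule{Abs}, \trule{Rcv}, \trule{Send}, \trule{Par}, and \trule{ResS}. The bookkeeping obligations you flag (environment splits, linear/shared promotion, duality of the fresh endpoints) are exactly the points the paper's derivations check, so no gap remains.
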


\begin{proof}
	The proof is a mechanical induction on the structure of $P$.
	Details of the proof in \propref{app:prop:typepres_HOpp_to_HOp}
	(page~\pageref{app:prop:typepres_HOpp_to_HOp}).
	\qed
\end{proof}

\begin{proposition}[Operational Correspondence. From \HOpp to \HOp]\myrm
	\label{prop:op_corr_HOpp_to_HOp}
	\begin{enumerate}
		\item	Let $\Gamma; \es; \Delta \proves P$.
			$\horel{\Gamma}{\Delta}{P}{\hby{\ell}}{\Delta'}{P'}$ implies
			\begin{enumerate}[a)]
				\item	If $\ell \in \set{\news{\tilde{m}} \bactout{n}{\abs{x}{Q}}, \bactinp{n}{\abs{x}{Q}}}$ then
					$\horel{\tmap{\Gamma}{3}}{\tmap{\Delta}{3}}{\pmap{P}{3}}{\hby{\ell'}}
					{\tmap{\Delta'}{3}}{\pmap{P'}{3}}$ with $\mapa{\ell}^{3} = \ell'$.

%
%

				\item	If $\ell \notin \set{\news{\tilde{m}} \bactout{n}{\abs{x}{Q}}, \bactinp{n}{\abs{x}{Q}}, \tau}$ then
					$\horel{\tmap{\Gamma}{3}}{\tmap{\Delta}{3}}{\pmap{P}{3}}{\hby{\ell}}
					{\tmap{\Delta'}{3}}{\pmap{P'}{3}}$.

				\item	If $\ell = \btau$ then
					$\horel{\tmap{\Gamma}{3}}{\tmap{\Delta}{3}}{\pmap{P}{3}}{\hby{\tau}}
					{\Delta''}{R}$ and
					${\tmap{\Gamma}{3}}{\tmap{\Delta'}{3}}{\pmap{P'}{3}}{\wb}{\Delta''}{R}$.

				\item	If $\ell = \tau$ and $\ell \not= \btau$ then 
					$\horel{\tmap{\Gamma}{3}}{\tmap{\Delta}{3}}{\pmap{P}{3}}{\hby{\tau}}
					{\tmap{\Delta'}{3}}{\pmap{P'}{3}}$.
			\end{enumerate}

		\item	Let $\Gamma; \es; \Delta \proves P$.
			$\horel{\tmap{\Gamma}{3}}{\tmap{\Delta}{3}}{\pmap{P}{3}}{\hby{\ell}}
			{\tmap{\Delta''}{3}}{Q}$ implies
			\begin{enumerate}[a)]
				\item	If $\ell \in \set{\news{\tilde{m}} \bactout{n}{\abs{x}{Q}}, \bactinp{n}{\abs{x}{Q}}, \tau}$
					then
					$\horel{\Gamma}{\Delta}{P}{\hby{\ell'}}{\Delta'}{P'}$
					with $\mapa{\ell'}^{3} = \ell$ and $Q \scong \pmap{P'}{3}$.

				\item	If $\ell \notin \set{\news{\tilde{m}} \bactout{n}{\abs{x}{R}}, \bactinp{n}{\abs{x}{R}}, \tau}$
					then
					$\horel{\Gamma}{\Delta}{P}{\hby{\ell}}{\Delta'}{P'}$ and $Q \scong \pmap{P'}{3}$.

				\item	If $\ell = \tau$ then
					either
					$\horel{\Gamma}{\Delta}{\Delta}{\hby{\tau}}{\Delta'}{P'}$ with $Q \scong \pmap{P'}{3}$\\
					or
					$\horel{\Gamma}{\Delta}{\Delta}{\hby{\btau}}{\Delta'}{P'}$ and
					$\horel{\tmap{\Gamma}{3}}{\tmap{\Delta''}{3}}{Q}{\hby{\btau}}
					{\tmap{\Delta''}{3}}{\pmap{P'}{3}}$.
			\end{enumerate}
	\end{enumerate}
\end{proposition}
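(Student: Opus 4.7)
The plan is to prove both parts by induction on the derivation of the labelled transition. For completeness (Part 1), I would analyze each transition rule applicable to $P$ and exhibit a matching transition for $\pmap{P}{3}$; for soundness (Part 2), I would symmetrically invert transitions of $\pmap{P}{3}$ to identify their source. The inductive cases for parallel composition, restriction, alpha-renaming and recursion will follow mechanically from the inductive hypothesis together with the fact that $\pmap{\cdot}{3}$ is homomorphic on these constructs; the bulk of the work is in the base cases covering output, input, application, selection and branching.

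Before attacking the transition cases, I would establish a \emph{substitution lemma} relating source-level and target-level substitutions. Specifically, I would show by induction on $P$ that (i) $\pmap{P\subst{n}{x}}{3} = \pmap{P}{3}\subst{n}{x}$ for name substitutions, (ii) $\pmap{P\subst{\abs{k}Q}{x}}{3} = \pmap{P}{3}\subst{\abs{k}\pmap{Q}{3}}{x}$ for name-abstraction substitutions, and crucially (iii) $\pmap{P\subst{\abs{\underline{y}}Q}{x}}{3} \wb \pmap{P}{3}\subst{\abs{z}{\binp{z}{\underline{y}}\pmap{Q}{3}}}{x}$ for higher-order substitutions, where equality holds up to $\beta$-reductions and session synchronizations on private names. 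This lemma feeds directly into the output/input and $\beta$-transition cases of Part~1.

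For the characteristic cases, consider (1c): a source $\btau$ arising from $\appl{(\abs{x}P_0)}{(\abs{y}Q)} \red P_0\subst{\abs{y}Q}{x}$ is simulated in the encoding by the session synchronization in $\newsp{s}{\binp{s}{x}\pmap{P_0}{3} \Par \bout{\dual{s}}{\abs{y}\pmap{Q}{3}}\inact}$, whose residual is $\pmap{P_0}{3}\subst{\abs{y}\pmap{Q}{3}}{x}$; by the substitution lemma this is bisimilar to $\pmap{P_0\subst{\abs{y}Q}{x}}{3}$, giving the required $R$. The output case of a higher-order abstraction $\bactout{n}{\abs{\underline{x}}Q}$ exhibits $\bactout{n}{\abs{z}{\binp{z}{\underline{x}}\pmap{Q}{3}}}$ exactly as dictated by $\mapa{\cdot}^3$, while the input case additionally appeals to the substitution lemma to equate the substituted residuals. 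For Part~2, the key observation is that every $\tau$-transition of $\pmap{P}{3}$ is either a homomorphic simulation of a source $\tau$ or a deterministic session synchronization on a fresh $s$ introduced by the encoding of an application; in the latter case, inverting one such synchronization identifies the originating source $\btau$, and $\tau$-inertness (\propref{lem:tau_inert}) ensures the remaining target $\btau$ preserves bisimilarity.

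The main obstacle is case~(iii) of the substitution lemma: when $x$ occurs in application position inside $P$, the encoding produces $\newsp{s}{\appl{x}{s}\Par\bout{\dual{s}}{\abs{y'}\pmap{V}{3}}\inact}$, and substituting the packed abstraction $\abs{z}{\binp{z}{\underline{y}}\pmap{Q}{3}}$ for $x$ introduces an extra $\beta$-step followed by a session synchronization on the private $s$ before the correct residual is reached. Discharging this discrepancy requires a careful bisimulation argument that exploits determinism of these additional $\dtau$-transitions (\defref{def:dettrans}), absorbing them via $\tau$-inertness. Once this lemma is in place, the remaining cases are routine applications of the inductive hypothesis and inspection of the encoding clauses in \figref{fig:enc:HOpp_to_HOp}.
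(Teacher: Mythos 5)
Your proof follows the same overall route as the paper's: a transition induction whose only substantive cases are the encoding clauses for application and for higher-order output/input in \figref{fig:enc:HOpp_to_HOp}, with the administrative $\beta$-steps and session synchronizations on the fresh $s$ absorbed by determinacy and $\tau$-inertness (\propref{lem:tau_inert}, \lemref{lem:up_to_deterministic_transition}). The genuine difference is that you factor out an explicit substitution lemma, whereas the paper leaves the commutation of $\pmap{\cdot}{3}$ with substitution entirely implicit and only exhibits the two application/output cases. Your case~(iii) is the right thing to isolate, and it is in fact slightly more honest than the statement being proved: when $x$ occurs in application position, $\pmap{P_0}{3}\subst{\abs{z}{\binp{z}{\underline{y}}\pmap{Q}{3}}}{x}$ and $\pmap{P_0\subst{\abs{\underline{y}}Q}{x}}{3}$ differ by a pending $\btau$ (the packed abstraction applied to the fresh endpoint), so the residual in clauses 1(a) and 2(a) is only $\pmap{P'}{3}$ up to deterministic transitions rather than syntactically --- exactly the discrepancy your bisimulation-up-to-$\dtau$ argument discharges. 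So your decomposition buys a cleaner and more rigorous treatment of the one subtle point; the paper's version buys brevity at the cost of glossing over it. No gap.
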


\begin{proof}
	The proof is an induction on the labelled transition system.
	The most interesting cases can be found in
	\propref{app:prop:op_corr_HOpp_to_HOp} (page~\pageref{app:prop:op_corr_HOpp_to_HOp}).
	\qed
\end{proof}

\begin{proposition}[Full Abstraction. From \HOpp to \HOp]\myrm
	\label{prop:fulla_HOpp_to_HOp}
	Let $P, Q$ \HOpp processes with $\Gamma; \es; \Delta_1 \proves P \hastype \Proc$ and 
	$\Gamma; \es; \Delta_2 \proves Q \hastype \Proc$. \\
	Then 
	$\horel{\Gamma}{\Delta_1}{P}{\wb}{\Delta_2}{Q}$ if and only if $\horel{\tmap{\Gamma}{3}}{\tmap{\Delta_1}{3}}{\pmap{P}{3}}{\wb}{\tmap{\Delta_2}{3}}{\pmap{Q}{3}}$
\end{proposition}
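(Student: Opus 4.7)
The plan is to prove the two implications separately, mirroring the strategy used for the analogous statement in \propref{prop:fulla_HOp_to_HO}, and relying crucially on the operational correspondence established in \propref{prop:op_corr_HOpp_to_HOp} together with the $\tau$-inertness property (\propref{lem:tau_inert}) and the up-to deterministic transition technique (\lemref{lem:up_to_deterministic_transition}). Type preservation (\propref{prop:typepres_HOpp_to_HOp}) ensures that encoded relations live in well-typed configurations of $\tyl{L}_\HOp$.

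For the \emph{soundness} direction ($\pmap{P}{3} \wb \pmap{Q}{3}$ implies $P \wb Q$), I would define the candidate relation
\[
	\Re_s = \big\{ \horel{\Gamma}{\Delta_1}{P}{,}{\Delta_2}{Q} \;\big|\; \horel{\tmap{\Gamma}{3}}{\tmap{\Delta_1}{3}}{\pmap{P}{3}}{\wb}{\tmap{\Delta_2}{3}}{\pmap{Q}{3}} \big\}
\]
and show that $\Re_s$ is a higher-order bisimulation on $\HOpp$ processes. Given a transition $P \hby{\ell} P'$, I invoke Part (1) of \propref{prop:op_corr_HOpp_to_HOp} to obtain a matching (possibly silent) transition on $\pmap{P}{3}$; by the bisimilarity assumption, $\pmap{Q}{3}$ weakly matches it, and Part (2) of the same proposition allows me to reflect that matching transition back to a concrete transition on $Q$. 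The matching of labels $\ell \mapsto \mapa{\ell}^{3}$ is bijective on the relevant cases. The $\btau$-reductions introduced by the encoding of higher-order applications (the session-based ``unpacking'' in $\pmap{\appl{(\abs{x}{P})}{(\abs{y}{Q})}}{3}$) are handled using $\tau$-inertness, which guarantees that these deterministic auxiliary steps do not alter behavioural identity.

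For the \emph{completeness} direction ($P \wb Q$ implies $\pmap{P}{3} \wb \pmap{Q}{3}$), the symmetric candidate is
\[
	\Re_c = \big\{ \horel{\tmap{\Gamma}{3}}{\tmap{\Delta_1}{3}}{\pmap{P}{3}}{,}{\tmap{\Delta_2}{3}}{\pmap{Q}{3}} \;\big|\; \horel{\Gamma}{\Delta_1}{P}{\wb}{\Delta_2}{Q} \big\}
\]
and I would verify that $\Re_c$ is a bisimulation up to deterministic transition, using \lemref{lem:up_to_deterministic_transition}. The critical observation is that whenever $\pmap{P}{3}$ fires a transition, Part (2) of \propref{prop:op_corr_HOpp_to_HOp} either pinpoints a corresponding source transition $P \hby{\ell'} P'$ with $Q \scong \pmap{P'}{3}$, or identifies the transition as a residual $\btau$-step generated by the encoding, which reaches $\pmap{P'}{3}$ via further deterministic transitions; bisimilarity of $P$ and $Q$ then yields a matching source transition $Q \hby{\ell'} Q'$, whose encoding provides the required answering move, up to deterministic $\btau$-closure.

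The main obstacle, as in the \HOp-to-\HO case, is the mismatch of granularity introduced by the encoding of $\appl{V}{W}$: a single source $\beta$-reduction becomes a structured exchange on a restricted session $s$ in the target. The encoded higher-order exchange $\ell$ such as $\news{\tilde m}\bactout{n}{\abs{z}\binp{z}{x}\pmap{Q}{3}}$ must be matched with a source output carrying $\abs{\underline{x}}{Q}$ — the bijection on labels is delicate and requires a careful inversion argument on the syntactic shape of the value observed, exactly as catalogued in the clauses of \propref{prop:op_corr_HOpp_to_HOp}. Once this correspondence is discharged clause-by-clause, the congruence of $\wb$ (\lemref{lem:wbc_is_cong}) and $\tau$-inertness close the remaining gaps and yield the result.
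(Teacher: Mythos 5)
Your proposal is correct and follows essentially the same route as the paper: both directions define exactly the closures $\Re_s$ and $\Re_c$, discharge the transition cases via \propref{prop:op_corr_HOpp_to_HOp}, and absorb the auxiliary $\btau$-steps of the application encoding using \propref{lem:tau_inert} and \lemref{lem:up_to_deterministic_transition}. The extra detail you give on the label bijection and the granularity mismatch is consistent with, and slightly more explicit than, the paper's terse case-by-case references.
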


\begin{proof}
	\noi {\bf Soundness Direction.}

	\noi We create the closure
	\[
		\Re = \set{\horel{\Gamma}{\Delta_1}{P}{\ ,\ }{\Delta_2}{Q} \setbar \horel{\tmap{\Gamma}{3}}{\tmap{\Delta_1}{3}}{\pmap{P}{3}}{\wb}{\tmap{\Delta_2}{3}}{\pmap{Q}{3}}}
	\]
	\noi	It is straightforward to show that $\Re$ is a bisimulation if we follow Part 2 of
		\propref{prop:op_corr_HOpp_to_HOp} for subcases a and b.
		In subcase c we make use of \propref{lem:tau_inert}.

	\noi {\bf Completeness Direction.}

	\noi We create the closure
	\[
		\Re = \set{\horel{\tmap{\Gamma}{3}}{\tmap{\Delta_1}{3}}{\pmap{P}{3}}{\ ,\ }{\tmap{\Delta_2}{3}}{\pmap{Q}{3}} \setbar \horel{\Gamma}{\Delta_1}{P}{\wb}{\Delta_2}{Q}}
	\]
	\noi	We show that $\Re$ is a bisimulation up to deterministic transitions
		by following Part 1 of \propref{prop:op_corr_HOpp_to_HOp}.
		The proof is straightforward for subcases a), b) and d).
		In subcase c) we make use of \lemref{lem:up_to_deterministic_transition}.
	\qed
\end{proof}

\begin{proposition}[Precise encoding of \HOpp into \HOp]\myrm
	\label{prop:prec:HOpp_to_HOp}
	The encoding from $\tyl{L}_{\HOpp}$ to $\tyl{L}_{\HOp}$
	is precise.
\end{proposition}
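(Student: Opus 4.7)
The plan is to verify, criterion by criterion, the two requirements of Definition~\ref{def:goodenc}(i), namely that the encoding $\enco{\pmap{\cdot}{3}, \tmap{\cdot}{3}, \mapa{\cdot}^{3}}$ is both syntax preserving (Definition~\ref{def:sep}) and semantic preserving (Definition~\ref{def:ep}). Since all the nontrivial semantic properties have already been established for this encoding earlier in the paper, the argument is essentially a matter of collecting them and checking the syntactic criteria by direct inspection of the mapping in \figref{fig:enc:HOpp_to_HOp}.

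First, I would handle the syntactic criteria. For homomorphism with respect to parallel composition and compositionality with respect to restriction, it suffices to observe that \figref{fig:enc:HOpp_to_HOp} only treats applications, output prefixes carrying abstractions, and the associated types/labels non-homomorphically; the clauses for $\mathbf{0}$, $P \Par Q$ and $\news{n} P$ are left implicit as homomorphic, which directly gives $\pmap{P_1 \Par P_2}{3} = \pmap{P_1}{3} \Par \pmap{P_2}{3}$ and $\pmap{\news{n}P}{3} = \news{n}\pmap{P}{3}$. Combined with Proposition~\ref{prop:typepres_HOpp_to_HOp} (type preservation), the typed versions of these equalities follow immediately. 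Name invariance is also immediate: the encoding never inspects identities of free names; the fresh session names $s$ introduced in the clauses for $\appl{x}{(\abs{y}P)}$ and $\appl{(\abs{x}P)}{(\abs{y}Q)}$ are bound and can be chosen to avoid any given renaming $\sigma$, so $\sigma$ commutes with $\pmap{\cdot}{3}$ up to $\alpha$-equivalence.

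Second, I would discharge the semantic criteria by quoting the already-proven results. Type preservation (Definition~\ref{def:ep}-1) is exactly Proposition~\ref{prop:typepres_HOpp_to_HOp}. Subject preservation (Definition~\ref{def:ep}-2) follows by inspection of the label mapping in \figref{fig:enc:HOpp_to_HOp}: the non-trivial cases $\mapa{\news{\tilde{m}} \bactout{n}{\abs{k}{P}}}^{3}$, $\mapa{\bactinp{n}{\abs{k}{P}}}^{3}$, $\mapa{\news{\tilde{m}} \bactout{n}{\abs{\underline{x}}{P}}}^{3}$ and $\mapa{\bactinp{n}{\abs{\underline{x}}{P}}}^{3}$ all preserve the subject $n$, and all other labels are mapped homomorphically. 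Operational correspondence (Definition~\ref{def:ep}-3) is Proposition~\ref{prop:op_corr_HOpp_to_HOp}, noting that the statement there is actually stronger than the one in Definition~\ref{def:ep}-3 since it tracks $\beta$-transitions explicitly; in particular, the completeness clause covers every action modulo bisimilarity-closed $\tau$-transitions, and the soundness clause is immediate since all encoded reductions either come from a source $\tau$ or are $\beta$-reductions introduced by the encoding of application. Finally, full abstraction (Definition~\ref{def:ep}-4) is exactly Proposition~\ref{prop:fulla_HOpp_to_HOp}.

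There is no real obstacle: the only point that requires minimal care is reconciling the statement of Proposition~\ref{prop:op_corr_HOpp_to_HOp} (which distinguishes $\btau$ from general $\tau$ and uses bisimilarity up to deterministic transitions) with the more abstract operational-correspondence clause of Definition~\ref{def:ep}-3, but this is handled by appealing to $\tau$-inertness (Proposition~\ref{lem:tau_inert}), exactly as in the analogous Propositions~\ref{prop:prec:HOp_to_HO} and~\ref{prop:prec:HOp_to_p}. Collecting these observations gives the result in a short proof essentially identical in structure to the proofs of those two earlier precision statements. \qed
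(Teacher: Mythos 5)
Your proof is correct and follows essentially the same route as the paper: syntactic criteria are checked by direct inspection of the mapping in \figref{fig:enc:HOpp_to_HOp}, and the semantic criteria are discharged by citing \propref{prop:typepres_HOpp_to_HOp}, \propref{prop:op_corr_HOpp_to_HOp}, and \propref{prop:fulla_HOpp_to_HOp}. The additional detail you provide on name invariance, subject preservation, and the reconciliation of the operational-correspondence statement via $\tau$-inertness is consistent with the paper's (terser) argument.
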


\begin{proof}
	Syntactic requirements are easily derivable from the
	definition of the mappings in \figref{fig:enc:HOpp_to_HOp}.
	Semantic requirements are a consequence of
	\propref{prop:typepres_HOpp_to_HOp}, \propref{prop:op_corr_HOpp_to_HOp}, and \propref{prop:fulla_HOpp_to_HOp}.
	\qed
\end{proof}

\subsection{Polyadic \HOp}
\label{subsec:pol_HOp}

\noi Embedding polyadic name passing 
into the monadic name passing is well-studied in the literature.    
Using the linear typing, 
the preciseness (full abstraction) can be obtained~\cite{Yoshida96}.
Here we describe an encoding of $\pHOp$ into $\HOp$.



\subsubsection{Modifications in Syntax, Reduction Semantics, and Types.}
The syntax of 
\pHOp processes is obtained from the syntax for processes given in 
\figref{fig:syntax} by considering values 
$$V \bnfis \tilde{u} \bnfbar \abs{\tilde{x}}{P}$$
and input prefixes $\binp{n}{\tilde{x}} P$.
Thus, polyadicity arises both in (session) communications and abstractions. 
Reduction is then defined by the rules in \figref{fig:reduction}, excepting 
 rules $\orule{App}$ and $\orule{Pass}$ which are replaced by rules
 \begin{eqnarray*}
	\orule{App$^p$} & 
	\appl{(\abs{\tilde{x}}{P})}{\tilde{u}} \red P \subst{\tilde{u}}{\tilde{x}} \quad |\tilde{x}| = |\tilde{u}| \\
	\orule{Pass$^p$} & 
		\quad \bout{n}{V} P_1 \Par \binp{\dual{n}}{\tilde{x}} P_2 \red P_1 \Par P_2 \subst{V}{\tilde{x}} \quad |V| = |\tilde{x}|
\end{eqnarray*}
The syntax of types in \figref{def:types}
is modified to include
 \begin{eqnarray*}
 L & \bnfis & \shot{\tilde{C}} \bnfbar \lhot{\tilde{C}} \\
 U & \bnfis & \tilde{C} \bnfbar L
 \end{eqnarray*}
 instead of $L \bnfis \shot{C} \bnfbar \lhot{C}$ and
$U \bnfis C \bnfbar L$, respectively.

Definitions of type equivalence/duality 
and typing environments ($\Gamma$ and $\Lambda$) are straightforward extensions of 
\defref{def:type_equiv}, \defref{def:type_dual},
and  \defref{def:typeenv}, respectively. 
Following~\cite{tlca07,MostrousY15} the type system for \pHOp
disallows polyadicity along shared names. Based on these modifications, 
the typing rules of \figref{fig:typerulesmy} are adapted in the expected way. 
In order to type polyadic values, we 
rely on the following rule:
\[
	\trule{Pol}~~\tree{
		V = a_i \dots a_n \qquad \Gamma; \Lambda_i; \Delta_i \proves u_i \hastype C_i \qquad U = C_1 \dots C_n
	}{
		\Gamma; \bigcup_{i \in I} \Lambda_i; \bigcup_{i \in I} \Delta_i \proves V \hastype U
	}
\]
Other rules are adjusted in the expected way, in order to accommodate polyadic values.
Notice, however, that rules $\trule{Req}$ and $\trule{Acc}$ are kept unchanged, as they
are used to type monadic exchanges along shared name prefixes.
We now state type soundness for \pHOp; the proof is straightforward and omitted, for it follows closely the proof detailed in 
\appref{app:ts}.

\begin{theorem}[Type Soundness for \pHOp]\myrm
	\label{thm:sr_phop}
	\begin{enumerate}[1.]
		\item	(Subject Congruence)
			$\Gamma; \es; \Delta \proves P \hastype \Proc$
			and
			$P \scong P'$
			implies
			$\Gamma; \es; \Delta \proves P' \hastype \Proc$.

		\item	(Subject Reduction)
			$\Gamma; \es; \Delta \proves P \hastype \Proc$
			with
			balanced $\Delta$
			and
			$P \red P'$
			implies $\Gamma; \es; \Delta'  \proves P' \hastype \Proc$
			and either (i)~$\Delta = \Delta'$ or (ii)~$\Delta \red \Delta'$
			with $\Delta'$ balanced.
	\end{enumerate}
\end{theorem}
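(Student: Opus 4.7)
The plan is to mirror the proof of \thmref{thm:sr} for the monadic calculus, appropriately generalised to account for polyadic name passing and polyadic abstractions. As in the monadic case, Part~1 (Subject Congruence) is established by induction on the derivation of $P \scong P'$, with a case analysis on the structural congruence rule used. The only non-trivial cases are the scope-extrusion rule and the recursion-unfolding rule: the former is handled by standard manipulation of the shared/linear/session environments using the structural properties of \defref{def:typeenv}, while the latter follows from rule \trule{Rec} combined with the substitution lemma (restricted to recursion variables). None of these arguments interact with the new polyadic primitives, so they transfer verbatim.

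For Part~2 (Subject Reduction), the first step is to lift \lemref{lem:subst} to polyadic values, obtaining a polyadic substitution lemma whose statement mirrors clauses (1)--(4) of \lemref{lem:subst} but with tuples $\tilde u$, $\tilde x$ and the auxiliary $\trule{Pol}$ rule in place of individual identifiers. I would prove this by induction on the typing derivation of the process into which the substitution takes place, splitting the polyadic $\trule{Pol}$ derivation into its components so that each $u_i$ can be substituted for the corresponding $x_i$ using the monadic clauses. The linear environment splits as $\bigcup_{i} \Lambda_i$, and session environment as $\bigcup_{i} \Delta_i$, so disjointness is preserved componentwise.

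With the polyadic substitution lemma in hand, the proof proceeds by induction on the derivation $P \red P'$. The only novel cases are those involving the new reduction rules $\orule{App}^p$ and $\orule{Pass}^p$: for $\orule{App}^p$, inversion on the typing of $(\abs{\tilde x}{P})\,\tilde u$ yields judgements typing $\abs{\tilde x}{P}$ at some $\lhot{\tilde C}$ (or $\shot{\tilde C}$) and $\tilde u$ at $\tilde C$ via \trule{Pol}; the polyadic substitution lemma then gives the typing of $P\subst{\tilde u}{\tilde x}$, preserving $\Delta$ so that case~(i) applies. For $\orule{Pass}^p$, the reduction is inferred from typings of $\bout{n}{V}P_1$ and $\binp{\dual n}{\tilde x}P_2$, so the output and input prefixes consume dual prefixes of $n,\dual n$ in $\Delta$; the polyadic substitution lemma applied to $P_2$ with the components of $V$ yields the typing of $P_2\subst{V}{\tilde x}$, and case~(ii) applies with $\Delta \red \Delta'$ justified by \defref{def:ses_red}, with balancedness preserved because dual endpoints advance in lockstep. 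The remaining congruence cases ($\orule{Par}$, $\orule{Sess}$, $\orule{Cong}$) carry over unchanged, invoking Part~1 for the last.

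The main obstacle is the polyadic substitution lemma itself for linear abstractions: one must verify that when a polyadic value $V = a_1\cdots a_n$ (typed via \trule{Pol}) is substituted for $\tilde x$, the partitioning $\Lambda_1\cdot\Lambda_2$ and $\Delta_1\cdot\Delta_2$ in the inductive hypothesis matches the componentwise partitioning arising from \trule{Pol}. This is a bookkeeping argument rather than a deep one, but it requires care because the positions of the $x_i$ in the subterms of $P$ need not coincide with the order of the $a_i$ in $V$. Once this is verified, every other step is a routine adaptation of the monadic proof, and the restriction that shared name prefixes remain monadic (as reflected in the unchanged rules \trule{Req} and \trule{Acc}) ensures that no additional interaction between polyadicity and shared-name typing needs to be handled.
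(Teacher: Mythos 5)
Your proposal is correct and matches the paper's intent exactly: the paper omits this proof, stating only that it follows closely the monadic proof in the appendix, and your plan — generalise the substitution lemma to polyadic values via the \trule{Pol} splitting, then redo the reduction induction with the new $\orule{App^p}$/$\orule{Pass^p}$ cases — is precisely that adaptation (and parallels what the paper does explicitly for \HOpp). The bookkeeping worry you flag about the ordering of the $a_i$ versus the $x_i$ is dissolved by the exchange property of $\Lambda$ and $\Delta$ (\defref{def:typeenv}), so nothing further is needed.
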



As for the behavioral semantics for \pHOp, the 
set of action labels is kept unchanged. 
In fact, 
as $V$ now stands for $\tilde{u}$ and $\abs{\tilde{x}}{P}$, 
labels  $\news{\tilde{m}} \bactout{n}{V}$ and $\bactinp{n}{V}$
require no modification. 
The LTS for \pHOp is as for \HOp, 
excepting rule $\ltsrule{App}$ which is replaced with the rule:
\[
	\appl{(\abs{\tilde{x}} P)}{\tilde{u}} \by{\tau} P\subst{\tilde{u}}{\tilde{x}}
\]
The characteristic process and characteristic value 
definition (\defref{def:characteristic_process})
is extended to include the cases:
\begin{eqnarray*}
	\mapchar{C_1 \dots C_n}{u_1 \cdots u_n} & \defeq &  \mapchar{C_1}{x_1} \Par \dots \Par \mapchar{C_n}{x_n} \\
	\omapchar{U_1 \dots U_n}  & \defeq &  \omapchar{U_1}, \dots, \omapchar{U_n}
\end{eqnarray*}
Thus, a polyadic type is inhabited by process whose
parallel components inhabit type the individual components of the polyadic
type. A polyadic value type is inhabited by a list
of values which inhabit the individual components of the polyadic value.
The rest of the behavioural semantics remains unchanged.


\subsubsection{Encoding \pHOp into \HOp.}

We slightly modify \defref{def:ep} to capture that a 
label $\ell$ may be mapped into a sequence of labels $\tilde{\ell}$.
Also, \defref{def:ep} stays as the same
assuming that if 
$P \hby{\ell} P'$ and $\mapa{\ell} = \{\ell_1, \ell_2,  \cdots, \ell_m\}$ then
$\map{P} \Hby{\mapa{\ell}} \map{P'}$
should be understood as
$\map{P} \Hby{\ell_1} P_1 \Hby{\ell_2} P_2 \cdots \Hby{\ell_m} P_m =  \map{P'}$,
for some
$P_1, P_2, \ldots, P_m$.

Let $\tyl{L}_{\pHOp}=\calc{\pHOp}{{\cal{T}}_5}{\hby{\ell}}{\wb}{\proves}$
where 
${\cal{T}}_5$ is a set of types of $\HOpp$;  
the typing $\proves$ is defined in 
\figref{fig:typerulesmy} with polyadic types.

\begin{definition}[Encoding from \pHOp to \HOp]\myrm
	\label{def:enc:pHOp_to_HOp}
	Encoding $\enco{\pmap{\cdot}{4}, \tmap{\cdot}{4}, \mapa{\cdot}^{4}}: \tyl{L}_{\pHOp} \to \tyl{L}_{\HOp}$
	to be defined as in \figref{fig:enc:pHOp_to_HOp}.
\end{definition}

\begin{figure}[t]
	\[
		\begin{array}{rcl}
			\multicolumn{3}{l}{\textrm{\bf Terms}}
			\\
			\pmap{\bout{n}{u_1, \dots, u_n} P}{4} &\defeq& \bout{n}{u_1} \dots ; \bout{n}{u_n} \pmap{P}{4}
			\\
			\pmap{\binp{n}{x_1, \dots, x_n} P}{4} &\defeq& \binp{n}{x_1} \dots ; \binp{n}{x_n} \pmap{P}{4}
			\\
			\pmap{\bout{n}{\abs{x_1, \dots, x_n}{Q}} P}{4} &\defeq& \bout{n}{\abs{z}{\binp{z}{x_1} \dots; \binp{z}{x_n} \pmap{Q}{4}}} \pmap{P}{4}
			\\
			\pmap{\appl{x}{(u_1, \dots, u_n)}}{4} &\defeq& \newsp{s}{\appl{x}{s} \Par \bout{\dual{s}}{u_1} \dots; \bout{\dual{s}}{u_1} \inact}
			\\
			\pmap{\appl{(\abs{x}{P})}{(u_1, \dots, u_n)}}{4} &\defeq& \newsp{s}{\appl{(\abs{x}{\pmap{P}{4}})}{s} \Par \bout{\dual{s}}{u_1} \dots; \bout{\dual{s}}{u_1} \inact}
			\\
			\\
			\multicolumn{3}{l}{\textrm{\bf Types}}
			\\
			\tmap{\lhot{(C_1, \dots, C_n)}}{4} &\defeq& \lhot{(\btinp{C_1} \dots; \btinp{C_n} \tinact)}
			\\
			\tmap{\shot{(C_1, \dots, C_n)}}{4} &\defeq& \shot{(\btinp{C_1} \dots; \btinp{C_n} \tinact)}
			\\
			\tmap{\btout{L} S}{4} &\defeq& \btout{\tmap{L}{4}} \tmap{S}{4}
			\\
			\tmap{\btinp{L} S}{4} &\defeq& \btinp{\tmap{L}{4}} \tmap{S}{4}
			\\
			\tmap{\btout{C_1, \dots, C_n} S}{4} &\defeq& \btout{C_1} \dots; \btout{C_n} \tmap{S}{4}
			\\
			\tmap{\btinp{C_1, \dots, C_n} S}{4} &\defeq& \btinp{C_1} \dots; \btout{C_n} \tmap{S}{4}
			\\
			\\
			\multicolumn{3}{l}{\textrm{\bf Labels}}
			\\
			\mapa{\news{\tilde{m}'} \bactout{n}{m_1, \dots, m_n}}^{4} &\defeq& \news{\tilde{m_1}'} \bactout{n}{m_1} \dots \news{\tilde{m_n}'}\bactout{n}{m_n}
			\quad \left.
			\begin{array}{rcl}
				\tilde{m_i}' &=& m_i \Leftrightarrow m_i \in \tilde{m}' \wedge\\
				\tilde{m_i}' &=& \es \Leftrightarrow m_i \notin \tilde{m}'
			\end{array}
			\right.
			\\
			\mapa{\bactinp{n}{m_1, \dots, m_n}}^{4} &\defeq& \bactinp{n}{m_1} \dots \bactinp{n}{m_n}
			\\
			\mapa{\news{\tilde{m}} \bactout{n}{\abs{x_1, \dots, x_n}{P}}}{4} &\defeq& \news{\tilde{m}} \bactout{n}{\abs{z}{\binp{z}{x_1} \dots; \binp{z}{x_n} \pmap{P}{4}}}
			\\
			\mapa{\bactinp{n}{\abs{x_1, \dots, x_n}{P}}}{4} &\defeq& \bactinp{n}{\abs{z}{\binp{z}{x_1} \dots; \binp{z}{x_n} \pmap{P}{4}}}
			\\
			\mapa{\btau}^{4} &\defeq& \btau, \stau, \dots, \stau
			\\
			\mapa{\tau}^{4} &\defeq& \tau, \dots, \tau
		\end{array}
	\]
	\caption{Encoding of \pHOp into \HOp (cf.~\defref{def:enc:pHOp_to_HOp}).
	We assume that the rest of the encoding is homomorphic on the syntax of
	processes, types and labels, respectively. \label{fig:enc:pHOp_to_HOp}}
\end{figure}

\begin{proposition}[Type Preservation. From \pHOp to \HOp]\myrm
	\label{prop:typepres_pHOp_to_HOp}
	Let $P$ be a \pHOp process.
	If $\Gamma; \emptyset; \Delta \proves P \hastype \Proc$ then 
	$\tmap{\Gamma}{4}; \emptyset; \tmap{\Delta}{4} \proves \pmap{P}{4} \hastype \Proc$. 
\end{proposition}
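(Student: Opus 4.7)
The plan is to proceed by induction on the derivation $\Gamma; \emptyset; \Delta \proves P \hastype \Proc$, with a case analysis on the last typing rule applied. For the homomorphic cases of \defref{def:enc:pHOp_to_HOp} (inaction, parallel composition, restriction, selection, branching, recursion, etc.) the result follows directly from the induction hypothesis together with the facts that $\tmap{\cdot}{4}$ commutes with environment splitting ($\tmap{\Delta_1 \cat \Delta_2}{4} = \tmap{\Delta_1}{4} \cat \tmap{\Delta_2}{4}$) and preserves balancedness and duality (which is straightforward by inspection of the type clauses). The key work lies in the four non-homomorphic cases: polyadic send $\bout{n}{\tilde u} P$, polyadic receive $\binp{n}{\tilde x} P$, polyadic abstraction passing $\bout{n}{\abs{\tilde x}{Q}} P$, and polyadic application $\appl{V}{\tilde u}$.

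For the polyadic name send case, the source typing derivation uses \trule{Send} and \trule{Pol} to give $n$ the type $\btout{C_1, \dots, C_n}S$, using a split $\Delta_1 \cat \dots \cat \Delta_n$ of the object environments. By \defref{def:enc:pHOp_to_HOp} the encoding yields $n$ of type $\btout{C_1}\cdots\btout{C_n}\tmap{S}{4}$. I would build the target derivation by repeatedly applying \trule{Send} from the innermost prefix outward, threading the same environment split through each application and using the induction hypothesis for the continuation $\pmap{P}{4}$. The receive case is dual, using \trule{Rcv} $n$ times. The abstraction send case combines the idea with \trule{Abs}: the inner body $\binp{z}{x_1}\cdots\binp{z}{x_n}\pmap{Q}{4}$ is typed by iterating \trule{Rcv} on the fresh bound name $z$ of type $\btinp{\tmap{C_1}{4}}\cdots\btinp{\tmap{C_n}{4}}\tinact$, giving $\tmap{\lhot{(C_1,\dots,C_n)}}{4} = \lhot{(\btinp{\tmap{C_1}{4}}\cdots\btinp{\tmap{C_n}{4}}\tinact)}$ after applying \trule{Abs}, which is precisely the type expected by $\tmap{\cdot}{4}$.

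The main obstacle will be the application case $\appl{x}{(u_1,\dots,u_n)}$, whose encoding introduces a fresh session $s$ together with a parallel composition and restriction. Assume the source derivation concludes $\Gamma; \Lambda; \Delta_0 \cat \Delta_1 \cat \dots \cat \Delta_n \proves \appl{x}{(\tilde u)} \hastype \Proc$ with $x$ of type $\lhot{(\tilde C)}$ or $\shot{(\tilde C)}$ and each $u_i$ of type $C_i$. I need to exhibit a \HOp derivation for $\newsp{s}{\appl{x}{s} \Par \bout{\dual{s}}{u_1}\cdots\bout{\dual{s}}{u_n}\inact}$. The strategy is to assign $s : \btinp{\tmap{C_1}{4}}\cdots\btinp{\tmap{C_n}{4}}\tinact$ (so that $\appl{x}{s}$ type-checks against $\tmap{\lhot{(\tilde C)}}{4} = \lhot{(\btinp{\tmap{C_1}{4}}\cdots\btinp{\tmap{C_n}{4}}\tinact)}$ via \trule{App}) and $\dual{s}$ the dual session type, which permits the iterated $\bout{\dual{s}}{u_i}$ to consume each $u_i : \tmap{C_i}{4}$ via \trule{Send}. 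The two halves are then composed with \trule{Par} and the fresh session is closed with \trule{ResS}, using that the two types assigned to $s$ and $\dual{s}$ are dual. The corresponding case for $\appl{\abs{x}{P}}{(\tilde u)}$ is handled analogously by additionally applying \trule{Abs}-\trule{App} (or by inlining the receive on $s$), invoking the IH on $P$.

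Finally, one should check the auxiliary extensions: the polyadic typing rule \trule{Pol} and the characteristic/recursive cases introduce no genuine difficulty, since the encoding acts componentwise on value types and commutes with recursive unfolding (types $\vart{t}$ and $\trec{t}S$ are treated homomorphically). The whole argument is an instance of the standard pattern used for \propref{prop:typepres_HOp_to_HO} and \propref{prop:typepres_HOpp_to_HOp}; the only subtlety is tracking environment splits carefully in the application case so that the freshly introduced session endpoints $s, \dual{s}$ are rightly balanced when the outer \trule{ResS} is applied.
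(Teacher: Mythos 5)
Your proposal is correct and follows essentially the same route as the paper: induction on the typing derivation, homomorphic cases discharged by the induction hypothesis and the compositionality of $\tmap{\cdot}{4}$ on environments and duality, and explicit target derivations for the polyadic prefixes built by iterating \trule{Send}/\trule{Rcv} and closing with \trule{Abs}, \trule{Par}, and \trule{ResS}. If anything, you cover more ground than the paper, which only exhibits the two representative biadic cases (abstraction output and polyadic input) and leaves the application case — correctly identified by you as the one needing the fresh dual session endpoints — implicit.
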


\begin{proof}
	By induction on the inference $\Gamma; \emptyset; \Delta \proves P \hastype \Proc$.
	See \propref{app:prop:typepres_pHOp_to_HOp} (Page~\pageref{app:prop:typepres_pHOp_to_HOp}) for details.
	\qed
\end{proof}

\begin{proposition}[Operational Correspondence. From \pHOp to \HOp]\myrm
	\label{prop:op_cor:pHOp_to_HOp}
	\begin{enumerate}
		\item	Let $\Gamma; \es; \Delta \proves P$. Then
			$\horel{\Gamma}{\Delta}{P}{\hby{\ell}}{\Delta'}{P'}$ implies
			\begin{enumerate}[a)]
				\item	If $\ell = \news{\tilde{m}'} \bactout{n}{\tilde{m}}$ then
					$\horel{\tmap{\Gamma}{4}}{\tmap{\Delta}{4}}{\pmap{P}{4}}{\hby{\ell_1} \dots \hby{\ell_n}}{\tmap{\Delta'}{4}}{\pmap{P}{4}}$
					with $\mapa{\ell}^{4} = \ell_1 \dots \ell_n$.

				\item	If $\ell = \bactinp{n}{\tilde{m}}$ then
					$\horel{\tmap{\Gamma}{4}}{\tmap{\Delta}{4}}{\pmap{P}{4}}{\hby{\ell_1} \dots \hby{\ell_n}}{\tmap{\Delta'}{4}}{\pmap{P}{4}}$
					with $\mapa{\ell}^{4} = \ell_1 \dots \ell_n$.

				\item	If $\ell \in \set{\news{\tilde{m}} \bactout{n}{\abs{\tilde{x}}{R}}, \bactinp{n}{\abs{\tilde{x}}{R}}}$ then
					$\horel{\tmap{\Gamma}{4}}{\tmap{\Delta}{4}}{\pmap{P}{4}}{\hby{\ell'}}
					{\tmap{\Delta'}{4}}{\pmap{P'}{4}}$ with $\mapa{\ell}^{4} = \ell'$.

				\item	If $\ell \in \set{\bactsel{n}{l}, \bactbra{n}{l}}$ then
					$\horel{\tmap{\Gamma}{4}}{\tmap{\Delta}{4}}{\pmap{P}{4}}{\hby{\ell}}
					{\tmap{\Delta'}{4}}{\pmap{P'}{4}}$.

				\item	If $\ell = \btau$ then either
					$\horel{\tmap{\Gamma}{4}}{\tmap{\Delta}{4}}{\pmap{P}{4}}{\hby{\btau} \hby{\stau} \dots \hby{\stau}}
					{\tmap{\Delta'}{4}}{\pmap{P'}{4}}$ with $\mapa{\ell} = \btau, \stau \dots \stau$.

				\item	If $\ell = \tau$ then 
					$\horel{\tmap{\Gamma}{4}}{\tmap{\Delta}{4}}{\pmap{P}{4}}{\hby{\tau} \dots \hby{\tau}}
					{\tmap{\Delta'}{4}}{\pmap{P'}{4}}$ with $\mapa{\ell}^{4} = \tau \dots \tau$.
			\end{enumerate}

		\item	Let $\Gamma; \es; \Delta \proves P$.
			$\horel{\tmap{\Gamma}{4}}{\tmap{\Delta}{4}}{\pmap{P}{4}}{\hby{\ell_1}}
			{\tmap{\Delta_1}{4}}{P_1}$ implies
			\begin{enumerate}[a)]
				\item	If $\ell \in \set{\bactinp{n}{m}, \bactout{n}{m}, \news{m} \bactout{n}{m}}$ then
					$\horel{\Gamma}{\Delta}{P}{\hby{\ell}}{\Delta'}{P'}$ and\\
					$\horel{\tmap{\Gamma}{4}}{\tmap{\Delta_1}{4}}{P_1}{\hby{\ell_2} \dots \hby{\ell_n}}
					{\tmap{\Delta'}{4}}{\tmap{P'}{4}}$ with $\mapa{\ell}^{4} = \ell_1 \dots \ell_n$.

				\item	If $\ell \in \set{\news{\tilde{m}} \bactout{n}{\abs{x}{R}}, \bactinp{n}{\abs{x}{R}}}$
					then
					$\horel{\Gamma}{\Delta}{P}{\hby{\ell'}}{\Delta'}{P'}$
					with $\mapa{\ell'}^{4} = \ell$ and $P_1 \scong \pmap{P'}{4}$.

				\item	If $\ell \in \set{\bactsel{n}{l}, \bactbra{n}{l}}$
					then
					$\horel{\Gamma}{\Delta}{P}{\hby{\ell}}{\Delta'}{P'}$ and $P_1 \scong \pmap{P'}{4}$.

				\item	If $\ell = \btau$ then
					$\horel{\Gamma}{\Delta}{P}{\hby{\btau}}{\Delta'}{P'}$ and
					$\horel{\tmap{\Gamma}{4}}{\tmap{\Delta_1}{4}}{P_1}{\hby{\stau} \dots \hby{\stau}}
					{\tmap{\Delta'}{4}}{\tmap{P'}{4}}$ with $\mapa{\ell}^{4} = \btau, \stau \dots \stau$.

				\item	If $\ell = \tau$ then
					$\horel{\Gamma}{\Delta}{P}{\hby{\tau}}{\Delta'}{P'}$ and
					$\horel{\tmap{\Gamma}{4}}{\tmap{\Delta_1}{4}}{P_1}{\hby{\tau} \dots \hby{\tau}}
					{\tmap{\Delta'}{4}}{\tmap{P'}{4}}$ with $\mapa{\ell}^{4} = \tau \dots \tau$.
			\end{enumerate}
	\end{enumerate}
\end{proposition}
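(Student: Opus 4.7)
The proof will proceed by induction on the derivation of the transition being analyzed, with a case analysis on the label $\ell$ (for Part~1) or $\ell_1$ (for Part~2), using the definition of the encoding in Figure~\ref{fig:enc:pHOp_to_HOp}. For Part~1, the mechanics of each case are dictated directly by how $\mapa{\cdot}^{4}$ translates the label. For the polyadic output case $\ell = \news{\tilde{m}'}\bactout{n}{m_1, \ldots, m_n}$, the encoding rewrites the source prefix into a sequential chain $\bout{n}{m_1}\cdots\bout{n}{m_n}\pmap{P'}{4}$, and the session type $\tmap{\btout{C_1,\ldots,C_n}S}{4} = \btout{C_1}\cdots\btout{C_n}\tmap{S}{4}$ admits exactly the matching sequence of typed monadic output transitions, with scope extrusions handled componentwise via the side condition on $\tilde{m_i}'$. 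The polyadic input case is symmetric, relying on iterated applications of $\eltsrule{SRv}$. The higher-order transmission cases (c) are direct: the source label's polyadic abstraction $\abs{\tilde{x}}{R}$ is mapped one-for-one into the monadic abstraction $\abs{z}{\binp{z}{x_1}\cdots\binp{z}{x_n}\pmap{R}{4}}$, and a single monadic action suffices. The selection/branching cases (d) are homomorphic.

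The only truly interesting case for Part~1 is the $\beta$-transition (e). Here $P$ must contain a redex $\appl{(\abs{\tilde{x}}Q)}{\tilde{u}}$ which reduces in one deterministic step to $Q\subst{\tilde{u}}{\tilde{x}}$. Its encoding is
\[
	\newsp{s}{\appl{(\abs{x}{\pmap{Q}{4}})}{s} \Par \bout{\dual{s}}{u_1} \cdots \bout{\dual{s}}{u_n} \inact}
\]
which first performs a single $\btau$ (the application), yielding $\newsp{s}{\pmap{Q}{4}\subst{s}{x} \Par \bout{\dual{s}}{u_1}\cdots\bout{\dual{s}}{u_n}\inact}$, and then performs $n$ deterministic session synchronisations along $s,\dual{s}$ to substitute $u_1,\ldots,u_n$ for the corresponding bound occurrences. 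This matches the prescribed label sequence $\btau,\stau,\ldots,\stau$; the substitution lemma guarantees that the resulting process is $\pmap{Q\subst{\tilde{u}}{\tilde{x}}}{4} = \pmap{P'}{4}$. The ordinary $\tau$ case (f) combines this with the polyadic pass decomposition: a source synchronisation on a polyadic channel becomes one $\tau$ arising from the first pair of matched outputs/inputs, followed by $n-1$ further $\tau$'s from the remaining monadic exchanges.

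For Part~2, the strategy is a converse case analysis on the first target transition $\ell_1$, recovering which source prefix it came from via inversion on the encoding clauses. The delicate point is that an encoded process carries extra administrative reductions (the sequential monadic passes and the $\dual{s}$-chain introduced by encoded applications), so after the first observable step there can be pending deterministic transitions whose effect must be collected to reach $\pmap{P'}{4}$. Each clause of the thesis pairs the first action with the remaining sequence $\ell_2 \cdots \ell_n$ needed to complete the simulation; in clauses (d) and (e) this trailing sequence consists of $\stau$ (respectively $\tau$) transitions, and $\tau$-inertness (Proposition~\ref{lem:tau_inert}) ensures these preserve the intended semantic equivalence. Cases (b) and (c) are immediate since the encoding of higher-order prefixes and selection/branching is in one-to-one correspondence with target actions, and the resulting structure is syntactically $\pmap{P'}{4}$ up to $\alpha$-conversion.

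The main obstacle I anticipate is the bookkeeping in the polyadic passing cases, where target transitions may interleave administrative steps with other concurrent behaviour of $P$ under parallel composition and restriction. To handle this cleanly, I would first prove an auxiliary decomposition lemma stating that whenever the encoding of a polyadic prefix fires its first monadic action at a linear endpoint $n$, the remaining monadic steps on that endpoint must occur before any other action on $n$ can intervene — this follows from linearity of session types and the fact that the intermediate encoded processes have $n$ as their only immediate continuation. With this lemma in hand, the inductive step for the parallel and restriction rules in the LTS carries through by straightforward context manipulation, and the two parts of the proposition follow.
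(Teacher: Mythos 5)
Your proposal is correct and follows essentially the same route as the paper's proof: a transition induction with case analysis on the label, driven directly by the clauses of the encoding in Figure~\ref{fig:enc:pHOp_to_HOp}, with the polyadic passing cases decomposed into sequences of monadic actions and the application case decomposed into one $\btau$ followed by deterministic session synchronisations. The auxiliary decomposition lemma you propose for the interleaving of administrative steps is a sensible explicit formalisation of what the paper leaves implicit via linearity of session endpoints, but it does not change the structure of the argument.
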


\begin{proof}
	We present the proof for the dyadic case
	in \propref{app:prop:op_cor:pHOp_to_HOp}
	(Page~\pageref{app:prop:op_cor:pHOp_to_HOp}).
	The polyadic case proof is an generalisation
	of the dyadic case proof.
	\qed
\end{proof}

\begin{proposition}[Full Abstraction. From \HOpp to \HOp]\myrm
	\label{prop:fulla:pHOp_to_HOp}
	Let $P, Q$ \pHOp process with $\Gamma; \es; \Delta_1 \proves P \hastype \Proc$ and 
	$\Gamma; \es; \Delta_2 \proves Q \hastype \Proc$.
	$\horel{\Gamma}{\Delta_1}{P}{\wb}{\Delta_2}{Q}$ if and only if $\horel{\tmap{\Gamma}{4}}{\tmap{\Delta_1}{4}}{\pmap{P}{4}}{\wb}{\tmap{\Delta_2}{4}}{\pmap{Q}{4}}$
\end{proposition}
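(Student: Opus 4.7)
The plan is to follow closely the strategy of Proposition \ref{prop:fulla_HOpp_to_HOp}, proving the two directions separately by building suitable candidate relations and invoking Proposition \ref{prop:op_cor:pHOp_to_HOp} (Operational Correspondence), together with Proposition \ref{lem:tau_inert} ($\tau$-inertness) and Lemma \ref{lem:up_to_deterministic_transition} (up-to deterministic transition). The novelty with respect to the \HOpp case is that a single polyadic label in the source may be mapped, via $\mapa{\cdot}^{4}$, to an \emph{ordered sequence} of monadic labels, and the intermediate \HOp states traversed while consuming that sequence are not themselves images $\pmap{\cdot}{4}$ of any source term. Handling these ``partial states'' is the main technical obstacle.

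For the soundness direction (right-to-left), first I would define the candidate relation
\[
\mathcal{R}_s = \big\{\horel{\Gamma}{\Delta_1}{P}{\ ,\ }{\Delta_2}{Q} \ \big|\ \horel{\tmap{\Gamma}{4}}{\tmap{\Delta_1}{4}}{\pmap{P}{4}}{\wb}{\tmap{\Delta_2}{4}}{\pmap{Q}{4}}\big\}
\]
and show $\mathcal{R}_s \subseteq \wb$. Given a source transition $\horel{\Gamma}{\Delta_1}{P}{\hby{\ell}}{\Delta_1'}{P'}$, I would apply Proposition \ref{prop:op_cor:pHOp_to_HOp}(1) to obtain a matching sequence $\pmap{P}{4} \Hby{\ell_1}\cdots\Hby{\ell_n} \pmap{P'}{4}$ with $\mapa{\ell}^4 = \ell_1\cdots\ell_n$. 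By the bisimilarity hypothesis on $\pmap{P}{4}$ and $\pmap{Q}{4}$, there is a corresponding weak transition sequence from $\pmap{Q}{4}$; I then use Proposition \ref{prop:op_cor:pHOp_to_HOp}(2) to reflect this sequence to a source-level weak transition in $Q$. The crucial point is that subcases (a)--(b) of Part~(2) force the consumption of an \emph{entire} sequence of $n$ monadic labels (prefixed/followed by administrative $\stau$-steps) before one can match it back to a single polyadic source label; Proposition \ref{lem:tau_inert} absorbs these deterministic administrative steps, so that $\wb$-closure on the encodings lifts to $\mathcal{R}_s$-closure on the sources.

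For the completeness direction (left-to-right), I would symmetrically define
\[
\mathcal{R}_c = \big\{\horel{\tmap{\Gamma}{4}}{\tmap{\Delta_1}{4}}{\pmap{P}{4}}{\ ,\ }{\tmap{\Delta_2}{4}}{\pmap{Q}{4}} \ \big|\ \horel{\Gamma}{\Delta_1}{P}{\wb}{\Delta_2}{Q}\big\}
\]
and prove $\mathcal{R}_c \subseteq \wb$ as a bisimulation \emph{up to deterministic transition} (Lemma \ref{lem:up_to_deterministic_transition}). The up-to technique is essential here: starting from a transition $\pmap{P}{4} \hby{\ell_1} P_1$ whose label $\ell_1$ is only the first component of a source polyadic label, $P_1$ is not in the image of the encoding and cannot be $\mathcal{R}_c$-related to anything directly. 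Proposition \ref{prop:op_cor:pHOp_to_HOp}(2) tells us that $P_1$ reaches $\pmap{P'}{4}$ via a finite series of deterministic $\stau$ (and possibly $\btau$) transitions, which is exactly what Lemma \ref{lem:up_to_deterministic_transition} allows us to absorb before requiring the relation to re-engage.

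The main obstacle, as anticipated, is the bookkeeping needed to align polyadic actions with the sequences of monadic actions they expand to, especially in the presence of bound-name extrusions of polyadic tuples (where the restriction list $\tilde m$ is distributed across the sequence via the side condition $\tilde{m_i}' = m_i \Leftrightarrow m_i \in \tilde{m}'$). I would handle this by an induction on the length of the tuple, reducing the polyadic case to repeated application of the monadic arguments already used for \HOpp. Once this alignment is set up, the two directions follow the same skeleton as Proposition \ref{prop:fulla_HOpp_to_HOp}, with Proposition \ref{lem:tau_inert} and Lemma \ref{lem:up_to_deterministic_transition} discharging the deterministic administrative reductions introduced by the encoding of applications $\pmap{\appl{x}{(u_1,\dots,u_n)}}{4}$.
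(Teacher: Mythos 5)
Your proposal follows the paper's proof exactly: the same two closures, soundness via Part~2 of \propref{prop:op_cor:pHOp_to_HOp} together with $\tau$-inertness, and completeness via Part~1 as a bisimulation up to deterministic transition (\lemref{lem:up_to_deterministic_transition}). One step of your justification is inaccurate, however. In the completeness direction you claim that \propref{prop:op_cor:pHOp_to_HOp}(2) lets the intermediate state $P_1$ reach $\pmap{P'}{4}$ by a series of deterministic $\stau$/$\btau$ steps, which \lemref{lem:up_to_deterministic_transition} then absorbs. That is true only for the application cases (2(d), 2(e)), where the encoding introduces synchronisations on a fresh restricted session; in cases 2(a)--(b), i.e.\ polyadic name passing, the remaining labels $\ell_2,\dots,\ell_n$ are \emph{visible} monadic inputs/outputs on the same linear channel, so the up-to-$\dtau$ lemma does not apply and the partial states cannot be discharged that way. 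They must instead be handled by the induction on tuple length that you mention separately (equivalently, by enlarging the candidate relation to include pairs of partially consumed polyadic actions, which linearity of the session endpoint makes well-behaved). The paper's own proof is a sketch that glosses over exactly this point, so with that correction your argument is at the same level of rigour as the published one.
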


\begin{proof}
	The proof for both direction is a consequence of Operational Correspondence,
	\propref{prop:op_cor:pHOp_to_HOp}.

	\noi {\bf Soundness Direction.}

	\noi We create the closure
	\[
		\Re = \set{\horel{\Gamma}{\Delta_1}{P}{\ ,\ }{\Delta_2}{Q} \setbar \horel{\tmap{\Gamma}{4}}{\tmap{\Delta_1}{4}}{\pmap{P}{4}}{\wb}{\tmap{\Delta_2}{4}}{\pmap{Q}{4}}}
	\]
	\noi	It is straightforward to show that $\Re$ is a bisimulation if we follow Part 2 of
		\propref{prop:op_cor:pHOp_to_HOp}.

	\noi {\bf Completeness Direction.}

	\noi We create the closure
	\[
		\Re = \set{\horel{\tmap{\Gamma}{4}}{\tmap{\Delta_1}{4}}{\pmap{P}{4}}{\ ,\ }{\tmap{\Delta_2}{4}}{\pmap{Q}{4}} \setbar \horel{\Gamma}{\Delta_1}{P}{\wb}{\Delta_2}{Q}}
	\]
%
	\noi	We show that $\Re$ is a bisimulation up to deterministic transitions
		by following Part 1 of \propref{prop:op_cor:pHOp_to_HOp}.
	\qed
\end{proof}

\begin{proposition}[Precise encoding of \HOpp into \HOp]\myrm
	\label{prop:prec:pHOp_to_HOp}
	The encoding from $\tyl{L}_{\pHOp}$ to $\tyl{L}_{\HOp}$
	is precise.
\end{proposition}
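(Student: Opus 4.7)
The plan is to follow the exact same structure as the preceding preciseness results (Proposition~\ref{prop:prec:HOp_to_HO}, Proposition~\ref{prop:prec:HOp_to_p}, and Proposition~\ref{prop:prec:HOpp_to_HOp}), reducing the claim to the combination of the syntactic criteria of Definition~\ref{def:sep} and the semantic criteria of Definition~\ref{def:ep}, and then dispatching each criterion by appealing to a previously established property of the encoding $\enco{\pmap{\cdot}{4}, \tmap{\cdot}{4}, \mapa{\cdot}^{4}}$ defined in Figure~\ref{fig:enc:pHOp_to_HOp}.

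First I would verify the three syntactic conditions of Definition~\ref{def:sep} directly from Figure~\ref{fig:enc:pHOp_to_HOp}. Homomorphism with respect to parallel composition and compositionality with respect to restriction are immediate since $\pmap{\cdot}{4}$ is defined homomorphically on those constructs and $\tmap{\cdot}{4}$ is extended pointwise over session environments. Name invariance follows because the non-homomorphic clauses (polyadic output/input of names, polyadic abstraction output, and polyadic application) only introduce fresh bound session names $s$ and bound variables $z$, so any injective renaming $\sigma$ of free names commutes with the encoding up to choice of bound names.

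Next I would dispatch the semantic criteria. Type preservation (Definition~\ref{def:ep}-1) is exactly Proposition~\ref{prop:typepres_pHOp_to_HOp}. Subject preservation (Definition~\ref{def:ep}-2) is inspected directly from the label mapping $\mapa{\cdot}^{4}$: polyadic labels $\news{\tilde{m}'}\bactout{n}{\tilde{m}}$ and $\bactinp{n}{\tilde{m}}$ are mapped to sequences of monadic labels that all retain $n$ as subject, and higher-order labels retain $n$ as well. Operational correspondence (both completeness and soundness) is precisely Proposition~\ref{prop:op_cor:pHOp_to_HOp}, read modulo the convention already announced in Section~\ref{sec:extension} that $\mapa{\ell}$ may denote a sequence $\ell_1,\ldots,\ell_m$ and that $\map{P}\Hby{\mapa{\ell}}\map{P'}$ abbreviates the corresponding chain of weak transitions. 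Full abstraction in the sense of Definition~\ref{def:ep}-4 is Proposition~\ref{prop:fulla:pHOp_to_HOp}.

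The main obstacle, and what I would devote most of the writing to, is matching the operational correspondence clauses of Proposition~\ref{prop:op_cor:pHOp_to_HOp} against Definition~\ref{def:ep}-3. Since a single polyadic exchange on a session name is implemented by a sequence of $n$ monadic exchanges, each source transition is simulated by a sequence of target transitions rather than a single weak transition; conversely, a single target transition of the encoding need not correspond to any source transition but only to a prefix of an encoded polyadic step. This intermediate-state mismatch is exactly why the extended reading of $\mapa{\cdot}$ given at the start of Section~\ref{sec:extension} is necessary, and why the soundness direction of Proposition~\ref{prop:op_cor:pHOp_to_HOp} is phrased so as to always extend a partial simulation to a full one (clauses 2(a), 2(d), 2(e)). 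Once this reading is invoked, type preservation plus operational correspondence plus full abstraction combine to give the semantic part of preciseness, and the proof concludes by \qed.
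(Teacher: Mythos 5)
Your proposal follows exactly the paper's own argument: the syntactic criteria are checked directly from Figure~\ref{fig:enc:pHOp_to_HOp}, and the semantic criteria are obtained by citing Proposition~\ref{prop:typepres_pHOp_to_HOp}, Proposition~\ref{prop:op_cor:pHOp_to_HOp}, and Proposition~\ref{prop:fulla:pHOp_to_HOp}, with the sequence-of-labels reading of $\mapa{\cdot}$ already built into the paper's adjusted statement of Definition~\ref{def:ep} for this section. The extra discussion of intermediate states is a helpful elaboration but does not change the route.
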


\begin{proof}
	Syntactic requirements are easily derivable from the
	definition of the mappings in \figref{fig:enc:pHOp_to_HOp}.
	Semantic requirements are a consequence of
	\propref{prop:typepres_pHOp_to_HOp}, \propref{prop:op_cor:pHOp_to_HOp}, and \propref{prop:fulla:pHOp_to_HOp}.
	\qed
\end{proof}




\section{Related Work}
\label{sec:related}

\myparagraph{Expressiveness in Concurrency.}
There is a vast literature on expressiveness studies for process calculi;
we refer to~\cite{DBLP:journals/entcs/Parrow08} for a survey
(see also~\cite[\S\,2.3]{PerezPhD10}). 
In particular, the expressive power of the $\pi$-calculus has received much attention.
Studies cover, e.g., 
relationships between first-order and higher-order concurrency~(see, e.g.,~\cite{SangiorgiD:expmpa,San96int}),
comparisons between 
synchronous and asynchronous communication~(see, e.g.,~\cite{Boudol92,Palamidessi03,BeauxisPV08}),
and
(non)encodability issues for different choice operators~(see, e.g.,~\cite{Nestmann00,DBLP:conf/esop/PetersNG13}).
To substantiate claims related to (relative) expressive power,
early works appealed to different definitions of encoding.
Later on, 
proposals of abstract 
frameworks which formalise the notion of encoding 
and state associated syntactic and semantic criteria 
were put forward; 
recent proposals are~\cite{DBLP:journals/iandc/Gorla10,DBLP:journals/tcs/FuL10,DBLP:journals/corr/abs-1208-2750}. 
These frameworks are applicable to different calculi, and 
have shown useful to clarify known results and to derive new ones.
Our formulation of (precise) typed encoding (\defref{def:goodenc}) 
builds upon existing proposals (including~\cite{Palamidessi03,DBLP:journals/iandc/Gorla10,DBLP:conf/icalp/LanesePSS10})
in order to account for the session type systems
associated to the process languages under comparison.

\myparagraph{Expressiveness of Higher-Order Process Calculi.}
Early expressiveness studies for higher-order calculi are~\cite{Tho90,SangiorgiD:expmpa}; 
more recent works include~\cite{BundgaardHG06,DBLP:conf/icalp/LanesePSS10,DBLP:journals/iandc/LanesePSS11,XuActa2012,DBLP:conf/wsfm/XuYL13}.
Due to the close relationship between higher-order process calculi and functional calculi, 
works devoted to encoding (variants of) the $\lambda$-calculus into (variants of) the $\pi$-calculus (see, e.g.,~\cite{San92,DBLP:journals/tcs/Fu99,DBLP:journals/iandc/YoshidaBH04,BHY,DBLP:conf/concur/SangiorgiX14}) are also worth mentioning.
The work~\cite{SangiorgiD:expmpa} gives an encoding of the higher-order $\pi$-calculus
into the first-order $\pi$-calculus which is fully abstract with respect to reduction-closed, barbed congruence. 
A basic form of input/output types is used in~\cite{DBLP:journals/tcs/Sangiorgi01}, where the encoding in~\cite{SangiorgiD:expmpa} is casted in the asynchronous setting, with output and applications coalesced in a single construct. Building upon~\cite{DBLP:journals/tcs/Sangiorgi01}, 
a simply typed encoding for synchronous processes is given in~\cite{SaWabook}; the reverse encoding (i.e.,  first-order communication into higher-order processes) is also studied there for an asynchronous, localised $\pi$-calculus (only the output capability of names can be sent around).
The work~\cite{San96int} studies hierarchies for calculi with \emph{internal} first-order mobility and 
with higher-order mobility without name-passing (similarly as the subcalculus \HO). 
The hierarchies are based on expressivity: 
formally defined according to the order of types needed in typing, 
they describe different ``degrees of mobility''.
Via fully abstract encodings, it is shown that that name- and process-passing calculi with equal order of types have the same expressiveness.
With respect to these previous results, our approach based on session types 
has several important consequences and allows us to derive new results.  Our study reinforces the intuitive view of ``encodings as protocols'', namely session protocols which enforce precise linear and shared disciplines for names, a distinction not investigated in~\cite{SangiorgiD:expmpa,DBLP:journals/tcs/Sangiorgi01}. 
In turn, the linear/shared distinction is central in proper definitions of trigger processes, which are essential to encodings and behavioural equivalences.
More interestingly, we showed that $\HO$, a  minimal higher-order session calculus (no name passing, only first-order application) suffices to encode $\sessp$ (the session calculus with name passing) but also 
$\HOp$  and 
its extension  with higher-order applications (denoted $\HOpp$). 
Thus, using session types all these calculi are shown to be equally expressive with fully abstract encodings.
To our knowledge, these are the first expressiveness results of this kind.

Other related works are~\cite{BundgaardHG06,XuActa2012,DBLP:journals/iandc/LanesePSS11}.
The paper~\cite{BundgaardHG06} proposes a fully abstract, contin\-u\-a\-tion-passing style encoding of the 
$\pi$-calculus into Homer, a rich higher-order process calculus with explicit locations, local names, and nested locations.
The work~\cite{XuActa2012} studies the encodability of the higher-order $\pi$-calculus (extended with a relabelling operator) into the first-order $\pi$-calculus; encodings in the reverse direction are also proposed, following \cite{Tho90}.
A minimal calculus of higher-order concurrency is studied in~\cite{DBLP:journals/iandc/LanesePSS11}: it lacks restriction,  name passing, output prefix (so  communication is asynchronous), and constructs for infinite behaviour. 
Nevertheless, this calculus (a sublanguage of \HO) is shown to be Turing complete. Moreover, 
strong bisimilarity is decidable and coincides with reduction-closed, barbed congruence. 

Building upon~\cite{ThomsenB:plachoasgcfhop},
the work~\cite{XuActa2012} studies 
the (non)encodability of the $\pi$-calculus into 
a higher-order $\pi$-calculus with a powerful 
name relabelling operator, which is 
shown to be essential in encoding name-passing. 
A core higher-order calculus is studied in~\cite{DBLP:journals/iandc/LanesePSS11}: 
it lacks restriction,  name passing, output prefix 
and constructs for infinite behaviour. 
This calculus  has 
a simple notion of bisimilarity which coincides with reduction-closed, barbed congruence.
{
The absence of restriction plays a key role in the characterisations in~\cite{DBLP:journals/iandc/LanesePSS11};
hence, our characterisation of contextual equivalence for \HO (which has restriction)
cannot be derived from that in~\cite{DBLP:journals/iandc/LanesePSS11}.

In~\cite{DBLP:conf/icalp/LanesePSS10}
the core calculus in~\cite{DBLP:journals/iandc/LanesePSS11} is extended with restriction,
synchronous communication, and polyadicity. It is shown that 
synchronous communication can encode asynchronous communication, 
and that process passing polyadicity induces a hierarchy in expressive power. 
The paper~\cite{DBLP:conf/wsfm/XuYL13} 
complements~\cite{DBLP:conf/icalp/LanesePSS10} 
by studying the expressivity 
of 
second-order process abstractions.
Polyadicity is shown to induce an expressiveness hierarchy; 
also,
by adapting the encoding in~\cite{SangiorgiD:expmpa},
process abstractions are encoded into name abstractions.
In contrast, we 
give a fully abstract encoding of
 \PHOpp into \HO that preserves session types; this improves~\cite{DBLP:conf/icalp/LanesePSS10,DBLP:conf/wsfm/XuYL13}   
by enforcing linearity disciplines on process behaviour.
The focus of~\cite{DBLP:conf/icalp/LanesePSS10,DBLP:conf/wsfm/XuYL13} is on 
the expressiveness of untyped, higher-order processes; they
do not address 
tractable equivalences for processes  (such as 
higher-order and characteristic bisimulations) which only require observation of finite 
higher-order values,  
whose formulations rely on session types.}

\myparagraph{Session Typed Processes.}
The works~\cite{DemangeonH11,Dardha:2012:STR:2370776.2370794} 
study encodings of binary session calculi into a linearly typed $\pi$-calculus. 
While~\cite{DemangeonH11}~gives a precise encoding of \sessp into a linear calculus 
(an extension of \cite{BHY}),  
the work~\cite{Dardha:2012:STR:2370776.2370794} 
gives the operational correspondence 
(without full abstraction, cf.~\defref{def:sep}-4)
for the first- and higher-order 
$\pi$-calculi into \cite{LinearPi}. 
They investigate an embeddability of two different typing systems;
by the result of \cite{DemangeonH11}, 
\HOpp is encodable  into the linearly typed $\pi$-calculi.     

The syntax of $\HOp$ is a subset of that in~\cite{tlca07,MostrousY15}.
The work~\cite{tlca07} develops a full higher-order session calculus
with process abstractions and applications; it admits the type 
$U=U_1 \rightarrow U_2 \dots U_n \rightarrow \Proc$ and its linear type 
$U^1$
which corresponds to $\shot{\tilde{U}}$ and $\lhot{\tilde{U}}$ in 
a super-calculus of $\HOpp$ and $\PHOp$. 
Our results show that
the calculus in~\cite{tlca07} is not only expressed but 
also reasoned in 
$\HO$ (with limited form of arrow types, $\shot{C}$ and $\lhot{C}$), via precise encodings. 
{None of the above works proposes tractable 
bisimulations for higher-order processes.}

\myparagraph{Other Works on Typed Behavioural Equivalences.}
Since types can limit
contexts (environments) where processes can interact, typed equivalences
usually offer {\em coarse} semantics than untyped semantics. 
The work  \cite{PiSa96b} demonstrated the IO-subtyping can equate 
the optimal encoding of the $\lambda$-calculus by Milner which was not i
n the untyped polyadic $\pi$-calculus \cite{MilnerR:funp}. 
After \cite{PiSa96b}, many works on typed $\pi$-calculi 
have investigated correctness of encodings of known concurrent and
sequential calculi in order to examine semantic
effects of proposed typing systems. 

The type discipline closely related
to session types is a family of linear typing systems. The
work \cite{LinearPi} first proposed a linearly typed reduction-closed, barbed congruence and 
reasoned a tail-call optimisation of higher-order functions which are
encoded 
as processes. 
The work \cite{Yoshida96} had
used a bisimulation of graph-based types to prove the full abstraction
of encodings of the polyadic synchronous $\pi$-calculus into the
monadic synchronous $\pi$-calculus. 
Later typed equivalences of a
family of linear and affine calculi \cite{BHY,DBLP:journals/iandc/YoshidaBH04,BergerHY05} 
were used to encode 
PCF \cite{Plotkin1977223,Milner19771}, the simply typed $\lambda$-calculi with sums and products, and system F \cite{GirardJY:protyp}
fully abstractly (a fully abstract encoding of the $\lambda$-calculi 
was an open problem in \cite{MilnerR:funp}).  
The work \cite{YHB02} proposed a new bisimilarity
method associated with linear type structure and strong
normalisation. It presented applications to reason secrecy in
programming languages. A subsequent work \cite{HY02} adapted these results
to a practical direction. It proposes new typing
systems for secure higher-order and multi-threaded programming 
languages. 
In these works, typed properties, linearity and liveness, 
play a fundamental role in the analysis. In general, linear types 
are suitable to encode ``sequentiality'' in the sense of 
\cite{HylandJME:fulapi,AbramskyS:fulap}.

\myparagraph{Typed Behavioural Equivalences.}
This work follows the principles 
for
session type behavioural semantics in 
\cite{KYHH2015,KY2015,DBLP:journals/iandc/PerezCPT14}
where a bisimulation is defined on a LTS 
that assumes a session typed
observer.
Our theory for higher-order session types 
differentiates from 
the work in~\cite{KYHH2015,KY2015}, which 
considers the first-order
binary and multiparty session types, respectively.
The work \cite{DBLP:journals/iandc/PerezCPT14} gives a behavioural theory 
for a 
logically motivated
language of binary sessions 
without shared names.

Our approach for the higher-order builds upon techniques by Sangiorgi~\cite{SangiorgiD:expmpa,San96H}
and Jeffrey and Rathke~\cite{JeffreyR05}.
The work 
\cite{SangiorgiD:expmpa}
introduced the first fully-abstract encoding from the higher-order 
$\pi$-calculus into the $\pi$-calculus. 
Sangiorgi's encoding is based on the idea of a replicated input-guarded process 
(called a trigger process). We use a similar 
replicated triggered process 
to encode \HOp into \sessp (\defref{def:enc:HOp_to_p}).
 Operational correspondence for
the triggered encoding is shown using a context bisimulation
with first-order labels.
To deal with the issue of context bisimilarity, 
Sangiorgi proposes \emph{normal bisimilarity}, 
a tractable  equivalence without universal quantification. 
To prove that context and normal bisimilarities coincide,~\cite{SangiorgiD:expmpa} uses 
triggered processes.
Triggered bisimulation is also defined on first-order labels
where the contextual bisimulation is restricted to arbitrary
trigger substitution. 
This
characterisation of context bisimilarity  was refined in~\cite{JeffreyR05} for
calculi with recursive types, not addressed in~\cite{San96H,SangiorgiD:expmpa} and
relevant in our work.
The
bisimulation in~\cite{JeffreyR05}
is based on an LTS which is extended with trigger meta-notation.
As in~\cite{San96H,SangiorgiD:expmpa}, 
the LTS in~\cite{JeffreyR05}
observes first-order triggered values instead of
higher-order values, offering a more direct characterisation of contextual equivalence
and lifting the restriction to finite types.

We contrast 
the approach in~\cite{JeffreyR05} and our approach based on 
higher-order and characteristic bisimilarities. 
Below we use the notations adopted in~\cite{JeffreyR05}.
 
\begin{enumerate}[i)]
	\item	The work \cite{JeffreyR05} extends the first-order
		LTS for a trigger interaction whereas 
		our work uses the higher-order LTS. 

	\item	The output of a higher-order value $\abs{x}{Q}$ on name
		$n$ in \cite{JeffreyR05}
		requires the output of
		a fresh trigger name $t$ (notation $\tau_t$)
		on channel $n$ 
		and then the introduction of a replicated triggered process
		(notation $(t \Leftarrow (x) Q)$). 
		Hence we have:
		\[
			P \by{\news{t} \bactout{n}{\tau_{t}}} P' \Par (t \Leftarrow (x) Q) \by{\bactinp{t}{v}} P' \Par \appl{(x) Q}{v} \Par (t \Leftarrow (x) Q) 
		\]
	In our characteristic bisimulation, we only observe
	an output of a value that can be either first- or higher-order as follows:
		\[
			P \hby{\bactout{n}{V}} P' 
		\]
		with $V \equiv \abs{x}{Q}$ or $V = m$.

		A non-replicated triggered process ($\htrigger{t}{V}$)
		appears in 
		the parallel context of the acting process when
		we compare two processes for behavioural equality
		(cf.~\defref{def:cbisim}).
		Using the LTS in
		\defref{def:typed_transition} we can obtain:
		\begin{eqnarray*}
			P' \Par \htrigger{t}{\abs{x}{Q}}
			&\by{\abs{z}{\binp{z}{y} \repl{} \binp{t}{x} (\appl{y}{x})}}&
			P' \Par \newsp{s}{\binp{s}{y} \repl{} \binp{t}{x} 
(\appl{y}{x}) \Par \bout{s}{\abs{x}{Q}} \inact}\\
			&\by{\tau}&
			P' \Par \repl{}\binp{t}{y} (\appl{(\abs{x}{Q})}{y})
		\end{eqnarray*}
		that simulates the approach in \cite{JeffreyR05}.

	In addition, the output of the characteristic bisimulation 
differentiates from
		the approach in \cite{JeffreyR05} as listed below:
		\begin{itemize}
			\item	The typed LTS predicts the case of linear
				output values and will never allow replication
				of such a value;
				if $V$ is linear the input action would have no replication
				operator, as
				$\abs{z}{\binp{z}{y} \binp{t}{x} (\appl{y}{x})}$.

			\item	The characteristic bisimulation introduces a uniform approach
				not only for
				higher-order values but for first-order values
				as well, i.e.~triggered process can accept any
				process that can substitute a first-order value as well.
				This is derived from the fact that the $\HOp$-calculus makes no use of a matching operator, in contrast
				to the calculus defined in \cite{JeffreyR05})
				where name matching is crucial to prove completeness
				of the bisimilarity relation.
				Instead of a matching operator, 
we use types: a characteristic value inhabiting a type
enables the simplest form of interactions 
				with the environment.

			\item	Our \HOp-calculus requires only first-order
				applications. Higher-order applications,
				as in \cite{JeffreyR05},
				are presented as an extension in the \HOpp
				calculus.

			\item	Our trigger process is non-replicated. 
				It guards the output
				value with a higher-order input prefix. The
				functionality of the input is then used to
				simulate the contextual bisimilarity that subsumes
				the replicated trigger approach (cf.~\secref{subsec:char_bis}).
				The transformation of an output action as an input
				action allows for treating an output
				using the restricted LTS (\defref{def:restricted_typed_transition}):
				\[
					P' \Par \htrigger{t}{\abs{x}{Q}} \hby{\bactinp{t}{\abs{x}{\mapchar{U}{x}}}}
					P' \Par \news{s}{ (\appl{\mapchar{U}{x}}{s} \Par \bout{\dual{s}}{\abs{x}{Q}} \inact)}
				\]
		\end{itemize}
		%

	\item	The input of a higher-order value in the \cite{JeffreyR05}
		requires 
		the input of a meta-syntactic fresh trigger, which then
		substituted on the application variable, thus the meta-syntax
		is extended to represent applications, e.g.:
		%
		\[
			\binp{n}{x} P \by{\bactinp{n}{\tau_k}} (\appl{(\abs{x}{P})}{\tau_k}) \by{\tau} P \subst{\tau_k}{x} 
		\]
		Every instance of process variable $x$ in $P$ being substituted
		with trigger value $\tau_k$ to give an application of the form $(\appl{\tau_k}{x})$.
		In contrast the approach in the characteristic bisimulation observes the
		triggered value
		$\abs{z}\binp{t}{x} (\appl{x}{z})$ as an input instead of the
		meta-syntactic trigger:
		\[
			\binp{n}{x} P \hby{\bactinp{n}{\abs{z}\binp{t}{x} (\appl{x}{z})}} P \subst{\abs{z}\binp{t}{x} (\appl{x}{z})}{x}
		\]
		Every instance of process variable $x$ in $P$
		is substituted to give application of the form
		$\appl{(\abs{z}{\binp{t}{x} (\appl{x}{z})})}{v}$
		Note that in the characteristic bisimulation, 
		we can also observe a characteristic process as an input.
		
	\item 	Triggered applications in~\cite{JeffreyR05}
		are observed as an output of the application
		value over the fresh trigger name:
		%
		\[
			\appl{\tau_k}{v} \by{\bactout{k}{v}} \inact
		\]
		In contrast in the characteristic bisimulation
		we have two kind of applications:
		i) the trigger value application allows us
		to simulate an application on a fresh trigger name.
		ii) the characteristic value application
		allows us to inhabit an application value 
		and observe the interaction its interaction
		with the environment as below: 
		%
		\begin{eqnarray*}
			\appl{(\abs{z}{\binp{t}{x} (\appl{x}{z})})}{v} &\by{\tau}& \binp{t}{x} (\appl{x}{v})
			\by{\bactinp{t}{\abs{x}{\mapchar{U}{x}}}}
			\appl{(\abs{x}{\mapchar{U}{x}})}{v}
			\by{\tau} \mapchar{U}{x} \subst{v}{x}
		\end{eqnarray*}
\end{enumerate}

\bibliographystyle{plain}
\bibliography{session}

\newpage
\appendix
\section{Type Soundness}
\label{app:ts}

We state type soundness of our system.
As our typed process framework is a sub-calculus of that considered
by Mostrous and Yoshida, the proof of type soundness requires notions
and properties which are specific instances of those already shown in~\cite{MostrousY15}.
We begin by stating weakening and strengthening lemmas,
which have standard proofs.

\begin{lemma}[Weakening - Lemma C.2 in~\cite{MostrousY15}]\rm
	\label{l:weak}
	\begin{enumerate}[$-$]
		\item	If $\Gamma; \Lambda; \Delta \proves P \hastype \Proc$
			and
			$x \not\in \dom{\Gamma,\Lambda,\Delta}$
			then
			$\Gamma\cat x: \shot{S}; \Lambda; \Delta \proves P \hastype \Proc$ 
	\end{enumerate}
\end{lemma}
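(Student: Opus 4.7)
The plan is a straightforward induction on the derivation of $\Gamma; \Lambda; \Delta \proves P \hastype \Proc$, performing case analysis on the last typing rule applied (as given in \figref{fig:typerulesmy}). The freshness assumption $x \not\in \dom{\Gamma,\Lambda,\Delta}$ is what allows us to extend $\Gamma$ with the binding $x:\shot{S}$ while preserving the disjointness condition on environments required by \defref{def:typeenv}.

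First, I would dispose of the base cases. Rules \trule{Sess}, \trule{Sh}, \trule{LVar}, \trule{Nil}, and \trule{RVar} are axioms: for each, we need to observe that adding a fresh mapping $x:\shot{S}$ to $\Gamma$ leaves intact the side conditions on the minimal environments (e.g., for \trule{Sh}, the name being looked up cannot coincide with $x$, by freshness). For \trule{Sh} in particular, we use weakening/contraction on $\Gamma$ (explicitly allowed by \defref{def:typeenv}(ii)) to enlarge the shared context. The case \trule{RVar} is also immediate since $\varp{X} \neq x$ by the freshness assumption.

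For the inductive cases, rules that merely propagate the environments to premises (such as \trule{Abs}, \trule{Rcv}, \trule{Acc}, \trule{Res}, \trule{ResS}, \trule{Sel}, \trule{Bra}, \trule{End}, \trule{Rec}) are handled by applying the induction hypothesis to each premise and then reapplying the rule. In the binding rules (\trule{Res}, \trule{ResS}, \trule{Rec}), one must first perform $\alpha$-conversion if necessary so that the bound name/variable differs from~$x$; this is unproblematic. In rules that split the environments among premises (\trule{Send}, \trule{App}, \trule{Req}, \trule{Par}), we need to choose to which premise to add the new binding $x:\shot{S}$. Since $x$ is a shared variable, it can be added to any premise using \trule{Prom}/\trule{EProm}, or more simply we can rely on the fact that the shared environment $\Gamma$ is common to all premises, so the same addition applies uniformly to every premise before reassembling the derivation. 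The cases \trule{Prom} and \trule{EProm} are immediate by induction.

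I do not expect any serious obstacle: the lemma is a routine structural property of the typing system, and its proof is essentially mechanical. The only point worth some care is ensuring that the freshness hypothesis $x \notin \dom{\Gamma,\Lambda,\Delta}$ is preserved when moving under binders; this is handled by standard $\alpha$-renaming. Since this statement is a direct specialization of the weakening result in~\cite{MostrousY15}, one could alternatively appeal to that result to dispatch the proof immediately, noting that our typing discipline is a sublanguage of the one considered there.
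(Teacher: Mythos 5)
Your proof is correct and matches the paper's intent: the paper gives no explicit argument for this lemma, merely noting that weakening and strengthening "have standard proofs" and are instances of Lemma~C.2 in~\cite{MostrousY15}, and your routine induction on the typing derivation (together with the fallback of appealing directly to the cited result) is exactly that standard argument.
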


\begin{lemma}[Strengthening - Lemmas C.3 and C.4 in~\cite{MostrousY15}]\rm
	\label{l:stren}
	\begin{enumerate}[$-$]
		\item	If $\Gamma \cat x: \shot{S}; \Lambda; \Delta \proves P \hastype \Proc$
			and
			$x \not\in \fpv{P}$ then
			$\Gamma; \Lambda; \Delta \proves P \hastype \Proc$

		\item	If $\Gamma; \Lambda; \Delta \cat s: \tinact \proves P \hastype \Proc$
			and
			$s \not\in \fn{P}$
			then
			$\Gamma; \Lambda; \Delta \proves P \hastype \Proc$
	\end{enumerate}
\end{lemma}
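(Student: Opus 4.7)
Both parts are proved by induction on the structure of the typing derivation of the hypothesis. The strategy is the usual one: since the named entry ($x:\shot{S}$ in part~1, $s:\tinact$ in part~2) is not actually \emph{used} when looking up $x$ or $s$ in $P$, it can be eliminated by a straightforward traversal of the derivation, invoking the induction hypothesis at every subterm and reconstructing the same derivation without that entry.

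For part~1, I first examine the base cases that type a value $V$. If $V = x$, then rule \trule{Sh} (or \trule{LVar} after rule \trule{EProm}) would have been applied; but this contradicts $x \notin \fpv{P}$. In all other base cases (\trule{Sess}, \trule{Sh} with some $u \neq x$, \trule{LVar}, \trule{Nil}, \trule{RVar}) the judgement does not depend on $x:\shot{S}$, so we may remove it directly, appealing where necessary to the well-formedness of $\Gamma$ from \defref{def:typeenv}. The inductive cases are uniform: for each process constructor, verify that $\fpv{}$ is preserved by the premises (e.g.\ $x \notin \fpv{P \Par Q}$ implies $x \notin \fpv{P}$ and $x \notin \fpv{Q}$; similarly for restriction, input, output, selection, branching, application, abstraction, and recursion), apply the induction hypothesis to each sub-derivation, and reassemble the typing rule. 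The only rule requiring a moment of care is \trule{EProm}, where we must be sure we are not about to promote the very entry $x:\shot{S}$ we are removing; but this cannot happen because the promoted variable is by construction a fresh linear variable in $\Lambda$, not the $x$ listed in $\Gamma$.

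For part~2, the argument proceeds analogously by induction on the derivation, this time tracking the session entry $s:\tinact$ in $\Delta$. The key observation is that if $s \notin \fn{P}$, then no rule that \emph{consumes} $s$ (such as \trule{Send}, \trule{Rcv}, \trule{Sel}, \trule{Bra}, \trule{Req}, \trule{Acc}, or \trule{Sess}) can have introduced $s:\tinact$ into $\Delta$: each of these rules mentions a name $u$ with a non-$\tinact$ type in the conclusion's session environment, and the type $\tinact$ has no prefix to consume. Hence $s:\tinact$ must ultimately have been added by rule \trule{End}. Concretely, one proceeds by induction: at the base cases \trule{Nil} and \trule{RVar} the entry $s:\tinact$ sits inert in $\Delta$ and can be discarded (using weakening of $\Delta$ backwards, i.e.\ simply dropping the entry, which is admissible since the other premises do not mention $s$). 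At the inductive cases we propagate the invariant $s \notin \fn{\cdot}$ to the premises and invoke the induction hypothesis; the rule \trule{End} itself is where the entry gets removed, after which the reassembled conclusion no longer contains $s:\tinact$.

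The only genuine obstacle is the bookkeeping across rules \trule{Par} and \trule{ResS}, since these split or extend the session environment and one must show that $s:\tinact$ can be threaded into exactly one premise (in \trule{Par}) and that it does not collide with the bound endpoints (in \trule{ResS}). Both are resolved by the standard convention that bound names are fresh and by noting that if $s \notin \fn{P_1 \Par P_2}$ then we may assign $s:\tinact$ to either premise without changing the conclusion, so the induction hypothesis applies. A symmetric argument handles part~1 for the environment $\Gamma$, where the structural rules of \defref{def:typeenv}(ii) permit free exchange. This completes the sketch.
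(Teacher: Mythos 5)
Your proof is correct and follows exactly the route the paper intends: the paper itself gives no proof of this lemma, dismissing it as "standard" and deferring to Lemmas C.3 and C.4 of the cited work, and your structural induction on the typing derivation is precisely that standard argument.

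One small correction to your part~1: the \trule{EProm} case is not vacuous for the reason you give. The rule's conclusion is literally $\Gamma \cat x:\shot{C}; \Lambda; \Delta \proves P \hastype \Proc$ with the \emph{same} $x$ that appears as $x:\lhot{C}$ in the premise's linear environment, so the entry you are strengthening away can syntactically be the one \trule{EProm} just introduced. The case is nonetheless impossible, but because $\Lambda$ is linear: an entry $x:\lhot{C} \in \Lambda$ must be consumed in the derivation (cf.\ \trule{Nil}, \trule{LVar}), forcing $x \in \fpv{P}$ and contradicting the hypothesis. With that repair the argument is complete.
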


\begin{lemma}[Substitution Lemma - Lemma C.10 in~\cite{MostrousY15}]\rm
	\label{l:subst}
	\begin{enumerate}[1.]
		\item	Suppose $\Gamma; \Lambda; \Delta \cat x:S  \proves P \hastype \Proc$ and
			$s \not\in \dom{\Gamma, \Lambda, \Delta}$. 
			Then $\Gamma; \Lambda; \Delta \cat s:S  \vdash P\subst{s}{x} \hastype \Proc$.

		\item	Suppose $\Gamma \cat x:\chtype{U}; \Lambda; \Delta \proves P \hastype \Proc$ and
			$a \notin \dom{\Gamma, \Lambda, \Delta}$. 
			Then $\Gamma \cat a:\chtype{U}; \Lambda; \Delta   \vdash P\subst{a}{x} \hastype \Proc$.

		\item	Suppose $\Gamma; \Lambda_1 \cat x:\lhot{C}; \Delta_1  \proves P \hastype \Proc$ 
			and $\Gamma; \Lambda_2; \Delta_2  \proves V \hastype \lhot{C}$ with 
			$\Lambda_1, \Lambda_2$ and $\Delta_1, \Delta_2$ defined.  
			Then $\Gamma; \Lambda_1 \cat \Lambda_2; \Delta_1 \cat \Delta_2  \proves P\subst{V}{x} \hastype \Proc$.

		\item	Suppose $\Gamma \cat x:\shot{C}; \Lambda; \Delta  \proves P \hastype \Proc$ and
			$\Gamma; \emptyset ; \emptyset  \proves V \hastype \shot{C}$.
			Then $\Gamma; \Lambda; \Delta  \proves P\subst{V}{x} \hastype \Proc$.
		\end{enumerate}
\end{lemma}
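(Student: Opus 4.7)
The plan is a standard proof by simultaneous induction on the typing derivation of $P$, with case analysis on the last typing rule applied. Because the four parts interact through rules \trule{Prom} and \trule{EProm}, I would prove them simultaneously rather than one at a time, since an \trule{EProm} case in Part~4 naturally reduces to Part~3, and a \trule{Prom} case when typing the substituted value $V$ in Part~3 reduces to Part~4.

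First I would dispatch the trivial cases where $P$ is built by rules that do not involve the substituted variable, namely $\trule{Nil}$, $\trule{End}$ (modulo a use of Weakening, Lemma~\ref{l:weak}), $\trule{Res}$, and $\trule{ResS}$: here we simply invoke the induction hypothesis on the premise(s) and reapply the rule. The binding rules $\trule{Rec}$ and $\trule{RVar}$ are handled by observing that session/linear/shared variables are disjoint from recursion variables, so the substitution commutes with the binding. The $\trule{Par}$ case requires splitting the environments $\Lambda = \Lambda_1 \cdot \Lambda_2$ and $\Delta = \Delta_1 \cdot \Delta_2$ and deciding in which component the substituted variable lives; linearity of $\Lambda$ and $\Delta$ guarantees the variable occurs in exactly one side, so we apply the induction hypothesis there and reuse the unchanged typing on the other side.

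Next I would handle the leaf rules $\trule{Sess}$, $\trule{Sh}$, $\trule{LVar}$ (together with $\trule{Prom}$): in Parts~1 and~2 these match the subject of the substitution directly and produce the expected judgement; in Parts~3 and~4 they reduce to the assumption on $V$. For the prefix rules $\trule{Send}$, $\trule{Rcv}$, $\trule{Req}$, $\trule{Acc}$, $\trule{Sel}$, $\trule{Bra}$ the argument is uniform: I split the premises according to whether the substituted variable occurs in the subject, in the carried value, or only in the continuation, and in each subcase apply the induction hypothesis once or twice and rebuild the derivation. The rule $\trule{App}$ is similar but demands more care for Part~3/4, since the substituted higher-order value may itself \emph{become} the function position of an application; here I would use the typing of $V$ to derive the needed $\lhot{C}$ or $\shot{C}$ judgement and then apply Substitution inductively on the continuation, exploiting \trule{Prom} when passing from linear to shared.

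The main obstacle is Part~3, the linear higher-order substitution. The delicate point is rule \trule{EProm}: the process $P$ may have been typed by first moving $x:\lhot{C}$ from $\Lambda$ into $\Gamma$ as $x:\shot{C}$, after which $x$ can be used several times. When we meet this case under Part~3, we must show that substituting a linearly-typable $V$ is still sound; this forces the side condition that $V$ in fact admits a promotion to $\shot{C}$, which is only guaranteed when its typing uses no linear resources (empty $\Lambda_2$ and $\Delta_2$), precisely matching the hypothesis of Part~4. Thus Parts~3 and~4 must be threaded together, with the \trule{EProm} case of Part~3 appealing to Part~4 on the smaller derivation. The remaining book-keeping — tracking how $\Lambda_1 \cdot \Lambda_2$ and $\Delta_1 \cdot \Delta_2$ recombine through each prefix rule, and applying Strengthening (Lemma~\ref{l:stren}) where the substitution consumes $\tinact$-typed endpoints — is routine once this interaction is in place.
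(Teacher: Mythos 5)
Your proposal is correct and follows essentially the same route as the paper, which proves this lemma only by the one-line remark that it proceeds by induction on the typing derivation for $P$ with a case analysis on the last applied rule (deferring all details to Lemma C.10 of Mostrous and Yoshida); your elaboration of the cases, including the simultaneous treatment of the four parts and the threading of Parts 3 and 4 through \trule{Prom}/\trule{EProm}, is consistent with that. One small slip worth fixing: in your closing paragraph the dependency runs the wrong way --- it is the \trule{EProm} case of Part 4 (where $x:\shot{C}$ in $\Gamma$ was introduced by moving $x:\lhot{C}$ out of $\Lambda$) that appeals to Part 3 after inverting \trule{Prom} on $V$, exactly as you state correctly in your opening paragraph; in Part 3 the variable $x$ sits in $\Lambda_1$ of the \emph{conclusion}, so it cannot be the variable discharged by a final \trule{EProm}, and that case needs only the induction hypothesis for Part 3 itself.
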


\begin{proof}
	In all four parts, we proceed by induction on the typing for $P$,
	with a case analysis on the last applied rule. 
%
	\qed
\end{proof}

%

We now state the instance of type soundness that we
can derive from~\cite{MostrousY15}.
It is worth noticing 
the 
definition of structural congruence in~\cite{MostrousY15} is richer. 
Also, their statement for subject reduction relies on an 
ordering on typings associated to queues and other 
runtime elements (such extended typings are denoted $\Delta$ in~\cite{MostrousY15}).
Since we are working with synchronous communication we can omit such an ordering.

We now repeat the statement of
\thmref{thm:sr} in Page~\pageref{thm:sr}:

\begin{theorem}[Type Soundness - \thmref{thm:sr}]
	\begin{enumerate}[1.]
		\item	(Subject Congruence) Suppose $\Gamma; \Lambda; \Delta \proves P \hastype \Proc$.
			Then $P \scong P'$ implies $\Gamma; \Lambda; \Delta \proves P' \hastype \Proc$.

		\item	(Subject Reduction) Suppose $\Gamma; \es; \Delta \proves P \hastype \Proc$
			with
			balanced $\Delta$. \\
			Then $P \red P'$ implies $\Gamma; \es; \Delta'  \proves P' \hastype \Proc$
			and $\Delta = \Delta'$ or $\Delta \red \Delta'$.

	\end{enumerate}
\end{theorem}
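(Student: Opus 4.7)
The plan is to prove the two parts by separate inductions, relying crucially on the weakening, strengthening, and substitution lemmas (\lemref{l:weak}, \lemref{l:stren}, and \lemref{lem:subst}). Since \HOp is a sub-calculus of the language in~\cite{MostrousY15}, our proof is essentially an instantiation of theirs, simplified by the fact that our semantics is synchronous (so no ordering on runtime typings is required) and our structural congruence is lighter.

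For Part~(1), I would proceed by induction on the derivation of $P \scong P'$, doing one case per generating rule of $\scong$. The monoidal rules for $\Par$ and the permutation of name restrictions are immediate from rules \trule{Par} and \trule{Res}/\trule{ResS}, together with exchange in $\Lambda$ and $\Delta$. The $\alpha$-conversion case and the congruence rule for abstractions are also immediate by induction. The two interesting cases are: (i)~scope extrusion $P_1 \Par \news{n} P_2 \scong \news{n}(P_1 \Par P_2)$ when $n \notin \fn{P_1}$, where I would invert the typing derivation on the left, apply \lemref{l:weak} to extend the typing of $P_1$ with $n$ in the appropriate environment, and then recompose using \trule{Par} and \trule{Res}/\trule{ResS}; and (ii)~the garbage-collection rule $\news{n}\inact \scong \inact$, handled by \lemref{l:stren}. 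The recursion unfolding rule $\recp{X}{P} \scong P\subst{\recp{X}{P}}{\varp{X}}$ is treated by combining the \trule{Rec} and \trule{RVar} rules with a substitution on recursion variables (a straightforward variant of \lemref{lem:subst}).

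For Part~(2), I would proceed by induction on the derivation $P \red P'$ in \figref{fig:reduction}. The congruence closure rule \orule{Cong} follows from Part~(1), while \orule{Par} and \orule{Sess} follow directly from the induction hypothesis (noting that restriction binds dual endpoints with dual types, so any environment reduction on hidden sessions produces a still-balanced $\Delta$). The axiom \orule{App} reduces $(\abs{x}{P})\,u \red P\subst{u}{x}$: by inversion on \trule{App} and \trule{Abs} we obtain the required premise for \lemref{lem:subst}(1) or~(2), depending on whether $u$ is a session or shared identifier, yielding the typing for $P\subst{u}{x}$ with $\Delta' = \Delta$. For \orule{Sel}, inversion on \trule{Sel} and \trule{Bra} (together with \trule{Par} and duality of the endpoint types) gives that $\Delta \red \Delta'$ in the sense of \defref{def:ses_red}, and balancedness is preserved by construction. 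The \emph{main obstacle} is the \orule{Pass} case, which synchronises a send $\bout{n}{V}P$ with an input $\binp{\dual{n}}{x}Q$: after inverting the typings through \trule{Send}/\trule{Rcv} (or, for shared $n$, \trule{Req}/\trule{Acc}), one must apply the appropriate clause of \lemref{lem:subst} depending on the shape of $V$ and its type $U$—clauses~(1)/(2) when $V$ is a name, clauses~(3)/(4) when $V$ is an abstraction of linear or shared type. In each sub-case the environments $\Lambda_i$ and $\Delta_i$ must be split and recomposed so that the resulting $\Delta'$ is exactly $\Delta$ with the prefixes of $n$ and $\dual{n}$ consumed; balancedness is preserved because the consumed prefixes are dual by assumption.

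I expect no fundamentally new difficulties beyond those addressed in~\cite{MostrousY15}: our framework is strictly simpler, so each case reduces to a targeted application of the substitution lemma or of a structural lemma, and the invariance of balancedness under $\Delta \red \Delta'$ follows immediately from \defref{def:ses_red} together with duality preservation.
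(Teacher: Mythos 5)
Your proposal is correct and follows essentially the same route as the paper's own proof: induction on the reduction derivation for Part~(2), with the \orule{App} and \orule{Pass} cases discharged by the appropriate clauses of the substitution lemma (\lemref{lem:subst}), the \orule{Sel} case yielding $\Delta \red \Delta'$, the \orule{Cong} case reduced to Part~(1), and Part~(1) itself handled by weakening/strengthening (\lemref{l:weak}, \lemref{l:stren}). Your treatment is if anything slightly more explicit than the paper's (e.g.\ you spell out the recursion-unfolding case and the linear-subject sub-cases of \orule{Pass}, which the paper labels ``standard'' or omits as ``similar''), but there is no difference in method.
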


\begin{proof}
	Part (1) is standard, using weakening and strengthening lemmas. Part (2) proceeds by induction on the last reduction rule used. Below, we give some details:
	\begin{enumerate}[1.]
	   \item
	   Case \orule{App}: Then we have
	   $$
	   P = (\abs{x}{Q}) \, u   \red  Q \subst{u}{x} = P'
	   $$
	   Suppose $\Gamma;\, \emptyset ;\, \Delta \proves (\abs{x}{Q}) \, u \hastype \Proc$. 
	   We examine one possible way in which 
	   this assumption can be derived; other cases are similar or simpler:
	   \[
	   \tree{
	   \tree{\Gamma;\, \emptyset ;\, \Delta \cat \{x:S\} \proves Q  \hastype \Proc \quad 
	   \Gamma';\, \emptyset ;\, \{x:S\} \proves x  \hastype S}
	   {
	   \Gamma;\, \emptyset ;\, \Delta \proves \abs{x}{Q}  \hastype \lhot{S} }
	   \qquad
	   \tree{}{
	   \Gamma;\, \emptyset ;\, \{u:S\} \proves   u \hastype S}
	   }{
	   \Gamma;\, \emptyset ;\, \Delta \cat u:S \proves (\abs{x}{Q}) \, u \hastype \Proc
	   }
	   \]
	  Then, by combining premise
	   $\Gamma;\, \emptyset ;\, \Delta \cat \{x:S\} \proves Q  \hastype \Proc$
	   with
	   the substitution lemma (\lemref{lem:subst}(1)),
	   we obtain 
	    $\Gamma;\, \emptyset ;\, \Delta \cat u:S \proves Q\subst{u}{x}  \hastype \Proc$, as desired.
	    
	    \item Case \orule{Pass}: 
	    There are several sub-cases, depending on the type of the communication 
	    subject $n$ and the type of the object $V$. We analyze two representative sub-cases:
	    
	    \begin{enumerate}[(a)]
	    \item $n$ is a shared name and $V$ is a name $v$. 
	    Then we have the following reduction: 
	    $$
	    P = \bout{n}{v} Q_1 \Par \binp{n}{x} Q_2  \red  Q_1 \Par Q_2 \subst{v}{x} = P'
	    $$
	    By assumption, we have 
	    the following typing derivation:
	    \[	    \hspace{-12mm}
	    \tree{
		\eqref{eq:sound1}
	    \quad 
	    		\eqref{eq:sound2}
	    }{
	    \Gamma;\, \emptyset ;\, \Delta_1 \cat \{v:S\} \cat \Delta_3 \proves \bout{n}{v} Q_1 \Par \binp{n}{x} Q_2 \hastype \Proc
	    }
	    \]
	    
	    where \eqref{eq:sound1} and \eqref{eq:sound2} are as follows:
	    \begin{eqnarray}
	      & \tree{
	     \Gamma' \cat n:\chtype{S};\, \emptyset ;\, \emptyset  \proves n  \hastype \chtype{S}
	     \quad
	      \Gamma;\, \emptyset ;\, \Delta_1    \proves   Q_1  \hastype \Proc
	      \quad
	       \Gamma;\, \emptyset ;\, \{v:S\}  \proves v  \hastype S	    
	    }{
	    \Gamma;\, \emptyset ;\, \Delta_1 \cat \{v:S\}  \proves \bout{n}{v} Q_1  \hastype \Proc
	    } & \label{eq:sound1}
	    \\
	    	    	&     \tree{
	    \Gamma' \cat n:\chtype{S};\, \emptyset ;\, \emptyset  \proves n  \hastype \chtype{S}
	     \quad
	      \Gamma;\, \emptyset ;\, \Delta_3 \cat x:S    \proves   Q_2  \hastype \Proc
	    }{
	    \Gamma;\, \emptyset ;\, \Delta_3 \proves  \binp{n}{x} Q_2 \hastype \Proc
	   } & 
		\label{eq:sound2}
	    \end{eqnarray}
	    
	    Now, by applying \lemref{lem:subst}(1) on $\Gamma;\, \emptyset ;\, \Delta_3 \cat x:S    \proves   Q_2  \hastype \Proc$
			we obtain 
	   $$
	   \Gamma;\, \emptyset ;\, \Delta_3 \cat v:S    \proves   Q_2\subst{v}{x}  \hastype \Proc
	   $$
	   
	   			and the case is completed by using rule~\trule{Par} with this judgment:
							\[		~~ 
				\tree{
					\Gamma; \emptyset; \Delta_1    \proves  
 					 Q_1 \hastype \Proc
					 \quad 
					\Gamma;\, \emptyset ;\, \Delta_3 \cat v:S    \proves   Q_2\subst{v}{x}  \hastype \Proc
					}{
					\Gamma; \emptyset; \Delta_1 \cat \Delta_3  \cat v:S \proves  
 					Q_1  \Par  Q_2\subst{v}{x} \hastype \Proc
					} 
			\]
			Observe how in this case the session environment does not reduce.\\
			
			
		\item $n$ is a shared name and $V$ is a higher-order value. 
	    Then we have the following reduction: 
	    $$
	    P = \bout{n}{V} Q_1 \Par \binp{n}{x} Q_2  \red  Q_1 \Par Q_2 \subst{V}{x} = P'
	    $$
	    By assumption, we have 
	    the following typing derivation (below, we write 
	    $L$ to stand for $\shot{C}$ and 
	    $\Gamma$ to stand for $ \Gamma' \setminus \{x:L\}$).
	    \[	    \hspace{-12mm}
	    \tree{
	     \eqref{eq:sound3}
	    \quad 
	    \eqref{eq:sound4}
	    }{
	    \Gamma;\, \emptyset ;\, \Delta_1 \cat \Delta_3 \proves \bout{n}{v} Q_1 \Par \binp{n}{x} Q_2 \hastype \Proc
	    }
	    \]
	    where \eqref{eq:sound3} and \eqref{eq:sound4} are as follows:
	    \begin{eqnarray}
	    & 	    \tree{
	     \Gamma;\, \emptyset ;\, \emptyset  \proves n  \hastype \chtype{L}
	     \quad
	      \Gamma;\, \emptyset ;\, \Delta_1    \proves   Q_1  \hastype \Proc
	      \quad
	       \Gamma;\, \emptyset ;\, \emptyset  \proves V  \hastype L	    
	    }{
	    \Gamma;\, \emptyset ;\, \Delta_1    \proves \bout{n}{V} Q_1  \hastype \Proc
	    } 
 & 	    \label{eq:sound3} \\
	    & 	    	    \tree{
	    \Gamma' ;\, \emptyset ;\, \emptyset  \proves n  \hastype \chtype{L}
	     \quad
	      \Gamma';\, \emptyset ;\, \Delta_3    \proves   Q_2  \hastype \Proc
	      	     \quad
	      	    \Gamma' ;\, \emptyset ;\, \emptyset  \proves x  \hastype L
	    }{
	    \Gamma;\, \emptyset ;\, \Delta_3 \proves  \binp{n}{x} Q_2 \hastype \Proc
	   }
 & 	    \label{eq:sound4}
	    \end{eqnarray}
	    
	    Now, by applying \lemref{lem:subst}(4) on 
	    $\Gamma' \setminus \{x:L\};\, \emptyset ;\, \Delta_3    \proves   Q_2  \hastype \Proc$
	    and
	    $\Gamma;\, \emptyset ;\, \emptyset  \proves V  \hastype L$
	    we obtain 
	   $$
	   \Gamma;\, \emptyset ;\, \Delta_3  \proves   Q_2\subst{V}{x}  \hastype \Proc
	   $$
	   
	   and the case is completed by using rule~\trule{Par} with this judgment:
							\[		~~ 
				\tree{
					\Gamma; \emptyset; \Delta_1    \proves  
 					 Q_1 \hastype \Proc
					 \quad 
					\Gamma;\, \emptyset ;\, \Delta_3     \proves   Q_2\subst{V}{x}  \hastype \Proc
					}{
					\Gamma; \emptyset; \Delta_1 \cat \Delta_3   \proves  
 					Q_1  \Par  Q_2\subst{V}{x} \hastype \Proc
					} 
			\]
			Observe how in this case the session environment does not reduce.\\

	\end{enumerate}

		\item	Case \orule{Sel}:
			The proof is standard, the session environment reduces.


		\item	Cases \orule{Par} and \orule{Res}:
			The proof is standard, exploiting induction hypothesis. 

		\item	Case \orule{Cong}:
			follows from \thmref{thm:sr}\,(1).
	\end{enumerate}
	\qed
\end{proof}

\section{Behavioural Semantics}

We present the proofs for the theorems in
\secref{sec:beh_sem}.

\subsection{Proof of \thmref{the:coincidence}}
\label{app:sub_coinc}

We split the proof of \thmref{the:coincidence} (Page \pageref{the:coincidence}) into 
several lemmas:
\begin{enumerate}[$-$]
\item	\lemref{lem:wb_eq_wbf} establishes $\wb\ =\ \wbf$.s
\item	\lemref{lem:wb_is_wbc} exploits the process substitution result
	(\lemref{lem:proc_subst}) to prove that $\wb \subseteq \wbc$.
\item	\lemref{lem:wbc_is_cong} shows that $\wbc$ is a congruence
	which implies $\wbc \subseteq \cong$.
\item	\lemref{lem:cong_is_wb} shows  that $\cong \subseteq \wb$.
\end{enumerate}


\noi
We now proceed to state and proof these lemmas, together with some auxiliary results.


\begin{lemma}\rm
	\label{lem:wb_eq_wbf}
	$\wb = \wbf$.
\end{lemma}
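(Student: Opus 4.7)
\medskip

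\noindent\textbf{Proof plan for \lemref{lem:wb_eq_wbf}.}
I would prove the two inclusions $\wb \subseteq \wbf$ and $\wbf \subseteq \wb$ separately, in each case by exhibiting a candidate typed relation built from the hypothesis relation and showing that it satisfies the target bisimulation clauses. The only point at which the two definitions differ is the output clause, where after observing a (possibly bound) output $\news{\tilde m_i}\bactout{n}{V_i}$ of a value of type $U$, higher-order bisimilarity pairs the continuations $P_2,Q_2$ with the trigger $\htrigger{t}{V_i} = \binp{t}{x}\newsp{s}{\appl{x}{s}\Par\bout{\dual{s}}{V_i}\inact}$, whereas characteristic bisimilarity pairs them with $\ftrigger{t}{V_i}{U} = \binp{t}{x}\newsp{s}{\mapchar{\btinp{U}\tinact}{s}\Par\bout{\dual{s}}{V_i}\inact}$. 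All the remaining clauses (non-output visible actions and $\tau$) do not involve triggers, so they transfer from one bisimulation to the other with no work.

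For $\wb \subseteq \wbf$, the plan is to take
\[
\Re \;=\; \wb \;\cup\; \bigl\{\,(\newsp{\tilde m_1}{P_2 \Par \ftrigger{t}{V_1}{U}},\; \newsp{\tilde m_2}{Q_2 \Par \ftrigger{t}{V_2}{U}}) \;\bigm|\; \newsp{\tilde m_1}{P_2 \Par \htrigger{t}{V_1}} \wb \newsp{\tilde m_2}{Q_2 \Par \htrigger{t}{V_2}} \bigr\}
\]
(the precise typing information on $\Delta_1, \Delta_2$ being transported along with each pair) and verify that $\Re$ is a characteristic bisimulation. The interesting case is analysing the transitions of the $\ftrigger$-configuration on the left. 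Actions of $P_2$ or internal reductions in $P_2 \Par \ftrigger{t}{V_1}{U}$ that do not involve $t$ are matched via the $\wb$ hypothesis on the HO-trigger pair, since neither trigger has consumed its guard. An input $\hby{\bactinp{t}{W}}$ on the $\ftrigger$-side produces (regardless of $W$) the process $\newsp{\tilde m_1}{P_2 \Par \newsp{s}{\mapchar{\btinp{U}\tinact}{s} \Par \bout{\dual s}{V_1}\inact}}$; the matching move on the $\htrigger$-side is to input the characteristic value $\omapchar{\lhot{\btinp{U}\tinact}} \scong \abs{z}{\binp{z}{y}\mapchar{U}{y}}$, followed by a single $\beta$-transition, yielding exactly the same residual (up to $\scong$). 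Crucially, the refined LTS of \defref{def:restricted_typed_transition} restricts the inputs admissible at $t$, and the characteristic value is one of them. Invoking $\tau$-inertness (\propref{lem:tau_inert}) to absorb the $\beta$-step, and the fact that the $\wb$ hypothesis was preserved under this specific characteristic-value input, yields a residual pair again in $\Re$.

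For $\wbf \subseteq \wb$, I would use the symmetric candidate
\[
\Re' \;=\; \wbf \;\cup\; \bigl\{\,(\newsp{\tilde m_1}{P_2 \Par \htrigger{t}{V_1}},\; \newsp{\tilde m_2}{Q_2 \Par \htrigger{t}{V_2}}) \;\bigm|\; \newsp{\tilde m_1}{P_2 \Par \ftrigger{t}{V_1}{U}} \wbf \newsp{\tilde m_2}{Q_2 \Par \ftrigger{t}{V_2}{U}} \bigr\}
\]
and show it is a higher-order bisimulation, using \lemref{lem:up_to_deterministic_transition} to do the matching up to deterministic $\btau$-steps. The possible inputs at $t$ permitted by the typed LTS are the characteristic value $\omapchar{\lhot{\btinp{U}\tinact}}$ and the trigger value $\abs{z}{\binp{t'}{x}(\appl{x}{z})}$ with $t'$ fresh; in either case, after a deterministic $\beta$-step on the $\htrigger$-side we reach a residual that the $\wbf$ hypothesis can match (note that the trigger value, once received and applied, reshapes the $\htrigger$ into a process that exposes $V_i$ on a fresh name $t'$, and the same shape can be reached from $\ftrigger$ by letting its continuation interact with a suitably characteristic process). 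The matching output case for non-trigger visible actions is immediate from the $\wbf$ hypothesis.

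The main obstacle is the second inclusion: unlike $\ftrigger$, $\htrigger$ genuinely uses the received abstraction, so one must verify that for \emph{every} input value admitted by the refined LTS at $t$, the pair of $\htrigger$-configurations remains in $\Re'$. This requires a careful case analysis on the allowed input values (trigger vs.\ characteristic), together with repeated appeals to $\tau$-inertness to discharge the deterministic $\beta$- and session steps that separate corresponding residuals. Once this is in place the bisimulation clauses close, giving both inclusions and hence $\wb = \wbf$.
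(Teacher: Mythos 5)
Your proposal is correct and follows essentially the same route as the paper: both inclusions are handled by a candidate relation whose only nontrivial clause is the output case, resolved by observing that the residual of the characteristic trigger after its (name) input at the fresh $t$ coincides with the residual of the higher-order trigger after inputting the characteristic value $\omapchar{\lhot{\btinp{U}\tinact}}$ followed by a deterministic $\beta$-step, absorbed via $\tau$-inertness. The paper only details $\wb \subseteq \wbf$ (taking the candidate to be $\wb$ itself rather than your explicit union) and dismisses the converse as "very similar"; your more explicit case analysis on the admissible inputs at $t$ in that direction fills in exactly what the paper leaves implicit.
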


\begin{proof}
	\noi We only prove the direction $\wb \subseteq \wbf$. The
	direction $\wbf \subseteq \wb$ is similar.

	\noi Consider
	\[
		\Re = \set{\horel{\Gamma}{\Delta_1}{P}{\ ,\ }{\Delta_2}{Q} \setbar \horel{\Gamma}{\Delta_1}{P}{\wb}{\Delta_2}{Q}}
	\]
	We show that $\Re$ is a characteristic bisimulation.
	The proof does a case analysis on the transition label $\ell$.

	\noi - Case $\ell = \news{\tilde{m_1}} \bactout{n}{V_1}$ is the non-trivial case.

	\noi If
	\begin{eqnarray}
		\horel{\Gamma}{\Delta_1}{P}{\hby{\news{\tilde{m_1}} \bactout{n}{V_1}}}{\Delta_1'}{P'}
		\label{lem:wb_eq_wbf1}
	\end{eqnarray}
	then $\exists Q, V_2$ such that
	\begin{eqnarray}
		\horel{\Gamma}{\Delta_2}{Q}{\Hby{\news{\tilde{m_2}} \bactout{n}{V_2}}}{\Delta_2'}{Q'}
		\label{lem:wb_eq_wbf2}
	\end{eqnarray}
	and for fresh $t$:
	\[
		\mhorel{\Gamma}{\Delta_1'}{\newsp{\tilde{m_1}}{P' \Par \hotrigger{t}{x}{s}{V_1}}}
		{\wb}
		{\Delta_2}{}{\newsp{\tilde{m_2}}{Q' \Par \hotrigger{t}{x}{s}{V_2}}}
	\]
	From the last result we can derive that for $\Gamma; \es; \Delta \proves V_1 \hastype U$:
	\[
		\mhorel{\Gamma}{\Delta_1'}{\newsp{\tilde{m_1}}{P' \Par \hotrigger{t}{x}{s}{V_1}}}
		{\hby{\bactinp{t}{\map{\btinp{U} \tinact}^{x}}}}
		{\Delta_1''}{}{\newsp{\tilde{m_1}}{P' \Par \newsp{s}{\map{\btinp{U} \tinact}^{s} \Par \bout{\dual{s}}{V_1} \inact}}}
	\]
	\noi implies
	\[
		\mhorel{\Gamma}{\Delta_2'}{\newsp{\tilde{m_2}}{Q' \Par \hotrigger{t}{x}{s}{V_2}}}
		{\hby{\bactinp{t}{\map{\btinp{U} \tinact}^{x}}}}
		{\Delta_2''}{}{\newsp{\tilde{m_2}}{Q' \Par \newsp{s}{\map{\btinp{U} \tinact}^{s} \Par \bout{\dual{s}}{V_2} \inact}}}
	\]
	\noi and $\Gamma; \es; \Delta' \proves V_2 \hastype U$.

	\noi Transition~(\ref{lem:wb_eq_wbf1}) implies transition~(\ref{lem:wb_eq_wbf2}). It remains to
	show that for fresh $t$:
	\[
		\mhorel{\Gamma}{\Delta_1'}{\newsp{\tilde{m_1}}{P' \Par \fotrigger{t}{x}{s}{\btinp{U} \tinact}{V_1}}}
		{\wb}
		{\Delta_2}{}{\newsp{\tilde{m_2}}{Q' \Par \fotrigger{t}{x}{s}{\btinp{U} \tinact}{V_2}}}
	\]
	The freshness of $t$ implies that
	\[
		\mhorel{\Gamma}{\Delta_1'}{\newsp{\tilde{m_1}}{P' \Par \fotrigger{t}{x}{s}{\btinp{U} \tinact}{V_1}}}
		{\hby{\bactinp{t}{m'}}}
		{\Delta_1''}{}{\newsp{\tilde{m_1}}{P' \Par \newsp{s}{\map{\btinp{U} \tinact}^{s} \Par \bout{\dual{s}}{V_1} \inact}}}
	\]
	\noi and
	\[
		\mhorel{\Gamma}{\Delta_2'}{\newsp{\tilde{m_2}}{Q' \Par \fotrigger{t}{x}{s}{\btinp{U} \tinact}{V_2}}}
		{\hby{\bactinp{t}{m'}}}
		{\Delta_2''}{}{\newsp{\tilde{m_2}}{Q' \Par \newsp{s}{\map{\btinp{U} \tinact}^{s} \Par \bout{\dual{s}}{V_2} \inact}}}
	\]
	\noi which coincides with the transitions for $\wb$.

	\noi - The rest of the cases are trivial.

	\noi The direction $\wbf \subseteq \wb$ is very similar to the
	direction $\wb \subseteq \wbf$: it requires a case analysis
	on the transition label $\ell$. Again the non-trivial case is
	$\ell = \news{\tilde{m_1}} \bactout{n}{V_1}$.
	\qed
\end{proof}


The next lemma implies a process substitution lemma as a corollary.
Given two processes that are bisimilar under trigerred substitution
and characteristic process substitution, we can prove that they are
bisimilar under every process substitution. This result is
the key result for proving the soundness of the bisimulation. 


\newcommand{\auxtr}[1]{\abs{\tilde{x}}{\binp{#1}{y} (\appl{y}{\tilde{x}})}}

\begin{lemma}[Linear Process Substitution]\rm
	\label{lem:subst_equiv}
	If 
	\begin{enumerate}
		\item	$\fpv{P_2} = \fpv{Q_2} = \set{x}$.
		\item	$\Gamma; x: U; \Delta_1''' \proves P_2 \hastype \Proc$ and $\Gamma; x: U; \Delta_2''' \proves Q_2 \hastype \Proc$.
		\item	$\horel{\Gamma}{\Delta_1'}{\newsp{\tilde{m_1}}{P_1 \Par P_2 \subst{\auxtr{t}}{x}}}
			{\wb}
			{\Delta_2'}{\newsp{\tilde{m_2}}{Q_1 \Par Q_2 \subst{\auxtr{t}}{x}}}$, \\
			for some fresh $t$.

		\item	$\horel{\Gamma}{\Delta_1''}{\newsp{\tilde{m_1}}{P_1 \Par P_2 \subst{\omapchar{U}}{x}}}
			{\wb}{\Delta_2''}{\newsp{\tilde{m_2}}{Q_1 \Par Q_2 \subst{\omapchar{U}}{x}}}$,\\
			for some $U$.
	\end{enumerate}
	then $\forall R$ such that $\fv{R} = \tilde{x}$
\[
	\horel{\Gamma}{\Delta_1}{\newsp{\tilde{m_1}}{P_1 \Par P_2 \subst{\abs{\tilde{x}}{R}}{x}}}
	{\wb}
	{\Delta_2}{\newsp{\tilde{m_2}}{Q_1 \Par Q_2 \subst{\abs{\tilde{x}}{R}}{x}}}
\]
\end{lemma}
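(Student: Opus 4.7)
The guiding idea is to relate the arbitrary abstraction substitution $\abs{\tilde{x}}{R}$ to the trigger substitution $\auxtr{t}$ (assumption~3) via a ``decomposition'' bisimilarity, and to use the characteristic substitution (assumption~4) to cover the cases where $x$ is not consumed by an application but is instead communicated to the environment. The plan is to establish the key auxiliary equivalence
\[
\horel{\Gamma}{\Delta_1}{\newsp{\tilde{m_1}}{P_1 \Par P_2\subst{\abs{\tilde{x}}{R}}{x}}}{\wb}{\Delta^*}{\newsp{t,\tilde{m_1}}{P_1 \Par P_2\subst{\auxtr{t}}{x} \Par \bout{t}{\abs{\tilde{x}}{R}}\inact}},
\]
and the symmetric equivalence on the $Q$-side, and then combine them with assumption~3 (closed under parallel composition and restriction) via transitivity of $\wb$ to conclude.

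First, I would define the candidate typed relation
\[
\Re \;=\; \bigl\{(\newsp{\tilde{m_1}}{P_1 \Par P_2\subst{\abs{\tilde{x}}{R}}{x}},\; \newsp{\tilde{m_2}}{Q_1 \Par Q_2\subst{\abs{\tilde{x}}{R}}{x}}) \bigr\}
\]
for every $R$ satisfying $\fv{R}=\tilde{x}$, closed under all typed transitions of its members, and aim to show that $\Re$ is contained in $\wb$. To do this, I would proceed by a case analysis on the shape of the occurrence of $x$ in $P_2$ (equivalently, the first typed transition of the left-hand process that ``observes'' the substituted value). Since $x$ has linear higher-order type (it sits in the $\Lambda$-environment), it is used at most once, and its unique occurrence falls into one of two mutually exclusive patterns: (i)~$x$ is applied, i.e.\ $P_2$ contains a subterm of the form $\appl{x}{\tilde{u}}$; or (ii)~$x$ is communicated, i.e.\ $P_2$ contains an output subterm sending $x$ (or an abstraction built from $x$). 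Case~(i) is handled by the trigger decomposition above: the $\beta$-reduction $\appl{(\abs{\tilde{x}}{R})}{\tilde{u}}\red R\subst{\tilde{u}}{\tilde{x}}$ on the left is simulated on the right by the sequence ``$\beta$-reduce the trigger application to $\binp{t}{y}(\appl{y}{\tilde{u}})$, synchronise on $t$ with $\bout{t}{\abs{\tilde{x}}{R}}\inact$, then $\beta$-reduce'', which by \propref{lem:tau_inert} produces a $\wb$-equivalent residual. Case~(ii) is handled by reducing to assumption~4: sending $\abs{\tilde{x}}{R}$ out as a value is observationally the same (under the refined LTS $\hby{}$, which restricts observed inputs to triggers and characteristic values) as sending $\omapchar{U}$, so the bisimulation obligation on the resulting output transition reduces to assumption~4 together with the congruence of $\wb$.

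Second, having established the auxiliary equivalence on both sides, I would invoke congruence of $\wb$ (which in this framework follows from \lemref{lem:wbc_is_cong}, together with $\wb=\wbf=\wbc$ already proved in \lemref{lem:wb_eq_wbf} and \lemref{lem:wb_is_wbc}) to conclude that parallel composition with $\bout{t}{\abs{\tilde{x}}{R}}\inact$ and restriction on $t$ preserve bisimilarity. Combined with assumption~3, this yields the desired conclusion by transitivity of $\wb$. Throughout, I would freely use bisimulation up-to deterministic transitions (\lemref{lem:up_to_deterministic_transition}) to absorb the administrative $\beta$- and session-$\tau$ steps introduced by the trigger decomposition.

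The main obstacle will be Case~(ii) above: when $x$ is output (rather than applied), the trigger decomposition does not directly simulate the action, because the trigger value $\auxtr{t}$ carries the free name $t$ and therefore has different scope-extrusion behaviour from $\abs{\tilde{x}}{R}$. Here assumption~4 is essential, and the subtlety is to justify that swapping $\omapchar{U}$ for $\abs{\tilde{x}}{R}$ (both of type $U$) on the output label is sound under the refined input LTS of \defref{def:restricted_typed_transition}, whose observed input values are precisely triggers and characteristic values. This is exactly the design choice that motivates assumptions~3 and~4 being stated simultaneously, and the proof must carefully match the trigger pattern against Case~(i) and the characteristic pattern against Case~(ii) without admitting any other behaviours, exploiting the linearity of $x$ to ensure these two patterns exhaust the possibilities.
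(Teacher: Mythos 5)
Your overall skeleton (a candidate relation $\Re$, a case split on whether the unique linear occurrence of $x$ is applied or exported, and absorption of administrative steps via up-to deterministic transitions) is close in spirit to the paper's, but your central mechanism is different. The paper does \emph{not} prove the factorisation $\newsp{\tilde{m_1}}{P_1 \Par P_2\subst{\abs{\tilde{x}}{R}}{x}} \wb \newsp{t,\tilde{m_1}}{P_1 \Par P_2\subst{\abs{\tilde{x}}{\binp{t}{y}(\appl{y}{\tilde{x}})}}{x} \Par \bout{t}{\abs{\tilde{x}}{R}}\inact}$ and then appeal to congruence and transitivity. Instead it takes $\Re$ itself as a bisimulation up to deterministic transitions and checks the transfer conditions directly: when the transition does not consume the application of $x$, the \emph{same} labelled move is available under the trigger substitution and under the characteristic substitution, so hypotheses~3 and~4 supply matching moves of the $Q$-side whose residuals are again of the shape recorded in $\Re$; when the transition is the $\beta$-step on $\appl{x}{\tilde{n}}$, the freshness of $t$, the determinacy of application, and the linearity of $x$ in $Q_2$ force the $Q$-side to expose a corresponding $\appl{x}{\tilde{m}}$, and the residuals are matched up to $\hby{\btau}$ using \propref{lem:tau_inert}. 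Hypotheses~3 and~4 are thus consumed \emph{inside} each transfer-condition check, never composed externally.

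The concrete gap in your plan is the appeal to congruence of $\wb$ under parallel composition and restriction, which you justify by \lemref{lem:wbc_is_cong} together with $\wb = \wbc$ from \lemref{lem:wb_is_wbc}. But \lemref{lem:wb_is_wbc} is proved from the process-substitution lemma (\lemref{lem:proc_subst}), which is itself proved from the linear substitution lemma you are now proving; so congruence of $\wb$ is simply not available at this point in the development, and using it here is circular. A second, related problem is that your auxiliary factorisation equivalence is essentially Sangiorgi's factorisation theorem and is not a free observation: precisely in your Case~(ii), where $x$ is exported rather than applied, the two sides emit syntactically different values ($\abs{\tilde{x}}{R}$ versus the trigger abstraction carrying the restricted name $t$ with a pending output on $t$), so establishing the factorisation already requires the value-insensitivity argument that this lemma is designed to deliver. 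To repair your route you would have to prove both the factorisation and the needed congruence instances from scratch, without the downstream lemmas; the paper's direct construction sidesteps both obligations.
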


\begin{proof}
	We create a bisimulation closure:
	\begin{eqnarray*}
		\Re &=&
			\set{\horel{\Gamma}{\Delta_1}{\newsp{\tilde{m_1}}{P_1 \Par P_2 \subst{\abs{\tilde{x}}{R}}{x}}}{,}
			{\Delta_2}{\newsp{\tilde{m_2}}{Q_1 \Par Q_2 \subst{\abs{\tilde{x}}{R}}{x}}} \setbar \\
			&& \quad \forall R \textrm{ such that } \fv{R} = \tilde{x}, \fpv{P_2} = \fpv{Q_2} = \set{x}\\
			&& \quad \Gamma; x: U; \Delta_1''' \proves P_2 \hastype \Proc, \Gamma; x: U; \Delta_2''' \proves Q_2 \hastype \Proc\\
			&& \quad \textrm{for fresh } t,\\
			&& \quad \horel{\Gamma}{\Delta_1'}{\newsp{\tilde{m_1}}{P_1 \Par P_2 \subst{\auxtr{t}}{x}}}{\wb}{\Delta_2}{\newsp{\tilde{m_2}}{Q_1 \Par Q_2 \subst{\auxtr{t}}{x}}},\\
			&& \quad \horel{\Gamma}{\Delta_1''}{\newsp{\tilde{m_1}}{P_1 \Par P_2 \subst{\omapchar{U}}{x}}}{\wb}{\Delta_2''}{\newsp{\tilde{m_2}}{Q_1 \Par Q_2 \subst{\omapchar{U}}{x}} \textrm{ for some } U}\\
			&&}
	\end{eqnarray*}
	\noi  We show that $\Re$ is a bisimulation up-to \betatran (\lemref{lem:tau_inert}).

	\noi We do a case analysis on the transition:
	\[
		\horel{\Gamma}{\Delta_1}{\newsp{\tilde{m_1}}{P_1 \subst{\abs{\tilde{x}}{R}}{x} \Par P_2\subst{\abs{\tilde{x}}{R}}{x} }}{\by{\ell_1}}{\Delta_1'}{P_1'}
	\]
%

	\noi - Case: $P_2 \not= \appl{x}{\tilde{n}}$ for some $\tilde{n}$.
	\begin{eqnarray*}
		&&	\horel{\Gamma}{\Delta_1}
	{\newsp{\tilde{m_1}}{P_1 \Par P_2 \subst{\abs{\tilde{x}}{R}}{x} }}
			{\hby{\ell_1}}
{\Delta_1'}{\newsp{\tilde{m_1'}}{P_1 \Par P_2' \subst{\abs{\tilde{x}}{R}}{x}}}
	\end{eqnarray*}

	\noi From the latter transition we obtain that
\[
		\mhorel{\Gamma}{\Delta_1}{\newsp{\tilde{m_1}}{P_1 \Par P_2 \subst{\auxtr{t}}{x}}}
		{\hby{\ell_1}}{\Delta_1'}{P' \scong}{\newsp{\tilde{m_1}}{P_1' \Par P_2' \subst{\auxtr{t}}{x}}}
\]
	\noi which implies
	\begin{eqnarray}
		\Gamma; \es; &\Delta_2& \proves \newsp{\tilde{m_2}}{Q_1 \Par Q_2 \subst{\auxtr{t}}{x}} \nonumber \\
		\Hby{\ell_2}&
		\Delta_2'& \proves Q' \scong \newsp{\tilde{m_2}}{Q_1' \Par Q_2' \subst{\auxtr{t}}{x}}
		\label{lem:subst_equiv1}
		\\
		&&\horel{\Gamma}{\Delta_1'}{P' \Par C_1}{\wb}{\Delta_2'}{Q' \Par C_2} \label{lem:subst_equiv2}
	\end{eqnarray}
	\noi Furthermore, we have:
\[
	\horel{\Gamma}{\Delta_1}{\newsp{\tilde{m_1}}{P_1 \Par P_2 \subst{\omapchar{U}}{x}}}{\hby{\ell_1}}
	{\Delta_1'}{P'' \scong \newsp{\tilde{m_1'}}{P_1' \Par P_2' \subst{\omapchar{U}}{x}}}
\]
	\noi which implies
	\begin{eqnarray}
		\Gamma; \es; &\Delta_2& \proves \newsp{\tilde{m_2}}{Q_1 \Par Q_2 \subst{\omapchar{U}}{x}} \nonumber \\
		\Hby{\ell_2} &\Delta_2'& \proves  Q'' \scong \newsp{\tilde{m_2}'}{Q_1' \Par Q_2' \subst{\omapchar{U}}{x}}
		\label{lem:subst_equiv3}
		\\
		&&\horel{\Gamma}{\Delta_1'}{P'' \Par C_1}{\wb}{\Delta_2'}{Q'' \Par C_2} \label{lem:subst_equiv4}
	\end{eqnarray}
	\noi From~(\ref{lem:subst_equiv1}) and~(\ref{lem:subst_equiv3}) we obtain that $\forall R$ with $\fv{R} = \tilde{x}$:
	\[
		\horel{\Gamma}{\Delta_2}{\newsp{\tilde{m_2}}{Q_1 \Par Q_2 \subst{\abs{\tilde{x}}{R}}{x}}}
		{\Hby{\ell_2}}
		{\Delta_2'}{\newsp{\tilde{m_2}'}{Q_1' \Par Q_2' \subst{\abs{\tilde{x}}{R}}{x}}}
	\]
	\noi The case concludes if we combine~(\ref{lem:subst_equiv2}) and~(\ref{lem:subst_equiv4}), to obtain that $\forall R$ with $\fv{R} = \tilde{x}$
	\[
		\horel{\Gamma'}{\Delta_1''}{\newsp{\tilde{m_1}'}{P_1' \Par P_2' \subst{\abs{\tilde{x}}{R}}{x}} \Par C_1}
		{\ \Re\ }
		{\Delta_2''}{\newsp{\tilde{m_2}'}{Q_1 \Par Q_2' \subst{\abs{\tilde{x}}{R}}{x}} \Par C_2}
	\]


	\noi - Case: $P_2 = \appl{x}{\tilde{n}}$ for some $\tilde{n}$.


	\noi $\forall R$ with $\fv{R} = \tilde{x}$
	\[
		\mhorel{\Gamma}{\Delta_1}{\newsp{\tilde{m_1}}{P_1 \Par (\appl{x}{\tilde{n}}) \subst{\abs{\tilde{x}}{R}}{x}}}
		{\hby{\btau}}
		{\Delta_1'}{}{\newsp{\tilde{m_1'}}{P_1 \Par  R \subst{\tilde{n}}{\tilde{x}}}}
	\]
	\noi From the latter transition we get that:
	\nhorel{\Gamma}{\Delta_1}{\newsp{\tilde{m_1}}{P_1 \Par \appl{x}{\tilde{n}} \subst{\auxtr{t}}{x}}}
	{\hby{\btau} \hby{\bactinp{t}{\auxtr{t'}}}}
	{\Delta_1'}{\newsp{\tilde{m_1'}}{P_1 \Par \appl{x}{\tilde{n}} \subst{\auxtr{t'}}{x}}}
	{lem:subst_equiv5}
%
%
	\noi and $t'$ a fresh name. From the freshness of $t$,
	the determinacy of the application transition
	and the fact that $x$ is linear in $Q_2$
	it has to be the case that:
	\[
		\begin{array}{crll}
			&\Gamma; \es; \Delta_2'& \proves &
			\newsp{\tilde{m_2'}}{Q_1 \Par Q_2 \subst{\auxtr{t}}{x}}\\
			\Hby{} & & &
			\newsp{\tilde{m_2'}}{Q_1'' \Par Q_3 \Par \appl{x}{\tilde{m}} \subst{\auxtr{t}}{x}} \\
			\hby{\btau} \hby{\bactinp{t}{\auxtr{t'}}}
			& \Delta_2''& \proves& \newsp{\tilde{m_2'}}{Q_1' \Par \appl{x}{\tilde{m}} \subst{\auxtr{t'}}{x}}
		\end{array}
	\]
%
%
%
	\noi and
	\nhorel{\Gamma}{\Delta_1'}{\newsp{\tilde{m_1'}}{P_1 \Par \appl{x}{\tilde{n}} \subst{\auxtr{t'}}{x}}}
	{\wb}
	{\Delta_2'}{\newsp{\tilde{m_2}'}{Q_1' \Par \appl{x}{\tilde{m}} \subst{\auxtr{t'}}{x}}}
	{lem:subst_equiv6}
%
%
%
	\noi From the latter transition we can conclude that $\forall R$ with $\fv{R} = \set{x}$:
	\[
		\begin{array}{crll}
			&\Gamma; \es; \Delta_2'& \proves & 
			\newsp{\tilde{m_2'}}{Q_1 \Par Q_2 \subst{\abs{\tilde{x}}{R}}{x}}\\
			\Hby{} & & &
			\newsp{\tilde{m_2'}}{Q_1' \Par \appl{x}{\tilde{m}} \subst{\abs{\tilde{x}}{R}}{x}} \\
			\hby{\btau}
			&\Delta_2''& \proves & \newsp{\tilde{m_2'}}{Q_1' \Par R \subst{\tilde{m}}{\tilde{x}}}
		\end{array}
	\]
%
%
	\noi From the definition of $S$ and~(\ref{lem:subst_equiv6}),
	we also conclude that
	\begin{eqnarray*}
		&& \horel{\Gamma}
		{\Delta_1'}{\newsp{\tilde{m_1'}}{P_1 \Par R 
\subst{\tilde{n}}{\tilde{x}}}}
		{\hby{\btau}\ \Re\ \stackrel{\btau}{\longleftarrow}}
		{\Delta_2'}{\newsp{\tilde{m_2}'}{Q_1' \Par R \subst{\tilde{m}}{\tilde{x}}}}
	\end{eqnarray*}
%
%
%
	\qed
\end{proof}

We can generalise the result of the linear process substitution lemma
to prove process substitution (\lemref{lem:proc_subst}).
Intuitively, we can subsequently apply linear process substitution
to achieve process substitution.


\begin{lemma}[Process Substitution]\rm
	\label{app:lem:proc_subst}
	If 
	\begin{enumerate}
		\item	$\horel{\Gamma}{\Delta_1'}{P \subst{\auxtr{t}}{x}}{\wb}{\Delta_2}{Q \subst{\auxtr{t}}{x}}$
			for some fresh $t$.

		\item	$\horel{\Gamma}{\Delta_1''}{P \subst{\omapchar{U}}{x}}{\wb}{\Delta_2''}{Q \subst{\omapchar{U}}{x}}$
			for some $U$.
	\end{enumerate}
	then $\forall R$ such that $\fv{R} = \tilde{x}$
\[
	\horel{\Gamma}{\Delta_1}{P \subst{\abs{\tilde{x}}{R}}{x}}{\wb}{\Delta_2}{Q \subst{\abs{\tilde{x}}{R}}{x}}
\]
\end{lemma}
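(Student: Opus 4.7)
The plan is to bootstrap this lemma from its linear counterpart, \lemref{lem:subst_equiv}, which already handles the case in which the abstraction variable appears exactly once. First I observe that when $x$ has a linear type $\lhot{\tilde C}$, by typing $P$ and $Q$ each contain a single free occurrence of $x$, so the conclusion follows immediately by instantiating \lemref{lem:subst_equiv} with $P_1 = Q_1 = \inact$, $\tilde{m_1} = \tilde{m_2} = \emptyset$, and $P_2 = P$, $Q_2 = Q$. Note that the two hypotheses of the present lemma are exactly (in this instantiation) the two hypotheses (3) and (4) required by \lemref{lem:subst_equiv}.

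For the shared case, where $x$ carries type $\shot{\tilde C}$ and may appear $k \geq 0$ times in $P$ and $Q$, I would proceed by induction on $k$. The base case $k=0$ is trivial since $P \subst{V}{x} = P$ for any $V$. For the inductive step, I would isolate a single occurrence of $x$ by writing $P \scong \mathcal{C}[x]$ and $Q \scong \mathcal{C}'[x]$ for contexts $\mathcal{C}, \mathcal{C}'$ containing $k-1$ remaining occurrences of $x$, then apply \lemref{lem:subst_equiv} to this isolated occurrence using the promotion rule \trule{EProm} to obtain bisimilarity after substituting that one occurrence with $\abs{\tilde{x}}R$. The inductive hypothesis, applied to the resulting pair of processes (which now contain $k-1$ occurrences of $x$), covers the remaining substitutions.

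To make this rigorous I would actually build a bisimulation closure directly, rather than arguing by induction on occurrences. Define
\[
  \Re \;=\; \big\{\, \horel{\Gamma}{\Delta_1}{P \subst{\abs{\tilde{x}}{R}}{x}}{\ ,\ }{\Delta_2}{Q \subst{\abs{\tilde{x}}{R}}{x}} \;\big|\; \text{hypotheses (1), (2) hold}\,,\ \fv{R}=\tilde{x}\,\big\}
\]
and show $\Re$ is a bisimulation up to deterministic transitions (\lemref{lem:up_to_deterministic_transition}), using \propref{lem:tau_inert} to discharge $\beta$-steps introduced by applications of $\abs{\tilde{x}}R$ to session names. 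The transition analysis proceeds exactly as in \lemref{lem:subst_equiv}, split into the subcase where the transition does not interact with a copy of $\abs{\tilde{x}}R$ (handled by transporting the derivative directly through both hypotheses) and the subcase where a specific occurrence $\appl{x}{\tilde n}$ fires a $\beta$-reduction to $R\subst{\tilde n}{\tilde x}$ (handled by matching, via hypothesis (1), with the trigger simulation and, via hypothesis (2), with the characteristic-value simulation).

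The principal obstacle is coping with the dynamic growth of occurrences: new copies of subprocesses containing $x$ can be generated by communicating abstractions or by unfolding recursive definitions, so a naive induction on syntactic occurrences is insufficient. The two hypotheses are designed exactly to handle this: the trigger substitution in (1) witnesses the input-observational behaviour of $\abs{\tilde{x}}R$ at a fresh name $t$, while the characteristic value in (2) witnesses the internal computational behaviour triggered by any well-typed application; together they determine every move that the environment can force on $\abs{\tilde{x}}R$. Once the closure $\Re$ is verified as a bisimulation up to deterministic transitions, \lemref{lem:up_to_deterministic_transition} gives $\Re \subseteq \wb$, which is the desired conclusion.
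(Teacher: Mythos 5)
Your proposal is correct and takes essentially the same route as the paper: the paper also defines a bisimulation closure over all pairs $P\subst{\abs{\tilde{x}}{R}}{x}$, $Q\subst{\abs{\tilde{x}}{R}}{x}$ satisfying hypotheses (1) and (2) (phrased via the normal form $\newsp{\tilde{m}}{P_1 \Par P_2}$), proves it is a bisimulation up to $\beta$-transitions, and splits the transition analysis into the case where no occurrence $\appl{x}{\tilde{n}}$ fires and the case where one does, invoking \lemref{lem:subst_equiv} for the latter. Your preliminary detour through induction on the number of occurrences is unnecessary (and, as you note, problematic under dynamic copying), but since you discard it in favour of the closure argument, the final proof matches the paper's.
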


\begin{proof}
	\noi We define a closure $\Re$ using the normal form of $P$ and $Q$

	\[
		\begin{array}{rcll}
			\Re &=& \set{\horel{\Gamma}{\Delta_1}{\newsp{\tilde{m_1}}{P_1 \subst{\abs{\tilde{x}}{R}}{x} \Par P_2 \subst{\abs{\tilde{x}}{R}}{x}}}{,}{\Delta_2}{\newsp{\tilde{m_2}}{Q_1 \subst{\abs{\tilde{x}}{R}}{x} \Par Q_2 \subst{\abs{\tilde{x}}{R}}{x}}} \setbar \\
			&& \qquad \forall R \textrm{ such that } \fv{R} = \tilde{x},\\
			&& \qquad \textrm{ for fresh } t,
			\mhorel{\Gamma}{\Delta_1'}
			{\newsp{\tilde{m_1}}{P_1 \subst{\auxtr{t}}{x} \Par P_2 \subst{\auxtr{t}}{x}}}
			{\wb}
			{\Delta_2'}{}{\newsp{\tilde{m_2}}{Q_1 \subst{\auxtr{t}}{x} \Par Q_2 \subst{\auxtr{t}}{x}}}\\
			&& \qquad \textrm{ for some } U, 
			\mhorel{\Gamma}{\Delta_1''}
			{\newsp{\tilde{m_1}}{P_1 \subst{\omapchar{U}}{x} \Par P_2 \subst{\omapchar{U}}{x}}}
			{\wb}
			{\Delta_2''}{}{\newsp{\tilde{m_2}}{Q_1 \subst{\omapchar{U}}{x} \Par Q_2 \subst{\omapchar{U}}{x}}} \\
			&&}
		\end{array}
	\]
%
%
	\noi We show that $\Re$ is a bisimulation up to \betatran (\lemref{lem:tau_inert}).

	\noi - Case: $P_2 \not= \appl{x}{\tilde{n}}$ for some $\tilde{n}$.
	\nhorel	{\Gamma}{\Delta_1}{\newsp{\tilde{m_1}}{P_1 \subst{\abs{\tilde{x}}{R}}{x} \Par P_2 \subst{\abs{\tilde{x}}{R}}{x}}}
		{\hby{\ell_1}}
		{\Delta_1'}{\newsp{\tilde{m_1'}}{P_1 \subst{\abs{\tilde{x}}{R}}{x} \Par P_2' \subst{\abs{\tilde{x}}{R}}{x}}}
		{lem:subst_equiv11}
	\noi The case is similar to the first case of \lemref{lem:subst_equiv}.

	\noi - Case: $P_2 = \appl{x}{\tilde{n}}$ for some $\tilde{n}$.
\[
	\mhorel	{\Gamma}{\Delta_1}{\newsp{\tilde{m_1}}{P_1 \subst{\abs{\tilde{x}}{R}}{x} \Par \appl{x}{\tilde{n}} \subst{\abs{\tilde{x}}{R}}{x}}}
		{\hby{\btau}}{\Delta_1'}{}{\newsp{\tilde{m_1'}}{P_1 \subst{\abs{\tilde{x}}{R}}{x} \Par R \subst{\tilde{n}}{\tilde{x}}}} 
\]
	\noi From the latter transition we get that:
	\nhorel{\Gamma}{\Delta_1}
	{\newsp{\tilde{m_1}}{P_1 \subst{\auxtr{t}}{x} \Par \appl{x}{\tilde{n}} \subst{\auxtr{t}}{x}}}
	{\hby{\btau} \hby{\bactinp{t}{\auxtr{t'}}}}
	{\Delta_1'}{\newsp{\tilde{m_1}'}{P_1 \subst{\auxtr{t}}{x} \Par \appl{y}{\tilde{n}} \subst{\auxtr{t'}}{y}}}
	{cor:subst_equiv5}
%
%
	\noi and $t'$ a fresh name. From the freshness of $t$
	and the determinacy of the application transition
	it has to be the case that:
	\[
		\begin{array}{crll}
			& \Gamma; \es; \Delta_2'& \proves &
			\newsp{\tilde{m_2}'}{Q_1 \subst{\auxtr{t}}{x} \Par Q_2 \subst{\auxtr{t}}{x}}\\
			\Hby{} && &
			\newsp{\tilde{m_2}'}{Q_1' \subst{\auxtr{t}}{x} \Par Q_2' \subst{\auxtr{t}}{x} \Par \appl{x}{\tilde{m}} \subst{\auxtr{t}}{x}} \\
			\hby{\btau} \hby{\bactinp{t}{\auxtr{t'}}}
			& \Delta_2''& \proves& \newsp{\tilde{m_2}'}{(Q_1' \Par Q_2') \subst{\auxtr{t}}{x} \Par \appl{y}{\tilde{m}} \subst{\auxtr{t'}}{y}}
		\end{array}
	\]
%
%
	Let $Q_3$ such that
	\[
		\mhorel{\Gamma}{\Delta}{\newsp{\tilde{m_2}'}{Q_1 \Par Q_3} \subst{\auxtr{t}}{x} \subst{\auxtr{t'}}{y}}
		{\Hby{}}
		{\Delta'}{}{\newsp{\tilde{m_2}'}{(Q_1' \Par Q_2') \subst{\auxtr{t}}{x} \Par \appl{y}{\tilde{m}} \subst{\auxtr{t'}}{y}}}
	\]
	\noi From \lemref{lem:subst_equiv} we get that $\forall R$ with $\fv{R} = \tilde{x}$
	\begin{eqnarray*}
		\mhorel{\Gamma}{\Delta_1'''}{\newsp{\tilde{m_1}'}{P_1 \subst{\auxtr{t}}{x} \Par \appl{y}{\tilde{n}} \subst{\abs{\tilde{x}}{R}}{y}}}
		{\wb}
		{\Delta'}{}{\newsp{\tilde{m_2}'}{(Q_1 \Par Q_3) \subst{\auxtr{t}}{x} \subst{\abs{\tilde{x}}{R}}{y}}}
	\end{eqnarray*}
	\noi From~(\ref{lem:subst_equiv11}) we get that
\[
	\mhorel{\Gamma}{\Delta'}{\newsp{\tilde{m_1}'}{(Q_1 \Par Q_3) \subst{\auxtr{t}}{x} \subst{\abs{\tilde{x}}{R}}{y}}}
	{\Hby{} \hby{\btau}}
	{\Delta''}{}{\newsp{\tilde{m_2}'}{(Q_1' \Par Q_2') \subst{\auxtr{t}}{x} \Par R \subst{\tilde{m}}{\tilde{x}}}}
\]
	\noi and from the definition of $\Re$
%
	\[
		\mhorel{\Gamma}{\Delta_1''}{\newsp{\tilde{m_1}'}{P_1 \subst{\abs{\tilde{x}}{R}}{x} \Par \appl{y}{\tilde{n}} \subst{\abs{\tilde{x}}{R}}{y}}}
		{\hby{\btau}\ \Re\ \stackrel{\btau}{\longleftarrow}}
		{\Delta_2''}{}{\newsp{\tilde{m_2}'}{(Q_1' \Par Q_2') \subst{\abs{\tilde{x}}{R}}{x} \Par \appl{y}{\tilde{m}} \subst{\abs{\tilde{x}}{R}}{y}}}
	\]
	\noi as required.
	\qed
\end{proof}


\begin{lemma}\rm
	\label{lem:wb_is_wbc}
	$\wb\ \subseteq\ \wbc$
\end{lemma}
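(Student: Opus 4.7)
My plan is to show that $\wb$ is itself a context bisimulation (in the sense of \defref{def:context_bis}). To this end, I would take as candidate relation
\[
\Re = \set{\horel{\Gamma}{\Delta_1}{P_1}{\,,\,}{\Delta_2}{Q_1} \setbar \horel{\Gamma}{\Delta_1}{P_1}{\wb}{\Delta_2}{Q_1}}
\]
and verify the three clauses of \defref{def:context_bis} by case analysis on the triggering transition from~$P_1$. Since every higher-order transition $\hby{\ell}$ is also a typed transition $\by{\ell}$ (\lemref{l:invariant}), the clauses for input, selection, branching, and $\tau$ follow directly from \defref{def:bisim}(2), without further work.

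The nontrivial clause is the output case (\defref{def:context_bis}(1)). Suppose $\horel{\Gamma}{\Delta_1}{P_1}{\hby{\news{\tilde{m_1}} \bactout{n}{V_1}}}{\Delta_1'}{P_2}$. By \defref{def:bisim}(1) there exist $Q_2$, $V_2$ with $\horel{\Gamma}{\Delta_2}{Q_1}{\Hby{\news{\tilde{m_2}} \bactout{n}{V_2}}}{\Delta_2'}{Q_2}$ and, for fresh $t$,
\[
\horel{\Gamma}{\Delta_1''}{\newsp{\tilde{m_1}}{P_2 \Par \htrigger{t}{V_1}}}{\wb}{\Delta_2''}{\newsp{\tilde{m_2}}{Q_2 \Par \htrigger{t}{V_2}}}.
\]
What I must show is that for every $R$ with $\fv{R}=\{x\}$ the closures $\newsp{\tilde{m_1}}{P_2 \Par R\subst{V_1}{x}}$ and $\newsp{\tilde{m_2}}{Q_2 \Par R\subst{V_2}{x}}$ are related by $\Re$, i.e.\ by $\wb$. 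The idea is to feed the trigger process with two specific observations and then appeal to the Process Substitution Lemma (\lemref{lem:proc_subst}).

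Concretely, I would observe that $\htrigger{t}{V}$ can receive the trigger value $\auxtr{t'} = \abs{z}{\binp{t'}{y}(\appl{y}{z})}$ and then reduce deterministically (via \propref{lem:tau_inert}) to $\binp{t'}{y}(\appl{y}{V})$, i.e.\ to $\binp{t'}{y}(\appl{y}{x})\subst{V}{x}$. Since $\wb$ is preserved by the input transition and by deterministic $\tau$-transitions ($\tau$-inertness), from the higher-order trigger closure I extract
\[
\horel{\Gamma}{\Delta_1'''}{\newsp{\tilde{m_1}}{P_2 \Par (\binp{t'}{y}(\appl{y}{x}))\subst{V_1}{x}}}{\wb}{\Delta_2'''}{\newsp{\tilde{m_2}}{Q_2 \Par (\binp{t'}{y}(\appl{y}{x}))\subst{V_2}{x}}}.
\]
In the same way, by feeding $\htrigger{t}{V}$ with the characteristic abstraction $\abs{z}{\binp{z}{x}{\mapchar{U}{x}}}$ (which is a valid input in the refined LTS) and reducing the resulting session redex, I get the analogous closure with the characteristic value substitution $\subst{\omapchar{U}}{x}$ in place of the trigger substitution. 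These are precisely the two hypotheses of \lemref{lem:proc_subst} applied with the contexts $P_1{=}Q_1{=}\inact$ and $P_2{=}Q_2$ being the surrounding closures. Invoking the lemma yields $\newsp{\tilde{m_1}}{P_2 \Par R\subst{V_1}{x}} \wb \newsp{\tilde{m_2}}{Q_2 \Par R\subst{V_2}{x}}$ for every $R$, which is the required context closure.

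The main obstacle is the bridging step above: extracting from the single higher-order trigger closure the two specialised closures (trigger-value and characteristic-value substitutions) needed as premises of \lemref{lem:proc_subst}. This requires carefully justifying that the input and subsequent $\beta$/session reductions are deterministic and hence can be absorbed using $\tau$-inertness without disturbing the bisimilarity witness, and that the refined LTS permits exactly those two kinds of inputs on the fresh name $t$. Once that extraction is justified, the rest of the argument is routine bookkeeping of session environments and bound names.
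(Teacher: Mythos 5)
Your overall strategy---exhibiting $\wb$ itself as a context bisimulation, and discharging the output clause via the trigger closure, \lemref{lem:proc_subst} and $\tau$-inertness---is the same as the paper's, but two steps in your execution do not go through as written. First, the input clause does \emph{not} ``follow directly from \defref{def:bisim}(2) without further work'': clause~2 of \defref{def:context_bis} is stated over the unrestricted LTS $\by{\ell}$, so a higher-order input challenge may carry an \emph{arbitrary} well-typed abstraction, whereas \defref{def:bisim} only gives you matching moves for the trigger value, the characteristic value, and names. Bridging from those two inputs to arbitrary ones is precisely the other application of \lemref{lem:proc_subst}; the paper treats $\ell = \bactinp{n}{\abs{\tilde{x}}{P}}$ as a separate nontrivial case, and its output case then \emph{relies} on that input case to feed the arbitrary abstraction $\abs{z}{\binp{z}{x}R}$ into the trigger at the fresh name $t$.

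Second, your bridging step in the output case is misconfigured. After $\hotrigger{t}{w}{s}{V}$ receives the trigger value $\abs{z}{\binp{t'}{y}(\appl{y}{z})}$ and $\beta$-reduces, the result is $\newsp{s}{\binp{t'}{y}(\appl{y}{s}) \Par \bout{\dual{s}}{V} \inact}$: the session redex on $s$ is now guarded by the input at $t'$, so no further deterministic step is available, and the process $\binp{t'}{y}(\appl{y}{V})$ you claim to reach is not even \HOp syntax when $V$ is an abstraction (application arguments are names). More importantly, instantiating \lemref{lem:proc_subst} with ``$P_2 = Q_2$ being the surrounding closures'' inverts the roles the lemma requires: the lemma substitutes the \emph{same} value into two (generally different) processes and quantifies over that value, whereas the output clause needs two \emph{different} values $V_1, V_2$ substituted into the \emph{same} receiving context $R$; with $P_2 = Q_2$ its conclusion relates a process to itself and yields nothing. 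The working instantiation (and the paper's) takes as the two processes the trigger continuations $\newsp{\tilde{m_i}}{\,\cdot \Par \newsp{s}{\appl{w}{s} \Par \bout{\dual{s}}{V_i} \inact}}$---which already contain the differing $V_1$ and $V_2$---with the input-bound variable $w$ free, and as the single substituted value $\abs{z}{\binp{z}{x}R}$, identical on both sides; the lemma's two premises come from the $\wb$ input clause at $t$, and $\tau$-inertness then absorbs the deterministic $\beta$/session steps down to $\newsp{\tilde{m_i}}{\,\cdot \Par R\subst{V_i}{x}}$.
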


\begin{proof}
	Let
	\[
		\horel{\Gamma}{\Delta_1}{P_1}{\wb}{\Delta_2}{Q_1}
	\]
	The proof is divided on cases on the label $\ell$ for the transition:
	\begin{eqnarray}
		\horel{\Gamma}{\Delta_1}{P_1}{\hby{\ell}}{\Delta_1'}{P_2}
		\label{lem:wb_is_wbc1}
	\end{eqnarray}
	\noi - Case: $\ell \notin \set{ \news{\tilde{m_1}} \bactout{n}{\abs{\tilde{x}}{P}},  \news{\tilde{m_1}'} \bactout{n}{\tilde{m_1}}, \bactinp{n}{\abs{\tilde{x}}{P}} }$

	\noi For the latter $\ell$ and transition~in (\ref{lem:wb_is_wbc1}) we conclude that:	
	\[
		\horel{\Gamma}{\Delta_2}{Q_1}{\Hby{\ell}}{\Delta_2'}{Q_2}
	\]
	\noi and
	\[
		\horel{\Gamma}{\Delta_1'}{P_2}{\wb}{\Delta_2'}{Q_2}
	\]
	The above premise and conclusion coincides with defining cases for $\ell$ in $\wbc$.

	\noi - Case: $\ell = \bactinp{n}{\abs{\tilde{x}}{P}}$

	\noi Transition in~(\ref{lem:wb_is_wbc1}) concludes:
\[
	\begin{array}{l}
		\horel{\Gamma}{\Delta_1}{P_1}{\hby{\bactinp{n}{\abs{\tilde{x}}{\mapchar{U}{\tilde{x}}}}}}{\Delta_1'}{P_2 \subst{\abs{\tilde{x}}{\mapchar{U}{\tilde{x}}}}{x}}\\
		\horel{\Gamma}{\Delta_1}{P_1}{\hby{\bactinp{n}{\auxtr{t}}}}{\Delta_1''}{P_2 \subst{\auxtr{t}}{x}}
	\end{array}
\]
	\noi The last two transitions imply:
\[
	\begin{array}{l}
		\horel{\Gamma}{\Delta_2}{Q_1}{\Hby{\bactinp{n}{\abs{\tilde{x}}{\mapchar{U}{\tilde{x}}}}}}{\Delta_2'}{Q_2 \subst{\abs{\tilde{x}}{\mapchar{U}{\tilde{x}}}}{x}}\\
		\horel{\Gamma}{\Delta_2}{Q_1}{\Hby{\bactinp{n}{\auxtr{t}}}}{\Delta_2''}{Q_2 \subst{\auxtr{t}}{x}}
	\end{array}
\]
	\noi and
\[
	\begin{array}{l}
		\horel{\Gamma}{\Delta_1'}{P_2 \subst{\abs{\tilde{x}}{\mapchar{U}{\tilde{x}}}}{x}}{\wb}{\Delta_2'}{Q_2 \subst{\abs{\tilde{x}}{\mapchar{U}{\tilde{x}}}}{x}}\\
		\horel{\Gamma}{\Delta_1''}{P_2 \subst{\auxtr{t}}{x}}{\wb}{\Delta_2''}{Q_2 \subst{\auxtr{t}}{x}}
	\end{array}
\]
	\noi To conlude from (\ref{lem:proc_subst}) that
	$\forall R$ with $\fv{R} = \tilde{x}$
\[
	\horel{\Gamma}{\Delta_1'}{P_2 \subst{\abs{\tilde{x}}{R}}{x}}{\wb}{\Delta_2'}{Q_2 \subst{\abs{\tilde{x}}{R}}{x}}
\]
	\noi as required.

	\noi - Case: $\ell = \news{\tilde{m_1}} \bactout{n}{\abs{\tilde{x}}{P}}$

	\noi From transition~(\ref{lem:wb_is_wbc1}) we conclude:
\[
	\horel{\Gamma}{\Delta_2}{Q_1}{\Hby{\news{\tilde{m_2}} \bactout{n}{\abs{\tilde{x}}{Q}}}}{\Delta_2'}{Q_2}
\]
	\noi and for fresh $t$
\[
	\mhorel	{\Gamma}{\Delta_1'}{\newsp{\tilde{m_1}}{P_2 \Par \binp{t}{x} \newsp{s}{\appl{x}{s} \Par \bout{\dual{s}}{\abs{\tilde{x}}{P}} \inact}}}
		{\wb}
		{\Delta_2'}{}{\newsp{\tilde{m_2}}{Q_2 \Par \binp{t}{x} \newsp{s}{\appl{x}{s} \Par \bout{\dual{s}}{\abs{\tilde{x}}{Q}} \inact}}}
\]
	\noi From the  previous case we can conclude that $\forall R$ with $\fpv{R} = \set{x}$:
\[
	\begin{array}{rl}
		\Gamma; \es; &\Delta_1' \proves \newsp{\tilde{m_1}}{P_2 \Par \binp{t}{x} \newsp{s}{\appl{x}{s} \Par \bout{\dual{s}}{\abs{\tilde{x}}{P}} \inact}} \\
		\by{\bactinp{t}{\abs{z}{\binp{z}{x} R}}}& \newsp{\tilde{m_1}}{P_2 \Par \newsp{s}{\binp{s}{x} R \Par \bout{\dual{s}}{\abs{\tilde{x}}{P}} \inact}}\\
		\by{\tau} \quad &\Delta_1'' \proves \newsp{\tilde{m_1}}{P_2 \Par  R \subst{\abs{\tilde{x}}{P}}{x}}
	\end{array}
\]
	\noi and
\[
	\begin{array}{rl}
		\Gamma; \es; &\Delta_2' \proves \newsp{\tilde{m_2}}{Q_2 \Par \binp{t}{x} \newsp{s}{\appl{x}{s} \Par \bout{\dual{s}}{\abs{\tilde{x}}{Q}} \inact}} \\
		\by{\bactinp{t}{\abs{z}{\binp{z}{x} R}}} &\newsp{\tilde{m_2}}{Q_2 \Par \newsp{s}{\binp{s}{x} R \Par \bout{\dual{s}}{{\tilde{x}}{Q}} \inact}}\\
		\by{\tau} &\Delta_2'' \proves \newsp{\tilde{m_2}}{Q_2 \Par  R \subst{\abs{\tilde{x}}{Q}}{x}}
	\end{array}
\]
	\noi and furthermore it is easy to see that $\forall R$ with $\fpv{R} = X$:
\[
	\horel{\Gamma}{\Delta_1''}{\newsp{\tilde{m_1}}{P_2 \Par  R \subst{\abs{\tilde{x}}{P}}{x}}}{\wb}{\Delta_2}{\newsp{\tilde{m_2}}{Q_2 \Par R \subst{\abs{\tilde{x}}{Q}}{x}}}
\]
	\noi as required by the definition of $\wbc$.

	\noi - Case: $\ell = \news{\tilde{m_1}'} \bactout{n}{\tilde{m_1}}$

	The last case shares a similar argumentation with the previous case.
	\qed
\end{proof}


\begin{lemma}
	\label{lem:wbc_is_cong}
	$\wbc \subseteq \cong$.
\end{lemma}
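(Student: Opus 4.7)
The plan is to show that $\wbc$ satisfies the three defining clauses of reduction-closed, barbed congruence: (1) reduction-closure, (2) barb preservation, and (3) closure under all contexts. Clauses (1) and (2) are essentially immediate from \defref{def:context_bis}: any internal transition $P_1 \red P_1'$ gives a $\tau$-transition $\horel{\Gamma}{\Delta_1}{P_1}{\hby{\tau}}{\Delta_1'}{P_1'}$ in the typed LTS (cf. \defref{def:typed_transition}), which by Clause 2 of context bisimulation is matched by $Q_1 \Hby{} Q_1'$ with the resulting pair again in $\wbc$; and typed barbs $\barb{n}$ correspond to (weak) output observables $\news{\tilde m}\bactout{n}{V}$ on the typed LTS, which are preserved by Clause 1 of context bisimulation (the derived witness $Q_2$ only matters up to weak barbs, which is exactly what $\cong$ requires).

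The heart of the proof is congruence: I will show $\wbc$ is preserved by every context former. The approach is to define
\[
\Re \;=\; \bigl\{\,(\Ccontext{P_1},\Ccontext{Q_1}) \;\bigm|\; \horel{\Gamma}{\Delta_1}{P_1}{\wbc}{\Delta_2}{Q_1}, \ \C \text{ a context} \,\bigr\}
\]
(properly typed and with balanced, confluent session environments) and prove $\Re$ is a context bisimulation. By induction on the structure of $\C$ it suffices to handle one layer at a time, so the verification reduces to the following closure cases: parallel composition with an arbitrary process, session/shared restriction, input/output prefixes (including session selection and branching), name abstraction and application, recursion, and—crucially—the case where the bisimilar processes appear inside an output object $\bout{u}{\abs{x}{\C}}R$. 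The parallel and restriction cases are standard by case analysis on the LTS rules $\ltsrule{LPar}/\ltsrule{RPar}/\ltsrule{Tau}$ and $\ltsrule{Res}/\ltsrule{Scope}$, dispatching synchronisations to Clauses 1 and 2 of context bisimulation. The input-prefix case follows because any $\bactinp{u}{V}$-transition on $\binp{u}{x}\Ccontext{P_1}$ reduces, via $\ltsrule{In}$, to the substitution case, which is already supplied by Clause 1's closure under arbitrary observer $R$ (with free variable $x$).

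The main obstacle is the case where the bisimilar processes occur \emph{inside a transmitted abstraction}, i.e., contexts of the form $\C_0 = \bout{u}{\abs{x}{\C}}R$. When $\Ccontext{P_1}$ fires its output action $\news{\tilde m}\bactout{u}{\abs{x}{\C[P_1]}}$, Clause 1 of context bisimulation requires us to compare $\newsp{\tilde m}{R \Par T\subst{\abs{x}{\C[P_1]}}{y}}$ with $\newsp{\tilde m}{R \Par T\subst{\abs{x}{\C[Q_1]}}{y}}$ for \emph{every} test process $T$. Since we cannot a priori control how $T$ uses its abstraction argument, the matching pair is not literally of the form required by $\Re$; one must use the fact that $T\subst{\abs{x}{\C[P_i]}}{y}$ places $P_i$ under a (larger) context built from $T$, $R$, and $\C$, and then appeal to the induction hypothesis. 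Concretely, I will close $\Re$ under all substitutions of abstractions of the form $\abs{x}{\C[\cdot]}$ into observer processes, so that the induced context encompassing $P_i$ lives inside $\Re$ by construction.

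Remaining cases are routine: the application case $\appl{(\abs{x}{\C})}{u}$ reduces to substitution in one deterministic $\btau$-step, to which $\tau$-inertness (\propref{lem:tau_inert}) applies; selection/branching use \eltsrule{Sel}/\eltsrule{Bra}; and recursion is handled via structural congruence ($\recp{X}{P} \scong P\subst{\recp{X}{P}}{\varp{X}}$) together with rule $\ltsrule{Rec}$. Once $\Re$ is shown to be a context bisimulation, we get $\Re\subseteq\wbc$, so in particular $\Ccontext{P_1}\wbc\Ccontext{Q_1}$ whenever $P_1\wbc Q_1$, completing the third clause required for $\cong$. Combined with the easy clauses (1) and (2) above, this yields $\wbc\subseteq\cong$.
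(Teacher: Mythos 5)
Your proposal is correct and follows essentially the same route as the paper: reduction-closure and barb preservation are read off directly from the two clauses of context bisimulation, and congruence is established by exhibiting a context-closed candidate relation and proving it is a (context) bisimulation, using the universal quantification over observers in the output clause to discharge the cases where the emitted value is consumed by the context. The only difference is one of coverage rather than method: the paper details only parallel composition (declaring it the most interesting case), whereas you additionally work through the output-of-abstraction context $\bout{u}{\abs{x}{\C}}R$ by closing the candidate relation under the induced substitutions --- a more thorough elaboration of the same argument.
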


\begin{proof}
	\noi We prove that $\wbc$ satisfies the defining properties of $\cong$. Let
	\[
		\horel{\Gamma}{\Delta_1}{P_1}{\wbc}{\Delta_2}{P_2}
	\]
	{\bf Reduction Closed:}
	\[
		\horel{\Gamma}{\Delta_1}{P_1}{\by{}}{\Delta_1'}{P_1'}
	\]
	\noi implies that 
	$\exists P_2'$ such that 
	\begin{eqnarray*}
		\horel{\Gamma}{\Delta_2}{P_2}{\By{}}{\Delta_2'}{P_2'}\\
		\horel{\Gamma}{\Delta_1}{P_1'}{\wbc}{\Delta_2'}{P_2'}
	\end{eqnarray*}
	\noi Same argument hold for the symmetric case, thus $\wbc$ is reduction closed.

	\noi {\bf Barb Preservation:}
	\begin{eqnarray*}
		\Gamma; \emptyset; \Delta_1 \proves P_1 \hastype \Proc \barb{n}
	\end{eqnarray*}
	implies that
	\begin{eqnarray*}
		P &\cong& \newsp{\tilde{m}}{\bout{n}{V_1} P_3 \Par P_4}\\
		\dual{n} &\notin& \Delta_1
	\end{eqnarray*}
	\noi From the definition of $\wbc$ we get that
\[
	\horel	{\Gamma}{\Delta_1}{\newsp{\tilde{m}}{\bout{n}{V_1} P_3 \Par P_4}}
		{\by{\news{s_1} \bactout{m}{V_1}}}
		{\Delta_1'}
		{\newsp{\tilde{m'}}{P_3 \Par P_4}}
\]
	\noi implies
	\begin{eqnarray*}
		\horel{\Gamma}{\Delta_2}{P_2}{\By{\news{m_2} \bactout{n}{V_2}}}{\Delta_2'}{P_2'}\\
	\end{eqnarray*}
	\noi From the last result we get that
	\begin{eqnarray*}
		\Gamma; \emptyset; \Delta_2 \proves P_2 \hastype \Proc \Barb{n}
	\end{eqnarray*}
	\noi as required.

	\noi {\bf Congruence:}

	\noi The congruence property requires that we check that $\wbc$
	is preserved under any context.
	The most interesting context case is parallel composition.

	\noi We construct a congruence relation. Let
	\[
	\begin{array}{rcl}
		\mathcal{S} &=&	\set{
				(\Gamma; \emptyset; \Delta_1 \cat \Delta_3 \proves \newsp{\tilde{n_1}}{P_1 \Par R} \hastype \Proc,
				\Gamma; \emptyset; \Delta_2 \cat \Delta_3 \proves \newsp{\tilde{n_2}}{P_2 \Par R})
				\setbar \\
		& &		\horel{\Gamma}{\Delta_1}{P_1}{\wbc}{\Delta_2}{P_2}, \forall \Gamma; \emptyset; \Delta_3 \proves R \hastype \Proc\\
		& &}
	\end{array}
	\]
	\noi We need to show that the above congruence is a bisimulation.
	To show that $\mathcal{S}$ is a bisimulation we do a case analysis on the structure
	of the $\by{\ell}$ transition.


	\noi - Case: 
	\[
		\horel{\Gamma}{\Delta_1 \cat \Delta_3}{\newsp{\tilde{n_1}}{P_1 \Par R}}{\by{\ell}}{\Delta_1' \cat \Delta_3}{\newsp{\tilde{n_1'}}{P_1' \Par R}}
	\]

	\noi The case is divided into three subcases:

	\noi Subcase i: $\ell \notin \set{\news{\tilde{m}} \bactout{n}{\abs{\tilde{x}}{Q}}, \news{\tilde{mm_1}} \bactout{n}{\tilde{m_1}}}$

	\noi From the definition of typed transition we get:
	\[
		\horel{\Gamma}{\Delta_1}{P_1}{\by{\ell}}{\Delta_1'}{P_1'}
	\]
	\noi which implies that
	\begin{eqnarray}
		\horel{\Gamma}{\Delta_1}{P_2}{\By{\ell}}{\Delta_2'}{P_2'}
		\label{lem:wbc_is_cong1}\\
		\horel{\Gamma}{\Delta_1'}{P_1'}{\wbc}{\Delta_2''}{P_2'}
		\label{lem:wbc_is_cong2}
	\end{eqnarray}
	\noi From transition in~(\ref{lem:wbc_is_cong1}) we conclude that 
	\[
		\horel{\Gamma}{\Delta_2 \cat \Delta_3}{\newsp{\tilde{n_2}}{P_2 \Par R}}{\By{\ell}}{\Delta_2' \cat \Delta_3}{\newsp{\tilde{n_2}'}{P_2' \Par R}}
	\]
	\noi Furthermore from~(\ref{lem:wbc_is_cong2}) and the definition of $\mathcal{S}$ we conlude that
	\[
		\horel{\Gamma}{\Delta_1' \cat \Delta_3}{\newsp{\tilde{n_1}'}{P_1' \Par R}}{\ \mathcal{S}\ }{\Delta_2' \cat \Delta_3}{\newsp{\tilde{n_2}'}{P_2' \Par R}}
	\]

	\noi Subcase ii: $\ell = \news{\tilde{m_1}} \bactout{n}{\abs{\tilde{x}}{Q_1}}$

	\noi From the definition of typed transition we get
	\[
		\horel{\Gamma}{\Delta_1}{P_1}{\by{\news{\tilde{m_1}} \bactout{n}{\abs{\tilde{x}}{Q_1}}}}{\Delta_1'}{P_1'}
	\]
	\noi which implies that
	\begin{eqnarray}
		&& \horel{\Gamma}{\Delta_1}{P_2}{\By{\news{\tilde{m_2}} \bactout{n}{\abs{\tilde{x}}{Q_2}}}}{\Delta_2'}{P_2'}
		\label{lem:wbc_is_cong3} \\
		&&\forall Q, \set{x} \in \fpv{Q} \nonumber \\
		&& \horel{\Gamma}{\Delta_1''}{\newsp{\tilde{n_1}''}{P_1' \Par Q \subst{\abs{\tilde{x}}{Q_1}}{x}}}
		{\ \wbc\ }
		{\Delta_2''}{\newsp{\tilde{n_2}''}{P_2' \Par Q \subst{\abs{\tilde{x}}{Q_2}}{x}}}
		\label{lem:wbc_is_cong4}
	\end{eqnarray}
	\noi From transition~(\ref{lem:wbc_is_cong3}) conclude that 
	\[
		\horel{\Gamma}{\Delta_2 \cat \Delta_3}{\newsp{\tilde{n_2}}{P_2 \Par R}}{\By{\news{\tilde{m_2}} \bactout{n}{\abs{\tilde{x}}{Q_2}}}}{\Delta_2' \cat \Delta_3}{\newsp{\tilde{n_2}'}{P_2' \Par R}}
	\]
	\noi Furthermore from~(\ref{lem:wbc_is_cong4}) we conlude that $\forall Q$ with $\set{x} = \fpv{Q}$
	\[
		\horel{\Gamma}{\Delta_1'' \cat \Delta_3}{\newsp{\tilde{n_1}''}{P_1' \Par Q \subst{(\tilde{x}) Q_1}{x} \Par R}}{\ \mathcal{S}\ }{\Delta_2'' \cat \Delta_3}{\newsp{\tilde{n_2}''}{P_2' \Par Q \subst{\abs{\tilde{x}}{Q_2}}{x} \Par R}}
	\]
	- Subcase iii: $\ell = \news{\tilde{mm_1}} \bactout{n}{\tilde{m_1}}$

	\noi From the definition of typed transition we get that
	\[
		\horel{\Gamma}{\Delta_1}{P_1}{\by{\news{\tilde{mm_1}} \bactout{n}{\tilde{m_1}}}}{\Delta_1'}{P_1'}
	\]
	\noi which implies that $\exists P_2', s_2$ such that
	\begin{eqnarray}
		&& \horel{\Gamma}{\Delta_1}{P_2}{\By{\news{\tilde{mm_2}} \bactout{n}{\tilde{m_2}}}}{\Delta_2'}{P_2'}
		\label{lem:wbc_is_cong5}\\
		&&\forall Q, x = \fn{Q}, \nonumber \\
		&& \horel{\Gamma}{\Delta_1''}{\newsp{\tilde{n_1}}{P_1' \Par Q \subst{\tilde{m_1}}{\tilde{x}}}}{\ \wbc\ }{\Delta_2''}{\newsp{\tilde{n_2}}{P_2' \Par Q \subst{\tilde{m_2}}{\tilde{x}}}}
		\label{lem:wbc_is_cong6}
	\end{eqnarray}
	\noi From transition~(\ref{lem:wbc_is_cong5}) conclude that 
	\[
		\horel{\Gamma}{\Delta_2 \cat \Delta_3}{\newsp{\tilde{n_2}'}{P_2 \Par R}}{\By{\news{\tilde{mm_2}} \bactout{n}{\tilde{m_2}}}}{\Delta_2' \cat \Delta_3}{\newsp{\tilde{n_2}'''}{P_2' \Par R}}
	\]
	\noi Furthermore from~(\ref{lem:wbc_is_cong6}) we conlude that $\forall Q, x = \fn{Q}$
	\[
		\horel{\Gamma}{\Delta_1'' \cat \Delta_3}{\newsp{\tilde{n_1}''}{P_1' \Par Q \subst{\tilde{m_1}}{\tilde{x}} \Par R}}{\ \mathcal{S}\ }{\Delta_2'' \cat \Delta_3}{\newsp{\tilde{n_2}''}{P_2' \Par Q \subst{\tilde{m_2}}{\tilde{x}} \Par R}}
	\]
%

	\noi - Case:
	\[
		\horel{\Gamma}{\Delta_1 \cat \Delta_3}{\newsp{\tilde{m_1}}{P_1 \Par R}}{\by{\ell}}{\Delta_1 \cat \Delta_3'}{\newsp{\tilde{m_1}'}{P_1 \Par R'}}
	\]
	\noi This case is divided into three subcases:

	\noi Subcase i: $\ell \notin \set{\news{\tilde{m}} \bactout{n}{\abs{\tilde{x}}{Q}}, \news{\tilde{mm_1}} \bactout{n}{\tilde{m_1}}}$

	\noi From the LTS we get that:
	\[
		\horel{\Gamma}{\Delta_3}{R}{\by{\ell}}{\Delta_3'}{R'}
	\]
	\noi Which in turn implies
	\begin{eqnarray*}
		\horel{\Gamma}{\Delta_2 \cat \Delta_3}{\newsp{\tilde{m_2}}{P_2 \Par R}}{\by{\ell}}{\Delta_2 \cat \Delta_3'}{\newsp{\tilde{m_2}'}{P_2 \Par R'}}
	\end{eqnarray*}
	\noi From the definition of $\mathcal{S}$ we conclude that
	\[
		\horel{\Gamma}{\Delta_1 \cat \Delta_3'}{\newsp{\tilde{m_1}'}{P_1 \Par R'}}{\ \mathcal{S}\ }{\Delta_2 \cat \Delta_3''}{\newsp{\tilde{m_2}'}{P_2 \Par R'}}
	\]
	\noi as required.

	\noi Subcase ii: $\ell = \news{\tilde{m_1}} \bactout{n}{\abs{\tilde{x}}{Q}}$

	\noi From the LTS we get that:
	\begin{eqnarray}
		& &	\horel{\Gamma}{\Delta_3}{R}{\by{\ell}}{\Delta_3'}{R'}
			\label{lem:wbc_is_cong7}\\
		& & 	\forall R_1, \set{x} = \fpv{R_1},
			\nonumber\\
		& &	\Gamma; \emptyset; \Delta_3'' \proves \newsp{\tilde{m}'}{R' \Par R_1 \subst{\abs{\tilde{x}}{Q}}{x}} \hastype \Proc
			\label{lem:wbc_is_cong8}
	\end{eqnarray}
	\noi From~(\ref{lem:wbc_is_cong7}) we get that
	\[
		\horel{\Gamma}{\Delta_2 \cat \Delta_3}{\newsp{\tilde{m_2}'}{P_2 \Par R}}{\by{\ell}}{\Delta_2 \cat \Delta_3'}{\newsp{\tilde{m_2}}{P_2 \Par R'}}
	\]
	\noi Furthermore from~(\ref{lem:wbc_is_cong8}) and the definition of $\mathcal{S}$ we conclude that
	$\forall R_1$ with $\set{x} \in \fpv{R_1}$
	\[
		\horel{\Gamma}{\Delta_1 \cat \Delta_3''}{\newsp{\tilde{m_1}}{P_1 \Par \newsp{\tilde{m}'}{R' \Par R_1 \subst{\abs{\tilde{x}}{Q}}{x}}}}
		{\ \mathcal{S}\ }
		{\Delta_2 \cup \Delta_3''}{\newsp{\tilde{m_2}}{P_2 \Par \newsp{\tilde{m}'}{R' \Par R_1 \subst{\abs{\tilde{x}}{Q}}{x}}}}
	\]
	\noi as required.

	\noi Subcase iii: $\ell = \news{\tilde{mm}} \bactout{n}{\tilde{m}}$

	\noi From the typed LTS we get that:
	\begin{eqnarray}
		& &	\horel{\Gamma}{\Delta_3}{R}{\by{\ell}}{\Delta_3'}{R'}
			\label{lem:wbc_is_cong9} \\
		& &	\forall Q, \tilde{x} = \fn{Q}, \nonumber\\
		& &	\Gamma; \emptyset; \Delta_3'' \proves \newsp{\tilde{m}'}{R' \Par Q \subst{\tilde{m}}{\tilde{x}}} \hastype \Proc
			\label{lem:wbc_is_cong10}
	\end{eqnarray}
	\noi From~(\ref{lem:wbc_is_cong9}), we obtain that
	\[
		\horel{\Gamma}{\Delta_2 \cat \Delta_3}{\newsp{\tilde{m_2}}{P_2 \Par R}}{\by{\ell}}{\Delta_2 \cat \Delta_3'}{\newsp{\tilde{m_2}}{P_2 \Par R'}}
	\]
	\noi Furthermore from~(\ref{lem:wbc_is_cong10}) and the definition of $\mathcal{S}$ we conclude that
	$\forall Q, \tilde{x} = \fn{Q}$
	\[
		\horel{\Gamma}{\Delta_1 \cat \Delta_3''}{\newsp{\tilde{m_1}}{P_1 \Par \newsp{\tilde{m}}{R' \Par Q \subst{\tilde{m}'}{\tilde{x}}}}}
		{\ \mathcal{S}\ }
		{\Delta_2 \cat \Delta_3''}{\newsp{\tilde{m_2}}{P_2 \Par \newsp{\tilde{m}'}{R' \Par Q \subst{\tilde{m}}{\tilde{x}}}}}
	\]
	\noi as required.


	\noi - Case:
	\[
		\horel{\Gamma}{\Delta_1 \cat \Delta_3}{\newsp{\tilde{m_1}}{P_1 \Par R}}
		{\by{}}
		{\Delta_1' \cat \Delta_3'}{\newsp{\tilde{m_1}'}{P_1' \Par R'}}
	\]

	\noi This case is divided into three subcases:

	\noi Subcase i: $\horel{\Gamma}{\Delta_1}{P_1}{\by{\ell}}{\Delta_1'}{P_1'}$
	and $\ell \notin \set{\news{\tilde{m}} \bactout{n}{\abs{\tilde{x}}{Q}}, \news{\tilde{mm_1}} \bactout{n}{\tilde{m_1}}}$ implies
	\begin{eqnarray}
		\horel{\Gamma}{\Delta_3}{R}{\by{\dual{\ell}}}{\Delta_3}{R'}
		\label{lem:wbc_is_cong11} \\
		\horel{\Gamma}{\Delta_2}{P_2}{\By{\hat{\ell}}}{\Delta_2'}{P_2'}
		\label{lem:wbc_is_cong12}\\
		\horel{\Gamma}{\Delta_1'}{P_1'}{\wbc}{\Delta_2'}{P_2'}
		\label{lem:wbc_is_cong13}
	\end{eqnarray}
	\noi From~(\ref{lem:wbc_is_cong11}) and~(\ref{lem:wbc_is_cong12}) we get
	\[
		\horel{\Gamma}{\Delta_2 \cat \Delta_3}{\newsp{\tilde{m_2}}{P_2 \Par R}}{\By{}}{\Delta_2' \cat \Delta_3'}{\newsp{\tilde{m_2}'}{P_2' \Par R'}}
	\]
	\noi From~(\ref{lem:wbc_is_cong13}) and the definition of ($\mathcal{S}$) we get that
	\[
		\horel{\Gamma}{\Delta_1' \cat \Delta_3'}{\newsp{\tilde{m_1}'}{P_1' \Par R'}}{\ \mathcal{S}\ }{\Delta_2' \cat \Delta_3}{\newsp{\tilde{m_2}'}{P_2' \Par R'}}
	\]
	\noi as required.

	\noi Subcase ii:
	$\horel{\Gamma}{\Delta_1}{P_1}{\by{\news{\tilde{m_1}} \bactout{n}{\abs{\tilde{x}}{Q_1}}}}{\Delta_1'}{P_1'}$
	implies
	\begin{eqnarray}
		& & \horel{\Gamma}{\Delta_3}{R}{\by{\bactinp{n}{\abs{\tilde{x}} {Q_1}}}}{\Delta_3'}{R' \subst{\abs{\tilde{x}}{Q_1}}{x}}
		\label{lem:wbc_is_cong14}\\
		& & \horel{\Gamma}{\Delta_1 \cat \Delta_3}{\newsp{\tilde{m_1}}{P_1 \Par R}}{\by{}}{\Delta_1' \cat \Delta_3'}{\newsp{\tilde{m_1}''}{P_1' \Par R' \subst{\abs{\tilde{x}}{Q_1}}{x}}}
		\nonumber \\
		& & \horel{\Gamma}{\Delta_2}{P_2}{\By{\news{\tilde{m_2}} \bactout{n}{\abs{\tilde{x}}{Q_2}}}}{\Delta_2'}{P_2'}
		\label{lem:wbc_is_cong15}\\
		& & \forall Q, \set{x} = \fpv{Q}, \nonumber \\
		& & \horel{\Gamma}{\Delta_1''}{\newsp{\tilde{m_1}'}{P_1' \Par Q \subst{\abs{\tilde{x}}{Q_1}}{x}}}{\ \wbc\ }{\Delta_2''}{\newsp{\tilde{m_2}'}{P_2' \Par Q \subst{\abs{\tilde{x}}{Q_2}}{x}}}
		\label{lem:wbc_is_cong16}
	\end{eqnarray}
	From~(\ref{lem:wbc_is_cong14}) and the Substitution Lemma~(\lemref{lem:subst}) we obtain that
	\[
		\horel{\Gamma}{\Delta_3}{R}{\by{\bactinp{n}{\abs{\tilde{x}} {Q_2}}}}{\Delta_3''}{R' \subst{\abs{\tilde{x}}{Q_2}}{x}}
	\]
	\noi to combine with~(\ref{lem:wbc_is_cong15}) and get
	\[
		\horel{\Gamma}{\Delta_2 \cat \Delta_3}{\newsp{\tilde{m_2}}{P_2 \Par R}}{\By{}}{\Delta_2' \cat \Delta_3''}{\newsp{\tilde{m_2}''}{P_2' \Par R' \subst{\abs{\tilde{x}}{Q_2}}{X}}}
	\]
	\noi In result in~(\ref{lem:wbc_is_cong16}), set $Q$ as $R'$ to obtain:
%
	\[
		\horel{\Gamma}{\Delta_1''}{\newsp{\tilde{m_1}'}{P_1' \Par R' \subst{\abs{\tilde{x}}{Q_1}}{x}}}
		{\ \mathcal{S}\ \Delta_2''}
		{\newsp{\tilde{m_2}'}{P_2' \Par R' \subst{\abs{\tilde{x}}{Q_2}}{x}}}
	\]

	\noi Subcase iii:
	$\horel{\Gamma}{\Delta_1}{P_1}{\by{\news{\tilde{mm_1}} \bactout{n}{\tilde{m_1}}}}{\Delta_1'}{P_1'}$
	\begin{eqnarray}
		& & \horel{\Gamma}{\Delta_3}{R}{\by{\bactinp{n}{\tilde{m_1}}}}{\Delta_3'}{R' \subst{\tilde{m_1}}{\tilde{x}}}
		\label{lem:wbc_is_cong24}\\
		& & \horel{\Gamma}{\Delta_1 \cup \Delta_3}{\newsp{\tilde{m_1}}{P_1 \Par R}}{\by{}}{\Delta_1' \cup \Delta_3'}{\newsp{\tilde{m_1}''}{P_1' \Par R' \subst{s_1}{x}}}
		\nonumber \\
		& & \horel{\Gamma}{\Delta_2}{P_2}{\By{\news{\tilde{mm_2}} \bactout{n}{\tilde{m_2}}}}{\Delta_2'}{P_2'}
		\label{lem:wbc_is_cong25}\\
		& & \forall Q, \set{x} = \fpv{Q}, \nonumber \\
		& & \horel{\Gamma}{\Delta_1''}{\newsp{\tilde{m_1}'}{P_1' \Par Q \subst{\tilde{m_1}}{\tilde{x}}}}
		{\ \wbc\ }
		{\Delta_2''}{\newsp{\tilde{m_2}'}{P_2' \Par Q \subst{\tilde{m_2}}{\tilde{x}}}}
		\label{lem:wbc_is_cong26}
	\end{eqnarray}
	From~(\ref{lem:wbc_is_cong24}) and the Substitution Lemma~(\lemref{lem:subst}) we get that
	\[
		\horel{\Gamma}{\Delta_3}{R}{\by{\bactinp{n}{\tilde{m_2}}}}{\Delta_3''}{R' \subst{\tilde{m_2}}{\tilde{x}}}
	\]
	\noi to combine with~(\ref{lem:wbc_is_cong25}) and get
	\[
		\horel{\Gamma}{\Delta_2 \cat \Delta_3}{\newsp{\tilde{m_2}}{P_2 \Par R}}
		{\By{}}
		{\Delta_2' \cat \Delta_3''}{\newsp{\tilde{m_2}''}{P_2' \Par R' \subst{\tilde{m_2}}{\tilde{x}}}}
	\]
	\noi Set $Q$ as $R'$ in result in (\ref{lem:wbc_is_cong26}) to obtain
%
	\[
		\horel{\Gamma}{\Delta_1''}{\newsp{\tilde{m_1}'}{P_1' \Par R' \subst{\tilde{m_1}}{\tilde{x}}}}
		{\ \mathcal{S}\ }
		{\Delta_2''}{\newsp{\tilde{m_2}'}{P_2' \Par R' \subst{\tilde{m_2}}{\tilde{x}}}}
	\]
	\qed
\end{proof}


We prove the result $\cong \subseteq \wb$ following
the technique developed in~\cite{Hennessy07} and
refined for session types in~\cite{KYHH2015,KY2015}.

\begin{definition}[Definibility]\myrm
	Let $\Gamma; \emptyset; \Delta_1 \proves P \hastype \Proc$.
	A visible action $\ell$ is \emph{definable} whenever
	there exists (testing) process
	$\Gamma; \emptyset; \Delta_2 \proves T\lrangle{\ell, \suc} \hastype \Proc$
	with $\suc$ fresh name 
	such that:
	\begin{itemize}
		\item	If $\horel{\Gamma}{\Delta_1}{P}{\by{\ell}}{\Delta_1'}{P'}$ and
			$\ell \in \set{\bactsel{n}{\ell}, \bactbra{n}{\ell}, \bactinp{n}{\tilde{m}}, \bactinp{n}{\abs{\tilde{x}}{Q}}}$
			then:
		\[
			P \Par T\lrangle{\ell, \suc} \red P' \Par \bout{\suc}{\dual{m}} \inact \textrm{ and }
			\Gamma; \emptyset; \Delta_1' \cat \Delta_2' \proves P' \Par \bout{\suc}{\dual{m}} \inact
		\]
 		\item	If $\horel{\Gamma}{\Delta_1}{P}{\by{\news{\tilde{m}}\bactout{n}{V}}}{\Delta_1'}{P'}$,
			$t$ fresh
			and $\tilde{m}' \subseteq \tilde{m}$
			then:
			\begin{eqnarray*}
				& & P \Par T\lrangle{\news{\tilde{m}}\bactout{n}{V}, \suc} \red
				\newsp{\tilde{m}}{P' \Par \hotrigger{t}{x}{s}{V} \Par \bout{\suc}{\dual{n}, \tilde{m}'} \inact}\\
				& & \Gamma; \emptyset; \Delta_1' \cat \Delta_2' \proves
				\newsp{\tilde{m}}{P' \Par \hotrigger{t}{x}{s}{V} \Par  \bout{\suc}{\dual{n}, \tilde{m}'} \inact} \hastype \Proc\\
			\end{eqnarray*}

		\item	Let $\ell \in \set{\bactsel{n}{\ell}, \bactbra{n}{\ell}, \bactinp{n}{\tilde{m}}, \bactinp{n}{(\tilde{x}) Q}}$.
			If $P \Par T\lrangle{\ell, \suc} \red Q$ with			
			$\Gamma; \emptyset; \Delta \proves Q \hastype \Proc \barb{\suc}$ then 
			$\horel{\Gamma}{\Delta_1}{P}{\By{\ell}}{\Delta_1'}{P'}$
			and $Q \scong P' \Par \bout{\suc}{\dual{n}} \inact$.

		\item	If $P \Par T\lrangle{\news{\tilde{m}}\bactout{n}{V}, \suc} \red Q$
			with $\Gamma; \emptyset; \Delta \proves Q \hastype \Proc \barb{\suc}$ then
			$\horel{\Gamma}{\Delta_1}{P}{\By{\news{\tilde{m}}\bactout{n}{V}}}{\Delta_1'}{P'}$
			and $Q \scong \newsp{\tilde{m}}{P' \Par \hotrigger{t}{x}{s}{V} \Par \bout{\suc}{\dual{n}, \tilde{m}'} \inact}$
			with $t$ fresh and $\tilde{m}' \subseteq \tilde{m}$.
	\end{itemize}	
\end{definition}

We first show that every visible action $\ell$ is {\em definable}.

\begin{lemma}[Definibility]
	\label{lem:definibility}
	Every action $\ell$ is definable.
\end{lemma}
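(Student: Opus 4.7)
The plan is to proceed by a case analysis on the syntactic shape of the visible action $\ell$, constructing in each case a suitable \emph{testing process} $T\lrangle{\ell,\suc}$ and verifying the four clauses of definability. The tester always terminates with an emission on the fresh success name $\suc$, so that a barb on $\suc$ after reduction with $P$ faithfully records that the intended interaction occurred.

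\textbf{Construction of the testers.}
For $\ell = \bactsel{n}{l}$ I would take $T\lrangle{\ell,\suc}$ to be the dual branching
$\bbras{\dual{n}}{l : \bout{\suc}{\dual{n}} \inact,\,l_j: \inact}_{j \in J}$ over a suitable finite index set $J$ matching the type of $\dual{n}$. Dually, for $\ell = \bactbra{n}{l}$ take $T\lrangle{\ell,\suc} \defeq \bsel{\dual{n}}{l}\bout{\suc}{\dual{n}}\inact$. For $\ell = \bactinp{n}{\tilde{m}}$ (name input, possibly shared) set $T\lrangle{\ell,\suc} \defeq \bout{\dual{n}}{\tilde{m}}\bout{\suc}{\dual{n}}\inact$, while for $\ell = \bactinp{n}{\abs{\tilde{x}}{Q}}$ set $T\lrangle{\ell,\suc} \defeq \bout{\dual{n}}{\abs{\tilde{x}}{Q}}\bout{\suc}{\dual{n}}\inact$. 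The most delicate case is output $\ell = \news{\tilde{m}}\bactout{n}{V}$, for which I would take
\[
T\lrangle{\ell,\suc} \defeq \binp{\dual{n}}{x}\bigl(\, \htrigger{t}{x} \Par \bout{\suc}{\dual{n},\tilde{m}'} \inact \,\bigr),
\]
with $t$ fresh and $\tilde{m}' \subseteq \tilde{m}$ as prescribed by the definition. The trigger $\htrigger{t}{x}$ is what allows the bound value $V$ to be observed without the tester having to guess its shape.

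\textbf{Typing and completeness.}
Next I would show that each $T\lrangle{\ell,\suc}$ is well-typed under an environment $\Delta_2$ whose domain provides the dual of $\subj{\ell}$ and the success name $\suc$, relying on Lemma~\ref{lem:subst} and the typing rules of Fig.~\ref{fig:typerulesmy}. The forward (completeness) clauses then follow from the untyped transition rules of Fig.~\ref{fig:untyped_LTS}: each transition of $P$ pairs with the matching dual action of $T\lrangle{\ell,\suc}$ via rule $\ltsrule{Tau}$, leaving precisely $P'$ in parallel with the residual $\bout{\suc}{\dual{n}}\inact$ (or, in the output case, with the trigger and the emission on $\suc$). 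Type preservation under these interactions is ensured by Theorem~\ref{thm:sr}.

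\textbf{Soundness.}
The harder direction is the converse: assuming $P \Par T\lrangle{\ell,\suc} \red^\ast Q$ with $Q \barb{\suc}$, I must reconstruct a weak transition $\horel{\Gamma}{\Delta_1}{P}{\By{\ell}}{\Delta_1'}{P'}$ of the expected shape. The argument is that $\suc$ is fresh and appears guarded in $T\lrangle{\ell,\suc}$ behind exactly one sequence of prefixes; hence any reduction sequence producing a barb on $\suc$ must have consumed those prefixes by synchronisations with $P$. Since the prefixes of $T\lrangle{\ell,\suc}$ mirror the dual of $\ell$, each such synchronisation corresponds (up to internal $\tau$-steps of $P$, absorbed into $\By{}$) to an $\ell$-labelled transition of $P$. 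Inversion on structural congruence and the typed LTS (Definition~\ref{def:typed_transition}) then yields the required decomposition $Q \scong P' \Par \bout{\suc}{\dual{n}} \inact$, or, in the output case, $Q \scong \newsp{\tilde{m}}{P' \Par \htrigger{t}{x}{V} \Par \bout{\suc}{\dual{n},\tilde{m}'}\inact}$.

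\textbf{Main obstacle.}
The principal difficulty lies in the higher-order output case. Here the tester must receive a value $V$ whose shape is \emph{a priori} unknown, and then make $V$ available to the surrounding observation context in a way that preserves linearity and is faithful to the intended label. Using $\htrigger{t}{x}$ (rather than a naive forwarder) is essential: it packages $V$ so that subsequent tests can observe $V$ via the refined input rule $\eltsrule{RRv}$ without committing to a particular substitution. Verifying that the resulting typing remains balanced and that no spurious barb on $\suc$ can arise from $P$ alone (freshness of $\suc$ again) is the core of the argument. Once these points are settled for the output case, the other cases follow by straightforward adaptations.\qed
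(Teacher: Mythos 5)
Your proposal is correct and follows essentially the same route as the paper: the paper's proof likewise just exhibits the testers (dual prefix followed by an emission on the fresh $\suc$, and, for outputs, an input that installs the trigger $\hotrigger{t}{x}{s}{V}$ in parallel with the success signal) and declares the verification of the four clauses straightforward. Your testers differ only cosmetically (you place $\inact$ in the non-selected branches where the paper uses a deadlocked $\newsp{a}{\binp{a}{y}\dots}$, and you pass the received value to the trigger directly rather than via the eta-expansion $\abs{\tilde{x}}{\appl{y}{\tilde{x}}}$), and your sketch of the soundness direction via freshness of $\suc$ is exactly the intended argument.
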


\begin{proof}
	\noi We define $T\lrangle{\ell, \suc}$:
	\begin{itemize}
		\item	$T\lrangle{\bactinp{n}{V}, \suc} = \bout{\dual{n}}{V} \bout{\suc}{\dual{n}} \inact$.

		\item	$T\lrangle{\bactbra{n}{l}, \suc} = \bsel{\dual{n}}{l} \bout{\suc}{\dual{n}} \inact$.


		\item	$T\lrangle{\news{\tilde{m}'} \bactout{n}{\tilde{m}}, \suc} = \binp{\dual{n}}{\tilde{x}} (\hotrigger{t}{x}{s}{\tilde{x}} \Par \bout{\suc}{\dual{n}, \tilde{m}''} \inact)$
			with $\tilde{m}'' \subseteq \tilde{m}'$.

		\item	$T\lrangle{\news{\tilde{m}} \bactout{n}{\abs{\tilde{x}}{Q}}, \suc} = \binp{\dual{n}}{y} (\hotrigger{t}{x}{s}{\abs{\tilde{x}}{(\appl{y}{\tilde{x}}})} \Par \bout{\suc}{\dual{n}, \tilde{m}'} \inact)$ with $\tilde{m}' \subseteq \tilde{m}$.

		\item	$T\lrangle{\bactsel{n}{l}, \suc} = \bbra{\dual{n}}{l: \bout{\suc}{\dual{n}} \inact), l_i: \newsp{a}{\binp{a}{y} \bout{\suc}{\dual{n}} \inact}}_{i \in I}$.
	\end{itemize}

	\noi Assuming a process 
	\[
		\Gamma; \emptyset; \Delta \proves P \hastype \Proc
	\] 
	\noi it is straightforward to verify that $\forall \ell$, $\ell$ is definable.
	\qed
\end{proof}

\begin{lemma}[Extrusion]\rm
	\label{lem:extrusion}
	If 
	\[
		\horel{\Gamma}{\Delta_1'}{\newsp{\tilde{m_1}'}{P \Par \bout{\suc}{\dual{n}, \tilde{m_1}''} \inact}}{\cong}{\Delta_2}{\newsp{\tilde{m_2}'}{Q \Par \bout{\suc}{\dual{n}, \tilde{m_2}''} \inact}}
	\]
	then
	\[
		\horel{\Gamma}{\Delta_1}{P}{\cong}{\Delta_2}{Q}
	\]
\end{lemma}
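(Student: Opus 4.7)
The plan is to exploit the fact that $\cong$ is a congruence (by definition) together with a carefully chosen testing context that strips away the output wrapper on $\suc$ and the outer name restrictions, leaving only $P$ and $Q$ to be compared. The key observation is that $\suc$ is a fresh name that does not occur in $P$ or $Q$, so the only interaction a context may have with the wrapper is via the explicit output $\bout{\suc}{\dual{n}, \tilde{m}''}\inact$; any context that consumes this single output precisely exposes the inner bound names $\tilde{m}_1''$ (resp.\ $\tilde{m}_2''$) and forces the scopes $\tilde{m}_1'$ (resp.\ $\tilde{m}_2'$) to be opened in a controlled way.

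First, I would define a context $C$ of the shape
$C[-] = - \Par \binp{\suc}{y, \tilde{w}}\inact$
(with $\tilde{w}$ of appropriate arity), which is well-typed in an extended environment that declares $\suc$ with a matching session/shared type. By congruence of $\cong$ under parallel composition and since $\suc \notin \fn{P} \cup \fn{Q}$, applying $C$ to both sides of the hypothesis and then reducing the forced synchronisation on $\suc$ yields
$\horel{\Gamma}{\Delta_1''}{\newsp{\tilde{m}_1'}{P}}{\cong}{\Delta_2''}{\newsp{\tilde{m}_2'}{Q}}$,
because $\cong$ is reduction-closed and $\inact$ is a unit for parallel composition up to $\cong$.

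Next I would peel off the remaining restrictions $\newsp{\tilde{m}_1'}{}$ and $\newsp{\tilde{m}_2'}{}$. Here the typing environments $\Delta_1$ and $\Delta_2$ in the conclusion already account for the endpoints in $\tilde{m}_1'$ and $\tilde{m}_2'$ respectively (they appear free in $P$ and $Q$), so the restrictions in the hypothesis arise precisely from the scope extrusion produced by the $\bout{\suc}{\dual{n},\tilde{m}''}\inact$ signal. By choosing the context to receive exactly the extruded names (i.e.\ by instantiating $\tilde{w}$ so that the received names are identified with the free occurrences in $P$ and $Q$), the binders can be removed and the result follows again by congruence together with $\tau$-inertness of the deterministic communication on $\suc$ (Proposition~\ref{lem:tau_inert}).

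The main obstacle will be the bookkeeping around the two possibly different lists $\tilde{m}_1''$ and $\tilde{m}_2''$ of extruded names and the mismatch between the restriction lists $\tilde{m}_1'$ and $\tilde{m}_2'$ on the two sides: since $\cong$ relates configurations with confluent but not necessarily equal session environments, I have to ensure that the chosen observer context can simultaneously dismantle both wrappers without introducing spurious observations. This is handled by typing: both $\bout{\suc}{\dual{n},\tilde{m}_1''}\inact$ and $\bout{\suc}{\dual{n},\tilde{m}_2''}\inact$ must be typable under $\Gamma$ with compatible types for $\suc$, which forces $|\tilde{m}_1''| = |\tilde{m}_2''|$ and makes a single receiver on $\suc$ adequate on both sides. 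Once this alignment is established, the argument reduces to a routine application of congruence, reduction closure, and $\tau$-inertness.
\qed
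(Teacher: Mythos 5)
There is a genuine gap in the second step of your plan: the ``peel off the remaining restrictions'' move does not work. Applying a receiving context $\binp{\suc}{y,\tilde{w}}\inact$ and firing the synchronisation on $\suc$ only \emph{extends} the scope of the binders $\newsp{\tilde{m}_1'}{\cdot}$ over the receiver's continuation; no context can ever \emph{discharge} a restriction. After the reduction you are left with (roughly) $\newsp{\tilde{m}_1'}{P}\cong\newsp{\tilde{m}_2'}{Q}$, which is strictly weaker than the goal $\horel{\Gamma}{\Delta_1}{P}{\cong}{\Delta_2}{Q}$: the conclusion quantifies over contexts $\C$ that may freely use the session endpoints in $\tilde{m}_1'$, and such a $\C$ simply cannot be placed in parallel with the restricted process. ``Instantiating $\tilde{w}$ so that the received names are identified with the free occurrences in $P$'' is not an operation available to you --- the only access any observer has to those names is through the single message on $\suc$, and whatever it does with them happens \emph{inside} the scope of the binders.

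The argument therefore has to be co-inductive rather than a one-shot application of congruence. The paper defines the candidate relation $\mathcal{S}$ consisting of the \emph{unwrapped} pairs $(P,Q)$ whose wrapped versions are related by $\cong$, and verifies directly that $\mathcal{S}$ is reduction-closed, barb-preserving and a congruence, so that $\mathcal{S}\subseteq\cong$ by maximality. The crucial case is parallel composition with an arbitrary $R$ mentioning the hidden names: there one composes the \emph{wrapped} processes with a receiver of the form $\binp{\dual{\suc}}{\tilde{y}}\,(R^{y_1}\Par\bout{\suc'}{\tilde{y_2}}\inact)$, i.e.\ $R$ abstracted over the names it needs, which after the deterministic exchange on $\suc$ places $R$ under the binders next to $P$ (resp.\ $Q$) \emph{and re-emits a fresh success signal} $\bout{\suc'}{\tilde{y_2}}\inact$, so the resulting pair is again of wrapped shape and stays in $\mathcal{S}$. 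A similar receiver-based test is needed for barbs on the extruded names, which your sketch does not address. Your observations about typing forcing $|\tilde{m}_1''|=|\tilde{m}_2''|$ and about $\tau$-inertness of the $\suc$-communication are correct and do appear in the paper's proof, but they do not repair the missing co-inductive structure.
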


\begin{proof}
	\noi Let
	\begin{eqnarray*}
		\mathcal{S}	&=&
					\set{\Gamma; \es; \Delta_1 \proves P \hastype \Proc, \Gamma; \es; \Delta_2 \proves Q \hastype \Proc \setbar \\
				& &	\horel{\Gamma}{\Delta_1'}{\newsp{\tilde{m_1}'}{P \Par \bout{\suc}{\dual{n}, \tilde{m_1}''} \inact}}
					{\cong}{\Delta_2}{\newsp{\tilde{m_2}'}{Q \Par \bout{\suc}{\dual{n}, \tilde{m_2}''} \inact}} \\
		&&}
	\end{eqnarray*}
	\noi We show that $\mathcal{S}$ is a congruence.

	\noi {\bf Reduction closed:}

	\noi $P \red P'$
	implies
	$\newsp{\tilde{m_1}'}{P \Par \bout{\suc}{\dual{n}, \tilde{m_1}''} \inact} \red \newsp{\tilde{m_1}'}{P' \Par \bout{\suc}{\dual{n}, \tilde{m_1}''} \inact}$
	implies from the freshness of $\suc$
	$\newsp{\tilde{m_1}'}{P \Par \bout{\suc}{\dual{n}, \tilde{m_1}''} \inact} \Red \newsp{\tilde{m_1}'}{Q' \Par \bout{\suc}{\dual{n}, \tilde{m_2}''} \inact}$.
	which implies
	$Q \Red Q'$ as required.

	\noi {\bf Barb Preserving:}

	\noi Let $\Gamma; \es; \Delta_1 \proves P \barb{s}$. We analyse two cases.

	\noi - Case: $s \not= n$.

	\noi $\Gamma; \es; \Delta_1 \proves P \barb{s}$
	implies
	\[
		\Gamma; \es; \Delta_1' \proves \newsp{\tilde{m_1}'}{P \Par \bout{\suc}{\dual{n}, \tilde{m_1}''} \inact} \barb{s}
	\]
	\noi implies
	$\Gamma; \es; \Delta_2' \proves \newsp{\tilde{m_2}'}{Q \Par \bout{\suc}{\dual{n}, \tilde{m_2}''} \inact} \Barb{s}$
	implies from the freshness of $\suc$ that
	$\Gamma; \es; \Delta_2 \proves Q \Barb{s}$ as required.

	\noi - Case: $s = n$ and $\Gamma; \es; \Delta_1 \proves P \barb{n}$

	\noi We compose with $\binp{\dual{\suc}}{x, \tilde{y}} T\lrangle{\ell, \suc'}$ with $\subj{\ell} = x$ to get
	\[
		\Gamma; \es; \Delta_1' \proves \newsp{\tilde{m_1}'}{P \Par \bout{\suc}{\dual{n}, \tilde{m_1}''} \inact} \Par \binp{\dual{\suc}}{x, \tilde{y}} T\lrangle{\ell, \suc'}
	\]
	\noi Which implies from the fact that $\Gamma; \es; \Delta_1 \proves P \barb{n}$ that
	\[
		\newsp{\tilde{m_1}'}{P \Par \bout{\suc}{\dual{n}, \tilde{m_1}''} \inact} \Par \binp{\dual{\suc}}{x, \tilde{y}} T\lrangle{\ell, \suc'} \Red 
		\newsp{\tilde{m_1}'}{P' \Par \bout{\suc'}{\dual{n}, \tilde{m_1}''} \inact}
	\]
	\noi and furthermore
	\[
		\newsp{\tilde{m_2}'}{Q \Par \bout{\suc}{\dual{n}, \tilde{m_2}''} \inact} \Par \binp{\dual{\suc}}{x, \tilde{y}} T\lrangle{\ell, \suc'} \Red 
		\newsp{\tilde{m_2}'}{Q' \Par \bout{\suc'}{\dual{n}, \tilde{m_2}''} \inact}
	\]
	\noi The last reduction implies that
	$\Gamma; \es; \Delta_2 \proves Q \Barb{n}$ as required.

	\noi {\bf Congruence:}
	The key case of congruence is parallel composition.
	We define relation $\mathcal{C}$ as
	\begin{eqnarray*}
		\mathcal{C} &=&	\set{ \Gamma; \es; \Delta_1 \cat \Delta_3 \proves P \Par R \hastype \Proc,  \Gamma; \es; \Delta_2 \cat \Delta_3 \proves Q \Par R \hastype \Proc \setbar \\
		& &	\forall R,\\
		& &	\horel{\Gamma}{\Delta_1'}{\newsp{\tilde{m_1}'}{P \Par \bout{\suc}{\dual{n}, \tilde{m_1}''} \inact}}{\cong}{\Delta_2'}{\newsp{\tilde{m_2}'}{Q \Par \bout{\suc}{\dual{n}, \tilde{m_2}''} \inact}}}
	\end{eqnarray*}
	\noi We show that $\mathcal{C}$ is a congruence.

	\noi We distinguish two cases:

	\noi - Case: $\dual{n}, \tilde{m_1}'', \tilde{m_2}'' \notin \fn{R}$	

	\noi From the definition of $\mathcal{C}$ we can deduce that $\forall R$:
	\[
		\horel{\Gamma}{\Delta_1''}{\newsp{\tilde{m_1}'}{P \Par \bout{\suc}{\dual{n}, \tilde{m_1}''} \inact} \Par R}{\cong}{\Delta_2''}{\newsp{\tilde{m_2}'}{Q \Par \bout{\suc}{\dual{n}, \tilde{m_2}''} \inact} \Par R}
	\]
	\noi The conclusion is then trivial.

	\noi - Case: $\tilde{s} = \set{\dual{n}, \tilde{m_1}''} \cap \set{\dual{n}, \tilde{m_2}''} \in \fn{R}$

	\noi From the definition of $\mathcal{C}$ we can deduce that $\forall R^{y_1}$ such that $R = R^{y_1}\subst{\tilde{s}}{\tilde{y_1}}$
	and $\suc'$ fresh and $\set{\tilde{y}} = \set{\tilde{y_1}} \cup \set{\tilde{y_2}}$:
	\[
		\mhorel{\Gamma}{\Delta_1''}{\newsp{\tilde{m_1}'}{P \Par \bout{\suc}{\dual{n}, \tilde{m_1}''} \inact} \Par \binp{\dual{\suc}}{\tilde{y}} (R^{y_1} \Par \bout{\suc'}{\tilde{y_2}} \inact)}
		{\cong}
		{\Delta_2''}{}{\newsp{\tilde{m_2}'}{Q \Par \bout{\suc}{\dual{n}, \tilde{m_2}''} \inact} \Par \binp{\dual{\suc}}{\tilde{y}} (R^{y_1} \Par \bout{\suc'}{\tilde{y_2}} \inact)}
	\]
	\noi Applying reduction closeness to the above pair we get:
	\[
		\horel{\Gamma}{\Delta_1''}{\newsp{\tilde{m_1}'}{P \Par R \Par \bout{\suc'}{\tilde{s_2}} \inact}}{\cong}{\Delta_2''}{\newsp{\tilde{m_2}'}{Q \Par R \Par \bout{\suc'}{\tilde{s_2}} \inact}}
	\]
	\noi The conclusion then follows.
	\qed
\end{proof}

\begin{lemma}\rm
	\label{lem:cong_is_wb}
	$\cong \subseteq \wb$.
\end{lemma}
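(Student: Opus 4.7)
The plan is to show that $\cong$ satisfies the clauses defining $\wb$ (\defref{def:bisim}), so that $\cong$ is itself a higher-order bisimulation and hence contained in the largest such relation. The strategy, adapted from \cite{Hennessy07,KYHH2015,KY2015}, is to turn each observable labelled transition of $P$ into an internal reduction of $P \Par T\lrangle{\ell,\suc}$ that produces a barb on the fresh name $\suc$, and then invoke the definitional properties of $\cong$ (reduction closure, barb preservation, and congruence with respect to parallel composition) to recover a matching transition from any $\cong$-related $Q$. The Definability lemma (\lemref{lem:definibility}) and the Extrusion lemma (\lemref{lem:extrusion}) are precisely the two tools that let us move back and forth between the labelled and reduction views.

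Concretely, given $\horel{\Gamma}{\Delta_1}{P}{\cong}{\Delta_2}{Q}$ and $\horel{\Gamma}{\Delta_1}{P}{\hby{\ell}}{\Delta_1'}{P'}$, I would proceed by case analysis on $\ell$. For input, branching, selection, and first-order output, \lemref{lem:definibility} yields $T\lrangle{\ell,\suc}$ such that $P \Par T\lrangle{\ell,\suc} \red P' \Par \bout{\suc}{\dual{n},\tilde{m_1}''}\inact$ (and for higher-order output, $\red \newsp{\tilde m}{P'\Par \htrigger{t}{V}\Par\bout{\suc}{\dual n,\tilde m_1''}\inact}$). By congruence of $\cong$ under parallel composition, $P\Par T\lrangle{\ell,\suc}\cong Q\Par T\lrangle{\ell,\suc}$; by reduction closure and barb preservation (tracking the unique fresh barb on $\suc$), there exists a corresponding reduction sequence $Q\Par T\lrangle{\ell,\suc}\Red Q''$ with a matching $\suc$-barb, and the second half of \lemref{lem:definibility} forces $Q''\scong \newsp{\tilde{m_2}}{Q'\Par \htrigger{t}{V_2}\Par\bout{\suc}{\dual n,\tilde m_2''}\inact}$ (respectively, the simpler form without the trigger in the non-output cases), arising from a weak transition $Q\Hby{\ell'}Q'$ whose label matches $\ell$ up to bound names and values.

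Once matching transitions are in place, I would apply the Extrusion lemma (\lemref{lem:extrusion}) to strip the observer residue $\bout{\suc}{\dual{n},\tilde{m}''}\inact$, obtaining $\horel{\Gamma}{\Delta_1'}{P'}{\cong}{\Delta_2'}{Q'}$ in the non-output cases and, crucially, the pairing $\horel{\Gamma}{\Delta_1''}{\newsp{\tilde{m_1}}{P'\Par\htrigger{t}{V_1}}}{\cong}{\Delta_2''}{\newsp{\tilde{m_2}}{Q'\Par\htrigger{t}{V_2}}}$ in the higher-order output case, which is exactly the closure clause (1) of \defref{def:bisim}. Symmetric cases for $Q$ proceed identically. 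Finally, $\tau$-transitions are handled directly by reduction closure of $\cong$, coupled with the observation that internal deterministic steps leave $\Delta$ up to $\bistyp$-confluence, so the typed-relation invariants are preserved.

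The main obstacle is the higher-order output clause, because the matching for $\wb$ demands that the \emph{triggered} residues stand in the bisimilarity, not merely the continuations $P'$ and $Q'$. The delicate point is choosing the testing process in \lemref{lem:definibility} so that its reduct already contains the trigger $\htrigger{t}{V}$ as a syntactic component, so that extrusion delivers precisely the pair $\newsp{\tilde{m_1}}{P'\Par\htrigger{t}{V_1}} \cong \newsp{\tilde{m_2}}{Q'\Par\htrigger{t}{V_2}}$ required by \defref{def:bisim}; this is exactly how $T\lrangle{\news{\tilde m}\bactout{n}{V},\suc}$ is designed, so the linearity and freshness of $t$ and $\suc$ ensure the extrusion step goes through without disturbing the remaining session structure.
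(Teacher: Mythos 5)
Your proposal follows essentially the same route as the paper's proof: definability of each visible action via a testing process $T\lrangle{\ell,\suc}$, transfer through the congruence, reduction-closure and barb-preservation properties of $\cong$, and the extrusion lemma to strip the observer residue, with the trigger built into the test's reduct to obtain the output clause of \defref{def:bisim}. One minor slip: you group first-order output with the trigger-free cases, but since clause~1 of the higher-order bisimulation covers \emph{all} outputs $\news{\tilde{m_1}}\bactout{n}{V_1}$ (names included), the paper's test for $\news{\tilde{m}'}\bactout{n}{\tilde{m}}$ also embeds a trigger and that case is handled exactly like the higher-order one.
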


\begin{proof}
	\noi Let
	\[
		\horel{\Gamma}{\Delta_1}{P_1}{\cong}{\Delta_2}{P_2}
	\]
	\noi We distinguish two cases:

	\noi - Case:
	\[
		\horel{\Gamma}{\Delta_1}{P_1}{\by{\tau}}{\Delta_1'}{P_1'}
	\]
	\noi The result follows the reduction closeness property of $\cong$ since
	\[
		\horel{\Gamma}{\Delta_2}{P_2}{\By{\tau}}{\Delta_2'}{P_2'}
	\]
	\noi and
	\[
		\horel{\Gamma}{\Delta_1'}{P_1'}{\cong}{\Delta_2'}{P_2'}
	\]

	\noi - Case:
	\begin{eqnarray}
		\horel{\Gamma}{\Delta_1}{P_1}{\by{\ell}}{\Delta_1'}{P_1'}
		\label{lem:cong_is_wb1}
	\end{eqnarray}
	\noi We choose test $T\lrangle{\ell, \suc}$ to get
	\begin{eqnarray}
		\horel{\Gamma}{\Delta_1 \cat \Delta_3}{P_1 \Par T\lrangle{\ell, \suc}}{\cong}{\Delta_2 \cat \Delta_3}{P_2 \Par T\lrangle{\ell, \suc}}
		\label{lem:cong_is_wb2}
	\end{eqnarray}
	\noi From this point we distinguish three subcases:

	\noi Subcase i: $\ell \in \set{\bactinp{n}{\tilde{m}}, \bactinp{n}{\abs{\tilde{x}}{Q}}, \bactsel{n}{l}, \bactbra{n}{l}}$

	\noi By reducing~(\ref{lem:cong_is_wb1}), we obtain
	\begin{eqnarray*}
		&& P_1 \Par T\lrangle{\ell, \suc} \red P_1' \Par \bout{\suc}{\dual{n}} \inact \\
		&& \Gamma; \es; \Delta_1' \cat \Delta_3' \proves P_1' \Par \bout{\suc}{\dual{n}} \inact \barb{\suc}
	\end{eqnarray*}
	\noi implies from~(\ref{lem:cong_is_wb2})
	\begin{eqnarray*}
		&& \Gamma; \es; \Delta_2 \cat \Delta_3 \proves P_2 \Par T\lrangle{\ell, \suc} \Barb{\suc}
	\end{eqnarray*}
	\noi implies from Lemma~\ref{lem:definibility},
	\begin{eqnarray*}
		&& \horel{\Gamma}{\Delta_2}{P_2}{\By{\ell}}{\Delta_2'}{P_2'}\\
		&& P_2 \Par T \lrangle{\ell, \suc} \Red P_2' \Par \bout{\suc}{\dual{n}} \inact
	\end{eqnarray*}
	\noi and
	\[
		\horel{\Gamma}{\Delta_1' \cat \Delta_3'}{P_1' \Par \bout{\suc}{\dual{n}}}{\cong}{\Delta_2' \cat \Delta_3'}{P_2' \Par \bout{\suc}{\dual{n}} \inact}
	\]
	We then apply \lemref{lem:extrusion} to get
	\[
		\horel{\Gamma}{\Delta_1'}{P_1'}{\cong}{\Delta_2'}{P_2'}
	\]
	\noi as required.

	\noi Subcase ii: $\ell = \news{\tilde{m_1}} \bactout{n}{\abs{\tilde{x}}{Q_1}}$

	\noi Note that $T\lrangle{\news{\tilde{m_1}} \bactout{n}{(\tilde{x}) Q_1}, \suc} = T\lrangle{\news{\tilde{m_2}} \bactout{n}{\abs{\tilde{x}}{Q_2}}, \suc}$

	\noi Transition~in (\ref{lem:cong_is_wb1}) becomes
	\begin{eqnarray}
		\horel{\Gamma}{\Delta_1}{P_1}{\by{\news{\tilde{m_1}} \bactout{n}{\abs{\tilde{x}}{Q_1}}}}{\Delta_1'}{P_1'}
		\label{lem:cong_is_wb3}
	\end{eqnarray}
	\noi If we use the test process $T\lrangle{\news{\tilde{m_1}} \bactout{n}{(\tilde{x}) Q_1}, \suc}$ we reduce to:
	\begin{eqnarray*}
		&& P_1 \Par T\lrangle{\news{\tilde{m_1}} \bactout{n}{\abs{\tilde{x}}{Q_1}}, \suc}
		\red
		\newsp{m_1}{P_1' \Par \hotrigger{t}{x}{s}{\abs{\tilde{x}}{Q_1}}} \Par \bout{\suc}{\dual{n}, \tilde{m_1}'} \inact \\
		&& \Gamma; \es; \Delta_1' \cat \Delta_3' \proves \newsp{m_1}{P_1' \Par \hotrigger{t}{x}{s}{\abs{\tilde{x}}{Q_1}}} \Par \bout{\suc}{\dual{n}, \tilde{m_1}'} \inact \barb{\suc}
	\end{eqnarray*}
	\noi implies from~(\ref{lem:cong_is_wb2})
	\[
		\Gamma; \es; \Delta_2 \cat \Delta_3 \proves P_2 \Par T\lrangle{\news{\tilde{m_2}} \bactout{n}{\abs{\tilde{x}}{Q_2}}, \suc} \Barb{\suc}
	\]
	\noi implies from \lemref{lem:definibility}
	\begin{eqnarray}
		&& \horel{\Gamma}{\Delta_2}{P_2}{\By{\news{\tilde{m_2}} \bactout{n}{\abs{\tilde{x}}{Q_2}}}}{\Delta_2'}{P_2'}
		\label{lem:cong_is_wb4}\\
		&& P_2 \Par T \lrangle{\ell, \suc} \Red \newsp{m_2}{P_2' \Par \hotrigger{t}{x}{s}{\abs{\tilde{x}}{Q_2}}} \Par \bout{\suc}{\dual{n}, \tilde{m_2}'} \inact \nonumber
	\end{eqnarray}
	\noi and
	\[
		\mhorel{\Gamma}{\Delta_1' \cat \Delta_3'}{\newsp{m_1}{P_1' \Par \hotrigger{t}{x}{s}{\abs{\tilde{x}}{Q_1}}} \Par \bout{\suc}{\dual{n}, \tilde{m_1}'} \inact}
		{\cong}
		{\Delta_2' \cat \Delta_3'}{}{\newsp{m_2}{P_2' \Par \hotrigger{t}{x}{s}{\abs{\tilde{x}}{Q_2}}} \Par \bout{\suc}{\dual{n}, \tilde{m_2}'} \inact}
	\]
	\noi We then apply \lemref{lem:extrusion} to get
	\[
		\mhorel{\Gamma}{\Delta_1'}{\newsp{m_1}{P_1' \Par \hotrigger{t}{x}{s}{\abs{\tilde{x}}{Q_1}}}}
		{\cong}
		{\Delta_2'}{}{\newsp{m_2}{P_2' \Par \hotrigger{t}{x}{s}{\abs{\tilde{x}}{Q_2}}}}
	\]
	\noi as required.

	\noi -Case: $\ell = \news{\tilde{s}} \bactout{n}{\tilde{m}}$

	\noi Follows similar arguments as the previous case.
	\qed
\end{proof}


\begin{theorem}[Concidence]
	\begin{enumerate}
		\item	$\wbc\ =\ \wb$.
		\item	$\wbc\ =\ \cong$.
	\end{enumerate}
\end{theorem}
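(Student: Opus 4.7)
The plan is to derive the two desired equalities by chaining together the four inclusions already established as standalone lemmas earlier in the section. Specifically, I would invoke \lemref{lem:wb_eq_wbf} ($\wb = \wbf$), \lemref{lem:wb_is_wbc} ($\wb \subseteq \wbc$), \lemref{lem:wbc_is_cong} ($\wbc \subseteq \cong$), and \lemref{lem:cong_is_wb} ($\cong \subseteq \wb$). Together these give the cycle
\[
  \cong \ \subseteq\ \wb \ \subseteq\ \wbc \ \subseteq\ \cong,
\]
from which both $\wb = \wbc$ and $\wbc = \cong$ follow by mutual inclusion, settling both parts of the theorem.

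For Part~(1), $\wbc = \wb$: the inclusion $\wb \subseteq \wbc$ is direct from \lemref{lem:wb_is_wbc}, whose proof replays the output/input clauses of higher-order bisimulation against the universal context closure of context bisimulation, using \lemref{lem:proc_subst} to turn observations on triggered and characteristic values into observations on arbitrary abstractions. For the reverse inclusion, rather than argue directly I would compose $\wbc \subseteq \cong$ (from \lemref{lem:wbc_is_cong}, which shows $\wbc$ is reduction-closed, barb-preserving, and a congruence) with $\cong \subseteq \wb$ (from \lemref{lem:cong_is_wb}, which uses the definability of actions and the extrusion lemma to transfer context-closure into labelled bisimilarity). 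For Part~(2), $\wbc = \cong$: the inclusion $\wbc \subseteq \cong$ is again \lemref{lem:wbc_is_cong}, and the reverse follows by composing $\cong \subseteq \wb$ with $\wb \subseteq \wbc$ already obtained.

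Since \lemref{lem:wb_eq_wbf} gives $\wb = \wbf$, as a bonus one obtains the full statement of \thmref{the:coincidence} that all four relations ($\wb$, $\wbf$, $\wbc$, $\cong$) coincide. The argument itself is essentially a two-line composition of inclusions once the four lemmas are in place; the intellectual work lives entirely inside those lemmas rather than in the final proof.

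The main obstacle is not the theorem statement proper but the underlying Process Substitution Lemma (\lemref{lem:proc_subst}), on which \lemref{lem:wb_is_wbc} crucially relies. There, the delicate point is justifying that coincidence of behaviour under only two canonical input instantiations (the trigger value $\abs{z}{\binp{t}{y}(\appl{y}{z})}$ and the characteristic value $\omapchar{U}$) is enough to conclude coincidence under \emph{every} abstraction $\abs{\tilde{x}}{R}$. This is where session types and linearity pay off: the trigger instantiation preserves the linear discipline and allows one to simulate an arbitrary abstraction via a deterministic session handshake on a fresh endpoint, while the characteristic instantiation inhabits the type structurally. Combining both via $\tau$-inertness (\propref{lem:tau_inert}) and up-to deterministic-transition reasoning (\lemref{lem:up_to_deterministic_transition}) closes the argument. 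Once \lemref{lem:proc_subst} is established, the remaining inclusions are largely routine, and the coincidence theorem itself is immediate by chasing the chain above.
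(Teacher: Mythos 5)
Your proposal is correct and follows exactly the paper's own argument: the theorem is obtained by chaining the cycle $\cong \subseteq \wb \subseteq \wbc \subseteq \cong$ via \lemref{lem:cong_is_wb}, \lemref{lem:wb_is_wbc}, and \lemref{lem:wbc_is_cong}, with \lemref{lem:wb_eq_wbf} supplying $\wb = \wbf$. Your identification of \lemref{lem:proc_subst} as the locus of the real difficulty also matches where the paper places the technical weight.
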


\begin{proof}
	\noi	\lemref{lem:wb_eq_wbf} proves $\wb\ =\ \wbf$.
		\lemref{lem:cong_is_wb} proves $\cong\ \subseteq\ \wb$.
		\lemref{lem:wb_is_wbc} proves $\wb\ \subseteq\ \wbc$.
		\lemref{lem:wbc_is_cong} proves $\wbc\ \subseteq\ \cong$.

	\noi From the above results, we conclude $\cong\ \subseteq\ \wb\ =\ \wbf\ \subseteq\ \wbc\ \subseteq\ \cong$. 
	\qed
\end{proof}


\subsection{$\tau$-inertness}
\label{app:sub_tau_inert}

We prove Part 1 of \propref{lem:tau_inert}.

\begin{proposition}[$\tau$-inertness]\rm
	Let balanced \HOp process $\Gamma; \es; \Delta \proves P \hastype \Proc$.
	$\horel{\Gamma}{\Delta}{P}{\hby{\dtau}}{\Delta'}{P'}$ implies
	$\horel{\Gamma}{\Delta}{P}{\wb}{\Delta'}{P'}$.
\end{proposition}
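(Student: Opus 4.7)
The plan is to construct the typed relation
\[
\Re \;=\; \{\,(P,P') \mid \horel{\Gamma}{\Delta}{P}{\hby{\dtau}}{\Delta'}{P'}\text{ with }\Delta\text{ balanced}\,\} \,\cup\, \mathit{Id}
\]
and prove $\Re \subseteq \wb$ by appealing to the up-to deterministic transition principle of \lemref{lem:up_to_deterministic_transition}. The key structural observation is that a deterministic redex is ``invisible to the environment'': by \defref{def:dettrans}, a session $\dtau$-step fires between dual endpoints $s,\dual{s}$ that both belong to the balanced $\Delta$, and the environmental rules $\eltsrule{SRv}$, $\eltsrule{SSnd}$, $\eltsrule{Sel}$, $\eltsrule{Bra}$ all carry the side condition $\dual{s}\notin\dom{\Delta}$; hence no typed action of $P$ can have $s$ or $\dual{s}$ as its subject. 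Similarly, a $\beta$-redex $\appl{(\abs{x}Q)}{u}$ admits a unique $\tau$-transition via $\ltsrule{App}$ and exposes no visible subject.

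\noindent First I would establish a small ``commutation'' lemma: if $\horel{\Gamma}{\Delta}{P}{\hby{\dtau}}{\Delta'}{P'}$ and $\horel{\Gamma}{\Delta}{P}{\hby{\ell}}{\Delta_1}{P_1}$ with $\ell \neq \dtau$ being the fired redex, then by inspection of the untyped LTS in \figref{fig:untyped_LTS} the two derivations act on disjoint subterms (the redex is under a restriction that hides its participants, or is a pure application), so there exists $P_1'$ with $\horel{\Gamma}{\Delta_1}{P_1}{\hby{\dtau}}{\Delta_1'}{P_1'}$ and $\horel{\Gamma}{\Delta'}{P'}{\hby{\ell}}{\Delta_1'}{P_1'}$ (up to $\scong$). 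This diamond property is proved by case analysis on the shape of the redex, using $\alpha$-conversion and rule $\ltsrule{Res}/\ltsrule{Par}$ commutations; subject reduction (\thmref{thm:sr}) ensures the resulting environments are balanced and confluent so the pair $(P_1,P_1')$ remains in $\Re$'s typed domain.

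\noindent With this in hand, I verify the three clauses of \lemref{lem:up_to_deterministic_transition} for $\Re$. For the visible-output clause: any $\horel{\Gamma}{\Delta}{P}{\hby{\news{\tilde{m_1}}\bactout{n}{V}}}{\Delta_1}{P_2}$ can be matched by $P' \Hby{\news{\tilde{m_1}}\bactout{n}{V}} P_2'$ with $P_2 \hby{\dtau} P_2'$ (same value $V$ is output since the redex does not touch the output prefix) and the triggered continuations $\newsp{\tilde{m_1}}(P_2 \Par \htrigger{t}{V})$ and $\newsp{\tilde{m_1}}(P_2'\Par \htrigger{t}{V})$ again lie in $\Re$ modulo congruence of parallel composition and restriction. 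The non-output visible clause and the $\tau$-clause are handled symmetrically via the commutation lemma, with the redex step itself matched by the empty weak transition on the right.

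\noindent The main obstacle is the commutation lemma: one must carefully show that a $\dtau$-step never races with another action, particularly when the redex sits underneath the binders revealed by an extrusion $\news{\tilde{m}}\bactout{n}{V}$, and when the observable action is itself a $\tau$ arising from a different synchronization. For session redexes this is where co-location of $s,\dual{s}$ in a balanced $\Delta$ is essential—it rules out any competing synchronization on the same endpoints. For $\beta$-redexes, determinism of rule $\ltsrule{App}$ makes the argument direct. Throughout, the closure of $\Re$ under parallel composition and restriction needed to close the output clause relies on the congruence of $\wb$ already established via $\wb = \wbc$ in \thmref{the:coincidence} (used for the base cases reached after unwinding the trigger).
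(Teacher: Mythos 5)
Your core observation is the right one---both endpoints of a deterministic session redex live in the balanced $\Delta$, so the side conditions $\dual{s}\notin\dom{\Delta}$ in the environmental LTS forbid observing either endpoint separately, and linearity rules out a competing synchronisation---but the proof as organised is circular. You propose to establish $\Re\subseteq\wb$ ``by appealing to the up-to deterministic transition principle of \lemref{lem:up_to_deterministic_transition}''; in this paper that lemma is itself proved \emph{from} $\tau$-inertness (its proof checks that the closure $\Re^{\Hby{\dtau}}$ is a bisimulation ``with the use of \propref{lem:tau_inert}''). The same problem affects your last step, where closure of the output clause under parallel composition and restriction is justified by the congruence of $\wb$ obtained from $\wb=\wbc$ in \thmref{the:coincidence}: that coincidence passes through the process-substitution lemmas (\lemref{lem:proc_subst}, \lemref{lem:subst_equiv}), whose proofs are carried out up to $\beta$-transition and hence already presuppose $\tau$-inertness. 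Neither appeal is actually needed: once your commutation lemma is in place, the residual pair $(P_1,P_1')$ satisfies $P_1\hby{\dtau}P_1'$ and so already lies in $\Re$, and the triggered continuations in the output clause are again related by a single $\dtau$-step (deterministic steps are preserved by the parallel and restriction rules of the untyped LTS), so $\Re$ can be shown to be a \emph{plain} weak bisimulation with no up-to machinery and no congruence result. As written, however, the dependency graph does not close.

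For comparison, the paper's proof avoids the commutation lemma entirely: it is an induction on the derivation of the $\tau$-transition. In the base cases the redex in isolation ($\appl{(\abs{x}{P})}{n}$, or $\bout{s}{V}P_1\Par\binp{\dual{s}}{x}P_2$ with both $s$ and $\dual{s}$ in $\Delta$) admits \emph{no} typed observable other than the $\dtau$ itself, so it is immediately bisimilar to its reduct; the inductive cases (restriction, parallel, structural congruence) are discharged by congruence of $\wb$. Your confluence-style route costs more (the diamond lemma requires a delicate case analysis, including the interaction with extruded binders) but, once repaired as above, is more self-contained, since it does not lean on the congruence property at all; the paper's route is shorter but pushes all the non-local reasoning into congruence. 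With the circularity removed, your argument goes through.
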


\begin{proof}
	\noi The proof is done by induction on the structure of $\by{\tau}$
	which coincides the reduction $\red$.

	\noi Basic step:

	\noi - Case: $P = \appl{(\abs{x}{P})}{n}$:
	\[
		\horel{\Gamma}{\Delta}{\appl{(\abs{x}{P})}{n}}{\hby{\btau}}{\Delta'}{P \subst{n}{x}}
	\]
	\noi Bisimulation requirements hold since, there is no other transition to observe than ${\hby{\btau}}$.

	\noi - Case: $P = \bout{s}{V} P_1 \Par \binp{\dual{s}}{x} P_2$:
	\[
		\horel{\Gamma}{\Delta}{\bout{s}{V} P_1 \Par \binp{\dual{s}}{x} P_2}{\hby{\stau}}{\Delta'}{P_1 \Par P_2}
	\]
	\noi The proof follows from the fact that we can only observe a $\tau$
	action on typed process
	$\Gamma; \emptyset; \Delta \proves P \hastype \Proc$.
	Actions $\bactout{s}{V}$ and $\bactinp{\dual{s}}{V}$
	are forbiden by the LTS for typed environments.

	\noi It is easy to conclude then that $\horel{\Gamma}{\Delta}{P}{\wb}{\Delta'}{P'}$.

	\noi - Case: $P = \bsel{s}{l} P_1 \Par \bbra{\dual{s}}{l_i: P_i}_{i \in I}$

	\noi Similar arguments as the previous case.

	\noi Induction hypothesis:

	\noi If $P_1 \red P_2$ then $\horel{\Gamma_1}{\Delta_1}{P_1}{\wb}{\Delta_2}{P_2}$.

	\noi Induction Step:

	\noi - Case: $P = \news{s} P_1$
	\[
		\horel{\Gamma}{\Delta}{\news{s}{P_1}}{\hby{\stau}}{\Delta'}{\news{s} P_2}
	\]
	\noi From the induction hypothesis and the fact that bisimulation is a congruence
	we get that $\horel{\Gamma}{\Delta}{P}{\wb}{\Delta'}{P'}$.

	\noi - Case: $P = P_1 \Par P_3$
	\[
		\horel{\Gamma}{\Delta}{P_1 \Par P_3}{\hby{\stau}}{\Delta'}{P_2 \Par P_3}
	\]
	\noi From the induction hypothesis and the fact that bisimulation is a congruence
	we get that $\horel{\Gamma}{\Delta}{P}{\wb}{\Delta'}{P'}$.

	\noi - Case: $P \scong P_1$

	From the induction hypothesis and the fact that bisimulation is a congruence
	and structural congruence preserves $\wb$
	we get that $\horel{\Gamma}{\Delta}{P}{\wb}{\Delta'}{P'}$.

	\qed
\end{proof}

\section{Expressiveness Results}

\subsection{Properties for $\enco{\pmapp{\cdot}{1}{f}, \tmap{\cdot}{1}, \mapa{\cdot}^{1}}: \HOp \to \HO$}
\label{app:enc_HOp_to_HO}

We repeat the statement of \propref{prop:typepres_HOp_to_HO}, 
as in Page \pageref{prop:typepres_HOp_to_HO}:


\begin{proposition}[Type Preservation, \HOp into \HO]
	\label{app:prop:typepres_HOp_to_HO}
	Let $P$ be a \HOp process.
	If $\Gamma; \emptyset; \Delta \proves P \hastype \Proc$ then 
	$\mapt{\Gamma}^{1}; \emptyset; \mapt{\Delta}^{1} \proves \pmapp{P}{1}{f} \hastype \Proc$. 
\end{proposition}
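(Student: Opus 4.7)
The plan is to prove the statement by induction on the typing derivation $\Gamma; \emptyset; \Delta \proves P \hastype \Proc$, with a case analysis on the last rule applied. For each source rule, I would exhibit a derivation in \HO whose conclusion matches $\tmap{\Gamma}{1}; \emptyset; \tmap{\Delta}{1} \proves \pmapp{P}{1}{f} \hastype \Proc$, invoking the induction hypothesis on the immediate sub-derivations and appealing to \thmref{thm:sr} for closure under $\scong$ when needed.

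First I would dispatch the easy structural cases ($\trule{Par}$, $\trule{Res}$, $\trule{ResS}$, $\trule{Sel}$, $\trule{Bra}$, $\trule{Nil}$, $\trule{End}$), where the encoding is homomorphic and type preservation is immediate from the IH and the observation that $\tmap{\cdot}{1}$ commutes with the session/shared environment operations. The abstraction/application cases ($\trule{Abs}$, $\trule{App}$) are similarly straightforward since $\pmapp{\abs{x}{P}}{1}{f}$ and $\pmapp{\appl{(\abs{x}P)}{u}}{1}{f}$ are homomorphic and $\tmap{\lhot{C}}{1}_{\mathsf{v}} = \lhot{\tmap{C}{1}}$, $\tmap{\shot{C}}{1}_{\mathsf{v}} = \shot{\tmap{C}{1}}$.

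The core cases are name passing and recursion. For $\trule{Send}$ on $\bout{u}{v}P$ with $v:C$ and $u:\btout{C}S$, I must type the encoded output $\bout{u}{\abs{z}{\binp{z}{x}(\appl{x}{v})}}\pmapp{P}{1}{f}$ at target type $\btout{\tmap{C}{1}_{\mathsf{v}}}\tmap{S}{1}$. By the definition of $\tmap{\cdot}{1}_{\mathsf{v}}$ the carried value type becomes $\lhot{(\btinp{\lhot{\tmap{C}{1}}}\tinact)}$ (or the shared variant), so I would derive that $\abs{z}{\binp{z}{x}(\appl{x}{v})}$ inhabits this type by typing $z{:}\btinp{\lhot{\tmap{C}{1}}}\tinact$, receiving $x$ of type $\lhot{\tmap{C}{1}}$, and applying to $v{:}\tmap{C}{1}$; duality of the freshly introduced endpoint pair and the IH for $P$ then close the case. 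The dual $\trule{Rcv}$ case is analogous: the receiver's encoding spawns a fresh session $s,\dual{s}$, sends the continuation as a linear abstraction $\abs{x}{\pmapp{Q}{1}{f}}$ on $\dual{s}$, and applies the received higher-order value to $s$; well-typedness follows from the IH for $Q$ and the fact that the two freshly introduced endpoints have dual types $\btinp{\lhot{\tmap{C}{1}}}\tinact$ and $\btout{\lhot{\tmap{C}{1}}}\tinact$.

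The main obstacle will be the recursion cases $\trule{Rec}$ and $\trule{RVar}$, since the encoding exploits non-tail-recursive session types $S^{*} = \trec{t}{\btinp{\shot{(\tilde{S}_{\Delta}\,,\,\vart{t})}}\tinact}$ and a by-need duplicator. For $\trule{Rec}$ on $\recp{X}{P}$ with $X{:}\Delta$ in $\Gamma$, I would extend $f$ with $\varp{X}\to\ofn{P}$, obtain from the IH that $\tmap{\Gamma}{1},x{:}\shot{(\tilde{S}_{\Delta}\,,\,S^{*})}; \emptyset; \tmap{\Delta}{1} \proves \pmapp{P}{1}{f_1}\hastype\Proc$, and then type the two parallel components of the encoding: (i)~the binder $\binp{s}{x}\pmapp{P}{1}{f_1}$ at type $s{:}\btinp{\shot{(\tilde{S}_{\Delta}\,,\,S^{*})}}\tinact$, and (ii)~the output $\bout{\dual{s}}{\abs{(\vmap{\tilde{n}},y)}\,{\binp{y}{z_X}\auxmap{\pmapp{P}{1}{f_1}}{\es}}}\inact$ at type $\dual{s}{:}\btout{\shot{(\tilde{S}_{\Delta}\,,\,S^{*})}}\tinact$. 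The critical step here is to verify that the sent abstraction inhabits $\shot{(\tilde{S}_{\Delta}\,,\,S^{*})}$, which amounts to checking that its polyadic parameter list $(\vmap{\tilde{n}},y)$ matches the types $(\tilde{S}_{\Delta},S^{*})$ and that $\binp{y}{z_X}\auxmap{\pmapp{P}{1}{f_1}}{\es}$ is well-typed with $y{:}S^{*} = \btinp{\shot{(\tilde{S}_{\Delta},S^{*})}}\tinact$; the latter requires an auxiliary lemma that $\auxmap{\cdot}{\sigma}$ on a well-typed process produces a well-typed open term under the expected variable context, together with one unfolding of $S^{*}$ (using rule $\trule{Prom}$ to promote the linear abstraction to the shared type used in $\trule{Rec}$). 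Finally, $\trule{RVar}$ for $\varp{X}$ translates to an application $\appl{z_X}{(\tilde{n},s)}$ composed with an output of the duplicator on $\dual{s}$; this is typed by reading off $z_X{:}\shot{(\tilde{S}_{\Delta},S^{*})}$ from the encoded environment and invoking one unfolding of $S^{*}$. Together, the auxiliary lemma on $\auxmap{\cdot}{\sigma}$ and the correct handling of the recursive session type unfoldings are what make the recursion cases work.
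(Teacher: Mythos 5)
Your proposal is correct and follows essentially the same route as the paper's proof: induction on the typing derivation, homomorphic dispatch of the structural cases, explicit derivations for the name-passing cases using the type $\lhot{(\btinp{\lhot{\tmap{C}{1}}}\tinact)}$ for the packaged name, and careful treatment of $\trule{Rec}$/$\trule{RVar}$ via the non-tail-recursive type $S^{*}$ with promotion of the closed duplicator abstraction to a shared type. The only difference is presentational: you explicitly isolate an auxiliary lemma stating that $\auxmap{\cdot}{\sigma}$ maps well-typed processes to well-typed open terms under the corresponding variable context, which the paper uses implicitly inside its derivation for the recursion case.
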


\begin{proof}
	By induction on the   inference of $\Gamma; \emptyset; \Delta \proves P \hastype \Proc$. 
	\begin{enumerate}[1.]
		\item	Case $P = \bout{k}{n}P'$. There are two sub-cases.
			In the first sub-case $n = k'$ (output of a linear channel). Then  
			we have the following typing in the source language:
			{
			\[
				\tree{
					\Gamma; \emptyset; \Delta \cat k:S  \proves  P' \hastype \Proc \quad \Gamma ; \emptyset ; \{k' : S_1\} \proves  k' \hastype S_1}{
					\Gamma; \emptyset; \Delta \cat k':S_1 \cat k:\btout{S_1}S \proves  \bout{k}{k'} P' \hastype \Proc}
			\]
			}
			Thus, by IH we have
			$$
			\tmap{\Gamma}{1}; \emptyset ; \tmap{\Delta}{1} \cat k:\tmap{S}{1} \proves \pmap{P'}{1} \hastype \Proc
			$$
			Let us write $U_1$
			to stand for $\lhot{\btinp{\lhot{\tmap{S_1}{1}}}\tinact}$.
			The corresponding typing in the target language is as follows:
			\begin{eqnarray}
				\label{prop:sesspnr_to_HO_t1}
				\tree{
					\tree{
						\tree{
							\tree{
								\tmap{\Gamma}{1} ; \set{x : \lhot{\tmap{S_1}{1}}} ; \emptyset \proves x  \hastype \lhot{\tmap{S_1}{1}}
								\qquad 
								\tmap{\Gamma}{1} ; \emptyset ; \set{k' : \tmap{S_1}{1}} \proves  k' \hastype \tmap{S_1}{1}
							}{
								\tmap{\Gamma}{1} ; \set{x : \lhot{\tmap{S_1}{1}}} ; k' : \tmap{S_1}{1} \proves \appl{x}{k'} \hastype \Proc
							}
						}{
							\tmap{\Gamma}{1} ; \{x : \lhot{\tmap{S_1}{1}}\} ; k' : \tmap{S_1}{1} \cat z:\tinact \proves \appl{x}{k'} \hastype \Proc
						}
					}{
						\tmap{\Gamma}{1} ; \emptyset; k' : \tmap{S_1}{1} \cat z:\btinp{\lhot{\tmap{S_1}{1}}}\tinact \proves \binp{z}{x} (\appl{x}{k'}) \hastype \Proc
					}
				}{
					\tmap{\Gamma}{1} ; \emptyset; k' : \tmap{S_1}{1} \proves \abs{z}{\binp{z}{x} (\appl{x}{k'})} \hastype U_1
				}
			\end{eqnarray}
			\begin{eqnarray*}
				\tree{
					\tmap{\Gamma}{1}; \emptyset ; \tmap{\Delta}{1} \cat k:\tmap{S}{1} \proves \pmap{P'}{1} \hastype \Proc
					\qquad
					\tmap{\Gamma}{1} ; \emptyset; k' : \tmap{S_1}{1} \proves \abs{z}{\binp{z}{x} (\appl{x}{k'})} \hastype U_1 \ \eqref{prop:sesspnr_to_HO_t1}
				}{
					\tmap{\Gamma}{1}; \emptyset; \tmap{\Delta}{1} \cat k':\tmap{S_1}{1} \cat k:\btout{U_1}\tmap{S}{1} \proves  \bbout{k}{\abs{z}{\binp{z}{x} (\appl{x}{k'})}} \pmap{P'}{1} \hastype \Proc
				}
			\end{eqnarray*}
%
	
			In the second sub-case, we have $n = a$ (output of a shared name). Then  
			we have the following typing in the source language:
			{
			\[
				\tree{
					\Gamma \cat a:\chtype{S_1}; \emptyset; \Delta \cat k:S  \proves
					P' \hastype \Proc \quad \Gamma \cat a:\chtype{S_1} ; \emptyset ; \emptyset \proves  a \hastype S_1
				}{
					\Gamma \cat a:\chtype{S_1} ; \emptyset; \Delta  \cat k:\bbtout{\chtype{S_1}}S \proves  \bout{k}{a} P' \hastype \Proc
				}
			\]
			}
			The typing in the target language is derived similarly as in the first sub-case. \\
	
		\item	Case $P = \binp{k}{x}Q$. We have two sub-cases, depending on the type of $x$.
			In the first case, $x$ stands for a linear channel.
			Then we have the following typing in the source language:
			{
			\[
				\tree{
					\Gamma; \emptyset; \Delta  \cat k:S \cat x:S_1 \proves   Q \hastype \Proc
				}{
					\Gamma; \emptyset; \Delta  \cat k:\btinp{S_1}S \proves  \binp{k}{x} Q \hastype \Proc
				}
			\]
			 }
			 Thus, by IH we have
			 $$
			 \tmap{\Gamma}{1}; \emptyset;  \tmap{\Delta}{1} \cat k:\tmap{S}{1}  \cat x:\tmap{S_1}{1} \proves  \pmap{Q}{1}   \hastype \Proc
			 $$
			 Let us write $U_1$ to stand for $\lhot{\btinp{\lhot{\tmap{S_1}{1}}}\tinact}$.
			 The corresponding typing in the target language is as follows:
			{\small
			\begin{eqnarray}
				\label{prop:sesspnr_to_HO_t2}
				\tree{
					\tmap{\Gamma}{1}; \{X: U_1\};   \emptyset \proves X \hastype U_1
					\qquad
					\tmap{\Gamma}{1}; \emptyset;   \cat s: \btinp{\lhot{\tmap{S_1}{1}}}\tinact \ \proves s \, \hastype  \btinp{\lhot{\tmap{S_1}{1}}} \tinact 
				}{
					\tmap{\Gamma}{1}; \{X: U_1\};   \cat s: \btinp{\lhot{\tmap{S_1}{1}}}\tinact \ \proves \appl{x}{s}  \hastype \Proc
				}
			\end{eqnarray}
			\begin{eqnarray}
				\label{prop:sesspnr_to_HO_t3}
				\tree{
					\tree{
						\tmap{\Gamma}{1}; \emptyset;  \emptyset \proves   \inact  \hastype \Proc
					}{
						\tmap{\Gamma}{1}; \emptyset;  \dual{s}: \tinact\proves   \inact  \hastype \Proc
					}
					\quad 
					\tree{
						\tmap{\Gamma}{1}; \emptyset;  \tmap{\Delta}{1} \cat k:\tmap{S}{1}  x:\tmap{S_1}{1} \proves \pmap{Q}{1}   \hastype \Proc
					}{
						\tmap{\Gamma}{1}; \emptyset;  \tmap{\Delta}{1} \cat k:\tmap{S}{1}   \proves \abs{x} \pmap{Q}{1}   \hastype \lhot{\tmap{S_1}{1}}
					}
				}{
					\tmap{\Gamma}{1}; \emptyset;  \tmap{\Delta}{1} \cat k:\tmap{S}{1}  \cat \dual{s}: \btout{\lhot{\tmap{S_1}{1}}}\tinact\proves  \bbout{\dual{s}}{\abs{x}{\pmap{Q}{1}}} \inact  \hastype \Proc
				}
			\end{eqnarray}
			\begin{eqnarray}
				\label{prop:sesspnr_to_HO_t4}
		 		\tree{
					\begin{array}{cl}
						\tmap{\Gamma}{1}; \{X: U_1\}; \cat s: \btinp{\lhot{\tmap{S_1}{1}}}\tinact \ \proves \appl{x}{s}  \hastype \Proc
						& \eqref{prop:sesspnr_to_HO_t2}
						\\
						\tmap{\Gamma}{1}; \emptyset; \tmap{\Delta}{1} \cat k:\tmap{S}{1} \cat \dual{s}: \btout{\lhot{\tmap{S_1}{1}}}\tinact \proves
						\bbout{\dual{s}}{\abs{x}{\pmap{Q}{1}}} \inact  \hastype \Proc
						& \eqref{prop:sesspnr_to_HO_t3}
					\end{array}
				}{
					\tmap{\Gamma}{1}; \{X: U_1\};  \tmap{\Delta}{1} \cat k:\tmap{S}{1} \cat s: \btinp{\lhot{\tmap{S_1}{1}}}\tinact \cat \dual{s}: \btout{\lhot{\tmap{S_1}{1}}}\tinact\proves \appl{x}{s} \Par \bbout{\dual{s}}{\abs{x}{\pmap{Q}{1}}} \inact  \hastype \Proc
			}
			\end{eqnarray}
			\begin{eqnarray*}
			\\
			 \tree{
				 \tree{
					\tmap{\Gamma}{1}; \{X: U_1\};  \tmap{\Delta}{1} \cat k:\tmap{S}{1} \cat s: \btinp{\lhot{\tmap{S_1}{1}}}\tinact \cat \dual{s}: \btout{\lhot{\tmap{S_1}{1}}}\tinact\proves \appl{x}{s} \Par \bbout{\dual{s}}{\abs{x}{\pmap{Q}{1}}} \inact  \hastype \Proc \quad \eqref{prop:sesspnr_to_HO_t4}
				}{
					\tmap{\Gamma}{1}; \{X: U_1\};  \tmap{\Delta}{1} \cat k:\tmap{S}{1} \proves \newsp{s}{\appl{x}{s} \Par \bbout{\dual{s}}{\abs{x}{\pmap{Q}{1}}} \inact}  \hastype \Proc
				}
			}{
				\tmap{\Gamma}{1}; \emptyset; \tmap{\Delta}{1}  \cat k:\btinp{U_1}\tmap{S}{1} \proves  \binp{k}{x} \newsp{s}{\appl{x}{s} \Par \bbout{\dual{s}}{\abs{x}{\pmap{Q}{1}}} \inact}  \hastype \Proc
			}
			\end{eqnarray*}
			 }
			 
			 In the second sub-case, $x$ stands for a shared name. Then we have the following typing in the source language:
			\[
			 \tree{
				\Gamma \cat x:\chtype{S_1} ; \emptyset; \Delta  \cat k:S \proves   Q \hastype \Proc
			 }{
				\Gamma ; \emptyset; \Delta  \cat k:\btinp{\chtype{S_1}}S \proves  \binp{k}{x} Q \hastype \Proc}
			 \]
			 The typing in the target language is derived similarly as in the first sub-case.	

		\item	Case $P_0 = \varp{X}$.
			Then we have the following typing in the source language:
			\[
				\Gamma \cat \varp{X}: \Delta ;\, \es ;\, \es \proves \varp{X} \hastype \Proc
			\]
			Then the typing of $\pmapp{\varp{X}}{1}{f}$ is as follows,
			assuming $f(\varp{X}) = \tilde{n}$ and $\tilde{x} = \vmap{\tilde{n}}$.
			Also, we write $\Delta_{\tilde{n}}$ 
			and $\Delta_{\tilde{x}}$ 
			to stand for 
			$n_1: S_1, \ldots, n_m: S_m$ and
			$x_1: S_1, \ldots, x_m: S_m$, respectively. 
			Below, we assume that $\Gamma = \Gamma' \cat X:\shot{\tilde{T}}$, 
			where  
			\[
				\tilde{T} = \big(\tilde{S}, S^*\big) \qquad \quad
				S^* = \bbtinp{A}\tinact \qquad \quad
				A = \trec{t}{(\tilde{S}, \btinp{\vart{t}}\tinact)}
			\]
			\begin{eqnarray}
				\label{prop:sessp_to_HO_t1}
				\tree{
					\tree{
					}{
						\Gamma ;\, \es ;\, \es \proves z_X \hastype \shot{\tilde{T}}
					}
					\quad 
					\begin{array}{c}
						\Gamma ;\, \es ;\, \{n_i: S_i \} \proves n_i \hastype S_i \\
						\Gamma ;\, \es ;\, \{s: S^* \} \proves s\hastype S^*  \\
					\end{array}
				}{
					\Gamma  ;\, \es ;\, \Delta_{\tilde{n}}, s:\btinp{\shot{\tilde{T}}}\tinact
					\proves  
					\appl{z_X}{(\tilde{n}, s)} \hastype \Proc
				} 
			\end{eqnarray}
			\begin{eqnarray}
				\label{prop:sessp_to_HO_t2}
				\tree{
					\tree{
						\Gamma  ;\, \es ;\,   \es \proves \inact \hastype \Proc
					}{
						\Gamma  ;\, \es ;\,   \dual{s}: \tinact \proves \inact \hastype \Proc
					} 
					\quad
					\tree{
						\tree{
							\begin{array}{c}
								\Gamma ;\, \es ;\, \{x_i: S_i \} \proves x_i \hastype S_i \\
								\Gamma ;\, \es ;\, \{z: S^*  \} \proves z\hastype S^*  \\
								\Gamma ;\, \es ;\, \es \proves z_X \hastype \shot{\tilde{T}}  \\
							\end{array}
						}{
							\Gamma  ;\, \es ;\,   \Delta_{\tilde{x}}, \, z:S^*
							\proves 
							 {\appl{z_X}{( \tilde{x}, z)}} \hastype \Proc
						}
					}{
						\Gamma  ;\, \es ;\,   \es
						\proves 
						 \abs{(\tilde{x},z)}\,\,{\appl{z_X}{( \tilde{x}, z)}} \hastype \shot{\tilde{T}}
					} 	
				}{
					\Gamma  ;\, \es ;\,   \dual{s}: \btout{\shot{\tilde{T}}}\tinact
					\proves 
					\bbout{\dual{s}}{ \abs{(\tilde{x},z)}\,\,{\appl{z_X}{ (\tilde{x}, z)}}} \inact \hastype \Proc
				}
			\end{eqnarray}
			\[
			\tree{
				\tree{
					\begin{array}{cc}
						\Gamma  ;\, \es ;\, \Delta_{\tilde{n}}, s:\btinp{\shot{\tilde{T}}}\tinact
						\proves  
						\appl{z_X}{(\tilde{n}, s)} \hastype \Proc
						& \eqref{prop:sessp_to_HO_t1}
						\\ 
						\Gamma  ;\, \es ;\,   \dual{s}: \btout{\shot{\tilde{T}}}\tinact
						\proves 
						\bbout{\dual{s}}{ \abs{(\tilde{x},z)}\,\,{\appl{z_X}{ (\tilde{x}, z)}}} \inact \hastype \Proc
						& \eqref{prop:sessp_to_HO_t2}
					\end{array}
				}{
					\Gamma  ;\, \es ;\, \Delta_{\tilde{n}}, s:\btinp{\shot{\tilde{T}}}\tinact, \, \dual{s}: \btout{\shot{\tilde{T}}}\tinact
					\proves 
					\appl{z_X}{(\tilde{n}, s)} \Par 
					\bbout{\dual{s}}{ \abs{(\tilde{x},z)}\,\,{\appl{x}{ (\tilde{x}, z)}}} \inact \hastype \Proc
				}
			}{
				\Gamma  ;\, \es ;\, \Delta_{\tilde{n}}
				\proves 
				\newsp{s}{\appl{z_X}{(\tilde{n}, s)} \Par \bbout{\dual{s}}{ \abs{(\tilde{x},z)}\,\,{\appl{z_X}{ (\tilde{x}, z}})} \inact} \hastype \Proc
			}
			\]
		\item	Case $P_0 = \recp{X}{P}$. Then we have the following typing in the source language:
			\[
				\tree{
					\Gamma \cat \varp{X}:\Delta ;\, \es ;\,  \Delta \proves P \hastype \Proc
				}{
					\Gamma  ;\, \es ;\,  \Delta \proves \recp{X}{P} \hastype \Proc
				}
			\]
			Then we have the following typing in the target language ---we write $R$
			to stand for $\pmapp{P}{1}{{f,\{\varp{X}\to \tilde{n}\}} }$
			and $\tilde{x}$ to stand for $\vmap{\ofn{P}}$.
			\begin{eqnarray}
				\label{prop:sessp_to_HO_t4}
				\tree{
					\tree{
						\tmap{\Gamma}{1}\cat z_X:\shot{\tilde{T}};\, \es;\, \tmap{\Delta_{\tilde{n}}}{1}
						\proves
						 R  \hastype \Proc
					}{
						\tmap{\Gamma}{1}\cat z_X:\shot{\tilde{T}};\, \es;\, \tmap{\Delta_{\tilde{n}}}{1}, s:\tinact 
						\proves
						 R  \hastype \Proc
					}
				}{
					\tmap{\Gamma}{1};\, \es;\, \tmap{\Delta_{\tilde{n}}}{1}, s:\btinp{\shot{\tilde{T}}}\tinact 
					\proves
					\binp{s}{z_X} R  \hastype \Proc
				}
			\end{eqnarray}
			\begin{eqnarray}
				\label{prop:sessp_to_HO_t5}
				\tree{
					\tree{
						\tmap{\Gamma}{1};\, \es;\, \es
						\proves
						\inact \hastype \Proc
					}{
						\tmap{\Gamma}{1};\, \es;\, \dual{s}:\tinact
						\proves
						\inact \hastype \Proc
					} 
					\quad 
					\tree{
						\tree{
							\tree{
								\tmap{\Gamma}{1} \cat z_X: \shot{\tilde{T}};\, \es;\, \tmap{\Delta_{\tilde{x}}}{1}
								\proves
								{{\auxmap{R}{\es}}}  \hastype \Proc
							}{
								\tmap{\Gamma}{1} \cat z_X: \shot{\tilde{T}};\, \es;\, \tmap{\Delta_{\tilde{x}}}{1},
								y: \tinact
								\proves
								{{\auxmap{R}{\es}}}  \hastype \Proc
							}
						}{
							\tmap{\Gamma}{1};\, \es;\, \tmap{\Delta_{\tilde{x}}}{1}, \, y: \btinp{A}\tinact
							\proves
							{{\binp{y}{z_X} \auxmap{R}{\es}}}  \hastype \Proc
						}
					}{
						\tmap{\Gamma}{1};\, \es;\, \es
						\proves
						{\abs{(\tilde{x}, y) } \,{\binp{y}{z_X} \auxmap{R}{\es}}}  \hastype \shot{\tilde{T}}
					}
				}{
					\tmap{\Gamma}{1};\, \es;\, \dual{s}:\btout{\shot{\tilde{T}}}\tinact
					\proves
					\bbout{\dual{s}}{\abs{(\tilde{x}, y) } \,{\binp{y}{z_X} \auxmap{R}{\es}}} \inact \hastype \Proc
				}
			\end{eqnarray}
			\[
			\tree{
				\tree{
					\begin{array}{cc}
						\tmap{\Gamma}{1};\, \es;\, \tmap{\Delta_{\tilde{n}}}{1}, s:\btinp{\shot{\tilde{T}}}\tinact 
						\proves
						\binp{s}{z_X} R  \hastype \Proc
						& \eqref{prop:sessp_to_HO_t4}
						\\
						\tmap{\Gamma}{1};\, \es;\, \dual{s}:\btout{\shot{\tilde{T}}}\tinact
						\proves
						\bbout{\dual{s}}{\abs{(\tilde{x}, y) } \,{\binp{y}{z_X} \auxmap{R}{\es}}} \inact \hastype \Proc
						& \eqref{prop:sessp_to_HO_t5}
					\end{array}
				}{
					\tmap{\Gamma}{1};\, \es;\, \tmap{\Delta_{\tilde{n}}}{1}, s:\btinp{\shot{\tilde{T}}}\tinact , \dual{s}:\btout{\shot{\tilde{T}}}\tinact
					\proves
					\binp{s}{z_X} R \Par 
					\bbout{\dual{s}}{\abs{(\tilde{x}, y )} \,{\binp{y}{z_X} \auxmap{R}{\es}}} \inact \hastype \Proc
				}
			}{
				\tmap{\Gamma}{1};\, \es;\, \tmap{\Delta_{\tilde{n}}}{1} 
				\proves
				\newsp{s}{\binp{s}{z_X} R \Par 
				\bbout{\dual{s}}{\abs{(\tilde{x},y) } \,{\binp{y}{z_X} \auxmap{R}{\es}}} \inact} \hastype \Proc
			}
			\]
	\end{enumerate}
	\qed
\end{proof}


We repeat the statement of
\propref{prop:op_corr_HOp_to_HO}, 
as in Page \pageref{prop:op_corr_HOp_to_HO}:

\begin{proposition}[Operational Correspondence, \HOp into \HO]\rm
	\label{app:prop:op_corr_HOp_to_HO}
	Let $P$ be a \HOp process.
	If $\Gamma; \emptyset; \Delta \proves P \hastype \Proc$ then:
	\begin{enumerate}[1.]
		\item
			Suppose $\horel{\Gamma}{\Delta}{P}{\hby{\ell_1}}{\Delta'}{P'}$. Then we have:
			\begin{enumerate}[a)]
				\item
					If $\ell_1 \in \set{\news{\tilde{m}}\bactout{n}{m}, \,\news{\tilde{m}}\bactout{n}{\abs{x}Q}, \,\bactsel{s}{l}, \,\bactbra{s}{l}}$
					then $\exists \ell_2$ s.t. \\
					$\horel{\tmap{\Gamma}{1}}{\tmap{\Delta}{1}}{\pmapp{P}{1}{f}}{\hby{\ell_2}}{\tmap{\Delta'}{1}}{\pmapp{P'}{1}{f}}$
					and $\ell_2 = \mapa{\ell_1}^{1}$.
			
				\item
					If $\ell_1 = \bactinp{n}{\abs{y}Q}$ and
					$P' = P_0 \subst{\abs{y}Q}{x}$
					then $\exists \ell_2$ s.t. \\
					$\horel{\tmap{\Gamma}{1}}{\tmap{\Delta}{1}}{\pmapp{P}{1}{f}}{\hby{\ell_2}}{\tmap{\Delta'}{1}}{\pmapp{P_0}{1}{f}\subst{\abs{y}\pmapp{Q}{1}{\emptyset}}{x}}$
					and $\ell_2 = \mapa{\ell_1}^{1}$.
			
				\item
					If $\ell_1 = \bactinp{n}{m}$
					and 
					$P' = P_0 \subst{m}{x}$
					then $\exists \ell_2$, $R$ s.t. \\
					$\horel{\tmap{\Gamma}{1}}{\tmap{\Delta}{1}}{\pmapp{P}{1}{f}}{\hby{\ell_2}}{\tmap{\Delta'}{1}}{R}$,
					with $\ell_2 = \mapa{\ell_1}^{1}$, \\
					and
					$\horel{\tmap{\Gamma}{1}}{\tmap{\Delta'}{1}}{R}{\hby{\btau} \hby{\stau} \hby{\btau}}
					{\tmap{\Delta'}{1}}{\pmapp{P_0}{1}{f}\subst{m}{x}}$.
						
				\item
					If $\ell_1 = \tau$
					and $P' \scong \newsp{\tilde{m}}{P_1 \Par P_2\subst{m}{x}}$
					then $\exists R$ s.t. \\
					$\horel{\tmap{\Gamma}{1}}{\tmap{\Delta}{1}}{\pmapp{P}{1}{f}}{\hby{\tau}}{\mapt{\Delta}^{1}}{\newsp{\tilde{m}}{\pmapp{P_1}{1}{f} \Par R}}$,
					and\\ 
					$\horel{\tmap{\Gamma}{1}}{\tmap{\Delta}{1}}{\newsp{\tilde{m}}{\pmapp{P_1}{1}{f} \Par R}}{\hby{\btau} \hby{\stau} \hby{\btau}}
					{\mapt{\Delta}^{1}}{\newsp{\tilde{m}}{\pmapp{P_1}{1}{f} \Par \pmapp{P_2}{1}{f}\subst{m}{x}}}$.
			
				\item
					If $\ell_1 = \tau$
					and $P' \scong \newsp{\tilde{m}}{P_1 \Par P_2 \subst{\abs{y}Q}{x}}$
					then \\
					$\horel{\tmap{\Gamma}{1}}{\tmap{\Delta}{1}}{\pmapp{P}{1}{f}}{\hby{\tau}}
					{\tmap{\Delta_1}{1}}{\newsp{\tilde{m}}{\pmapp{P_1}{1}{f}\Par \pmapp{P_2}{1}{f}\subst{\abs{y}\pmapp{Q}{1}{\emptyset}}{x}}}$.
			
				\item
					If $\ell_1 = \tau$
					and $P' \not\scong \newsp{\tilde{m}}{P_1 \Par P_2 \subst{m}{x}} \land P' \not\scong \newsp{\tilde{m}}{P_1 \Par P_2\subst{\abs{y}Q}{x}}$
					then \\
					$\horel{\tmap{\Gamma}{1}}{\tmap{\Delta}{1}}{\pmapp{P}{1}{f}}{\hby{\tau}}{\tmap{\Delta'_1}{1}}{ \pmapp{P'}{1}{f}}$.
			\end{enumerate}
			
		\item	Suppose $\horel{\tmap{\Gamma}{1}}{\tmap{\Delta}{1}}{\pmapp{P}{1}{f}}{\hby{\ell_2}}{\tmap{\Delta'}{1}}{Q}$.
			Then we have:
			\begin{enumerate}[a)]
				\item 
					If $\ell_2 \in
					\set{\news{\tilde{m}}\bactout{n}{\abs{z}{\,\binp{z}{x} (\appl{x}{m})}}, \,\news{\tilde{m}} \bactout{n}{\abs{x}{R}}, \,\bactsel{s}{l}, \,\bactbra{s}{l}}$
					then $\exists \ell_1, P'$ s.t. \\
					$\horel{\Gamma}{\Delta}{P}{\hby{\ell_1}}{\Delta'}{P'}$, 
					$\ell_1 = \mapa{\ell_2}^{1}$, 
					and
					$Q = \pmapp{P'}{1}{f}$.
			
				\item 
					If $\ell_2 = \bactinp{n}{\abs{y} R}$ 
					then either:
					\begin{enumerate}[(i)]
						\item	$\exists \ell_1, x, P', P''$ s.t. \\
							$\horel{\Gamma}{\Delta}{P}{\hby{\ell_1}}{\Delta'}{P' \subst{\abs{y}P''}{x}}$, 
							$\ell_1 = \mapa{\ell_2}^{1}$, $\pmapp{P''}{1}{\es} = R$, and $Q = \pmapp{P'}{1}{f}$.

						\item	$R \scong \binp{y}{x} (\appl{x}{m})$ and 
							$\exists \ell_1, z, P'$ s.t. \\
							$\horel{\Gamma}{\Delta}{P}{\hby{\ell_1}}{\Delta'}{P' \subst{m}{z}}$, 
							$\ell_1 = \mapa{\ell_2}^{1}$,
							and\\
							$\horel{\tmap{\Gamma}{1}}{\tmap{\Delta'}{1}}{Q}{\hby{\btau} \hby{\stau} \hby{\btau}}{\tmap{\Delta''}{1}}{\pmapp{P'\subst{m}{z}}{1}{f}}$
					\end{enumerate}
			
				\item 
					If $\ell_2 = \tau$ 
					then $\Delta' = \Delta$ and 
					either
					\begin{enumerate}[(i)]
						\item	$\exists P'$ s.t. 
							$\horel{\Gamma}{\Delta}{P}{\hby{\tau}}{\Delta}{P'}$,
							and $Q = \map{P'}^{1}_f$.	

						\item
							$\exists P_1, P_2, x, m, Q'$ s.t. 
							$\horel{\Gamma}{\Delta}{P}{\hby{\tau}}{\Delta}{\newsp{\tilde{m}}{P_1 \Par P_2\subst{m}{x}} }$, and\\
							$\horel{\tmap{\Gamma}{1}}{\tmap{\Delta}{1}}{Q}{\hby{\btau} \hby{\stau} \hby{\btau}}{\tmap{\Delta}{1}}{\pmapp{P_1}{1}{f} \Par \pmapp{P_2\subst{m}{x}}{1}{f}}$ 

			\end{enumerate}
		    \end{enumerate}
		    
	\end{enumerate}
\end{proposition}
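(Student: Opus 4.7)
The plan is to prove the two parts by simultaneous induction on the derivation of the typed LTS transitions, using the mapping in Figure~\ref{fig:enc:HOp_to_HO}, with a case analysis on the shape of the label $\ell_1$ (resp.\ $\ell_2$) and on the outermost syntactic construct of $P$ (resp.\ $\pmap{P}{1}{f}$). Throughout, I would appeal to Proposition~\ref{prop:typepres_HOp_to_HO} to maintain well-typedness of intermediate processes, to Proposition~\ref{prop:op_corr_HOprec_to_HO} whenever the outermost construct is $\recp{X}{P'}$ (so that the $\tau$-step unfolding the duplicator is already accounted for), and to $\tau$-inertness (Proposition~\ref{lem:tau_inert}) when I need to ``absorb'' the extra deterministic steps introduced by the encoding. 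The Substitution Lemma (Lemma~\ref{lem:subst}) handles the commutation of encodings with substitutions, following the scheme $\pmap{P_0\subst{m}{x}}{1}{f} = \pmap{P_0}{1}{f}\subst{m}{x}$ and analogously for abstractions, provided free-name bookkeeping respects the function $f$ tracking recursion variables.

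For Part~1, the homomorphic cases $\ell_1 \in \{\news{\tilde m}\bactout{n}{m}, \news{\tilde m}\bactout{n}{\abs{x}Q}, \bactsel{s}{l}, \bactbra{s}{l}\}$ are straightforward, because the encoding either leaves the outermost prefix intact or replaces the object with its ``packed'' abstraction, matching $\mapa{\cdot}^{1}$ by construction. The delicate base case is (1c): if $P = \binp{n}{x} P_0$ then $\pmap{P}{1}{f} = \binp{n}{x}\newsp{s}{\appl{x}{s} \Par \bout{\dual{s}}{\abs{x}\pmap{P_0}{1}{f}}\inact}$; after the input of $\abs{z}\binp{z}{x}(\appl{x}{m})$ we get a residual that reduces by $\btau$ (firing $\appl{(\abs{z}\ldots)}{s}$), then $\stau$ (the session synchronization on $s,\dual{s}$ exchanging $\abs{x}\pmap{P_0}{1}{f}$), then $\btau$ (firing the resulting $\appl{(\abs{x}\pmap{P_0}{1}{f})}{m}$), landing in $\pmap{P_0\subst{m}{x}}{1}{f}$; this is exactly the stated $\btau\stau\btau$ chain. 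The $\tau$ cases (1d)--(1f) are then obtained from these by combining two dual transitions via rule $\ltsrule{Tau}$ and absorbing any trailing deterministic reductions using Proposition~\ref{lem:tau_inert}. The inductive cases for $\ltsrule{Res}$, $\ltsrule{Par}$, $\ltsrule{Scope}$, and $\ltsrule{Alpha}$ follow by the IH, using syntactic compositionality of $\pmap{\cdot}{1}{f}$.

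For Part~2, the main challenge is that the encoded process admits more finely grained internal transitions than the source: beyond source-induced $\tau$s, the encoded term may fire a $\btau$ (from an $\appl{(\abs{z}\ldots)}{s}$ redex introduced by the name-passing encoding) or an $\stau$ on a freshly generated session endpoint $s$, neither visible in the source. The strategy is to case-split on what kind of redex the encoded transition fires: if it is an original prefix from the homomorphic translation, reverse the mapping directly; otherwise the transition must occur inside one of the two stylized sub-processes $\newsp{s}{\appl{x}{s} \Par \bout{\dual{s}}{\abs{x}\pmap{Q}{1}{f}}\inact}$ or inside a recursion duplicator, and I would use the determinism of these gadgets (together with Proposition~\ref{lem:tau_inert}) to show that any such partial computation is bisimilar to its fully completed form, yielding the $\btau\stau\btau$ chain stated in (2b)(ii) and (2c)(ii). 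The hard part here, and the chief obstacle of the whole proof, is \emph{cleanly identifying the ``stuck'' intermediate states} produced by soundness—e.g., after an input at step (2b) one might land in a term of shape $\newsp{s}{\binp{s}{x}(\appl{x}{m}) \Par \bout{\dual{s}}{\abs{x}\pmap{P'}{1}{f}}\inact}$ that is neither $\pmap{P'\subst{m}{z}}{1}{f}$ nor a structural congruent of a source encoding—so the statement must explicitly allow such residuals and assert that the completing $\btau\stau\btau$ sequence is internal and determinate. Verifying this requires a careful syntactic analysis of the shapes of encoded processes together with the observation that fresh session names $s,\dual{s}$ cannot interact with anything outside their binder, which in turn justifies the use of $\tau$-inertness to equate the residuals.
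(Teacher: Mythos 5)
Your proposal is correct and follows essentially the same route as the paper's proof: a transition induction with case analysis on the label, where the homomorphic cases are immediate and the crux is the name-input/name-passing case, whose residual completes via exactly the $\btau$, $\stau$, $\btau$ chain you describe (application on the fresh endpoint, session synchronization exchanging $\abs{x}\pmapp{P_0}{1}{f}$, then the final application on $m$). Your observations about the soundness direction—tracking the partially reduced gadget states and using determinacy of the fresh-session redexes together with $\tau$-inertness—match how the paper handles Part~2.
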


\begin{proof}

By transition induction. We consider parts (1) and (2) separately:

\noi \textbf{Part (1) - Completeness}. We consider two representative cases, the rest is similar or simpler:
\begin{enumerate}[1.]
	\item	Subcase  (a): $P =\bout{s}{n} P'$ and $\ell_1 = \bactout{s}{n}$ (the case $\ell_1 = \news{n}\bactout{s}{n}$ is similar). By assumption, $P$ is well-typed. 
		We may have:
		\[
			\tree{
				\Gamma; \emptyset; \Delta_0 \cat s:S_1  \proves  P' \hastype \Proc \quad 
				\Gamma ; \emptyset ; \{n{:} S\}  \proves   n \hastype S }{
				\Gamma; \emptyset; \Delta_0 \cat n{:}S \cat s:\btout{S}S_1 \proves \bout{s}{n} P' \hastype \Proc}
		\]
		\noi for some $S, S_1, \Delta_0$.
		We may then have the following transition:
		\[
			\stytra{\Gamma}{\ell_1}{\Delta_0 \cat n{:}S \cat s:\btout{S}S_1 }{\bout{s}{n} P'}{\Delta_0 \cat s{:}S_1 }{P'}
		\]
		\noi The encoding of the source judgment for $P$ is as follows:
		\[
			\mapt{\Gamma}^{1}; \emptyset; \mapt{\Delta_0 \cat n{:}S \cat s:\btout{S}S_1}^{1} \proves \map{\bout{s}{n} P'}^{1} \hastype \Proc
		\]
		\noi which, using \defref{def:enc:HOp_to_HO} can be expressed as 
		\[
			\mapt{\Gamma}^{\mathsf{p}}; \emptyset; \mapt{\Delta_0} 
			\cat n{:}\mapt{S}^{1} 
			\cat s: \btout{\lhot{\btinp{\lhot{\tmap{S}{1}}}\tinact}} \tmap{S_1}{1}
			\proves 
			\bbout{s}{ \abs{z}{\,\binp{z}{x} (\appl{x}{n})} } \pmap{P'}{1}
			\hastype \Proc
		\]
		\noi Now, $\mapa{\ell_1}^{1} = \bactout{s}{\abs{z}{\,\binp{z}{x} \appl{x}{n}}\, } $. 
		We may infer the following  transition for $\map{P}^{1}$:
		\begin{eqnarray*}
			& & \mapt{\Gamma}^{1}; \emptyset; \mapt{\Delta}^{1} 
			\proves 
			\bbout{s}{ \abs{z}{\,\binp{z}{x} (\appl{x}{n})} } \pmap{P'}{1}
			\hastype \Proc \\
			& \hby{\mapa{\ell_1}^{1}} & \mapt{\Gamma}^{1}; \emptyset; \mapt{\Delta_0}^{1} 
			\cat s:  \tmap{S_1}{1}
			\proves  \pmap{P'}{1}
			\hastype \Proc \\
			& = & \mapt{\Gamma}^{1}; \emptyset; \mapt{\Delta_0 \cat s:  S_1}^{1}
			\proves  \pmap{P'}{1}
			\hastype \Proc 
		\end{eqnarray*}
		\noi from which the thesis follows easily.

	\item	Subcase (c): $P = \binp{n}{x} P'$	and $\ell_1 = \bactinp{n}{m}$.
		By assumption $P$ is well-typed.
		We may have:
		\[
			\tree{
				\Gamma; \emptyset; \Delta_0 \cat x:S \cat n:S_1  \proves  P' \hastype \Proc \quad 
				\Gamma ; \emptyset ; \{x: S\}  \proves   x\hastype S}{
				\Gamma; \emptyset; \Delta_0 \cat   n:\btinp{S}S_1 \proves \binp{n}{x} P' \hastype \Proc}
		\]
		for some  $S, S_1, \Delta_0$.
		We may infer the following typed transition:
		\[
			\Gamma; \emptyset; \Delta_0 \cat   n:\btinp{S}S_1 \proves \binp{n}{x} P' \hastype \Proc
			\hby{\bactinp{n}{m}}
			\Gamma; \emptyset; \Delta_0 \cat  n:S_1 \cat m:S \proves   P'\subst{m}{x} \hastype \Proc
		\]
		The encoding of the source judgment for $P$ is as follows:
		\begin{eqnarray*}
			& & \mapt{\Gamma}^{1}; \emptyset; \mapt{ \Delta_0 \cat   n:\btinp{S}S_1 }^{1} \proves 
			\map{P}^{1}
			\hastype \Proc \\
			& = & \mapt{\Gamma}^{1}; \emptyset; \mapt{ \Delta_0 }^{1} \cat   n: \btinp{\lhot{\btinp{\lhot{\tmap{S}{1}}}\tinact}} \tmap{S_1}{1} \proves 
			\binp{n}{x} \newsp{s}{(\appl{x}{s}) \Par \bbout{\dual{s}}{\abs{x}{\pmap{P'}{1}}} \inact}
			\hastype \Proc
		\end{eqnarray*}
		Now, 
		$\mapa{\ell_1}^{1} = \bactinp{n}{\abs{z}{\,\binp{z}{x} (\appl{x}{m})}\, }$
		and it is immediate to infer the following 
		transition for $\map{P}^{1}$:
		\begin{eqnarray*}
			&  & \mapt{\Gamma}^{1}; \emptyset; \mapt{ \Delta_0 }^{1} \cat   
			n: \btinp{\lhot{\btinp{\lhot{\tmap{S}{1}}}\tinact}} \tmap{S_1}{1} \proves 
			\binp{n}{x} \newsp{s}{(\appl{x}{s}) \Par \bbout{\dual{s}}{\abs{x}{\pmap{P'}{1}}} \inact}
			\hastype \Proc \\
			& \hby{\mapa{\ell_1}^{1}}  & \mapt{\Gamma}^{1}; \emptyset; \mapt{ \Delta_0 }^{1} \cat   
			n:  \tmap{S_1}{1} \cat m:  \tmap{S}{1} \proves 
			 \newsp{s}{(\appl{x}{s}) \Par \bbout{\dual{s}}{\abs{x}{\pmap{P'}{1}}} \inact}\subst{\abs{z}{\,\binp{z}{x} (\appl{x}{m})}}{x}
			\hastype \Proc 
		\end{eqnarray*}
		Let us write $R$ to stand for process 
		$\newsp{s}{(\appl{x}{s}) 
		\Par 
		\bbout{\dual{s}}{\abs{x}{\pmap{P'}{1}}} \inact}\subst{\abs{z}{\,\binp{z}{x} (\appl{x}{m})}}{x}$. 
		We then have:
		\begin{eqnarray*}
		R & \by{\tau} & \newsp{s}{\binp{s}{x} (\appl{x}{m})
							\Par 
							\bbout{\dual{s}}{\abs{x}{\pmap{P'}{1}}} \inact} \\
		& \by{\tau} &  \appl{(\abs{x}{\pmap{P'}{1}})}{m} \Par \inact \\
		& \by{\tau} & \pmap{P'}{1}\subst{m}{x}
		\end{eqnarray*}
		and so the thesis follows.


%
		
\end{enumerate}
\noi \textbf{Part (2) - Soundness}. We consider two representative cases, the rest is similar or simpler:
\begin{enumerate}[1.]
	\item Subcase (a): $P = \bout{n}{m} P'$ and $\ell_2 = \bactout{n}{\abs{z}{\,\binp{z}{x} (\appl{x}{m})}}$
	(the case $\ell_2 = \news{m}\bactout{n}{\abs{z}{\,\binp{z}{x} (\appl{x}{m})}}$ is similar).
		Then 
		we have: 
		\[
			\mapt{\Gamma}^{1};\, \emptyset;\, \mapt{\Delta_0}^{1} \cat 
			n: \btout{\lhot{\btinp{\lhot{\tmap{S}{1}}}\tinact}} \tmap{S_1}{1} 
			\proves 
			 \bbout{n}{ \abs{z}{\,\binp{z}{x} (\appl{x}{m})} } \pmap{P'}{1} 
			 \hastype \Proc
		\]
		for some $S, S_1$, and $\Delta_0$. 
		We may infer the following typed transition for $\pmap{P}{1}$:
		\begin{eqnarray*}
			& & \mapt{\Gamma}^{1};\, \mapt{\Delta_0}^{1} \cat n: \btout{\lhot{\btinp{\lhot{\tmap{S}{1}}}\tinact}} \tmap{S_1}{1} 
			\proves 
			 \bbout{n}{ \abs{z}{\,\binp{z}{x} (\appl{x}{m})} } \pmap{P'}{1} 
			 \\
			&\hby{\ell_2}& 
			\mapt{\Gamma}^{1};\, \mapt{\Delta_0}^{1} \cat n: \tmap{S_1}{1} 
			\proves  \pmap{P'}{1} 
		\end{eqnarray*}
%
		Now, in the source term $P$ we can infer the following transition 
		\[
		\Gamma;\,  \Delta_0 \cat n:\btout{S} S_1 \proves \bout{n}{m} P'
		 \hby{\bactout{n}{m}} 
		 \Gamma;\,  \Delta_0 \cat n: S_1 \proves P'
		\]
		and thus the thesis follows easily by noticing that 
		$\mapa{\bactout{n}{m}}^{1} = \bactout{n}{\abs{z}{\,\binp{z}{x} (\appl{x}{m})}}$.

	\item	Subcase (c): $P = \binp{n}{x} P'$ and $\ell_2 = \bactinp{n}{\abs{y}\binp{y}{x} (\appl{x}{m})}$.
		Then we have
		\[
			\mapt{\Gamma}^{1};\, \emptyset;\, \mapt{\Delta_0}^{1} \cat 
			n: \btinp{\lhot{\btinp{\lhot{\tmap{S}{1}}}\tinact}} \tmap{S_1}{1}
			\proves
			\binp{n}{x} \newsp{s}{(\appl{x}{s})
			\Par 
			\bbout{\dual{s}}{\abs{x}{\pmap{P'}{1}}} \inact}
			\hastype \Proc
		\]
		for some $S$, $S_1$, $\Delta_0$.
		We may infer the following typed transitions for $\pmap{P}{1}$:
		\begin{eqnarray*}
			& & 
			\mapt{\Gamma}^{1};\, 
			\mapt{\Delta_0}^{1} \cat 
			n: \btinp{\lhot{\btinp{\lhot{\tmap{S}{1}}}\tinact}} \tmap{S_1}{1}
			\proves
			\binp{n}{x} \newsp{s}{(\appl{x}{s}) 
							\Par 
							\bbout{\dual{s}}{\abs{x}{\pmap{P'}{1}}} \inact} \\
			& \hby{\ell_2} & 
			\mapt{\Gamma}^{1};\, 
			\mapt{\Delta_0}^{1} \cat 
			n:\tmap{S_1}{1}
			\cat m:\tmap{S_1}{1}
			\proves
			\newsp{s}{(\appl{x}{s}) 
				\Par 
				\bbout{\dual{s}}{\abs{x}{\pmap{P'}{1}}} \inact} \subst{\abs{z}\binp{z}{x}\appl{x}{m}}{x} \\
			& = & 
			\mapt{\Gamma}^{1};\, 
			\mapt{\Delta_0}^{1} 
			\cat n:\tmap{S_1}{1}
			\cat m:\tmap{S}{1}
			\proves
			\newsp{s}{\binp{s}{x}(\appl{x}{m}) 
				\Par 
				\bbout{\dual{s}}{\abs{x}{\pmap{P'}{1}}} \inact}  \\
			& \hby{\tau} & 
			\mapt{\Gamma}^{1};\, 
			\mapt{\Delta_0}^{1} 
			\cat n:\tmap{S_1}{1}
			\cat m:\tmap{S}{1}
			\proves
			\appl{(\abs{x}{\pmap{P'}{1}})}{m}   \\
			& \hby{\tau} & 
			\mapt{\Gamma}^{1};\, 
			\mapt{\Delta_0}^{1} 
			\cat n:\tmap{S_1}{1}
			\cat m:\tmap{S}{1}
			\proves
			\pmap{P'}{1}\subst{m}{x}   
		\end{eqnarray*}
%
		Now, in the source term $P$ we can infer the following transition 
		\[
			\Gamma;\,  \Delta_0 \cat n:\btinp{S} S_1 \proves \binp{n}{x} P'
			\hby{\bactinp{n}{m}} 
			\Gamma;\,  \Delta_0 \cat n: S_1 \cat m: S \proves P'\subst{m}{x}
		\]
		and the thesis follows.

%
\end{enumerate}
\qed
\end{proof}


We repeat the statement of
\propref{prop:fulla_HOp_to_HO}, 
as in Page~\pageref{prop:fulla_HOp_to_HO}:

\begin{proposition}[Full Abstraction, \HOp into \HO]\rm
	\label{app:prop:fulla_HOp_to_HO}
	$\horel{\Gamma}{\Delta_1}{P_1}{\wb}{\Delta_2}{Q_1}$
	if and only if
	$\horel{\tmap{\Gamma}{1}}{\tmap{\Delta_1}{1}}{\pmapp{P_1}{1}{f}}{\wb}{\tmap{\Delta_2}{1}}{\pmapp{Q_2}{1}{f}}$.
\end{proposition}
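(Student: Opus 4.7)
The plan is to prove the two directions separately, in both cases constructing a candidate relation from the encoding and verifying it is a (possibly up-to) bisimulation by invoking \propref{prop:op_corr_HOp_to_HO}. The key difficulty is that the encoding introduces bureaucratic administrative steps: encoding name output/input produces a $\btau$-$\stau$-$\btau$ sequence (cases 1(c), 1(d), 2(b)(ii), 2(c)(ii) of \propref{prop:op_corr_HOp_to_HO}), and the recursion encoding introduces a session synchronisation along a restricted name before the actual observable action (\propref{prop:op_corr_HOprec_to_HO}). Neither relation will therefore be a strict bisimulation on the nose; both have to be formulated so as to absorb these deterministic $\tau$-steps into weak transitions, which is precisely what \propref{lem:tau_inert} and \lemref{lem:up_to_deterministic_transition} are designed for.

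For the \emph{soundness} direction (right-to-left) I would define
\[
\Re_{s} \;=\; \big\{\,(\Gamma;\Delta_1 \proves P,\; \Gamma;\Delta_2 \proves Q)\ \big|\ \tmap{\Gamma}{1};\tmap{\Delta_1}{1}\proves \pmapp{P}{1}{f}\,\wb\, \tmap{\Gamma}{1};\tmap{\Delta_2}{1}\proves \pmapp{Q}{1}{f}\,\big\}
\]
and show it is a characteristic/higher-order bisimulation. Given a source transition $P\hby{\ell_1}P'$, part (1) of \propref{prop:op_corr_HOp_to_HO} produces a matching target transition $\pmapp{P}{1}{f}\hby{\ell_2}R$ with $\ell_2=\mapa{\ell_1}^{1}$, possibly followed by deterministic steps $\hby{\btau}\hby{\stau}\hby{\btau}$ leading to $\pmapp{P'}{1}{f}$. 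Since $\pmapp{P}{1}{f}\wb \pmapp{Q}{1}{f}$, we obtain a weak matching transition from $\pmapp{Q}{1}{f}$ and then reconstruct the source-side match by applying part (2) of \propref{prop:op_corr_HOp_to_HO} together with \propref{lem:tau_inert}, which lets us collapse the administrative $\btau/\stau$ steps inside the target equivalence. Trigger values in the output clause are handled uniformly since the characterisation in \lemref{lem:wb_eq_wbf} equates $\wb$ and $\wbf$.

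For the \emph{completeness} direction (left-to-right) I would define
\[
\Re_{c} \;=\; \big\{\,(\tmap{\Gamma}{1};\tmap{\Delta_1}{1}\proves \pmapp{P}{1}{f},\; \tmap{\Gamma}{1};\tmap{\Delta_2}{1}\proves \pmapp{Q}{1}{f})\ \big|\ \Gamma;\Delta_1\proves P\,\wb\, \Gamma;\Delta_2\proves Q\,\big\}
\]
and show it is a bisimulation \emph{up-to deterministic transitions} in the sense of \lemref{lem:up_to_deterministic_transition}. Given a target transition $\pmapp{P}{1}{f}\hby{\ell_2}R$, part (2) of \propref{prop:op_corr_HOp_to_HO} tells us that either $R$ is already $\pmapp{P'}{1}{f}$ for some source derivative $P'$ with $P\hby{\ell_1}P'$ and $\mapa{\ell_1}^{1}=\ell_2$, or $R$ only reaches $\pmapp{P'}{1}{f}$ after further $\hby{\btau}\hby{\stau}\hby{\btau}$ steps (this is exactly the shape of the up-to closure). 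Using $P\wb Q$ in the source we obtain $Q\Hby{\hat{\ell_1}}Q'$, and by part (1) of \propref{prop:op_corr_HOp_to_HO} we lift this into the required weak transition $\pmapp{Q}{1}{f}\Hby{\hat{\ell_2}}\pmapp{Q'}{1}{f}$, whose intermediate derivatives differ from the $\Re_{c}$-related target process by the deterministic steps absorbed by \lemref{lem:up_to_deterministic_transition}. The recursion case uses \propref{prop:op_corr_HOprec_to_HO} to account for the one-step $\tau$-unfolding of the duplicator.

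The main obstacle is therefore bookkeeping: matching up the administrative reductions so that $\Re_{c}$ really satisfies the up-to schema, and, symmetrically, verifying that when $R\hby{\btau}\hby{\stau}\hby{\btau}\pmapp{P'}{1}{f}$ we can still read off a source action using part (2) of \propref{prop:op_corr_HOp_to_HO} without losing the type-preservation invariant required by \defref{d:tlts}. Both issues are resolved by observing that every administrative step occurs along a freshly restricted session channel, so it is invisible and deterministic, and is therefore swallowed by $\tau$-inertness (soundness) or by the up-to-deterministic closure (completeness).
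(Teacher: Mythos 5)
Your proposal follows essentially the same route as the paper: both directions are handled by the same candidate relations (source pairs whose encodings are $\wb$-related for soundness, encoded pairs whose sources are $\wb$-related for completeness), verified via the two parts of \propref{prop:op_corr_HOp_to_HO}, with $\tau$-inertness (\propref{lem:tau_inert}) absorbing the administrative $\btau/\stau$ steps in the soundness direction and \lemref{lem:up_to_deterministic_transition} supplying the up-to-deterministic-transition closure in the completeness direction. This matches the paper's proof, so no further comparison is needed.
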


\begin{proof}
	\noi {\bf Proof of Soundness Direction.}

	\noi Let
	\[
		\Re = \set{\horel{\Gamma}{\Delta_1}{P_1}{\wb}{\Delta_2}{Q_1} \setbar \horel{\mapt{\Gamma}^{1}}{\mapt{\Delta_1}^{1}}{\pmapp{P_1}{1}{f}}{\wb}{\mapt{\Delta_2}^{1}}{\pmapp{Q_1}{1}{f}}}
	\]
	\noi	The proof considers a case analysis on the transition $\hby{\ell}$ and
		uses the soundness direction of operational correspondence (cf.~\propref{prop:op_corr_HOp_to_HO}).
		We give an interesting case. The others are similar of easier.

	\noi	- Case: $\ell = \news{\tilde{m_1}'} \bactout{n}{m_1}$.

	\noi \propref{prop:op_corr_HOp_to_HO} implies that
	\[
		\horel{\Gamma}{\Delta_1}{P_1}{\hby{\news{\tilde{m_1}'} \bactout{n}{m_1}}}{\Delta_1'}{P_2}
	\]
	\noi implies
	\[
		\horel{\mapt{\Gamma}^{1}}{\mapt{\Delta_1}^{1}}{\pmapp{P_1}{1}{f}}{\hby{\news{\tilde{m_1}'} \bactout{n}{\abs{z}{\binp{z}{x} (\appl{x}{m_1})}}}}{\mapt{\Delta_1'}^{1}}{\pmapp{P_2}{1}{f}}
	\]
	\noi that in combination with the definition of $\Re$ we get
	\begin{eqnarray}
		\horel{\mapt{\Gamma}^{1}}{\mapt{\Delta_2}^{1}}{\pmapp{Q_1}{1}{f}}{\Hby{\news{\tilde{m_2}'} \bactout{n}{\abs{z}{\binp{z}{x} (\appl{x}{m_2})}}}}{\mapt{\Delta_2'}^{1}}{\pmapp{Q_2}{1}{f}}
		\label{prop:HOp_to_HO:full_abs11}
	\end{eqnarray}
	\noi and
	\[
		\mhorel{\mapt{\Gamma}^{1}}{\mapt{\Delta_1'}^{1}}{\newsp{\tilde{m_1}'}{\pmapp{P_2}{1}{f} 
			\Par 
			\hotrigger{t}{x}{s}{\abs{z}{\binp{z}{x} (\appl{x}{m_1})}} }}
		{\wb}{\mapt{\Delta_2'}^{1}}{}{\newsp{\tilde{m_2}'}{\pmapp{Q_2}{1}{f} 
			\Par 
				\hotrigger{t}{x}{s}{\abs{z}{\binp{z}{x} (\appl{x}{m_2})}}}}
	\]
	\noi We rewrite the last result as
	\[
		\mhorel{\mapt{\Gamma}^{1}}{\mapt{\Delta_1'}^{1}}{\pmapp{\newsp{\tilde{m_1}'}{P_2 
			\Par \hotrigger{t}{x}{s}{m_1}}}{1}{f}}
		{\wb}{\mapt{\Delta_2'}^{1}}{}{\pmapp{\newsp{\tilde{m_2}'}{Q_2 \Par \hotrigger{t}{x}{s}{m_2}}}{1}{f}}
	\]
	\noi to conclude that
	\[
		\mhorel{\Gamma}{\Delta_1'}{\newsp{\tilde{m_1}'}{P_2 \Par \hotrigger{t}{x}{s}{m_1}}}
		{\ \Re\ }{\Delta_2'}{}{\newsp{\tilde{m_2}'}{Q_2 \Par \hotrigger{t}{x}{s}{m_2}}}
	\]
	\noi as required

	\noi {\bf Proof of Completeness Direction.}

	\noi Let
	\[
		\Re = \set{\horel{\mapt{\Gamma}^{1}}{\mapt{\Delta_1}^{1}}{\pmapp{P_1}{1}{f}}{,}{\mapt{\Delta_2}^{1}}{\pmapp{Q_1}{1}{f}} \setbar \horel{\Gamma}{\Delta_1}{P_1}{\wb}{\Delta_2}{Q_1}}
	\]
	We show that $\Re \subset \wb$ by a case analysis on the action $\ell$

	\noi - Case: $\ell \notin \set{\news{\tilde{m}} \bactout{n}{\abs{x}{P}},\, \bactinp{n}{\abs{x}{P}}}$.

	\noi The proof of \propref{prop:op_corr_HOp_to_HO} implies that
	\[
		\horel{\mapt{\Gamma}^{1}}{\mapt{\Delta_1}^{1}}{\pmapp{P_1}{1}{f}}{\hby{\ell}}{\mapt{\Delta_1'}^{1}}{\pmapp{P_2}{1}{f}}
	\]
	\noi implies
	\[
		\horel{\Gamma}{\Delta_1}{P_1}{\hby{\ell}}{\Delta_1'}{P_2}
	\]
	\noi From the latter transition and the definition of $\Re$ we imply
	\begin{eqnarray}
		&&\horel{\Gamma}{\Delta_2}{Q_1}{\Hby{\ell}}{\Delta_2'}{Q_2}
		\label{prop:HOp_to_HO:full_abs1}
		\\
		&&\horel{\Gamma}{\Delta_1'}{P_2}{\wb}{\Delta_2'}{Q_2}
		\label{prop:HOp_to_HO:full_abs2}
	\end{eqnarray}
	\noi From~\ref{prop:HOp_to_HO:full_abs1} and \propref{prop:op_corr_HOp_to_HO} we get
	\[
		\horel{\mapt{\Gamma}^{1}}{\mapt{\Delta_2}^{1}}{\pmapp{Q_1}{1}{f}}{\Hby{\ell}}{\mapt{\Delta_2'}^{1}}{\pmapp{Q_2}{1}{f}}
	\]
	\noi Furthermore, from~\ref{prop:HOp_to_HO:full_abs2} and the definition of $\Re$ we get
	\[
		\horel{\mapt{\Gamma}^{1}}{\mapt{\Delta_1'}^{1}}{\pmapp{P_2}{1}{f}}{\ \Re\ }{\mapt{\Delta_2'}^{1}}{\pmapp{Q_2}{1}{f}}
	\]
	\noi as required.

	\noi - Case: $\ell = \news{\tilde{m}} \bactout{n}{\abs{x}{P}}$

	\noi There are two subcases:

	\noi -Subcase:

	\noi The proof of \propref{prop:op_corr_HOp_to_HO} implies that
	\[
		\horel{\mapt{\Gamma}^{1}}{\mapt{\Delta_1}^{1}}{\pmapp{P_1}{1}{f}}{\hby{\ell}}{\mapt{\Delta_1'}^{1}}{\pmapp{P_2}{1}{f}}
	\]
	\noi implies
	\[
		\horel{\Gamma}{\Delta_1}{P_1}{\hby{\ell}}{\Delta_1'}{P_2}
	\]
	\noi where the proof is similar with the previous case.

	\noi - Subcase:

	\noi The proof of \propref{prop:op_corr_HOp_to_HO} implies that
	\[
		\horel{\mapt{\Gamma}^{1}}{\mapt{\Delta_1}^{1}}{\pmapp{P_1}{1}{f}}{\hby{\news{\tilde{m_1}'} \bactout{n}{\abs{z}{\binp{z}{x} (\appl{x}{m_1})}}}}{\mapt{\Delta_1'}^{1}}{\pmapp{P_2}{1}{f}}
	\]
	\noi implies
	\[
		\horel{\Gamma}{\Delta_1}{P_1}{\hby{\news{\tilde{m_1}'} \bactout{n}{m_1}}}{\Delta_1'}{P_2}
	\]
	\noi From the latter transition and the definition of $\Re$ we imply
	\begin{eqnarray}
		&&\horel{\Gamma}{\Delta_2}{Q_1}{\Hby{\news{\tilde{m_2}'} \bactout{n}{m_2}}}{\Delta_2'}{Q_2}
		\label{prop:HOp_to_HO:full_abs3}
	\end{eqnarray}
	\noi and
	\begin{eqnarray}
		& \Gamma; \es; \Delta_1' & \proves \newsp{\tilde{m_1}'}{P_2 \Par \hotrigger{t}{x}{s}{m_1}} \nonumber \\
		& \wb & \Delta_2' \proves \newsp{\tilde{m_2}'}{Q_2 \Par \hotrigger{t}{x}{s}{m_2}}
		\label{prop:HOp_to_HO:full_abs4}
	\end{eqnarray}
	\noi From~\eqref{prop:HOp_to_HO:full_abs3} and \propref{prop:op_corr_HOp_to_HO} we get
	\[
		\horel{\mapt{\Gamma}^{1}}{\mapt{\Delta_2}^{1}}{\pmapp{Q_1}{1}{f}}{\Hby{\news{\tilde{m_2}'} \bactout{n}{\abs{z}{\binp{z}{x} (\appl{x}{m_2})}}}}{\mapt{\Delta_2'}^{1}}{\pmapp{Q_2}{1}{f}}
	\]
	\noi Furthermore, from~\eqref{prop:HOp_to_HO:full_abs4} and the definition of $\Re$ we get
	\[
		\mhorel{\mapt{\Gamma}^{1}}{\mapt{\Delta_1'}^{1}}{\pmapp{\newsp{\tilde{m_1}'}{P_2 \Par \hotrigger{t}{x}{s}{m_1}}}{1}{f}}
		{\ \Re\ }{\mapt{\Delta_2'}^{1}}{}{\pmapp{\newsp{\tilde{m_2}'}{Q_2 \Par \hotrigger{t}{x}{s}{m_2}}}{1}{f}}
	\]
	\noi as required.

	\noi - Case: $\ell = \bactinp{n}{\abs{x}{P}}$

	\noi We have two subcases.

	\noi - Subcase: Similar with the first subcase of the previous case.

	\noi - Subcase:
	\noi The proof of \propref{prop:op_corr_HOp_to_HO} implies that
	\[
		\horel{\mapt{\Gamma}^{1}}{\mapt{\Delta_1}^{1}}{\pmapp{P_1}{1}{f}}{\hby{\bactinp{n}{\abs{z}{ \binp{z}{x} (\appl{x}{s})}}}}{\mapt{\Delta_1''}^{1}} R 
	\]
	\noi implies
	\begin{eqnarray}
		\horel{\Gamma}{\Delta_1}{P_1}{\hby{\bactinp{n}{m_1}}}{\Delta_1'}{P_2}
		\label{prop:HOp_to_HO:full_abs7}
	\end{eqnarray}
	\noi and
	\begin{eqnarray}
		\horel{\mapt{\Gamma}^{1}}{\mapt{\Delta_1''}^{1}}{R}{\hby{\stau}}{\mapt{\Delta_1'}^{1}}{\pmapp{P_2}{1}{f}}
		\label{prop:HOp_to_HO:full_abs8}
	\end{eqnarray}
%
%
	\noi From the transition~\eqref{prop:HOp_to_HO:full_abs7} and the definition of $\Re$ we imply
	\begin{eqnarray}
		&&\horel{\Gamma}{\Delta_2}{Q_1}{\Hby{\bactinp{n}{m_2}}}{\Delta_2'}{Q_2}
		\label{prop:HOp_to_HO:full_abs5}
		\\
		&&\horel{\Gamma}{\Delta_1'}{P_2}{\wb}{\Delta_2'}{Q_2}
		\label{prop:HOp_to_HO:full_abs6}
	\end{eqnarray}
	\noi From~\eqref{prop:HOp_to_HO:full_abs5} and \propref{prop:op_corr_HOp_to_HO} we get
	\[
		\horel{\mapt{\Gamma}^{1}}{\mapt{\Delta_2}^{1}}{\pmapp{Q_1}{1}{f}}{\Hby{\bactinp{n}{\abs{z}{\binp{z}{x} (\appl{x}{s})}}}}{\mapt{\Delta_2'}^{1}}{\pmapp{Q_2}{1}{f}}
	\]
	\noi Furthermore, from~\ref{prop:HOp_to_HO:full_abs6} and the definition of $\Re$ we get
	\[
		\horel{\mapt{\Gamma}^{1}}{\mapt{\Delta_1'}^{1}}{\pmapp{P_2}{1}{f}}{\ \Re\ }{\mapt{\Delta_2'}^{1}}{\pmapp{Q_2}{1}{f}}
	\]
	\noi If we consider result~\eqref{prop:HOp_to_HO:full_abs8} we get:
	\[
		\horel{\mapt{\Gamma}^{1}}{\mapt{\Delta_1''}^{1}}{R}{\hby{\stau}\ \Re\ }{\mapt{\Delta_2'}^{1}}{\pmapp{Q_2}{1}{f}}
	\]
	where following \lemref{lem:up_to_deterministic_transition} we show that $R$ is a bisimulation an up to $\Hby{\stau}$.
	\qed
\end{proof}


\subsection{Properties for $\enco{\pmap{\cdot}{2}, \tmap{\cdot}{2}, \mapa{\cdot}^{2}}: \HOp \to \sessp$}
\label{app:enc:HOp_to_sessp}

We repeat the statement of \propref{prop:typepres_HOp_to_p},
as in Page \pageref{prop:typepres_HOp_to_p}:

\begin{proposition}[Type Preservation, \HOp into \sessp]\rm
	\label{app:prop:typepres_HOp_to_p}
	Let $P$ be a \HOp process. \\
	If $\Gamma; \emptyset; \Delta \proves P \hastype \Proc$ then 
	$\mapt{\Gamma}^{2}; \emptyset; \mapt{\Delta}^{2} \proves \map{P}^{2} \hastype \Proc$.
\end{proposition}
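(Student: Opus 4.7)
The plan is to proceed by induction on the derivation of $\Gamma; \emptyset; \Delta \proves P \hastype \Proc$, with a case analysis on the last typing rule applied. In each case, we extract the inductive hypothesis from the premises, consult the clause of $\map{\cdot}^2$ from Figure~\ref{fig:enc:HOp_to_p} that rewrites the outermost construct of $P$, then rebuild the target derivation in \sessp using its typing rules (Figure~\ref{fig:typerulesmy}, specialised to the $V\bnfis u$ fragment). Throughout we use that $\tmap{\cdot}{2}$ distributes over $\cat$ on environments, that it leaves $\tinact$, selection/branching and recursion variables unchanged, and that $\tmap{\lhot{S}}{2}$ and $\tmap{\shot{S}}{2}$ are coherent with the wrappers introduced by the encoding (as defined for $\tmap{\btout{\lhot{S}}S_1}{2}$ and $\tmap{\btout{\shot{S}}S_1}{2}$, respectively).

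Most cases are routine homomorphisms. Rules \trule{Nil}, \trule{RVar}, \trule{Rec}, \trule{Par}, \trule{Res}, \trule{ResS}, \trule{End}, \trule{Sel} and \trule{Bra} are immediate by IH, since $\map{\cdot}^2$ acts homomorphically and $\tmap{\cdot}{2}$ commutes with the type constructors of these rules, including duality (needed for \trule{ResS}). The \trule{Send} and \trule{Rcv} cases for a first-order value $v$ (so $U=S$ or $U=\chtype{S}$) are also direct, as $\map{\bout{u}{v}P}{2}=\bout{u}{v}\map{P}{2}$ and similarly for input, and the encoded session types $\tmap{\btout{U}S}{2}$ and $\tmap{\btinp{U}S}{2}$ mirror the source.

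The genuine work is in three cases. For \trule{Send} with value $\abs{x}{Q}$, the encoding has two branches according to whether $Q$ contains a free session name. In the \emph{shared} branch, from the premise $\Gamma;\emptyset;\Delta_2\proves\abs{x}{Q}\hastype\shot{C}$ and IH on $Q$, we introduce a fresh shared name $a:\chtype{\btinp{\tmap{C}{2}}\tinact}$, type $\bout{u}{a}\map{P}{2}$ via \trule{Req} with the encoded output type $\tmap{\btout{\shot{S}}S_1}{2}$, and type the replicated receiver $\repl{}\binp{a}{y}\binp{y}{x}\map{Q}{2}$ via \trule{Rec} using \trule{Acc}/\trule{Rcv} (the body of the replication remains well-typed because $\map{Q}{2}$ contains no free session names, allowing \trule{EProm}/\trule{Prom} to promote any remaining linear hypotheses). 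In the \emph{linear} branch we instead pick a fresh $s$ and use \trule{ResS} with dual types $\btinp{\btinp{\tmap{C}{2}}\tinact}\tinact$ and $\btout{\btinp{\tmap{C}{2}}\tinact}\tinact$, typing the non-replicated receiver by two nested uses of \trule{Rcv}. For \trule{App}, $\appl{x}{u}$ encodes to $\newsp{s}{\bout{x}{s}\bout{\dual{s}}{u}\inact}$, which is typed using \trule{ResS}, the wrapper type of $x$ on the output of $s$, and \trule{Send} for the delivery of $u$ along $\dual{s}$. The \trule{Rcv} case with higher-order $x$ is symmetric: the variable $x$ inherits the encoded wrapper type and the continuation is typed by IH.

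The main obstacle will be the higher-order send in the shared branch: we must show that $\map{Q}{2}$ is typable under an \emph{empty} linear environment so that \trule{Rec} can guard the replicated trigger. This requires establishing the invariant that $s\notin\fn{Q}$ in the source is preserved by $\map{\cdot}^{2}$, i.e.\ $\map{Q}{2}$ uses no free session endpoints either, so that its type derivation fits under \trule{Prom} and \trule{EProm}. Once this invariant is spelled out (by a straightforward structural observation on $\map{\cdot}^2$), the remaining bookkeeping---checking that the environment splits used in \trule{Par}/\trule{Send} match the splits forced by the encoding---is mechanical.
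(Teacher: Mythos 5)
Your proposal is correct and follows essentially the same route as the paper's proof: induction on the typing derivation, dispatching the homomorphic cases by IH, and working out the target derivations explicitly for the higher-order output (with its shared/replicated versus linear/session branches), the higher-order input, and the application case. The only difference is that you explicitly isolate the invariant that $\map{\cdot}^{2}$ preserves the absence of free session names (needed to type the replicated trigger under rule $\trule{Rec}$), which the paper's derivation leaves implicit.
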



\begin{proof}
	By induction on the inference $\Gamma; \emptyset; \Delta \proves P \hastype \Proc$.
	\begin{enumerate}[1.]

		\item	Case $P = \bbout{k}{\abs{x}{Q}}P$. Then we have two possibilities, depending on the typing for $\abs{x}Q$.
			The first case concerns a linear typing, and  
			we have the following typing in the source language:
			\[
				\tree{
					\Gamma; \emptyset; \Delta_1 \cat k:S  \proves  P \hastype \Proc
					\quad
					\tree{
						\Gamma ; \emptyset ; \Delta_2\cat x:S_1 \proves  Q \hastype \Proc
					}{
						\Gamma ; \emptyset ; \Delta_2 \proves  \abs{x}Q \hastype \lhot{S_1}
					}
				}{
					\Gamma; \emptyset; \Delta_1 \cat \Delta_2 \cat k:\btout{\lhot{S_1}}S \proves  \bbout{k}{\abs{x}{Q}} P \hastype \Proc
				}
			\]
			This way, by IH we have
			$$
			\tmap{\Gamma}{2}; \es ; \tmap{\Delta_2}{2}, x:\tmap{S_1}{2}
									\proves 
									\pmap{Q}{2} \hastype \Proc
			$$
			Let us write 
			 $U_1$ to stand for 
			$\chtype{\btinp{\tmap{S_1}{2}}\tinact}$.
			The corresponding typing in the target language is as follows: 
			\begin{eqnarray*}
				\tmap{\Gamma_1}{2} & = & \tmap{\Gamma}{2} \cup a:\chtype{\btinp{\tmap{S_1}{2}}\tinact} \\
				\tmap{\Gamma_2}{2} & = & \tmap{\Gamma_1}{2} \cup \varp{X}:\tmap{\Delta_2}{2}
			\end{eqnarray*}
			Also $(*)$ stands for $\tmap{\Gamma_1}{2}; \es ; \es \proves a \hastype U_1$; 
			$(**)$ stands for $\tmap{\Gamma_2}{2}; \es ; \es \proves a \hastype U_1$; and
			$(***)$ stands for $\tmap{\Gamma_2}{2}; \es ; \es \proves \varp{X} \hastype \Proc$.
			\begin{eqnarray}
				\label{prop:HO_to_sessp_t1}
				\tree{
					\tree{
						\tree{
						}{
							(***)
						} 
						\quad 
						\tree{
							\tree{
								\tree{
									\tree{
									}{
										\tmap{\Gamma_2}{2}; \es ; \tmap{\Delta_2}{2},  x:\tmap{S_1}{2}
										\proves 
										\pmap{Q}{2} \hastype \Proc
									}
								}{
									\tmap{\Gamma_2}{2}; \es ; \tmap{\Delta_2}{2}, y:\tinact, x:\tmap{S_1}{2}
									\proves 
									\pmap{Q}{2} \hastype \Proc
								}
							}{
								\tmap{\Gamma_2}{2}; \es ; \tmap{\Delta_2}{2}, y: \btinp{\tmap{S_1}{2}}\tinact
								\proves 
								\binp{y}{x}\pmap{Q}{2} \hastype \Proc
							} 
							\quad 
							\tree{
							}{
								(**)
							}
						}{
							\tmap{\Gamma_2}{2}; \es ; \tmap{\Delta_2}{2} 
							\proves 
							\binp{a}{y}\binp{y}{x}\pmap{Q}{2} \hastype \Proc
						} 
					}{
						\tmap{\Gamma_2}{2}; \es ; \tmap{\Delta_2}{2} 
						\proves 
						\binp{a}{y}\binp{y}{x}\pmap{Q}{2} \Par \varp{X} \hastype \Proc
					}
				}{
					\tmap{\Gamma_1}{2}; \es ; \tmap{\Delta_2}{2} 
					\proves 
					\recp{X}{(\binp{a}{y}\binp{y}{x}\pmap{Q}{2} \Par \varp{X})} \hastype \Proc
				}
			\end{eqnarray}
			\begin{eqnarray}
				\label{prop:HO_to_sessp_t2}
				\tree{
					\begin{array}{c}
						\tmap{\Gamma_1}{2}; \es ; \tmap{\Delta_1}{2}, k:\tmap{S}{2} 
						\proves 
						\pmap{P}{2}  \hastype \Proc
						\\
						\tmap{\Gamma_1}{2}; \es ; \tmap{\Delta_2}{2} 
						\proves 
						\recp{X}{(\binp{a}{y}\binp{y}{x}\pmap{Q}{2} \Par \varp{X})} \hastype \Proc
						\quad \eqref{prop:HO_to_sessp_t1}
					\end{array}
				}{
					\tmap{\Gamma_1}{2}; \es ; \tmap{\Delta_1, \Delta_2}{2}, k:\tmap{S}{2} 
					\proves 
					\pmap{P}{2} \Par 
					\recp{X}{(\binp{a}{y}\binp{y}{x}\pmap{Q}{2} \Par \varp{X})} \hastype \Proc
				}
			\end{eqnarray}
			\[
				\tree{
					\tree{
						\begin{array}{c}
							\tmap{\Gamma_1}{2}; \es ; \es \proves a \hastype U_1
							\\
							\tmap{\Gamma_1}{2}; \es ; \tmap{\Delta_1, \Delta_2}{2}, k:\tmap{S}{2} 
							\proves 
							\pmap{P}{2} \Par 
							\recp{X}{(\binp{a}{y}\binp{y}{x}\pmap{Q}{2} \Par \varp{X})} \hastype \Proc
							\quad \eqref{prop:HO_to_sessp_t2}
						\end{array}
					}{
						\tmap{\Gamma_1}{2}; \es ; \tmap{\Delta_1, \Delta_2}{2}, k:\bbtout{U_1}\tmap{S}{2} 
						\proves 
						\bout{k}{a}(\pmap{P}{2} \Par 
						\recp{X}{(\binp{a}{y}\binp{y}{x}\pmap{Q}{2} \Par \varp{X}))} \hastype \Proc
					}
				}{
					\tmap{\Gamma}{2}; \es ; \tmap{\Delta_1, \Delta_2}{2}, k:\bbtout{U_1}\tmap{S}{2} 
					\proves 
					\newsp{a}{\bout{k}{a}( 
					\pmap{P}{2} \Par 
					\recp{X}{(\binp{a}{y}\binp{y}{x}\pmap{Q}{2} \Par \varp{X}))}} \hastype \Proc
				}
			\]
			In the second case, $\abs{x}Q$ has a shared type. We have the following typing in the source language:
			\[
				\tree{
					\Gamma; \emptyset; \Delta \cat k:S  \proves  P \hastype \Proc
					\quad 
					\tree{
						\tree{
							\Gamma ; \emptyset ; \cat x:S_1 \proves  Q \hastype \Proc
						}{
							\Gamma ; \emptyset ; \es \proves  \abs{x}Q \hastype \lhot{S_1}
						}
					}{
						\Gamma ; \emptyset ; \es \proves  \abs{x}Q \hastype \shot{S_1}
					}
				}{
					\Gamma; \emptyset; \Delta  \cat k:\btout{\shot{S_1}}S \proves  \bbout{k}{\abs{x}{Q}} P \hastype \Proc
				}
			\]
			The corresponding typing in the target language can be derived similarly as in the first case.
	
		\item	Case $P = \binp{k}{x} P$. Then there are two cases, depending on the type of $X$. 
			In the first case,
			we have the following typing in the source language:
			\[
				\tree{
					\Gamma \cat x : \shot{S_1};\, \emptyset ;\, \Delta \cat k:S \proves  P \hastype \Proc
				}{
					\Gamma;\, \emptyset;\, \Delta\cat k:\btinp{\shot{S_1}}S \proves  \binp{k}{x} P \hastype \Proc
				}
			\]
			The corresponding typing in the target language is as follows:
%
			\[
				\tree{
					\tree{}{\tmap{\Gamma}{2} \cat x : \chtype{\btinp{\tmap{S_1}{2}}\tinact};\, \emptyset ;\, \Delta \cat k:\tmap{S}{2} \proves  \tmap{P}{2} \hastype \Proc}
				}{
					\tmap{\Gamma}{2};\, \emptyset; \, \tmap{\Delta}{2}\cat k:\bbtinp{\chtype{\btinp{\tmap{S_1}{2}}\tinact}}\tmap{S}{2} \proves
					\binp{k}{x} \pmap{P}{2} \hastype \Proc
				}
			\]
			In the second case,  
			we have the following typing in the source language:
			\[
				\tree{
					\Gamma;\, \{x : \lhot{S_1}\};\, \emptyset ;\, \Delta \cat k:S \proves  P \hastype \Proc
				}{
					\Gamma;\, \emptyset;\, \Delta\cat k:\btinp{\lhot{S_1}}S \proves  \binp{k}{x} P \hastype \Proc
				}
			\]
			The corresponding typing in the target language is as follows:
%
			\[
				\tree{
					\tmap{\Gamma}{2} \cat x : \chtype{\btinp{\tmap{S_1}{2}}\tinact};\, \emptyset ;\, \Delta \cat k:\tmap{S}{2} \proves  \tmap{P}{2} \hastype \Proc
				}{
					\tmap{\Gamma}{2};\, \emptyset;\, \tmap{\Delta}{2}\cat k:\bbtinp{\chtype{\btinp{\tmap{S_1}{2}}\tinact}}\tmap{S}{2} \proves
					\binp{k}{x} \pmap{P}{2} \hastype \Proc
				}
			\]
		\item	Case $P = \appl{x}{k}$. Also here we have two cases, depending on whether $X$ has linear or shared type.
			In the first case, $x$ is linear and
			we have the following typing in the source language:
			\[
				\tree{
					\Gamma ;\, \{x : \lhot{S_1}\};\,  \es \proves  X \hastype \lhot{S_1} \quad \Gamma; \es ; \{k:S_1\} \proves k \hastype S_1
				}{
					\Gamma;\, \{x : \lhot{S_1}\};\, k:S_1 \proves  \appl{x}{k} \hastype \Proc}
			\]
			Let us write
			$\tmap{\Gamma_1}{2}$ to stand for $\tmap{\Gamma}{2} \cat x:\chtype{\btout{\tmap{S_1}{2}}\tinact}$.
			The corresponding typing in the target language is as follows:
			\begin{eqnarray}
				\label{prop:HO_to_sessp_t11}
				\tree{
					\tree{
						\tmap{\Gamma_1}{2};\, \es;\,  \es \proves  \inact \hastype \Proc
					}{
						\tmap{\Gamma_1}{2};\, \es;\,  \dual{s}:\tinact \proves  \inact \hastype \Proc
					}
					\quad 
						\tmap{\Gamma_1}{2};\, \es;\, \{k:\tmap{S_1}{2}\} \proves  k \hastype \tmap{S_1}{2} 
				}{
					\tmap{\Gamma_1}{2};\, \es;\,\, k:\tmap{S_1}{2},\,  \dual{s}:\btout{\tmap{S_1}{2}}\tinact \proves  \bout{\dual{s}}{k}\inact \hastype \Proc
				}
			\end{eqnarray}
			\[
				\tree{
					\tree{
						\begin{array}{c}
							\tmap{\Gamma_1}{2};\, \es;\,\, k:\tmap{S_1}{2},\,  \dual{s}:\btout{\tmap{S_1}{2}}\tinact \proves
							\bout{\dual{s}}{k}\inact \hastype \Proc
							\quad \eqref{prop:HO_to_sessp_t11}
							\\
							\tmap{\Gamma_1}{2} ;\, \es ;\, \es \proves x \hastype \chtype{\btout{\tmap{S_1}{2}}\tinact}
						\end{array}
					}{
						\tmap{\Gamma_1}{2};\, \es;\, k:\tmap{S_1}{2}, s:\btinp{\tmap{S_1}{2}}\tinact , \dual{s}:\btout{\tmap{S_1}{2}}\tinact
						\proves
						\bout{x}{s}\bout{\dual{s}}{k}\inact \hastype \Proc
					}
				}{
					\tmap{\Gamma_1}{2};\, \es;\, k:\tmap{S_1}{2} \proves  \news{s}{(\bout{x}{s}\bout{\dual{s}}{k}\inact)} \hastype \Proc
				}
	\]
			In the second case, $x$ is shared, and
			we have the following typing in the source language:
			\[
				\tree{
					\Gamma \cat  x : \lhot{S_1} ;\,  \es ;\,  \es \proves  x \hastype \shot{S_1} \quad \Gamma; \es ; k:S_1 \proves k \hastype S_1
				}{
					\Gamma \cat x : \shot{S_1};\, \es ;\, k:S_1 \proves  \appl{x}{k} \hastype \Proc
				}
			\]
			The associated typing in the target language is obtained similarly as in the first case. \qed
	\end{enumerate}
\end{proof}

We repeat the statement of
\propref{prop:op_corr_HOp_to_p}, 
as in Page \pageref{prop:op_corr_HOp_to_p}:

\begin{proposition}[Operational Correspondence, \HOp into \sessp]\myrm
	\label{app:prop:op_corr_HOp_to_p}
	Let $P$ be an  $\HOp$ process such that  $\Gamma; \emptyset; \Delta \proves P \hastype \Proc$.
	
	\begin{enumerate}[1.]
		\item Suppose $\horel{\Gamma}{\Delta}{P}{\hby{\ell_1}}{\Delta'}{P'}$.
		Then we have:
		\begin{enumerate}[a)]
			\item
				If  $\ell_1 = \news{\tilde{m}}\bactout{n}{\abs{x}Q}$,
				then $\exists \Gamma', \Delta'', R$ where either:
				\begin{enumerate}[-]
					\item 
						$\tmap{\Gamma}{2};\, \tmap{\Delta}{2} \proves  \pmap{P}{2} 
						\hby{\mapa{\ell_1}^{2}}
						\Gamma' \cdot \tmap{\Gamma}{2};\, \tmap{\Delta'}{2} \proves \pmap{P'}{2} \Par \repl{} \binp{a}{y} \binp{y}{x} \pmap{Q}{2}$
					\item 
						$\tmap{\Gamma}{2};\, \tmap{\Delta}{2} \proves \pmap{P}{2} 
						\hby{\mapa{\ell_1}^{2}}
						\tmap{\Gamma}{2};\, \Delta'' \proves \pmap{P'}{2} \Par \binp{s}{y} \binp{y}{x} \pmap{Q}{2}$
				\end{enumerate}

			\item
				If   
				$\ell_1 = \bactinp{n}{\abs{y}Q}$
				then $\exists R$ where
				either
				\begin{enumerate}[-]
					\item 
						$\tmap{\Gamma}{2};\, \tmap{\Delta}{2} \proves \pmap{P}{2} 
						\hby{\mapa{\ell_1}^{2}}
						\Gamma';\, \tmap{\Delta''}{2} \proves  R$, for some $ \Gamma'$
						and \\ 
						$\horel{\tmap{\Gamma}{2}}{\tmap{\Delta'}{2}}{\pmap{P'}{2}}{\wb}{\tmap{\Delta''}{2}}{\newsp{a}{R \Par \repl{} \binp{a}{y} \binp{y}{x} \pmap{Q}{2}}}$
					\item 
						$\tmap{\Gamma}{2};\, \tmap{\Delta}{2} \proves \pmap{P}{2}
						\hby{\mapa{\ell_1}^{2}}
						\tmap{\Gamma}{2};\, \tmap{\Delta''}{2} \proves R$, 
						and \\ 
						$\horel{\tmap{\Gamma}{2}}{\tmap{\Delta'}{2}}{\pmap{P'}{2}}{\wb}{\tmap{\Delta''}{2}}{\newsp{s}{R \Par \binp{s}{y} \binp{y}{x} \pmap{Q}{2}}}$  		
				\end{enumerate}

			\item	If
				$\ell_1 = \tau$ then either:

				\begin{enumerate}[-]
					\item	$\exists R$ such that
						\[
						\mhorel{\tmap{\Gamma}{2}}{\tmap{\Delta}{2}}{\pmap{P}{2}}
						{\hby{\tau}}
						{\tmap{\Delta'}{2}}{}{\newsp{\tilde{m}}{\pmap{P_1}{2} \Par \newsp{a}
						{\pmap{P_2}{2}\subst{a}{x} \Par \repl{} \binp{a}{y} \binp{y}{x} \pmap{Q}{2}}}}
						\]

					\item	$\exists R$ such that
						\[
						\mhorel{\tmap{\Gamma}{2}}{\tmap{\Delta}{2}}{\pmap{P}{2}}
						{\hby{\tau}}
						{\tmap{\Delta'}{2}}{}{\newsp{\tilde{m}}{\pmap{P_1}{2} \Par \newsp{s}
						{\pmap{P_2}{2}\subst{\dual{s}}{x} \Par \binp{s}{y} \binp{y}{x} \pmap{Q}{2}}}}
						\]

					\item	
						$\tmap{\Gamma}{2};\, \tmap{\Delta}{2} \proves \pmap{P}{2}
						\hby{\tau}
						\tmap{\Gamma}{2};\, \tmap{\Delta'}{2} \proves \pmap{P'}{2}$

					\item	$\ell_1 = \btau$ and
						$\tmap{\Gamma}{2};\, \tmap{\Delta}{2} \proves \pmap{P}{2}
						\hby{\stau}
						\tmap{\Gamma}{2};\, \tmap{\Delta'}{2} \proves \pmap{P'}{2}$
				\end{enumerate}

				   			   

			\item	 
				If  
				$\ell_1 \in \set{\bactsel{n}{l}, \bactbra{n}{l}}$
				 then \\
				$\exists \ell_2 = \mapa{\ell_1}^{2}$ such that 
				$\mapt{\Gamma}^{2};\, \mapt{\Delta}^{2} \proves  \map{P}^{2}
				\hby{\ell_2}
				\mapt{\Gamma}^{2};\, \mapt{\Delta'}^{2} \proves  \map{P'}^{2}$.			
		\end{enumerate}
		
		\item Suppose 
		$\stytra{\mapt{\Gamma}^{2}}{\ell_2}{\mapt{\Delta}^{2}}{\map{P}^{2}}{\mapt{\Delta'}^{2}}{R}$.
			\begin{enumerate}[a)]
				\item 
					If  
					$\ell_2 = \news{m}\bactout{n}{m}$
					then 
					either 
					\begin{enumerate}[-]
					\item	$\exists P'$ such that $P \hby{\news{m} \bactout{n}{m}} P'$
						and $R = \pmap{P'}{2}$.

					\item	$\exists Q, P'$ such that $P \hby{\bactout{n}{\abs{x}Q}} P'$
						and $R = \map{P'}^{2} \Par \repl{} \binp{a}{y} \binp{y}{x} \pmap{Q}{2}$

					\item	$\exists Q, P'$ such that $P \hby{\bactout{n}{\abs{x}Q}} P'$
						and $R = \map{P'}^{2} \Par \binp{s}{y} \binp{y}{x} \pmap{Q}{2}$
					\end{enumerate}

				\item   
					If  $\ell_2 = \bactinp{n}{m}$ 
					then either
					\begin{enumerate}[-]
					\item	$\exists P'$ such that $P \hby{\bactinp{n}{m}} P'$
						and $R = \pmap{P'}{2}$.

					\item	$\exists Q, P'$ such that
						$P \hby{\bactinp{n}{\abs{x}Q}} P'$\\
						and $\horel{\mapt{\Gamma}^{2}}{\mapt{\Delta'}^{2}}{\map{P'}^{2}}{\wb}{\mapt{\Delta'}^{2}}{\news{a}(R \Par \repl{} \binp{a}{y} \binp{y}{x} \pmap{Q}{2})}$
					\item	$\exists Q, P'$ such that
						$P \hby{\bactinp{n}{\abs{x}Q}} P'$\\
						and $\horel{\mapt{\Gamma}^{2}}{\mapt{\Delta'}^{2}}{\map{P'}^{2}}{\wb}{\mapt{\Delta'}^{2}}{\news{s}(R \Par \binp{s}{y} \binp{y}{x} \pmap{Q}{2})}$  
					\end{enumerate}
		
				\item   
					If  
					$\ell_2 = \tau$ 
					then $\exists P'$ such that
					$P \hby{\tau} P'$
					and $\horel{\mapt{\Gamma}^{2}}{\mapt{\Delta'}^{2}}{\map{P'}^{2}}{\wb}{\mapt{\Delta'}^{2}}{R}$.
				\item	 
					If  
					$\ell_2 \not\in \set{\bactout{n}{m}, \bactsel{n}{l}, \bactbra{n}{l}}$ 
					 then 
					$\exists \ell_1$ such that 
					$\ell_1 = \mapa{\ell_2}^{2}$ and \\
					$ \Gamma ;\, \Delta  \proves   P
					\hby{\ell_1}
					\Gamma ;\, \Delta  \proves   P'$.
		\end{enumerate}
	\end{enumerate}
\end{proposition}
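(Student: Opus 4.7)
My plan is to proceed by transition induction on the labelled transition, treating the two parts (completeness and soundness) separately but symmetrically. I begin with the easy cases of Part 1: Part 1(d) for selection and branching follows immediately from the homomorphic clauses of $\pmap{\cdot}{2}$ on these constructs, so the source action lifts directly to an encoded action with $\mapa{\ell_1}^{2} = \ell_1$. For Part 1(a), I invoke \defref{def:enc:HOp_to_p} and split on whether $s \in \fn{Q}$: the first branch produces the replicated trigger guarded by a shared name $a$, while the second uses a linear session endpoint $\dual{s}$. In each case the encoded process exposes the private trigger via $\news{m}\bactout{n}{m}$, matching $\mapa{\ell_1}^{2}$ exactly and leaving the guarded server in parallel. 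Part 1(c), the $\tau$ case, requires analyzing the source reductions by their origin (name-passing, abstraction-passing, or $\beta$-application), following the structure of the four sub-cases in the statement and using \propref{prop:typepres_HOp_to_p} to keep the session environment balanced.

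The main obstacle, in both directions, is Part 1(b) and Part 2(b), dealing with the input of an abstraction. In the source process $\binp{n}{x} P_0$, a received abstraction $\abs{y}Q$ is substituted directly into $P_0$, yielding $P_0\subst{\abs{y}Q}{x}$. In the target, however, $\binp{n}{x} \pmap{P_0}{2}$ receives only a trigger name $a$, and the body of $Q$ remains external, guarded by the server $\repl{} \binp{a}{y} \binp{y}{x} \pmap{Q}{2}$ (or its one-shot variant on a linear session). The hard part is to establish the auxiliary bisimilarity
\[
\horel{\tmap{\Gamma}{2}}{\tmap{\Delta'}{2}}{\pmap{P_0\subst{\abs{y}Q}{x}}{2}}{\wb}{\tmap{\Delta''}{2}}{\newsp{a}{\pmap{P_0}{2}\subst{a}{x} \Par \repl{} \binp{a}{y} \binp{y}{x} \pmap{Q}{2}}}
\]
together with its linear counterpart, expressing that the ``bundled'' substitution in the target faithfully simulates direct substitution in the source.

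To prove this auxiliary bisimilarity, I would proceed by induction on the structure of $P_0$. The critical base case is $P_0 = \appl{x}{u}$: by \defref{def:enc:HOp_to_p} we have $\pmap{\appl{x}{u}}{2} = \newsp{s}{\bout{x}{s}\bout{\dual{s}}{u}\inact}$, so after $\subst{a}{x}$ the encoded term performs two deterministic session communications with the server, eventually producing $\pmap{Q\subst{u}{y}}{2}$; by $\tau$-inertness (\propref{lem:tau_inert}) this derivative is bisimilar to $\pmap{Q\subst{u}{y}}{2}$, matching the source substitution. The inductive step uses congruence of $\wb$ and stability of the server under parallel composition and restriction; in the linear sub-case, preservation of session types (\propref{prop:typepres_HOp_to_p}) guarantees that the one-shot server is consumed exactly once, so the trigger name does not leak. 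The soundness direction (Part 2) mirrors this development: Part 2(a), (c), (d) unfold $\pmap{P}{2}$ structurally to recover the source action, while Part 2(b) applies the same auxiliary bisimilarity in the reverse direction, set up as a bisimulation up to deterministic transition (\lemref{lem:up_to_deterministic_transition}) to absorb the extra $\tau$-steps introduced by the trigger discipline.
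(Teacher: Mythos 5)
Your proposal matches the paper's proof in both structure and substance: a transition induction with the crux isolated to the input-of-abstraction cases, resolved by exactly the auxiliary bisimilarity you state, proved by structural induction on the continuation with the application case $\appl{x}{u}$ as the critical base (two deterministic communications with the trigger server, then discarding the dead server via freshness of the trigger name) and congruence of $\wb$ for the inductive step. The paper likewise handles the soundness direction by mirroring this argument and absorbing the extra $\tau$-steps with the up-to-deterministic-transition technique, so no further comment is needed.
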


\begin{proof}
	\noi The proof is done by transition induction.
	We conside the two parts separately.

	\noi - Part 1

	\noi - Basic Step:
 
	\noi - Subcase: $P= \bout{n}{\abs{x}{Q}} P'$ 
	and also from \defref{def:enc:HOp_to_p}
	we have that\\
	$\pmap{P}{2} = \newsp{a}{\bout{n}{a} \pmap{P'}{2} \Par \repl{} \binp{a}{y} \binp{y}{x} \pmap{Q}{2}}$

	\noi Then
	\begin{eqnarray*}
		\Gamma; \es; \Delta \proves P &\hby{\bactout{n}{\abs{x}{Q}}} & \Delta' \proves P'\\
		\tmap{\Gamma}{2}; \es; \tmap{\Delta}{2} \proves \pmap{P}{2} &\hby{\news{a} \bactout{n}{a}}& \tmap{\Delta}{2} \proves \pmap{P'}{2} \Par \repl{} \binp{a}{y} \binp{y}{x} \pmap{Q}{2}
	\end{eqnarray*}
	\noi and from \defref{def:enc:HOp_to_p}
	\begin{eqnarray*}
		\mapa{\bactout{n}{\abs{x}{Q}}} = \news{a} \bactout{n}{a}
	\end{eqnarray*}
	\noi as required.

	\noi - Subcase: $P= \bout{n}{\abs{x}{Q}} P'$ 
	and also from \defref{def:enc:HOp_to_p}
	we have that\\
	$\pmap{P}{2} = \newsp{s}{\bout{n}{\dual{s}} \pmap{P'}{2} \Par \binp{s}{y} \binp{y}{x} \pmap{Q}{2}}$
	is similar as above. 

	\noi - Subcase $P = \binp{n}{x} P'$.

	\noi - From \defref{def:enc:HOp_to_p}
	we have that
	$\pmap{P}{2} = \binp{n}{x} \pmap{P'}{2}$

	\noi Then
	\begin{eqnarray*}
		\Gamma; \es; \Delta \proves P &\hby{\bactinp{n}{\abs{x}{Q}}}& \Delta' \proves P' \subst{\abs{x}{Q}}{x}\\
		\tmap{\Gamma}{2}; \es; \tmap{\Delta}{2} \proves \pmap{P}{2} &\by{\bactinp{n}{a}}& \tmap{\Delta''}{2} \proves R \subst{a}{x}
	\end{eqnarray*}
	\noi with
	\begin{eqnarray*}
		\mapa{\bactinp{n}{\abs{x}{Q}}}^{2} &=& \bactinp{n}{a}
	\end{eqnarray*}
	It remains to show that
	\begin{eqnarray*}
		\tmap{\Gamma}{2}; \es; \tmap{\Delta'}{2} \proves \pmap{P' \subst{\abs{x}{Q}}{x}}{2} \wb
		\tmap{\Delta''}{2} \proves \newsp{a}{R \subst{a}{x} \Par \repl{} \binp{a}{y} \binp{y}{x} \pmap{Q}{2}}
	\end{eqnarray*}
	\noi The proof is an induction on the syntax structure of $P'$.
	Suppose $P' = \appl{x}{m}$, then:
	\begin{eqnarray*}
		\pmap{\appl{x}{m} \subst{\abs{x}{Q}}{x}}{2} &=& \pmap{Q \subst{m}{x}}{2}\\
		\newsp{a}{R \subst{a}{x} \Par \repl{} \binp{a}{y} \binp{y}{x} \pmap{Q}{2}} &=& \newsp{a}{\newsp{s}{ \bout{x}{s} \bout{\dual{s}}{m} \inact} \subst{a}{x} \Par \repl{} \binp{a}{y} \binp{y}{x} \pmap{Q}{2}}
	\end{eqnarray*}
	\noi The second term can be deterministically reduced as:
	\begin{eqnarray*}
		\mhorel{\tmap{\Gamma}{2}}{\tmap{\Delta''}{2}}{\newsp{a}{\newsp{s}{ \bout{x}{s} \bout{\dual{s}}{m} \inact} \subst{a}{x} \Par \repl{} \binp{a}{y} \binp{y}{x} \pmap{Q}{2}}}
		{\hby{\tau} \hby{\stau}}
		{\tmap{\Delta''}{2}}{}{\newsp{a}{\pmap{Q \subst{m}{x}}{2} \Par \repl{} \binp{a}{y} \binp{y}{x} \pmap{Q}{2}}}
	\end{eqnarray*}
	\noi which is bisimilar with:
	\begin{eqnarray*}
		\pmap{Q \subst{m}{x}}{2}
	\end{eqnarray*}
	\noi because $a$ is fresh and cannot interact anymore.

	\noi An interesting inductive step case is parallel composition. Suppose $P' = P_1 \Par P_2$. We need to show that:
	\begin{eqnarray*}
		&& \tmap{\Gamma}{2}; \es; \tmap{\Delta'}{2} \proves \pmap{(P_1 \Par P_2) \subst{\abs{x}{Q}}{x}}{2} \wb
		\tmap{\Delta''}{2} \proves \newsp{a}{\pmap{P_1 \Par P_2}{2} \subst{a}{x} \Par \repl{} \binp{a}{y} \binp{y}{x} \pmap{Q}{2}}
	\end{eqnarray*}
	\noi We know that
	\begin{eqnarray*}
		\horel{\tmap{\Gamma}{2}}{\tmap{\Delta_1}{2}}{\pmap{P_1\subst{\abs{x}{Q}}{x}}{2}}{&\wb&}
		{\tmap{\Delta_1''}{2}}{\newsp{a}{\pmap{P_1}{2} \subst{a}{x} \Par \repl{} \binp{a}{y} \binp{y}{x} \pmap{Q}{2}}}\\
		\horel{\tmap{\Gamma}{2}}{\tmap{\Delta_2}{2}}{\pmap{P_2\subst{\abs{x}{Q}}{x}}{2}}{&\wb&}
		{\tmap{\Delta_1''}{2}}{\newsp{a}{\pmap{P_2}{2} \subst{a}{x} \Par \repl{} \binp{a}{y} \binp{y}{x} \pmap{Q}{2}}}
	\end{eqnarray*}
	\noi We conclude from the congruence of $\wb$.

	\noi - The rest of the cases for Part 1 are easy to follow using \defref{def:enc:HOp_to_p}.

	\noi - Part 2.

	\noi The proof for Part 2 is straightforward following \defref{def:enc:HOp_to_p}.
	We give some distinctive cases:

	\noi - Case $P = \bout{n}{\abs{x}{Q}} P'$
	\begin{eqnarray*}
		\horel{\Gamma}{\Delta}{P}{&\hby{\bactout{n}{\abs{x}{Q}}}&}{\Delta'}{P'}\\
		\horel{\tmap{\Gamma}{2}}{\tmap{\Delta}{2}}{\pmap{P}{2}}{& \hby{\news{a} \bactout{n}{a}}&}{\tmap{\Delta'}{2}}{\pmap{P'}{2} \Par \repl{} \binp{a}{y} \binp{y}{s} \pmap{Q}{2}}
	\end{eqnarray*}
	\noi as required.

	\noi - Case $P = \binp{n}{x} P'$
	\begin{eqnarray*}
		\horel{\Gamma}{\Delta}{P}{&\hby{\bactinp{n}{\abs{x}{Q}}}&}{\Delta'}{P' \subst{\abs{x}}{Q}}{x}\\
		\horel{\tmap{\Gamma}{2}}{\tmap{\Delta}{2}}{\pmap{P}{2}}{& \hby{\bactinp{n}{a}}&}{\tmap{\Delta''}{2}}{\pmap{P'}{2} \subst{a}{x}}
	\end{eqnarray*}
	\noi We now use a similar argumentation as the input case in Part 1 to prove that:
	\begin{eqnarray*}
		\horel{\Gamma}{\Delta'}{P' \subst{\abs{x}{Q}}{x}}
		{\wb}
		{\tmap{\Delta''}{2}}{\newsp{a}{\pmap{P'}{2} \subst{a}{x} \Par \repl{} \binp{a}{y} \binp{y}{x} \pmap{Q}{2}}}
	\end{eqnarray*}
	\qed
\end{proof}

%
\subsection{Properties for $\enco{\pmap{\cdot}{3}, \tmap{\cdot}{3}, \mapa{\cdot}^{3}}: \HOpp \to \HOp$}
\label{app:HOpp_to_HOp}

We study the properties of the typed encoding in
\defref{def:enc:HOpp_to_HOp} (Page~\pageref{def:enc:HOpp_to_HOp}).

We repeat the statement of \propref{prop:typepres_HOpp_to_HOp},
as in Page~\pageref{prop:typepres_HOpp_to_HOp}:

\begin{proposition}[Type Preservation. From \HOpp to \HOp]\myrm
	\label{app:prop:typepres_HOpp_to_HOp}
	Let $P$ be a \HOpp process.
	If $\Gamma; \emptyset; \Delta \proves P \hastype \Proc$ then 
	$\tmap{\Gamma}{3}; \emptyset; \tmap{\Delta}{3} \proves \pmap{P}{3} \hastype \Proc$. 
\end{proposition}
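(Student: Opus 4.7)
The plan is to proceed by induction on the derivation of $\Gamma; \emptyset; \Delta \proves P \hastype \Proc$, with case analysis on the last typing rule applied. For all process constructors that are not applications or higher-order communications (i.e.\ name output/input, selection, branching, parallel, restriction, recursion, $\inact$), the mapping $\pmap{\cdot}{3}$ is homomorphic, so the case reduces to invoking the induction hypothesis on subterms and reassembling the derivation using the same typing rule, after observing that $\tmap{\cdot}{3}$ distributes over the type constructors appearing in $\Gamma$ and $\Delta$ (noting that $\tmap{\tinact}{3}=\tinact$, $\tmap{S}{3}=S$ for purely first-order $S$ and $\tmap{L}{3}=\lhot{\btinp{\tmap{L}{3}}\tinact}$ or $\shot{(\cdots)}$ otherwise).

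The genuinely interesting cases are the two flavours of higher-order abstraction output $\bout{u}{\abs{\underline{x}}{Q}}P$ and of application $\appl{V}{(\abs{y}Q)}$, $\appl{(\abs{x}P)}{(\abs{y}Q)}$. For an output $\bout{u}{\abs{\underline{x}}{Q}}P$ typed under $u:\btout{L}S$ and with $\abs{\underline{x}}{Q}:L$, I will first apply the IH to $Q$ to obtain $\tmap{\Gamma}{3} \cat \underline{x}:\tmap{L}{3};\Lambda;\Delta'\proves \pmap{Q}{3}\hastype\Proc$, then wrap with rule \trule{Rcv} using a fresh session $z:\btinp{\tmap{L}{3}}\tinact$ to type $\binp{z}{\underline{x}}\pmap{Q}{3}$, and finally apply \trule{Abs} to obtain the encoded abstraction at type $\lhot{\btinp{\tmap{L}{3}}\tinact}=\tmap{\lhot{L}}{3}$. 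This matches the type demanded by the encoded output prefix, after which \trule{Send} closes the case. The shared variant is handled in the same way, inserting one use of \trule{Prom} to lift the linear abstraction to a shared type (since the wrapped body contains no free linear names beyond those already quantified), and using \tmap{\shot{L}}{3}=\shot{\btinp{\tmap{L}{3}}\tinact}.

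For the application $\appl{(\abs{x}P)}{(\abs{y}Q)}$, whose encoding introduces $\news{s}$, I will derive the two parallel components separately: for $\binp{s}{x}\pmap{P}{3}$, the IH on $P$ together with \trule{Rcv} supplies the linear type $\btinp{\tmap{L}{3}}\tinact$ for $s$; for $\bout{\dual{s}}{\abs{y}\pmap{Q}{3}}\inact$, the IH on $Q$ plus \trule{Abs} and \trule{Send} supply the dual session type $\btout{\tmap{L}{3}}\tinact$ for $\dual{s}$. Then \trule{Par} composes the two using disjoint environments and \trule{ResS} closes $s$ with the required duality $\btinp{\tmap{L}{3}}\tinact\dualof\btout{\tmap{L}{3}}\tinact$. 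The case $\appl{x}{(\abs{y}P)}$ is analogous but uses a free higher-order variable $x$ with type $\tmap{\lhot{L}}{3}$ or $\tmap{\shot{L}}{3}$ at the application site.

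The main obstacle I anticipate is careful bookkeeping of the shared/linear distinction: whenever the source abstraction has shared type $\shot{L}$, the wrapper $\abs{z}{\binp{z}{\underline{x}}\pmap{Q}{3}}$ must itself be typable under empty linear $\Lambda$ and empty session $\Delta$ before \trule{Prom} can lift it, which forces us to check that $z$ is the sole session channel introduced and is fully consumed by the input. A secondary bookkeeping point is the extended mapping on type environments: one must verify that $\tmap{\cdot}{3}$ is well-defined on $\Gamma\cat \underline{x}:\shot{L}$ and $\Lambda\cat \underline{x}:\lhot{L}$ in a way compatible with the \HOp typing rule used (particularly for \trule{EProm}). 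Beyond these points, no substitution lemma is needed for this statement, so the proof remains essentially a mechanical rule-by-rule reconstruction.
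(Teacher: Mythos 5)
Your proposal is correct and follows essentially the same route as the paper's proof: induction on the typing derivation, with the homomorphic cases dispatched by the induction hypothesis and the higher-order output and application cases handled by exactly the \trule{Rcv}/\trule{Abs}/\trule{Send} wrapping (with \trule{Prom} for the shared variant) and the \trule{Par}/\trule{ResS} composition with dual session types that the paper uses. The bookkeeping concerns you flag are the right ones, and as you note no substitution lemma is required here.
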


\begin{proof}
	By induction on the inference of 
	$\Gamma; \emptyset; \Delta \proves P \hastype \Proc$.
	We detail some representative cases:
	\begin{enumerate}[1.]
		\item	Case $P = \bout{u}{\abs{\underline{x}}{Q}} P'$.
			Then we may have the following typing in \HOpp:
			\[
				\tree{
					\tree{}{\Gamma; \Lambda_1; \Delta_1 \cat u:S  \proves  P' \hastype \Proc} 
					\quad
					\tree{
						\tree{}{\Gamma \cat \underline{x}:L; \Lambda_2 ; \Delta_2 \proves  Q \hastype \Proc}
						\quad
						\tree{}{\Gamma \cat \underline{x}:L; \es ; \es \proves  \underline{x} \hastype L}
					}{
						\Gamma ; \Lambda_2 ; \Delta_2 \proves  \abs{\underline{x}:L} Q \hastype \lhot{L}
					}
				}{
					\Gamma; \Lambda_1 \cat \Lambda_2; \Delta_1 \cat \Delta_2 \cat  u: \btout{\lhot{L}} S \proves \bout{u}{\abs{\underline{x}}{Q}} P' \hastype \Proc
				}
			\]
			Thus, by IH we have:
			\begin{eqnarray}
				\tmap{\Gamma}{3}; \tmap{\Lambda_1}{3}; \tmap{\Delta_1}{3} \cat u:\tmap{S}{3} & \proves &  \pmap{P'}{3} \hastype \Proc
				\label{eq:hopppre1}
				\\
				\tmap{\Gamma}{3} \cat \underline{x}:\tmap{L}{3}; \tmap{\Lambda_2}{3} ; \tmap{\Delta_2}{3} & \proves & \pmap{Q}{3} \hastype \Proc
				\label{eq:hopppre2}
				\\
				\tmap{\Gamma}{3} \cat \underline{x}:\tmap{L}{3}; \es ; \es & \proves & \underline{x} \hastype \tmap{L}{3}
				\label{eq:hopppre3}
			\end{eqnarray}
			The corresponding typing in \HOp is as follows:
			\begin{eqnarray}
				\tree{
					\tree{
						\tree{}{\eqref{eq:hopppre2}}
					}{
						\tmap{\Gamma}{3} \cat x:\tmap{L}{3}; \tmap{\Lambda_2}{3};  \tmap{\Delta_2}{3} \cat z: \tinact \proves \pmap{Q}{3} \hastype \Proc
					}
					\qquad
					\tree{}{\eqref{eq:hopppre3}}
				}{
					\tmap{\Gamma}{3}; \tmap{\Lambda_2}{3}; \tmap{\Delta_2}{3} \cat z:\btinp{\tmap{L}{3}} \tinact \proves \binp{z}{\underline{x}} \pmap{Q}{3} \hastype \Proc
				}
				\label{eq:hopppre11}
			\end{eqnarray}
			{\small
			\[
				\tree{
					\tree{}{
						\eqref{eq:hopppre1}}
						\quad
						\tree{
							\eqref{eq:hopppre11}
							\qquad 
							\tree{}{\tmap{\Gamma}{3}; \es; z:\btinp{\tmap{L}{3}} \tinact \proves z \hastype \btinp{\tmap{L}{3}} \tinact}
					}{
						\tmap{\Gamma}{3}; \tmap{\Lambda_2}{3}; \tmap{\Delta_2}{3} \proves \abs{z}{\binp{z}{\underline{x}} \pmap{Q}{3}} \hastype \lhot{(\btinp{\tmap{L}{3}} \tinact)}
					}
				}{
					\tmap{\Gamma}{3}; \tmap{\Lambda_1}{3} \cat \tmap{\Lambda_2}{3}; \tmap{\Delta_1}{3} \cat \tmap{\Delta_2}{3} \cat u:\btout{\lhot{\btinp{\tmap{L}{3}} \tinact}}\tmap{S}{3} 
					\proves  \bout{u}{\abs{z}{\binp{z}{\underline{x}} \pmap{Q}{3}}} \pmap{P'}{3}
					\hastype \Proc
				}
			\]
			}

			\item Case $P =  \appl{(\abs{x} P)}{(\abs{y} Q)}$.
			We may have different possibilities for the types of each abstraction. 
			We consider only one of them, as the rest are similar:
			\[
			\tree{
			\tree{
			\tree{}{
			\Gamma \cat x:\shot{C}; \Lambda;  \Delta_1 \proves   P \hastype \Proc}
			}{
			\Gamma; \Lambda;  \Delta_1 \proves \abs{x} P \hastype \lhot{(\lhot{C})}
			} 
			\quad
			\tree{
			\tree{}{
			\Gamma; \es;  \Delta_2, y:C \proves  Q \hastype \Proc}
			}{
			\Gamma; \es;  \Delta_2 \proves \abs{y} Q \hastype \lhot{C}
			}
			}{
			\Gamma; \Lambda; \Delta_1 \cdot \Delta_2 \proves\appl{(\abs{x} P)}{(\abs{y} Q)} \hastype \Proc
			}
			\]

			Thus, by IH we have:
			\begin{eqnarray}
			\tmap{\Gamma}{3} \cat x:\tmap{\shot{C}}{3}; \tmap{\Lambda}{3}; \tmap{\Delta_1}{3}   & \proves &  \pmap{P}{3} \hastype \Proc
			\label{eq:hopppre4} \\
			\tmap{\Gamma}{3}  ; \es; \tmap{\Delta_1}{3}, y:\tmap{C}{3}   & \proves &  \pmap{Q}{3} \hastype \Proc
			\label{eq:hopppre5} 
			\end{eqnarray}

			The corresponding typing in \HOp is as follows --- recall that $\tmap{\lhot{C}}{3} = \lhot{\tmap{C}{3}}$.
			\begin{eqnarray}
				\tree{
					\tree{
						\tree{}{\eqref{eq:hopppre4}}
					}{
						\tmap{\Gamma}{3}\cat x: \tmap{\shot{C}}{3}; \tmap{\Lambda}{3}; \tmap{\Delta_1}{3} \cat s: \tinact \proves \pmap{P}{3} \hastype \Proc
					}
				}{
					\tmap{\Gamma}{3}; \tmap{\Lambda}{3}; \tmap{\Delta_1}{3} \cat  s:\btinp{\tmap{\lhot{C}}{3}}\tinact \proves \binp{s}{x}\pmap{P}{3} \hastype \Proc
				}
				\label{eq:hopppre12}
			\end{eqnarray}
			{\small
			\[
				\tree{
					\tree{
						\eqref{eq:hopppre12}
						\quad
						\tree{
							\tree{
								\tree{}{\eqref{eq:hopppre5}}
							}{
								\tree{
									\tmap{\Gamma}{3}; \es; \tmap{\Delta_2}{3} \cat y:\tmap{ C}{3} \proves \pmap{Q}{3} \hastype \Proc
								}{
									\tree{
										\tmap{\Gamma}{3}; \es; \tmap{\Delta_2}{3} \proves \abs{y}{\pmap{Q}{3}} \hastype \tmap{\lhot{C}}{3}
									}{
										\tmap{\Gamma}{3}; \es; \tmap{\Delta_2}{3} \cat \dual{s}: \tinact \proves \abs{y}{\pmap{Q}{3}} \hastype \tmap{\lhot{C}}{3}
									}
								}
							}
						}{
							\tmap{\Gamma}{3}; \es;   \tmap{\Delta_2}{3} \cat \dual{s}:\btout{\tmap{\lhot{C}}{3}}\tinact \proves \bout{\dual{s}}{\abs{y}{\pmap{Q}{3}}}\inact \hastype \Proc
						}
					}{
						\tmap{\Gamma}{3}; \tmap{\Lambda}{3}; \tmap{\Delta_1}{3} \cdot \tmap{\Delta_2}{3} \cat s:\btinp{\tmap{\lhot{C}}{3}}\tinact \cat \dual{s}:\btout{\tmap{\lhot{C}}{3}}\tinact
						\proves
						\binp{s}{x}\pmap{P}{3} \Par \bout{\dual{s}}{\abs{y}{\pmap{Q}{3}}}\inact \hastype \Proc
					}
				}{
					\tmap{\Gamma}{3}; \tmap{\Lambda}{3}; \tmap{\Delta_1}{3} \cdot \tmap{\Delta_2}{3} \proves \news{s}(\binp{s}{x}\pmap{P}{3} \Par \bout{\dual{s}}{\abs{y}{\pmap{Q}{3}}}\inact) \hastype \Proc
				}
			\]
			}

	\end{enumerate}
\qed
\end{proof}

We repeat the statement of \propref{prop:op_corr_HOpp_to_HOp},
as in Page~\pageref{prop:op_corr_HOpp_to_HOp}:

\begin{proposition}[Operational Correspondence. From \HOpp to \HOp]\myrm
	\label{app:prop:op_corr_HOpp_to_HOp}
	\begin{enumerate}
		\item	Let $\Gamma; \es; \Delta \proves P$.
			$\horel{\Gamma}{\Delta}{P}{\hby{\ell}}{\Delta'}{P'}$ implies
			\begin{enumerate}[a)]
				\item	If $\ell \in \set{\news{\tilde{m}} \bactout{n}{\abs{x}{Q}}, \bactinp{n}{\abs{x}{Q}}}$ then
					$\horel{\tmap{\Gamma}{3}}{\tmap{\Delta}{3}}{\pmap{P}{3}}{\hby{\ell'}}
					{\tmap{\Delta'}{3}}{\pmap{P'}{3}}$ with $\mapa{\ell}^{3} = \ell'$.

%
%

				\item	If $\ell \notin \set{\news{\tilde{m}} \bactout{n}{\abs{x}{Q}}, \bactinp{n}{\abs{x}{Q}}, \tau}$ then
					$\horel{\tmap{\Gamma}{3}}{\tmap{\Delta}{3}}{\pmap{P}{3}}{\hby{\ell}}
					{\tmap{\Delta'}{3}}{\pmap{P'}{3}}$.

				\item	If $\ell = \btau$ then
					$\horel{\tmap{\Gamma}{3}}{\tmap{\Delta}{3}}{\pmap{P}{3}}{\hby{\tau}}
					{\Delta''}{R}$ and
					${\tmap{\Gamma}{3}}{\tmap{\Delta'}{3}}{\pmap{P'}{3}}{\wb}{\Delta''}{R}$.

				\item	If $\ell = \tau$ and $\ell \not= \btau$ then 
					$\horel{\tmap{\Gamma}{3}}{\tmap{\Delta}{3}}{\pmap{P}{3}}{\hby{\tau}}
					{\tmap{\Delta'}{3}}{\pmap{P'}{3}}$.
			\end{enumerate}

		\item	Let $\Gamma; \es; \Delta \proves P$.
			$\horel{\tmap{\Gamma}{3}}{\tmap{\Delta}{3}}{\pmap{P}{3}}{\hby{\ell}}
			{\tmap{\Delta''}{3}}{Q}$ implies
			\begin{enumerate}[a)]
				\item	If $\ell \in \set{\news{\tilde{m}} \bactout{n}{\abs{x}{Q}}, \bactinp{n}{\abs{x}{Q}}, \tau}$
					then
					$\horel{\Gamma}{\Delta}{P}{\hby{\ell'}}{\Delta'}{P'}$
					with $\mapa{\ell'}^{3} = \ell$ and $Q \scong \pmap{P'}{3}$.

				\item	If $\ell \notin \set{\news{\tilde{m}} \bactout{n}{\abs{x}{R}}, \bactinp{n}{\abs{x}{R}}, \tau}$
					then
					$\horel{\Gamma}{\Delta}{P}{\hby{\ell}}{\Delta'}{P'}$ and $Q \scong \pmap{P'}{3}$.

				\item	If $\ell = \tau$ then
					either
					$\horel{\Gamma}{\Delta}{\Delta}{\hby{\tau}}{\Delta'}{P'}$ with $Q \scong \pmap{P'}{3}$\\
					or
					$\horel{\Gamma}{\Delta}{\Delta}{\hby{\btau}}{\Delta'}{P'}$ and
					$\horel{\tmap{\Gamma}{3}}{\tmap{\Delta''}{3}}{Q}{\hby{\btau}}
					{\tmap{\Delta''}{3}}{\pmap{P'}{3}}$.
			\end{enumerate}
	\end{enumerate}
\end{proposition}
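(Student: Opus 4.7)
The plan is to prove both parts by induction on the derivation of the transition, with a case analysis on the last applied rule of the typed LTS (cf. \defref{def:restricted_typed_transition}). For Part~1, the induction is on $\horel{\Gamma}{\Delta}{P}{\hby{\ell}}{\Delta'}{P'}$; for Part~2, on $\horel{\tmap{\Gamma}{3}}{\tmap{\Delta}{3}}{\pmap{P}{3}}{\hby{\ell}}{\tmap{\Delta''}{3}}{Q}$. In both parts, the homomorphic cases of the encoding (parallel composition, restriction, selection, branching, recursion, inaction, and first-order input/output) yield the thesis almost by inspection, appealing to the definition of $\mapa{\cdot}^{3}$ and the induction hypothesis when needed.

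The interesting cases for Part~1 are the higher-order output $\ell = \news{\tilde{m}}\bactout{n}{\abs{x}{Q}}$, the higher-order input $\ell = \bactinp{n}{\abs{x}{Q}}$, and the $\beta$-transition $\ell = \btau$. For the output/input cases (1a), I would unpack $\pmap{\bout{u}{\abs{\underline{x}}{Q}} P}{3} = \bout{u}{\abs{z}{\binp{z}{\underline{x}} \pmap{Q}{3}}} \pmap{P}{3}$ and observe that rule \eltsrule{SSnd} (or its shared variant) matches precisely, emitting the label $\mapa{\ell}^{3}$ defined in \figref{fig:enc:HOpp_to_HOp}. For (1c), a source $\beta$-step $\appl{(\abs{x}{P})}{V} \hby{\btau} P\subst{V}{x}$ where $V$ is an abstraction is simulated by the translated process $\newsp{s}{\binp{s}{x}\pmap{P}{3} \Par \bout{\dual{s}}{V'} \inact}$, which performs a \emph{single} session-synchronisation step on $s/\dual{s}$, reducing to $\pmap{P}{3}\subst{V'}{x} \equiv \pmap{P\subst{V}{x}}{3}$ (using a substitution lemma which I would prove as a separate syntactic equation on $\pmap{\cdot}{3}$). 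This step is of kind $\stau$, so Proposition~\ref{lem:tau_inert} gives $\horel{\tmap{\Gamma}{3}}{\tmap{\Delta}{3}}{\pmap{P}{3}}{\wb}{\Delta''}{R}$ for the resulting $R$, establishing subcase (c). The pure $\tau$ case without $\beta$ (1d) reduces to subcases already handled by homomorphism.

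For Part~2, the same case analysis applies but in the reverse direction. The subtle case is (2c): when the target performs $\hby{\tau}$, I must decide whether it arises from (i) a synchronisation already present in the source (handled homomorphically, giving $Q \scong \pmap{P'}{3}$), or (ii) an administrative reduction internal to an encoded application. The latter can only occur at an occurrence of $\pmap{\appl{(\abs{x} P)}{(\abs{y} Q)}}{3}$, and there the only available $\tau$-transition is the $s$-synchronisation described above, which is mirrored by the source $\btau$-reduction of the original application. Again by $\tau$-inertness, $\horel{\tmap{\Gamma}{3}}{\tmap{\Delta''}{3}}{Q}{\hby{\btau}}{\tmap{\Delta''}{3}}{\pmap{P'}{3}}$ holds, giving the second disjunct of (2c).

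The principal obstacle will be the bookkeeping for (1c)/(2c): one must verify that the encoded application is the \emph{only} source of target $\tau$-transitions that does not arise from a source $\tau$-transition of the same label class, and that no spurious reductions are introduced when the target process sits under a context. Because the wrapping $\newsp{s}{\binp{s}{x} \cdot \Par \bout{\dual{s}}{\cdot}\inact}$ uses a freshly restricted session $s$ whose dual endpoints are both local, this can be established by a simple inspection of the shape of target redexes, using linearity of $s$ and the fact that $\pmap{\cdot}{3}$ introduces no other new synchronisation sites. Combined with the substitution equation $\pmap{Q\subst{V}{x}}{3} \equiv \pmap{Q}{3}\subst{\pmap{V}{3}}{x}$ (with appropriate mapping of abstraction values via $\mapa{\cdot}^{3}$), the remaining cases follow routinely.
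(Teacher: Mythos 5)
Your proposal follows essentially the same route as the paper's proof: a transition induction over the typed LTS in which the homomorphic cases are immediate, the higher-order output/input cases match the label mapping $\mapa{\cdot}^{3}$ directly, and the encoded higher-order application is handled by observing that the fresh session $s/\dual{s}$ forces a single deterministic synchronisation whose result coincides with $\pmap{P\subst{V}{x}}{3}$, with $\tau$-inertness (Proposition~\ref{lem:tau_inert}) closing the $\wb$ obligations in subcases (1c) and (2c). The extra detail you give on why the restricted session introduces no spurious target redexes is a sound elaboration of what the paper leaves implicit.
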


\begin{proof}
\begin{enumerate}
	\item The proof of Part 1 does a transition induction and
	considers the mapping as defined in \defref{def:enc:HOpp_to_HOp}.
	We give the most interesting cases.

	\begin{itemize}
		\item	Case: $P = \appl{(\abs{x}{Q_1})}{\abs{x}{Q_2}}$.

			$\horel{\Gamma}{\Delta}{\appl{(\abs{x}{Q_1})}{\abs{x}{Q_2}}}{\hby{\btau}}{\Delta}{Q_1 \subst{\abs{x}{Q_2}}{x}}$ implies
\[
			\horel{\tmap{\Gamma}{3}}{\tmap{\Delta}{3}}{\newsp{s}{\binp{s}{x} \pmap{Q_1}{3} \Par \bout{\dual{s}}{\abs{x}{\pmap{Q_2}{3}}} \inact}}{\hby{\stau}}
			{\tmap{\Delta'}{3}}{\pmap{Q_1}{3} \subst{\abs{x}{\pmap{Q_2}{3}}}{x}}
\]

		\item	Case: $P = \bout{n}{\abs{\underline{x}} Q} P$

			$\horel{\Gamma}{\Delta}{\bout{n}{\abs{\underline{x}} Q} P}{\hby{ \bactout{n}{\abs{x}{Q}}}}{\Delta}{P}$ implies

			$\horel{\tmap{\Gamma}{3}}{\tmap{\Delta}{3}}{\bout{n}{\abs{z} \binp{z}{x} \pmap{Q}{3}} \pmap{P}{3}}{\hby{ \bactout{n}{\abs{z}{\binp{z}{x} \pmap{Q}{3}} } }}{\Delta}{\pmap{P}{3}}$
		\item Other cases are similar. 
	\end{itemize}

	\item The proof of Part 2 also does a transition induction and
	considers the mapping as defined in \defref{def:enc:HOpp_to_HOp}.
	We give the most interesting cases.

	\begin{itemize}
		\item	Case: $P = \appl{(\abs{x}{Q_1})}{\abs{x}{Q_2}}$.
		\[
			\mhorel{\tmap{\Gamma}{3}}{\tmap{\Delta}{3}}{\newsp{s}{ \appl{(\abs{z}\binp{z}{x} \pmap{Q}{3})}{s}  \Par \bout{\dual{s}}{\abs{x}{Q_2}} \inact}}{\hby{\btau}}
			{\tmap{\Delta'}{3}}{}{\newsp{s}{\binp{s}{x} \pmap{Q}{3}  \Par \bout{\dual{s}}{\abs{x}{Q_2}} \inact}}
		\]
			\noi implies
			$\horel{\Gamma}{\Delta}{\appl{(\abs{x}{Q_1})}{\abs{x}{Q_2}}}{\hby{\btau}}{\Delta}{Q_1 \subst{\abs{x}{Q_2}}{x}}$ and
		\[
			\mhorel{\tmap{\Gamma}{3}}{\tmap{\Delta}{3}}{\newsp{s}{\binp{s}{x} \pmap{Q}{3}  \Par \bout{\dual{s}}{\abs{x}{Q_2}} \inact}}{\hby{\stau}}
			{\tmap{\Delta'}{3}}{}{\pmap{Q_1}{3} \subst{\abs{x}{\pmap{Q_2}{3}}}{x}}
		\]

\iftodo
		\item	Case: $P = \bout{n}{\abs{\underline{x}} Q} P$

			$\horel{\tmap{\Gamma}{3}}{\tmap{\Delta}{3}}{\bout{n}{\abs{z} \binp{z}{x} \pmap{Q}{3}} \pmap{P}{3}}{\hby{ \bactout{n}{\abs{z}{\binp{z}{x} \pmap{Q}{3}} } }}{\Delta}{\pmap{P}{3}}$ and

			$\horel{\Gamma}{\Delta}{\bout{n}{\abs{\underline{x}} Q} P}{\hby{ \bactout{n}{\abs{\underline{x}}{Q}}}}{\Delta}{P}$
		\item Other cases are similar. 
	\end{itemize}
\end{enumerate}
\qed
\end{proof}

\subsection{Properties for $\enco{\pmap{\cdot}{4}, \tmap{\cdot}{4}, \mapa{\cdot}^{4}}: \pHOp \to \HOp$}
\label{app:pHOp_to_HOp}

We study the properties of the typed encoding in
\defref{def:enc:pHOp_to_HOp} (Page~\pageref{def:enc:pHOp_to_HOp}).

We repeat the statement of \propref{prop:typepres_pHOp_to_HOp}, as in Page~\pageref{prop:typepres_pHOp_to_HOp}:

\begin{proposition}[Type Preservation. From \pHOp to \HOp]\rm
	\label{app:prop:typepres_pHOp_to_HOp}
	Let $P$ be a \pHOp process.
	If $\Gamma; \emptyset; \Delta \proves P \hastype \Proc$ then 
	$\tmap{\Gamma}{4}; \emptyset; \tmap{\Delta}{4} \proves \pmap{P}{4} \hastype \Proc$. 
\end{proposition}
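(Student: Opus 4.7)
The plan is to proceed by induction on the derivation of $\Gamma; \emptyset; \Delta \proves P \hastype \Proc$, with a case analysis on the last rule applied. Since $\pmap{\cdot}{4}$ and $\tmap{\cdot}{4}$ act as homomorphisms on all constructors that do not involve polyadicity (parallel composition, restriction, inaction, selection/branching, recursion and recursion variables, as well as the shared-promotion rules \trule{Prom}/\trule{EProm}), these cases reduce to a direct appeal to the inductive hypothesis followed by the corresponding \HOp typing rule on the encoded term. The genuinely new cases are \trule{Send}/\trule{Rcv} at polyadic types, \trule{Abs}/\trule{App}, and the auxiliary typing of polyadic name tuples via \trule{Pol}.

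First, I would discharge the polyadic exchange of names. Suppose the source derivation concludes with \trule{Send} applied to $\bout{n}{u_1,\ldots,u_n} P'$, typing the value via \trule{Pol} against $U = C_1 \cdots C_n$ so that $n$ carries session prefix $\btout{U} S$. The image term is $\bout{n}{u_1}\cdots\bout{n}{u_n}\pmap{P'}{4}$ and the image prefix on $n$ is $\btout{\tmap{C_1}{4}}\cdots\btout{\tmap{C_n}{4}}\tmap{S}{4}$. A derivation is obtained by chaining $n$ instances of \trule{Send} in \HOp, peeling off one output prefix at a time and at each step using $\tmap{\Gamma}{4}; \es; u_i{:}\tmap{C_i}{4} \proves u_i \hastype \tmap{C_i}{4}$ to type the object; the splitting of the session environment follows the partition used in the source \trule{Pol}/\trule{Send}. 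The symmetric case for \trule{Rcv} is analogous.

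Second, I would treat the abstraction cases, which encode a polyadic abstraction $\abs{\tilde{x}}{Q}$ as $\abs{z}{\binp{z}{x_1}\cdots\binp{z}{x_n}\pmap{Q}{4}}$ for a fresh session variable $z$. From a source derivation concluding $\Gamma; \Lambda; \Delta \proves \abs{\tilde{x}}{Q} \hastype \lhot{(C_1,\ldots,C_n)}$, the IH yields $\tmap{\Gamma}{4}; \tmap{\Lambda}{4}; \tmap{\Delta}{4} \cat x_1{:}\tmap{C_1}{4} \cat \cdots \cat x_n{:}\tmap{C_n}{4} \proves \pmap{Q}{4} \hastype \Proc$. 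Applying \trule{Rcv} $n$ times along $z$ and then \trule{Abs} gives a derivation at type $\lhot{(\btinp{\tmap{C_1}{4}}\cdots\btinp{\tmap{C_n}{4}}\tinact)}$, which is exactly $\tmap{\lhot{(C_1,\ldots,C_n)}}{4}$; the shared variant routes through \trule{Prom}. Polyadic output and input of abstractions follow by composing this lemma with \trule{Send}/\trule{Rcv} on the encoded higher-order carrier types.

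The main obstacle is the application case, where the encoding introduces an auxiliary private session $s$ to stream the $n$ arguments. Given $\Gamma; \Lambda; \Delta_1 \cat \Delta_2 \proves \appl{x}{(u_1,\ldots,u_n)} \hastype \Proc$ with $x : \lhot{(C_1,\ldots,C_n)}$ and the $u_i$ typed under $\Delta_2$ via \trule{Pol}, the target term is $\newsp{s}{\appl{x}{s} \Par \bout{\dual{s}}{u_1}\cdots\bout{\dual{s}}{u_n}\inact}$. The strategy is to assign $s : \btinp{\tmap{C_1}{4}}\cdots\btinp{\tmap{C_n}{4}}\tinact$ and $\dual{s}$ its dual $\btout{\tmap{C_1}{4}}\cdots\btout{\tmap{C_n}{4}}\tinact$, type the left branch by \trule{App} with $x$ at type $\tmap{\lhot{(C_1,\ldots,C_n)}}{4}$ against $s$, iterate \trule{Send} on the right branch to consume the $u_i$ against the output prefixes of $\dual{s}$ while distributing $\tmap{\Delta_2}{4}$ linearly using the source partition, combine the two branches by \trule{Par}, and close the session by \trule{ResS}, whose duality premise holds by construction. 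The delicate point will be bookkeeping of the linear environments so that $\Lambda$ and $\Delta_2$ flow to the correct premises; this mirrors the monadic application case but must be iterated along the sequence of output prefixes on $\dual{s}$.
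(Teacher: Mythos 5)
Your proposal is correct and follows essentially the same route as the paper's proof: induction on the typing derivation, with polyadic prefixes and abstractions typed by chaining the corresponding monadic \HOp rules along the encoded session prefixes. The paper only spells out two representative biadic cases (output of a polyadic abstraction and input of a name tuple), so your explicit treatment of the application case with the auxiliary session $s$ is additional detail rather than a different method.
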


\begin{proof}
	By induction on the inference $\Gamma; \emptyset; \Delta \proves P \hastype \Proc$.
	We examine two representative cases, using biadic communications.

	\begin{enumerate}[1.]
		\item	Case $P = \bout{n}{V} P'$ and 
			$\Gamma; \emptyset; \Delta_1 \cat \Delta_2 \cat n:\btout{\lhot{(C_1,C_2)}} S \proves \bout{n}{V} P' \hastype \Proc$.
			Then either $V = y$ or $V = \abs{(x_1,x_2)}Q$, for some $Q$.
			The case $V = y$ is immediate; we give details for the case $V = \abs{(x_1,x_2)}Q$, for which we have the following typing:
			\[
				\tree{
					\tree{}{
						\Gamma; \emptyset; \Delta_1 \cat n:S \proves P' \hastype \Proc
					}
					\quad
					\tree{
						\Gamma; \emptyset; \Delta_2 \cat x_1: C_1 \cat x_2:C_2 \proves Q \hastype \Proc
					}{
						\Gamma; \emptyset; \Delta_2 \proves \abs{(x_1,x_2)}Q \hastype \lhot{(C_1,C_2)}
					}
				}{
					\Gamma; \emptyset; \Delta_1 \cat \Delta_2 \cat n:\btout{\lhot{(C_1,C_2)}} S \proves \bout{k}{\abs{(x_1,x_2)}Q} P \hastype \Proc
				}
		\]
		We now show the typing for $\pmap{P}{4}$.
		By IH we have both:
		\[
			\tmap{\Gamma}{4}; \emptyset; \tmap{\Delta_1}{4} \cat n: \tmap{S}{4} \proves \pmap{P'}{4} \hastype \Proc
			\qquad
			\tmap{\Gamma}{4}; \emptyset; \tmap{\Delta_2}{4} \cat x_1: \tmap{C_1}{4} \cat x_2:\tmap{C_2}{4} \proves \pmap{Q}{4} \hastype \Proc
		\]
		Let $L = \lhot{(C_1,C_2)}$. 
		By \defref{def:enc:pHOp_to_HOp} we have  
		$\tmap{L}{4} = \lhot{\big(\btinp{\tmap{C_1}{4}} \btinp{\tmap{C_2}{4}}\tinact\big)}$
		and
		$\pmap{P}{4} = \bbout{n}{\abs{z}\binp{z}{x_1}\binp{z}{x_2} \pmap{Q}{4}} \pmap{P'}{4}$.
		We can now infer the following typing derivation:
		\begin{eqnarray}
			\label{prop:tpres:pHOp_to_HOp1}
			\tree{
				\tree{
					\tree{
						\tree{
							\tree{}{
								\tmap{\Gamma}{4}; \emptyset; \tmap{\Delta_2}{4} \cat x_1: \tmap{C_1}{4} \cat x_2: \tmap{C_2}{4} \proves \pmap{Q}{4} \hastype \Proc
							}
						}{
							\tmap{\Gamma}{4}; \emptyset; \tmap{\Delta_2}{4} \cat x_1: \tmap{C_1}{4} \cat x_2: \tmap{C_2}{4} \cat z:\tinact \proves \pmap{Q}{4} \hastype \Proc
						}
					}{
						\tmap{\Gamma}{4}; \emptyset; \tmap{\Delta_2}{4} \cat x_1: \tmap{C_1}{4}\cat z:\btinp{\tmap{C_2}{4}}\tinact \proves \binp{z}{x_2} \pmap{Q}{4} \hastype \Proc
					}
				}{
					\tmap{\Gamma}{4}; \emptyset; \tmap{\Delta_2}{4} \cat z:\btinp{\tmap{C_1}{4}}\btinp{\tmap{C_2}{4}}\tinact \proves \binp{z}{x_1}\binp{z}{x_2} \pmap{Q}{4} \hastype \Proc
				}
			}{
				\tmap{\Gamma}{4}; \emptyset; \tmap{\Delta_2}{4} \proves \abs{z}\binp{z}{x_1}\binp{z}{x_2} \pmap{Q}{4} \hastype \lhot{(\tmap{C_1}{4},\tmap{C_2}{4})}
			}
		\end{eqnarray}
		\[
		\tree{
			\tree{}{
				\mapt{\Gamma}^{\mathsf{p}}; \emptyset; \mapt{\Delta_1}^{\mathsf{p}} \cat k:\mapt{S}^{\mathsf{p}} \proves \map{P'}^{\mathsf{p}} \hastype \Proc
			}
			\quad
			\eqref{prop:tpres:pHOp_to_HOp1}
		}{
			\tmap{\Gamma}{4}; \emptyset; \tmap{\Delta_1}{4} \cat \tmap{\Delta_2}{4} \cat n:\btout{\tmap{L}{4}} \tmap{S}{4} \proves \pmap{P}{4} \hastype \Proc
		}
		\]

		\item	Case $P = \binp{n}{x_1,x_2} P'$ 
			and
			$\Gamma; \emptyset; \Delta_1 \cat n: \btinp{(C_1, C_2)} S \proves \binp{n}{x_1,x_2} P' \hastype \Proc$.
			We have the following typing derivation:
			\[
				\tree{
					\Gamma; \emptyset; \Delta_1 \cat n:S \cat x_1: C_1 \cat x_2: C_2 \proves  P' \hastype \Proc
					\quad
					\Gamma; \emptyset;  \proves x_1, x_2 \hastype C_1,C_2
				}{
					\Gamma; \emptyset; \Delta_1 \cat n: \btinp{(C_1, C_2)} S \proves \binp{n}{x_1,x_2} P' \hastype \Proc
				}
		\]
		By \defref{def:enc:pHOp_to_HOp} we have 
		$\pmap{P}{4} = \binp{n}{x_1}\binp{k}{x_2} \pmap{P'}{4}$.
		By IH we have 
		\[
			\tmap{\Gamma}{4}; \emptyset; \tmap{\Delta_1}{4} \cat n:\tmap{S}{4} \cat x_1: \tmap{C_1}{4} \cat x_2: \tmap{C_2}{4} \proves  \pmap{P'}{4} \hastype \Proc
		\]
		and the following type derivation:
		\[
			\tree{
				\tree{
					\tree{}{
						\tmap{\Gamma}{4}; \emptyset; \tmap{\Delta_1}{4} \cat x_1:\tmap{C_1}{4} \cat x_2:\tmap{C_2}{4} \cat n:\tmap{S}{4} \proves \pmap{P'}{4} \hastype \Proc
					}
				}{
					\tmap{\Gamma}{4}; \emptyset; \tmap{\Delta_1}{4} \cat x_1:\tmap{C_1}{4} \cat n:\btinp{\tmap{C_2}{4}}\tmap{S}{4} \proves \binp{n}{x_2}\pmap{P'}{4} \hastype \Proc
				}
			}{
				\tmap{\Gamma}{4}; \emptyset; \tmap{\Delta_1}{4} \cat n:\btinp{\tmap{C_1}{4}}\btinp{\tmap{C_2}{4}}\tmap{S}{4} \proves \pmap{P}{4} \hastype \Proc
			}
		\]
	\end{enumerate}
	\qed
\end{proof}

We repeat the statement of \propref{prop:op_cor:pHOp_to_HOp}, as in Page~\pageref{prop:op_cor:pHOp_to_HOp}:

\begin{proposition}[Operational Correspondence. From \pHOp to \HOp]\myrm
	\label{app:prop:op_cor:pHOp_to_HOp}
	\begin{enumerate}
		\item	Let $\Gamma; \es; \Delta \proves P$. Then
			$\horel{\Gamma}{\Delta}{P}{\hby{\ell}}{\Delta'}{P'}$ implies
			\begin{enumerate}[a)]
				\item	If $\ell = \news{\tilde{m}'} \bactout{n}{\tilde{m}}$ then
					$\horel{\tmap{\Gamma}{4}}{\tmap{\Delta}{4}}{\pmap{P}{4}}{\hby{\ell_1} \dots \hby{\ell_n}}{\tmap{\Delta'}{4}}{\pmap{P}{4}}$
					with $\mapa{\ell}^{4} = \ell_1 \dots \ell_n$.

				\item	If $\ell = \bactinp{n}{\tilde{m}}$ then
					$\horel{\tmap{\Gamma}{4}}{\tmap{\Delta}{4}}{\pmap{P}{4}}{\hby{\ell_1} \dots \hby{\ell_n}}{\tmap{\Delta'}{4}}{\pmap{P}{4}}$
					with $\mapa{\ell}^{4} = \ell_1 \dots \ell_n$.

				\item	If $\ell \in \set{\news{\tilde{m}} \bactout{n}{\abs{\tilde{x}}{R}}, \bactinp{n}{\abs{\tilde{x}}{R}}}$ then
					$\horel{\tmap{\Gamma}{4}}{\tmap{\Delta}{4}}{\pmap{P}{4}}{\hby{\ell'}}
					{\tmap{\Delta'}{4}}{\pmap{P'}{4}}$ with $\mapa{\ell}^{4} = \ell'$.

				\item	If $\ell \in \set{\bactsel{n}{l}, \bactbra{n}{l}}$ then
					$\horel{\tmap{\Gamma}{4}}{\tmap{\Delta}{4}}{\pmap{P}{4}}{\hby{\ell}}
					{\tmap{\Delta'}{4}}{\pmap{P'}{4}}$.

				\item	If $\ell = \btau$ then either
					$\horel{\tmap{\Gamma}{4}}{\tmap{\Delta}{4}}{\pmap{P}{4}}{\hby{\btau} \hby{\stau} \dots \hby{\stau}}
					{\tmap{\Delta'}{4}}{\pmap{P'}{4}}$ with $\mapa{\ell} = \btau, \stau \dots \stau$.

				\item	If $\ell = \tau$ then 
					$\horel{\tmap{\Gamma}{4}}{\tmap{\Delta}{4}}{\pmap{P}{4}}{\hby{\tau} \dots \hby{\tau}}
					{\tmap{\Delta'}{4}}{\pmap{P'}{4}}$ with $\mapa{\ell}^{4} = \tau \dots \tau$.
			\end{enumerate}

		\item	Let $\Gamma; \es; \Delta \proves P$.
			$\horel{\tmap{\Gamma}{4}}{\tmap{\Delta}{4}}{\pmap{P}{4}}{\hby{\ell_1}}
			{\tmap{\Delta_1}{4}}{P_1}$ implies
			\begin{enumerate}[a)]
				\item	If $\ell \in \set{\bactinp{n}{m}, \bactout{n}{m}, \news{m} \bactout{n}{m}}$ then
					$\horel{\Gamma}{\Delta}{P}{\hby{\ell}}{\Delta'}{P'}$ and\\
					$\horel{\tmap{\Gamma}{4}}{\tmap{\Delta_1}{4}}{P_1}{\hby{\ell_2} \dots \hby{\ell_n}}
					{\tmap{\Delta'}{4}}{\tmap{P'}{4}}$ with $\mapa{\ell}^{4} = \ell_1 \dots \ell_n$.

				\item	If $\ell \in \set{\news{\tilde{m}} \bactout{n}{\abs{x}{R}}, \bactinp{n}{\abs{x}{R}}}$
					then
					$\horel{\Gamma}{\Delta}{P}{\hby{\ell'}}{\Delta'}{P'}$
					with $\mapa{\ell'}^{4} = \ell$ and $P_1 \scong \pmap{P'}{4}$.

				\item	If $\ell \in \set{\bactsel{n}{l}, \bactbra{n}{l}}$
					then
					$\horel{\Gamma}{\Delta}{P}{\hby{\ell}}{\Delta'}{P'}$ and $P_1 \scong \pmap{P'}{4}$.

				\item	If $\ell = \btau$ then
					$\horel{\Gamma}{\Delta}{P}{\hby{\btau}}{\Delta'}{P'}$ and
					$\horel{\tmap{\Gamma}{4}}{\tmap{\Delta_1}{4}}{P_1}{\hby{\stau} \dots \hby{\stau}}
					{\tmap{\Delta'}{4}}{\tmap{P'}{4}}$ with $\mapa{\ell}^{4} = \btau, \stau \dots \stau$.

				\item	If $\ell = \tau$ then
					$\horel{\Gamma}{\Delta}{P}{\hby{\tau}}{\Delta'}{P'}$ and
					$\horel{\tmap{\Gamma}{4}}{\tmap{\Delta_1}{4}}{P_1}{\hby{\tau} \dots \hby{\tau}}
					{\tmap{\Delta'}{4}}{\tmap{P'}{4}}$ with $\mapa{\ell}^{4} = \tau \dots \tau$.
			\end{enumerate}
	\end{enumerate}
\end{proposition}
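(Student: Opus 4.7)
The plan is to proceed by transition induction, considering Parts 1 and 2 separately and, within each, performing a case analysis on the label. As the appendix remark indicates, it suffices to establish the dyadic case cleanly and then observe that the polyadic generalization is purely syntactic. Each case unfolds by applying \defref{def:enc:pHOp_to_HOp}, computing the sequence of monadic derivatives, and checking that the session environments tracked via \figref{fig:envLTS} line up with the claimed shape of the target labels via $\mapa{\cdot}^{4}$. Type preservation (\propref{prop:typepres_pHOp_to_HOp}) ensures that intermediate typings used during derivations are well-formed.

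For Part 1, the transparent cases are (c) (higher-order input/output), (d) (selection/branching), and (f) (plain $\tau$): these are one-step simulations where $\mapa{\ell}^{4}$ is essentially $\ell$ modulo rewriting of the carried abstraction into its wrapped form $\abs{z}\binp{z}{x_1}\binp{z}{x_2}\pmap{Q}{4}$. The substantial work lies in cases (a), (b), and (e). For (a), writing $P = \bout{n}{m_1,m_2}P'$ typed as $\btout{S_1,S_2}S$, the encoding $\bout{n}{m_1}\bout{n}{m_2}\pmap{P'}{4}$ admits precisely the derivation $\hby{\bactout{n}{m_1}}\hby{\bactout{n}{m_2}}$ whose environment evolves through $\btinp{S_2}S$ to $S$; the analogous argument with \lemref{lem:subst} applied twice handles (b). Case (e) captures a $\btau$: the source reduct $\appl{(\abs{(x_1,x_2)}Q)}{(u_1,u_2)} \red Q\subst{u_1,u_2}{x_1,x_2}$ is mirrored by $\newsp{s}{\appl{(\abs{z}\binp{z}{x_1}\binp{z}{x_2}\pmap{Q}{4})}{s}\Par\bout{\dual{s}}{u_1}\bout{\dual{s}}{u_2}\inact}$, which fires one $\btau$ (application) followed by two $\stau$'s along $s$ and $\dual{s}$; well-formedness of the resulting target term at $\tmap{\Delta'}{4}$ follows since $s$ is fresh and deterministic, using \propref{lem:tau_inert}.

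For Part 2, the inductive case analysis inverts the encoding. Cases (b)--(e) are direct since any monadic output/input/select/branch/tau on a non-fresh channel in the target must originate from a specific clause of $\pmap{\cdot}{4}$ whose unique source preimage is identifiable up to $\scong$. The obstacle is case (a): observing $\hby{\bactout{n}{m_1}}$ (respectively an input) in the target does not immediately correspond to a complete polyadic source action, because the encoding committed to a sequence of $n$ monadic prefixes. The statement resolves this by asserting that the source fires $\ell = \bactout{n}{m_1,\dots,m_n}$ and that the target residual continues via $\hby{\ell_2}\dots\hby{\ell_n}$ to the encoding of $P'$. The verification requires showing that after the first monadic step the target is precisely $\bout{n}{m_2}\dots\bout{n}{m_n}\pmap{P'}{4}$ — visible from the syntactic shape of the encoding — and that no other transition labelled $\bactout{n}{\cdot}$ can fire meanwhile because the polyadic channel is linearly typed, forcing the $n$ monadic prefixes to execute sequentially. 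This linearity, together with $\tau$-inertness (\propref{lem:tau_inert}), is also what makes the $\btau/\stau$ split in case (d) valid.

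The hardest step to articulate cleanly will be this polyadic/monadic disentangling in Part 2(a) and its dual for input, since one must exclude intermediate interferences and track how the environment evolves through $\btout{S_2}S$ or $\btinp{S_2}S$ midway. Once the dyadic case is formalized, the generalization to arity $n$ is a mechanical iteration of the same argument using an auxiliary induction on $n$, making the overall structure of the proof a nested induction on the transition derivation outside and on the polyadic arity inside. Full derivations for the dyadic case are collected in \propref{app:prop:op_cor:pHOp_to_HOp} in the appendix, mirroring the style used for the earlier encodings.
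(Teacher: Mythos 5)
Your proposal is correct and follows essentially the same route as the paper: a transition induction treating Parts 1 and 2 separately, worked out on representative dyadic cases (sequential monadic outputs/inputs for polyadic name passing, one-step simulation for wrapped abstractions, and a $\btau$ followed by $\stau$'s for application), with the polyadic case obtained as a mechanical generalisation. Your additional remarks on linearity excluding interference in Part 2(a) and on $\tau$-inertness for the $\btau/\stau$ split only make explicit what the paper leaves implicit.
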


\begin{proof}
	The proof of both parts is by transition induction, following 
	the mapping defined in \defref{def:enc:HOpp_to_HOp}.
	We consider some representative cases, using biadic communication:
	\begin{enumerate}[$\bullet$]
\item Case (1(a)), with $P =\bout{n}{m_1, m_2} P'$ and $\ell_1 = \bactout{n}{m_1, m_2}$. 
By assumption, $P$ is well-typed. 
As one particular possibility, we may have:
			\[
				\tree{
					\Gamma; \emptyset; \Delta_0 \cat n:S  \proves  P' \hastype \Proc \quad 
					\Gamma ; \emptyset ; m_1{:} S_1 \cat m_2{:}S_2 \proves  m_1,m_2 \hastype S_1,S_2}{
					\Gamma; \emptyset; \Delta_0 \cat m_1{:}S_1 \cat m_2{:}S_2 \cat n:\btout{S_1,S_2}S \proves  
					\bout{n}{m_1,m_2} P' \hastype \Proc}
			\]
for some $\Gamma, S, S_1, S_2, \Delta_0$, 
such that $\Delta = \Delta_0 \cat m_1{:}S_1 \cat m_2{:}S_2 \cat n:\btout{S_1,S_2}S$.
We may then have the following typed transition
$$
\stytra{\Gamma}{\ell_1}{\Delta_0 \cat m_1{:}S_1 \cat m_2{:}S_2 \cat n:\btout{S_1,S_2}S}{\bout{n}{m_1, m_2} P'}{\Delta_0 \cat n{:}S}{P'}
$$
The encoding of the source judgment for $P$ is as follows:
$$
\mapt{\Gamma}^{4}; \emptyset; \mapt{\Delta_0 \cat m_1{:}S_1 \cat m_2{:}S_2 \cat n:\btout{S_1,S_2}S}^{4} \proves \map{\bout{n}{m_1, m_2} P'}^{4} \hastype \Proc
$$
which, using \defref{def:enc:HOpp_to_HOp}, can be expressed as 
$$
\mapt{\Gamma}^{4}; \emptyset; \mapt{\Delta_0} 
\cat m_1{:}\mapt{S_1}^{4} \cat m_2{:}\mapt{S_2}^{4} 
\cat n:\btout{\mapt{S_1}^{4}}\btout{\mapt{S_2}^{4}}\mapt{S}^{4}
\proves 
\bout{n}{m_1}\bout{n}{m_2} \map{P'}^{4} 
\hastype \Proc
$$
Now, $\mapa{\ell_1}^{4} = \bactout{n}{m_1 }, \bactout{n}{ m_2}$. 
It is immediate to infer the following typed transitions for $\map{P}^{4}  = \bout{n}{m_1}\bout{n}{m_2} \map{P'}^{4} $:
\begin{eqnarray*}
& & \mapt{\Gamma}^{4}; 
\mapt{\Delta_0} \cat  m_1{:}\mapt{S_1}^{4} \cat m_2{:}\mapt{S_2}^{4} \cat
n:\btout{\mapt{S_1}^{4}}\btout{\mapt{S_2}^{4}}\mapt{S}^{4}
\proves 
\bout{n}{m_1}\bout{n}{m_2} \map{P'}^{4}  \\
& \hby{\bactout{n}{m_1}} & 
\mapt{\Gamma}^{4}; \mapt{\Delta_0} \cat  m_2{:}\mapt{S_2}^{4} \cat
n:\btout{\mapt{S_2}^{4}}\mapt{S}^{4}
\proves 
\bout{n}{m_2} \map{P'}^{4} \\
& \hby{\bactout{n}{m_2}} & 
\mapt{\Gamma}^{4}; \mapt{\Delta_0}  \cat n{:}\mapt{S}^{4}
\proves 
 \map{P'}^{4} \\
 & = & 
 \mapt{\Gamma}^{4}; \mapt{\Delta_0 \cat
n:S }^{4}
\proves 
 \map{P'}^{4}
\end{eqnarray*}
which concludes the proof for this case.

\item Case (1(c)) with $P = \bbout{n}{\abs{(x_1, x_2)} Q} P' $ and $\ell_1 = \bactout{n}{\abs{(x_1, x_2)}{Q}}$. 
By assumption, $P$ is well-typed. 
We may have:
			\[
				\tree{
					\Gamma; \emptyset; \Delta_0 \cat n:S  \proves  P' \hastype \Proc \quad 
					\Gamma ; \emptyset ; \Delta_1 \proves  \abs{(x_1,x_2)}Q \hastype \lhot{(C_1,C_2)}}{
					\Gamma; \emptyset; \Delta_0 \cat \Delta_1 \cat n:\btout{\lhot{(C_1,C_2)}}S \proves  
					\bout{n}{\abs{(x_1,x_2)}Q} P' \hastype \Proc}
			\]
for some $\Gamma$, $S$, $C_1$, $C_2$, $\Delta_0$, $\Delta_1$, 
such that $\Delta = \Delta_0 \cat \Delta_1 \cat  n:\btout{\lhot{(C_1,C_2)}}S$.
(For simplicity, we consider only the case of a linear function.)
We may have the following typed transition:
$$
\stytra{\Gamma}{\ell_1}{\Delta_0 \cat \Delta_1 \cat n:\bbtout{\lhot{(C_1, C_2)}}S}{\bbout{n}{\abs{(x_1, x_2)} Q} P' }{\Delta_0 \cat n{:}S}{P'}
$$
The encoding of the source judgment is
$$
\mapt{\Gamma}^{4}; \emptyset; \mapt{\Delta_0 \cat \Delta_1 \cat n:\bbtout{\lhot{(C_1, C_2)}}S}^{4} \proves \map{\bbout{n}{\abs{(x_1, x_2)} Q} P' }^{4} \hastype \Proc
$$
which, using \defref{def:enc:HOpp_to_HOp}, can be equivalently expressed as 
$$
\mapt{\Gamma}^{4}; \emptyset; \mapt{\Delta_0 \cat \Delta_1} \cat
n:\bbtout{
		\lhot{\big(\btinp{\tmap{C_1}{4}}\btinp{\tmap{C_2}{4}}\tinact\big)}}\mapt{S}^{4}
\proves 
\bbout{n}{\abs{z}\binp{z}{x_1} \binp{z}{x_2} \map{Q}^{4}} \map{P'}^{4}
\hastype \Proc
$$

Now, $\mapa{\ell_1}^{4} = \bactout{n}{\abs{z}\binp{z}{x_1}\binp{z}{x_2} \map{Q}^{4}}$. 
It is immediate to infer the following typed transition for $\map{P}^{4}  = \bbout{n}{\abs{z}\binp{z}{x_1} \binp{z}{x_2} \map{Q}^{4}} \map{P'}^{4}$:
\begin{eqnarray*}
& & \mapt{\Gamma}^{4}; \mapt{\Delta_0 \cat \Delta_1} \cat
n:\bbtout{
		\lhot{\big(\btinp{\tmap{C_1}{4}}\btinp{\tmap{C_2}{4}}\tinact\big)}}\mapt{S}^{4}
\proves 
\bbout{n}{\abs{z}\binp{z}{x_1} \binp{z}{x_2} \map{Q}^{4}} \map{P'}^{4} \\
& \hby{\mapa{\ell_1}^{4}} & 
\mapt{\Gamma}^{4}; \mapt{\Delta_0} \cat
n:\mapt{S}^{4}, \,
\proves 
\map{P'}^{4} \\
 & = & 
 \mapt{\Gamma}^{4}; 
 \mapt{\Delta_0 \cat n:S}^{4}
\proves 
 \map{P'}^{4}
\end{eqnarray*}
which concludes the proof for this case.


\item Case (2(a)), with $P =  \binp{n}{x_1, x_2} P' $, 
$\map{P}^{4} = 
		\binp{n}{x_1}  \binp{n}{x_2}  \map{P'}^{4}$.
		We have  the following typed transitions for $\map{P}^{4}$, for some $S$, $S_1$, $S_2$, and $\Delta$:
\begin{eqnarray*}
& & \mapt{\Gamma}^{4}; 
\mapt{\Delta}^{4} \cat 
n:\btinp{\tmap{S_1}{4}}\btinp{\tmap{S_2}{4}}\tmap{S}{4} \cat
\proves 
\binp{n}{x_1} \binp{n}{x_2}\map{P'}^{4} \\
& \hby{\bactinp{n}{m_1}} & 
\mapt{\Gamma}^{4}; 
\mapt{\Delta}^{4} \cat 
n:\btinp{\tmap{S_2}{4}}\tmap{S}{4} \cat
m_1:\mapt{S_1}^{4}
\proves 
\binp{n}{x_2}\map{P'}^{4} \subst{m_1}{x_1} \\
& \hby{\bactinp{n}{m_2}} & 
\mapt{\Gamma}^{4}; 
\mapt{\Delta}^{4} \cat n:\tmap{S}{4} \cat
m_1:  \mapt{S_1}^{4} \cat
m_2: \mapt{S_2}^{4}
\proves 
\map{P'}^{4} \subst{m_1}{x_1}\subst{m_2}{x_2} = Q
\end{eqnarray*}
Observe that the substitution lemma (Lemma~\ref{lem:subst}(1)) has been used twice.
It is then immediate to infer the label for the source transition:
$\ell_1 = \bactinp{n}{m_1,m_2}$. Indeed, $\mapa{\ell_1}^{4} = \bactinp{n}{m_1}, \bactinp{n}{m_2}$.
Now, in the source term $P$ we can infer the following transition:
$$
\stytra{\Gamma}{\ell_1}{\Delta \cat n:\btinp{S_1, S_2}S}{\binp{n}{x_1, x_2} P' }{\Delta\cat n{:}S \cat m_1:S_1 \cat m_2:S_2}{P'\subst{m_1,m_2}{x_1, x_2}}
$$
which concludes the proof for this case.

\item Case (2(b)), with $P =  \bbout{n}{\abs{(x_1,x_2)} Q} P' $, 
$\map{P}^{4} = 
		\bbout{n}{\abs{z}\binp{z}{x_1}\binp{z}{x_2} \map{Q}^{4}} \map{P'}^{4}$.
		We have the following  typed transition, for some $S$, $C_1$, $C_2$, and $\Delta$:
\begin{eqnarray*}
& & \mapt{\Gamma}^{4}; 
\mapt{\Delta}^{4}\cat n:\tmap{\bbtout{\lhot{(C_1,  C_2)}} S}{4}
\proves 
\bbout{n}{\abs{z}\binp{z}{x_1}\binp{z}{x_2} \map{Q}^{4}} \map{P'}^{4} \\
& \hby{\ell'_1} & 
\mapt{\Gamma}^{4}; 
\mapt{\Delta}^{4}\cat n:\tmap{ S}{4} 
\proves 
\map{P'}^{4} = Q
\end{eqnarray*}
where
$\ell'_1 = \bactout{n}{\abs{z}\binp{z}{x_1} \binp{z}{x_2} \map{Q}^{4}}$.
For simplicity, we consider only the case of linear functions.
It is then immediate to infer the label for the source transition:
$\ell_1 = \bactout{n}{\abs{(x_1,  x_2)}{Q}} $. 
Now, in the source term $P$ we can infer the following transition:
$$
\stytra{\Gamma}{\ell_1}{\Delta\cat n:\bbtout{\lhot{(C_1,  C_2)}} S}{ \bbout{n}{\abs{x_1,x_2} Q} P'}{\Delta\cat n{:}S}{P'}
$$
which concludes the proof for this case.

	\end{enumerate}
	\qed
\end{proof}



\end{document}